\newtheorem{remark}{Remark}
\newtheorem{corollary}{Corollary}
\newtheorem{lemma}{Lemma}
\newtheorem{proposition}{Proposition}
\newtheorem{assumption}{Assumption}
\newcommand{\sizecorr}[1]{\makebox[0cm]{\phantom{$\displaystyle #1$}}}
\title{Delay on broadcast erasure channels under \\ random linear combinations}
\author{
Nan~Xie,~\IEEEmembership{Member,~IEEE,} and 
Steven Weber,~\IEEEmembership{Senior Member,~IEEE}%
\thanks{
Manuscript received October 1, 2013; revised July 25, 2016; accepted November 03, 2016. Date of current version November 18, 2016.  This work was supported by the National Science Foundation under award CCF-1016588. A preliminary version of this work was presented at the Information Theory and its Applications (ITA) Workshop in San Diego, CA in February 2013 \cite{XieWeb2013}.}%
\thanks{N.~Xie and S.~Weber are with the Department of Electrical and Computer Engineering of Drexel University, Philadelphia, PA, 19104 USA (e-mail: nx23@drexel.edu; sweber@coe.drexel.edu). }
\thanks{Communicated by C.\ Nair, Associate Editor for Shannon Theory.}
\thanks{Color versions of one or more of the figures in this paper are available online at http://ieeexplore.ieee.org.}
\thanks{This paper has been accepted for publication by IEEE Transactions on Information Theory.  The DOI is 10.1109/TIT.2016.2634007, and copyright has been transferred to IEEE.  An (early access) version of this article is available from IEEE at http://ieeexplore.ieee.org/document/7762932/.}
}
\begin{document}
\maketitle

\begin{abstract}
We consider a transmitter broadcasting random linear combinations (over a field of size $d$) formed from a block of $c$ packets to a collection of $n$ receivers, where the channels between the transmitter and each receiver are independent erasure channels with reception probabilities $\qbf = (q_1,\ldots,q_n)$.  We establish several properties of the random delay until all $n$ receivers have recovered all $c$ packets, denoted $Y_{n:n}^{(c)}$.  First, we provide lower and upper bounds, exact expressions, and a recurrence for the moments of $Y_{n:n}^{(c)}$.  Second, we study the delay per packet $Y_{n:n}^{(c)}/c$ as a function of $c$, including the asymptotic delay (as $c \to \infty$), and monotonicity (in $c$) properties of the delay per packet.  Third, we employ extreme value theory to investigate $Y_{n:n}^{(c)}$ as a function of $n$ (as $n \to \infty$).  Several results are new, some results are extensions of existing results, and some results are proofs of known results using new (probabilistic) proof techniques.  
\end{abstract}

\begin{IEEEkeywords}
broadcast channel, erasure channel, network coding, delay, random linear combination.
\end{IEEEkeywords}

\section{Introduction} 
\label{sec:introduction}

The focus of this paper is on the number of time slots required to broadcast a collection of $c$ packets to $n$ receivers, where the channels between the transmitter and each receiver are independent erasure channels with reception probabilities $\qbf = (q_1,\ldots,q_n)$ (Fig.\ \ref{fig:1}).  In particular, the random delay associated with the transmission is the number of elapsed time slots until each receiver has all $c$ packets, when the transmitter forms random linear combinations of the $c$ packets over a (finite or infinite) field of size $d$, denoted $Y_{n:n}^{(c)}$.  The focus of our investigation is primarily, although not exclusively, on deriving properties (e.g., exact expressions, lower and upper bounds, asymptotics) of the $r^{\rm th}$ moment of $Y_{n:n}^{(c)}$, denoted $\Ebb\left[\left(Y_{n:n}^{(c)}\right)^r\right]$.  The proof techniques we employ are almost entirely probabilistic.  

\begin{figure}[!ht]
\centering
\includegraphics[width=0.4\textwidth]{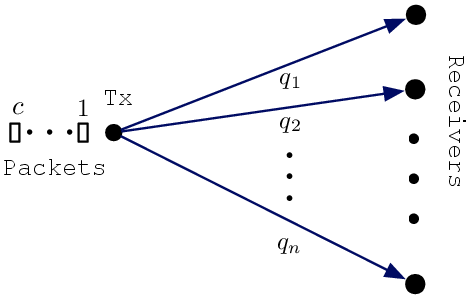}
\caption{The heterogeneous broadcast erasure channel consists of a transmitter and $n$ receivers connected over independent erasure channels with reception probabilities $\qbf = (q_1,\ldots,q_n)$.  Of interest is the time required for all $n$ receivers to receive all $c$ packets, denoted $Y_{n:n}^{(c)}$.  The homogeneous channel has $q_j =q$ for $j \in \{1, \ldots, n\}$.}
\label{fig:1}
\end{figure}

\subsection{Motivation and related work}

The broad motivation and context for this work is the fact that the use of random linear combinations of packets as a coding paradigm has received a great deal of attention within both the network coding (see, e.g., \cite{HoMed2006} and subsequent work) and fountain coding  (see, e.g., \cite{ByeLub2002} and subsequent work) communities.  More specifically, there have been a number of recent (since 2006) works focused on the broadcast delay of random linear combinations over erasure channels, including \cite{EryOzd2008,CogShr2011a,CogShr2011b,CogShr2011c,YanShr2012,SwaEry2013}.  We now relate the contributions of this paper within the context of this body of work.  Several of our results are new, some results are extensions of existing results, and some results are new proofs of known results using new (probabilistic) proof techniques.  

To our knowledge the earliest work on broadcast delay using random linear combinations over erasure channels is that of Eryilmaz, Ozdaglar, M\'{e}dard, and Ahmed \cite{EryOzd2008} (with a preliminary conference version in \cite{EryOzd2006}).  They compare the delays under scheduling with (or without) channel state information vs.\ random linear coding and establish the superiority of the latter, and also establish explicit expressions for the delays under random linear coding, along with asymptotic expressions in the number of receivers.  We extend and reprove via alternate techniques a couple of their results.

The work of Cogill, Shrader, and Ephremides \cite{CogShr2011a}, and Cogill and Shrader \cite{CogShr2011b,CogShr2011c} address (variously) throughput, delay, and stability of multicast queueing systems over erasure channels using random linear packet coding.  Of these, the work closest to ours is \cite{CogShr2011a}.  Their focus in this work is primarily on the stability region of arrival rates for multicast queueing systems, and in establishing their results they obtain several results on the broadcast delay. Again, we extend and reprove via alternate techniques a couple of their results.  

Recent work by Yang and Shroff \cite{YanShr2012} has extended the channel model to the Markov-modulated erasure channel (allowing correlations in time).  Additionally, Swapna, Eryilmaz, and Shroff \cite{SwaEry2013} have studied extensions of this framework to address the case where the blocklength $c$ scales with $n$, the number of receivers, and in particular they show the existence of a phase-transition at $c(n) = \Theta(\log n)$.  Our work does not address either of these extensions --- we restrict our attention to erasure channel realizations that are independent in time (and across users), and we do not address the asymptotic regime when the blocklength $c$ grows with the number of users $n$.  

Besides our contributions extending certain results in the above work, we have also investigated the following additional topics that, to our knowledge, have not been previously addressed.  First, a common theme throughout our work is on lower and upper bounds on each of the moments of the delay, which hold for all finite $c,n$, and which we additionally show to be (almost) asymptotically tight for fixed $n$ as $c$ grows large, and for fixed $c$ as $n$ grows large.  Second, we establish the intuitive fact that the expected delay per packet is decreasing in the blocklength $c$.  Although this fact is intuitive, the proof is non-trivial, and in fact we show some (perhaps) counter-intuitive results giving necessary and sufficient conditions on the sample-path realizations of the delay per packet to be decreasing as the blocklength is increased.  Third, we employ extreme value theory and stochastic ordering in studying the asymptotic behavior of the delay as the number of receivers $n$ is increased.  As will become evident, it is natural to use these two tools together.  

\subsection{Outline}

The paper is structured as follows.  The model and common notation are introduced in \S\ref{sec:modelnotation}, and \S\ref{sec:delayUT} analyzes the delay without coding, i.e., when each packet in the block is repeatedly broadcast until received by all receivers.  The next three sections form the heart of the paper.  First, in \S\ref{sec:delayRLNC} we address the delay under random linear combinations, with subsections for $i)$ lower and upper bounds, $ii)$ exact expressions, $iii)$ a recurrence for the delay, and $iv)$ a characterization of the channels that minimize the delay.  Second, in \S\ref{sec:deponbl} we address the delay per packet as a function of the blocklength $c$, with subsections for $i)$ the asymptotic (in $c$) delay per packet, $ii)$ monotonicity (in $c$) properties of the delay per packet, and $iii)$ bounds on the expected delay per packet that are (almost) asymptotically tight in $c$. Third, in \S\ref{sec:deponnumrx} we address the delay as a function of the number of receivers $n$, where our approach couples (extreme) order statistic inequalities with stochastic ordering and extreme value theory to establish that the bounds on delay are asymptotically tight in $n$.  A brief conclusion is offered in \S\ref{sec:conclusion}. Several appendices follow the references, holding long proofs from \S\ref{sec:delayUT}, \S\ref{sec:delayRLNC}, \S\ref{sec:deponbl}, and \S\ref{sec:deponnumrx} respectively.  Table \ref{tab:summaryofresults} contains a summary of the results in the paper (refer to \S\ref{sec:modelnotation} for notation), where the governing assumption for each result (see Assumption \ref{assum:GoverningAssumptions} in \S\ref{ssec:discdisbn}), if applicable, is indicated.

\begin{table}
\centering
\caption{Summary of results}
\begin{tabular}{cll}
\S$\textbf{\#}$/Result & {\bf Title}/Description \\ \hline \hline
{\bf \S\ref{sec:modelnotation}} & {\bf Model and notation} \\ 
Prop.\ \ref{prop:stochorder} & Stochastic orderings under different assumptions \\ \hline \hline
{\bf \S\ref{sec:delayUT}} & {\bf Delay under uncoded transmission} \\ 
Prop.\ \ref{prop:boundsOnEZ} & Bounds on $\Ebb[X_{n:n}^r]$ (A2) \\
Cor.\ \ref{cor:1} & Bounds on $\Ebb[X_{n:n}^r]$ (A3) \\ \hline 
Prop.\ \ref{prop:kthMomentGeo} & Exact $\Ebb[X^r]$ \\
Prop.\ \ref{prop:kthMomentMaxGeo} & Exact $\Ebb[X_{n:n}^r]$ (A2) \\
Cor.\ \ref{cor:firstTwoMomentsMaxGeo} & Exact $\Ebb[X_{n:n}]$ and $\Ebb[X_{n:n}^2]$ (A2) \\ \hline \hline 
{\bf \S\ref{sec:delayRLNC}} & {\bf Delay under random linear combinations} \\  
Prop.\ \ref{prop:boundsOnEY1} & Bounds on $\Ebb\left[ \left(Y_{n:n}^{(c)}\right)^r \right]$ (A1) \\
Prop.\ \ref{prop:boundsOnEY2}& Bounds on $\Ebb\left[ \left(Y_{n:n}^{(c)}\right)^r \right]$ (A2) \\
Prop.\ \ref{prop:boundsOnEY3} & Bounds on $\Ebb\left[ \left(Y_{n:n}^{(c)}\right)^r \right]$ (A3) \\ \hline
Prop.\ \ref{prop:exacty} & Exact $\Ebb\left[ \left(Y_{n:n}^{(c)}\right)^r \right]$ (A1,A2) \\ \hline
Prop.\ \ref{pro:recexp} & Recurrence for $Y_{n:n}^{(\cbf)}$ (A1,A2) \\
Cor.\ \ref{cor:exprec} & Recurrence for $\Ebb[Y_{n:n}^{(\cbf)}]$ (A1,A2) \\ \hline
Prop.\ \ref{prop:extremaldelayUT} & Minimization of $\Ebb[X_{n:n}^r]$ (A2)\\
Prop.\ \ref{prop:extremaldelayRLNC} & Minimization of $\Ebb\left[ \left(Y_{n:n}^{(c)}\right)^r \right]$ (A2) \\ \hline \hline
{\bf \S\ref{sec:deponbl}} & {\bf Dependence of RLC delay on the blocklength $c$} \\
Prop.\ \ref{prop:convergence} & Convergence in $c$ of $\{Y_{n:n}^{(c)}/c\}$ (A1) \\
Prop.\ \ref{prop:asymcondsat} & Satisfying conditions for Prop.\ \ref{prop:convergence} (A1) \\
Prop. \ref{prop:comparingAsymptotic} & Inequality $\lim_{c \to \infty} \Ebb[Y_{n:n}^{(c)}]/c \leq \Ebb[X_{n:n}]$ (A1) \\ \hline
Prop.\ \ref{prop:monotonicityHomogeneous} & $\Ebb[Y_{n:n}^{(c)}]/c$ monotone decreasing in $c$ (A2) \\
Prop.\ \ref{prop:monotonicityHeterogeneous} & $\Ebb[Y_{n:n}^{(c)}]/c$ monotone decreasing in $c$ (A1) \\
Cor.\ \ref{cor:blockpartitionopt} & Optimal sum block delay over block partitions (A1) \\
Prop.\ \ref{prop:nostochasticordering} & $\frac{1}{c}Y^{(c)}_{j}$ and $\frac{1}{c+1}Y^{(c+1)}_{j}$ not stochastically ordered (A2) \\
Prop.\ \ref{prop:monoSamplePathCombined} & Sample path monotonicity in $c$ of $T_{n:n}^{(c,m)}(\omega)$ (A2) \\ \hline
Prop.\ \ref{prop:LBandUBinCHomoRxInftyFieldsize} & Asymptotically (in $c$) tight bounds on $\Ebb[Y_{n:n}^{(c)}]/c$ (A3) \\ \hline \hline
{\bf \S\ref{sec:deponnumrx}} & {\bf Dependence of RLC delay on the num.\ of receivers $n$} \\
Cor.\ \ref{cor:lowupboundsnegbinevt} & Bounds on $\lim_{n \to \infty}  \Ebb \left[ \left(\phi(q) Y_{n:n}^{(c)} - b_n\right)^r \right]$ (A3) \\
Lem. \ref{lem:momentConvergence} & Standardized and non-standardized asymptotic moments \\
Prop.\ \ref{prop:scalingOfYnnWRTn} & Asymptotic (in $n$) scaling of $\Ebb\left[ \left( Y_{n:n}^{(c)}\right)^{r} \right]$ (A3) \\ \hline
Prop.\ \ref{prop:dlcasymptoticinn} & Asym.\ (in $n$) tight lower bound on $\Ebb\left[\left(\tilde{Y}_{n:n}^{(c)}\right)^{r}\right]$ (A3) \\
Prop.\ \ref{prop:rossasymptoticinn} & Asym.\ (in $n$) tight upper bound on $\Ebb\left[\left(\tilde{Y}_{n:n}^{(c)}\right)^{r}\right]$ (A3) \\ \hline \hline
{\bf Appendices} & \\ \hline
Lem.\ \ref{lem:regbetamono} & $I_x(a,b)$ increasing in $b$ \\
Lem.\ \ref{lem:regbetalogconcave} & $\log I_{x}(a,b)$ concave on $x \in (0,1)$ \\
Prop.\ \ref{prop:LogConcaveSuffConditions} & Conditions ensuring log-concavity of $1 - F_{Z}^{n-1}(t)$ \\
Prop.\ \ref{prop:GammaLogConcave} & $1 - F_{Z}^{n-1}(t)$ log-concave for $Z \sim \mathrm{Gamma}(c,1)$ \\
Prop.\ \ref{prop:delaCalOptimizationGammaRV} & Optimization of de la Cal's bound \eqref{eq:delaCallb} \\ \hline
Lem.\ \ref{lem:randind} & Column invariant row-index selection rule (A2) \\ 
Lem.\ \ref{lem:tuwerty} & $\Ebb\left[ \max_{j} \sum_{k=1}^{c} Z_{j,k}^{(c+1)} \right] > \Ebb\left[ \sum_{k=1}^{c} Z_{\hat{J}^{(c+1)},k}^{(c+1)} \right]$ (A2) \\
Lem.\ \ref{lem:markovchainselectionrule} & $(\hat{J},X_{j,k}\!+\!X_{j,k'},X_{j,k},X_{j,k'})$ form Markov Chains (A1) \\
Lem.\ \ref{lem:StochasticOrderingOfGeoRVs} & Stochastic ordering of geometric RVs (A1) \\ \hline
Prop.\ \ref{prop:rossdlcexpexample} & Lower and upper moment bounds for iid exponentials \\ \hline \hline
\end{tabular}
\label{tab:summaryofresults}
\end{table} 

\section{Model and notation} 
\label{sec:modelnotation}

In this section we introduce the notation and the model.  Relevant discrete distributions are covered in \S\ref{ssec:discdisbn} and continuous distributions in \S\ref{ssec:contdisbn}.  In many situations (e.g., bounding a moment of the random delay) it is more convenient to work with the ``associated'' continuous random variable (RV), where the association is through the existence of a stochastic ordering, as will be described in \S\ref{ssec:contdisbn}.

Our notational convention is to denote the set of natural numbers (i.e., positive integers) by $\Nbb \equiv \{1, 2, \ldots \}$. For any $n \in \Nbb$, we write $[n]$ to denote the set $\{1, \ldots, n\}$. $\mathbf{1}_{S}$ is either an indicator function if $S$ is a boolean function (event), or a binary vector with $S$ indicating the set of indices taking $1$.  The natural logarithm is denoted $\log(\cdot)$. Our convention for the geometric RV is such that it denotes the number of independent Bernoulli trials needed to get the first ``success'' (hence the support $\{1, 2, \ldots \}$) and is parameterized by $q$, the success probability. $\Pbb(\cdot)$ and $\Ebb[\cdot]$ denote the probability and expectation respectively. Superscripts enclosed in parentheses of a RV indicate the corresponding blocklength. $Z \sim F_Z$ indicates the RV $Z$ has a cumulative distribution function (CDF) $F_Z$. $Z_1 \sim Z_2$ and $Z_1 \stackrel{d}{=} Z_2$ for RVs $Z_1,Z_2$ both mean they are equal in distribution. Table \ref{tab:notationgeneral} lists frequently used notation, while Table \ref{tab:notationdisbn} lists notation for the specific distributions in the paper; additional notation will be explained at first use.

\begin{table}[!ht]
\centering
\caption{General notation}
\begin{tabular}{ll}
Symbol & Meaning \\ \hline\hline
$n$, $[n]$ & number and set of receivers \\
$[n]_s$ & set of all subsets of $[n]$ of size $s \in [n]$ \\
$c$, $[c]$ & blocklength, and set of packets per block \\ \hline
$\qbf = (q_{j \in [n]})$ & reception probabilities (A1, A2) \\
$q_{j \in [n]} = q$ & reception probabilities (A3) \\
$\Qbf = (q_{j \in [n], k \in [c]})$ & success probability matrix (A1) \\ 
$q_{j \in [n],k \in [c]} = q_j$ & success probabilities (A2) \\
$q_{j \in [n],k \in [c]} = q$ & success probabilities (A3) \\ \hline
$\phi(q)$ & rate parameter (\S\ref{sec:modelnotation}) of the continuous analog RV \\
$d$ & field size for coeff.\ used in linear combinations \\ \hline
$H_n$ & $n^{\rm th}$ harmonic number \\
$\binom{m}{k}$ & binomial coefficient \\
${m \brace k}$ & Stirling number of the second kind \\
$Z,\tilde{Z}$ & a generic discrete/continuous RV \\
$Z_{n:n},\tilde{Z}_{n:n}$ & maximum order statistic of $(Z_{j \in [n]})$, $(\tilde{Z}_{j \in [n]})$ \\
$F_Z(z),F_{\tilde{Z}}(z) $ & cumulative distribution function (CDF) for $Z,\tilde{Z}$ \\
$p_Z(z),f_{\tilde{Z}}(z)$ & prob.\ mass/density function (PMF/PDF) for $Z,\tilde{Z}$ \\
$\Ebb[Z^r],\Ebb[\tilde{Z}^r]$ & $r^{\rm th}$ ($r \in \Nbb$) moment of $Z,\tilde{Z}$ \\ \hline\hline
\end{tabular}
\label{tab:notationgeneral}
\end{table}

\begin{table}[!ht]
\centering
\caption{Probability distributions}
\begin{tabular}{rl}
Symbol & Meaning \\ \hline\hline
$X \sim \mathrm{Geo}(q)$ & geometric RV with success prob.\ $q$ \\
$\tilde{X} \sim \mathrm{Exp}(1)$ & exponential RV with rate $1$ \\
$\frac{1}{\phi}\tilde{X} \sim \mathrm{Exp}(\phi)$ & exponential RV with rate $\phi$ \\ \hline
$Y_j^{(c)} = \sum_{k=1}^{c} X_{j,k}$ & time required for $c$ successes at receiver $j$ \\
$Y_j^{(c)} \sim \qquad \qquad \quad \; $ & \\
$\mathrm{GenNegBin}(c,\qbf_j)$ & generalized negative binomial RV (A1) \\
$\mathrm{NegBin}(c,q_j)$ & negative binomial RV (A2) \\ 
$\mathrm{NegBin}(c,q)$ & negative binomial RV (A3) \\ \hline
$\tilde{Y}^{(c)} \sim \mathrm{Gamma}(c,1)$ & sum of $c$ iid unit rate exponential RVs \\
$\frac{1}{\phi} \tilde{Y}^{(c)} \sim \mathrm{Gamma}(c,\phi)$ & sum of $c$ iid exponential RVs with rate $\phi$ \\
$\tilde{Y}_j^{(c)} = \sum_{k=1}^{c} \tilde{X}_{j,k}$ & continuous analog of $Y_j^{(c)}$ \\
$\tilde{Y}_j^{(c)} \sim \qquad \qquad \quad \;$ & \\
$\mathrm{HypoExp}(c,\phi(\qbf_j))$ & hypo-exponential RV (A1) \\
$\mathrm{Gamma}(c,\phi(q_j))$ & Gamma RV (A2) \\ 
$\mathrm{Gamma}(c,\phi(q))$ & Gamma RV (A3) \\ \hline
$W \sim \mathrm{Bin}(m,q)$ & binomial RV with $m$ trials \& succ.\ prob.\ $q$ \\
$I_x(\alpha,\beta)$ & regularized incomplete beta function \\
$I_q(l,m-l+1)$ & CCDF of $W$ evaluated at $l$: $\Pbb(W \geq l)$\\
\hline\hline
\end{tabular}
\label{tab:notationdisbn}
\end{table} 

\subsection{Discrete distributions}
\label{ssec:discdisbn}

Index the transmission slots by $\Nbb$, and define the (random) block delay to be the total number of elapsed slots until each of the $n$ receivers has successfully decoded the block of $c$ packets.  We consider two separate transmission schemes: $i)$ uncoded transmission (\S\ref{sec:delayUT}), abbreviated UT, and $ii)$ random linear combinations (\S\ref{sec:delayRLNC} through \S\ref{sec:deponnumrx}), abbreviated RLC.

Under UT, the transmitter in effect treats each of the $c$ packets as a separate block, and repeatedly (re-)transmits each packet until all receivers have that packet, at which time it moves on to the next packet in the block of $c$ packets, if any.  The presumption is that a feedback channel exists which alerts the transmitter that all $n$ receivers have the packet.  Due to the standing independence assumptions, it is clear that the overall random delay to transmit all $c$ packets is the sum of $c$ independent and identically distributed (iid) RVs, each representing the time (in slots) for the transmitter to complete one of the $c$ packets.  In particular, define the random delay per packet under UT as the maximum of $n$ geometric RVs, $X_{n:n} \equiv \max(X_1,\ldots,X_n)$, with $(X_1,\ldots,X_n)$ independent and $X_j \sim \mathrm{Geo}(q_j)$ representing the delay per packet under UT for receiver $j$, i.e., the number of transmission attempts until receiver $j$ is successful.  

Under RLC, the transmitter forms in each time slot a new random linear combination of the $c$ (original, information) packets, with coefficients generated uniformly at random from a finite or infinite field of size $d$, namely $\{0,\ldots,d-1\}$.  This combination of packets (i.e., the encoded packet) is then broadcast to the receivers; the block delay under RLC is the number of time slots until all receivers have decoded all $c$ packets. 
In particular, each receiver is able to recover the packets after receiving $c$ linearly independent combinations.  This is because $c$ receptions are required for the matrix of coding vectors, formed by stacking the vectors of coefficients used in each combination, to have full rank, and therefore to be invertible \cite{HoMed2006,HoLun2008,Fra2011}.  To be sure, these coefficients constitute a source of overhead not found in UT, but this overhead can be amortized by scaling the packet size.  Alternately, so-called ``non-coherent network coding'', using the concept of vector space or subspace coding \cite{KoeKsc2008}, ameliorates the need for the coefficients to be incorporated in the packet header.  At any rate, our interest lies in the delay and not in the packet encoding overhead, and as such we ignore the RLC overhead relative to UT.

More precisely, the block delay for a block of $c$ packets under RLC is denoted by $Y_{n:n}^{(c)} \equiv \max \left(Y_1^{(c)},\ldots,Y_n^{(c)} \right)$.  The expected block delay is $\Ebb\left[Y_{n:n}^{(c)}\right]$ and the expected delay per packet is $\Ebb\left[Y_{n:n}^{(c)}\right]/c$.  Each $Y_{j}^{(c)} = \sum_{k = 1}^{c} X_{j,k}$ is a generalized negative binomial RV, denoted $Y_j^{(c)}\sim \mathrm{GenNegBin}(c,\qbf_j)$ with parameter $\qbf_j = (q_{j,1},\ldots,q_{j,c})$, for independent geometric RVs $(X_{j,1},\ldots,X_{j,c})$ with $X_{j,k} \sim \mathrm{Geo}(q_{j,k})$.  We say that a receiver $j$ is in \textit{state} $k \in [c]$ if it has already received a maximum of $k-1$ linearly independent combinations from the transmitter.  Then $X_{j,k}$ represents the duration for which receiver $j$ stays in state $k$, and is given by the elapsed time slots between receiver $j$ obtaining the $(k-1)^{\rm th}$ and $k^{\rm th}$ linearly independent combinations.  We reiterate that $k$ is not a time index, per se, although $k = k(t)$ is nondecreasing in $t$.  That $\left(X_{j,k}\right)$ are independent in $j$ ($k$) is due to the assumption that the erasure channels are independent in space (time), respectively.  The $n \times c$ matrix $\Qbf$ with entries $q_{j,k}$ for $j \in [n]$ and $k \in [c]$ holds the success probability indexed by receiver $j$ at state $k$:
\begin{equation}
\label{eq:tyt}
q_{j,k} = (1-d^{k-1-c}) q_j, ~~ j \in [n], ~ k \in [c].
\end{equation}
Here, $q_j$ is the time-invariant channel reception probability for receiver $j$, $d$ is the field size, and $1-d^{k-1-c}$ is the probability that the linear combination sent by the transmitter is in fact independent of the $k-1$ linear combinations already received by receiver $j$ (e.g., \cite[Lemma 1]{CogShr2011a}).  Eq.\ \eqref{eq:tyt} is the most general case, which we occasionally specialize for tractability, as indicated below.
\begin{assumption}
\label{assum:GoverningAssumptions}
Throughout, we assume one of the three cases listed below, in order of decreasing generality.
\begin{itemize}
\itemsep=-2pt
\item[A1:] {\bf State-dependent receptions, heterogeneous receivers.} The field size $d$ is finite and the reception probabilities $\qbf = (q_1,\ldots,q_n)$ are unrestricted (heterogeneous).  The success probabilities are given by the $n \times c$ matrix $\Qbf$ with entries $q_{j,k}$ in \eqref{eq:tyt}.  The random block delay $Y_{n:n}^{(c)}$ is the maximum of $n$ independent generalized negative binomial RVs $Y_j^{(c)} \sim \mathrm{GenNegBin}(c,\qbf_j)$, with $\qbf_j = (q_{j,1},\ldots,q_{j,c})$, for $j \in [n]$.  
 
\item[A2:] {\bf State-independent receptions, heterogeneous receivers.} The field size $d$ is infinite, but the reception probabilities $\qbf = (q_1,\ldots,q_n)$ are unrestricted (heterogeneous).  Due to the infinite field size, received linear combinations are linearly independent of all previous combinations, and so the success probabilities equal the reception probabilities, which are independent of $k$, i.e., $q_{j,k} = q_j$ for $j \in [n]$ and all $k \in [c]$.  The success probabilities are given by $\qbf = (q_j, j \in [n])$.  The random block delay $Y_{n:n}^{(c)}$ is the maximum of $n$ independent negative binomial RVs $Y_j^{(c)} \sim \mathrm{NegBin}(c,q_j)$, for $j \in [n]$.

\item[A3:] {\bf State-independent receptions, homogeneous receivers.} The field size $d$ is infinite, and the reception probabilities are homogeneous, i.e., $q_j = q$ for all $j \in [n]$.  Due to the infinite field size, received linear combinations are linearly independent of all previous combinations, and so  the success probabilities equal the reception probability, which is  independent of $k$ and $j$, i.e., $q_{j,k} = q$ for all $j \in [n]$ and all $k \in [c]$.  The random block delay $Y_{n:n}^{(c)}$ is the maximum of $n$ iid negative binomial RVs $Y_j^{(c)} \sim \mathrm{NegBin}(c,q)$, for $j \in [n]$.
\end{itemize}
\end{assumption}
Throughout, when necessary we will indicate the governing assumption.  To reiterate, A1 is the most general assumption, A2 is a special case of A1 for $d = \infty$, and A3 is a special case of A2 for $q_{j \in [n]} = q$.  The governing assumption (if applicable) for each result in the paper is also listed in Table \ref{tab:summaryofresults}.  

It is worth noting that, in general, setting $c=1$ in RLC does not recover UT as a special case.  In particular, for $d < \infty$, setting $c=1$ gives $q_{j,1} = (1-d^{-1}) q_j \neq q_j$, since we are forming linear combinations over a block of one packet, for which there is a probability of selecting the $0$ coefficient, which is (trivially) linearly dependent upon previous receptions.  Naturally, we {\em do} recover UT as a special case of RLC for $c=1$ and $d=\infty$.  

\subsection{Continuous distributions}
\label{ssec:contdisbn}

Although the (discrete) geometric and negative binomial distributions directly capture the discrete delay of interest to us, nonetheless we will often find it useful to consider what we call the {\em continuous analogs} of these distributions.  The lynchpin connecting the discrete RVs to their continuous RV analogs is the notion of \textit{stochastic ordering}, the basic concepts of which we briefly review below, drawing directly from Ross \cite[Chapter 9]{Ros1996}.  As described below, stochastic ordering is preserved under positive scaling, translation, and component-wise nondecreasing functions (with independent components) and more importantly implies moment ordering.  Collectively, these properties allow us to establish inequalities on the $r^{\rm th}$ moment of a discrete RV in terms of the $r^{\rm th}$ moment of its (often more tractable) continuous analog.

Generic scalar (continuous or discrete) RVs $Z_1,Z_2$ are said to be stochastically ordered, denoted $Z_1 \leq_{\rm st} Z_2$, if for all $z$: $\bar{F}_{Z_1}(z) \leq \bar{F}_{Z_2}(z)$ for $\bar{F} = 1 - F$ the complementary CDF (CCDF) of the RV.  Stochastic ordering implies moment ordering, i.e., $Z_1 \leq_{\rm st} Z_2$ implies $\Ebb[Z_1] \leq \Ebb[Z_2]$ (\cite[Lemma 9.1.1]{Ros1996}).  If we instead let $Z_1,Z_2$ denote random $n$-vectors with {\em independent} stochastically ordered components, i.e., $Z_1 = (Z_{1,1},\ldots,Z_{1,n})$ and $Z_2 = (Z_{2,1},\ldots,Z_{2,n})$ each have independent components and $Z_{1,j} \leq_{\rm st} Z_{2,j}$ for all $j \in [n]$, then the stochastic ordering is preserved under any multivariate nondecreasing function $f$, i.e., $f(Z_1) \leq_{\rm st} f(Z_2)$ (\cite[Example 9.2(A)]{Ros1996}).  Since the functions $f(z) = z^r$ ($r > 0$) for nonnegative $z$ and $f(z_1,\ldots,z_n) = \max(z_1,\ldots,z_n)$ are both nondecreasing, it follows that $\max(Z_{1,1},\ldots,Z_{1,n}) \leq_{\rm st} \max(Z_{2,1},\ldots,Z_{2,n})$, $\max(Z_{1,1},\ldots,Z_{1,n})^{r} \leq_{\rm st} \max(Z_{2,1},\ldots,Z_{2,n})^{r}$, and thus $\Ebb[\max(Z_{1,1},\ldots,Z_{1,n})^r] \leq \Ebb[\max(Z_{2,1},\ldots,Z_{2,n})^r]$, for any nonnegative integer $r$.

Throughout the paper we indicate the continuous RV matched to a discrete RV, say $Z$, by $\tilde{Z}$.  As summarized in Table \ref{tab:notationdisbn}, let $\tilde{X} \sim \mathrm{Exp}(1)$ denote a unit-rate exponential RV, and $\tilde{Y}^{(c)} \sim \mathrm{Gamma}(c,1) = \tilde{X}_1 + \cdots + \tilde{X}_c$ denote a Gamma RV, i.e., the sum of $c$ independent unit-rate exponential RVs.  For $\phi > 0$ it is easily seen that $\frac{1}{\phi} \tilde{X} \sim \mathrm{Exp}(\phi)$ and $\frac{1}{\phi} \tilde{Y}^{(c)} \sim \mathrm{Gamma}(c, \phi)$.  Thus there is no loss of generality in restricting our attention to unit rate exponentials and Gamma RVs, as the general case is obtained by scaling.  

Recall the discrete definitions of $X_{n:n} \equiv \max(X_1,\ldots,X_n)$ (for independent $X_j \sim \mathrm{Geo}(q_j)$) and $Y_{n:n}^{(c)} \equiv \max(Y_1^{(c)},\ldots,Y_n^{(c)})$ (for independent $Y_j^{(c)} = X_{j,1} + \cdots + X_{j,c}$ with $X_{j,k} \sim \mathrm{Geo}(q_{j,k})$).  Set $\phi(q) \equiv - \log (1-q)$, which may be viewed as the ``rate'' parameter of the exponential RV matched to $\mathrm{Geo}(q)$.  Define the continuous analog $\tilde{X}_{n:n} \equiv \max(\tilde{X}_1,\ldots,\tilde{X}_n)$ (for independent $\tilde{X}_j$), where $i)$ under Assumption A2 $\tilde{X}_j \sim \frac{1}{\phi(q_j)} \tilde{X} \sim \mathrm{Exp}(\phi(q_j))$, while $ii)$ under Assumption A3 $\tilde{X}_j \sim \frac{1}{\phi(q)} \tilde{X} \sim \mathrm{Exp}(\phi(q))$.  Similarly, we define the continuous analog $\tilde{Y}_{n:n}^{(c)} \equiv \max(\tilde{Y}_1^{(c)},\ldots,\tilde{Y}_n^{(c)})$ (for independent $\tilde{Y}_j^{(c)} = \tilde{X}_{j,1} + \cdots + \tilde{X}_{j,c}$ the sum of $c$ independent RVs $(\tilde{X}_{j,k},k \in [c])$).  We have: 
\begin{enumerate}
\item[$i)$] Under Assumption A1 $\tilde{X}_{j,k} \sim \frac{1}{\phi(q_{j,k})} \tilde{X} \sim \mathrm{Exp}(\phi(q_{j,k}))$, and $\tilde{Y}_j^{(c)} \sim \mathrm{HypoExp}(c,\phi(\qbf_j))$ for $\phi(\qbf_j) = (\phi(q_{j,k}), k \in [c])$; 
\item[$ii)$] Under Assumption A2 $\tilde{X}_{j,k} \sim \frac{1}{\phi(q_{j})} \tilde{X} \sim \mathrm{Exp}(\phi(q_{j}))$, and $\tilde{Y}_j^{(c)} = \frac{1}{\phi(q_j)} \tilde{Y}^{(c)} \sim \mathrm{Gamma}(c,\phi(q_j))$;
\item[$iii)$] Under Assumption A3 $\tilde{X}_{j,k} \sim \frac{1}{\phi(q)} \tilde{X} \sim \mathrm{Exp}(\phi(q))$, and $\tilde{Y}_j^{(c)} = \frac{1}{\phi(q)} \tilde{Y}^{(c)} \sim \mathrm{Gamma}(c,\phi(q))$.
\end{enumerate}
We now establish several stochastic orderings between discrete RVs and their continuous analogs.

\begin{proposition}
\label{prop:stochorder}
The following stochastic orderings hold for each $j \in [n]$:
\begin{enumerate}
\item[$i)$] Under Assumption A1, for RLC: $\tilde{Y}_{j}^{(c)} \sim \mathrm{HypoExp}(c, \phi(\qbf_{j}))$ and $Y_{j}^{(c)} \sim \mathrm{GenNegBin}(c, \qbf_{j})$:
\begin{equation}
\tilde{Y}_{j}^{(c)} \leq_{\rm st} Y_{j}^{(c)} \leq_{\rm st} \tilde{Y}_{j}^{(c)} + c.
\end{equation}

\item[$ii)$] Under Assumption A2, for UT: $\tilde{X}_j \sim \frac{1}{\phi(q_j)} \tilde{X} \sim \mathrm{Exp}(\phi(q_j))$ and $X_j \sim \mathrm{Geo}(q_j)$:
\begin{equation}
\frac{1}{\phi(q_{j})} \tilde{X} \leq_{\rm st} X_{j} \leq_{\rm st} \frac{1}{\phi(q_{j})} \tilde{X}+1.
\end{equation}

\item[$iii)$] Under Assumption A2, for RLC: $\tilde{Y}_j^{(c)} \sim \frac{1}{\phi(q_j)} \tilde{Y}^{(c)} \sim \mathrm{Gamma}(c,\phi(q_j))$ and $Y_{j}^{(c)} \sim \mathrm{NegBin}(c,q_{j})$:
\begin{equation}
\frac{1}{\phi(q_{j})} \tilde{Y}^{(c)} \leq_{\rm st} Y_{j}^{(c)} \leq_{\rm st}  \frac{1}{\phi(q_{j})} \tilde{Y}^{(c)} + c.
\end{equation}

\item[$iv)$] Under Assumption A3, for UT: $\tilde{X}_j \sim \frac{1}{\phi(q)} \tilde{X} \sim \mathrm{Exp}(\phi(q))$ and $X \sim \mathrm{Geo}(q)$:
\begin{equation}
\frac{1}{\phi(q)} \tilde{X} \leq_{\rm st} X \leq_{\rm st} \frac{1}{\phi(q)} \tilde{X}+1.
\end{equation}

\item[$v)$] Under Assumption A3, for RLC: $\tilde{Y}_j^{(c)} \sim \frac{1}{\phi(q)} \tilde{Y}^{(c)} \sim \mathrm{Gamma}(c,\phi(q))$ and $Y^{(c)} \sim \mathrm{NegBin}(c,q)$:
\begin{equation}
\frac{1}{\phi(q)} \tilde{Y}^{(c)} \leq_{\rm st} Y^{(c)} \leq_{\rm st}  \frac{1}{\phi(q)} \tilde{Y}^{(c)} + c.
\end{equation}
\end{enumerate}
The above stochastic orderings omit the dependence upon $j$ when the distributions are not dependent upon $j$.  Furthermore, all these stochastic orderings are preserved when the RVs are raised to the $r^{\rm th}$ power, and, after taking expectations of these powers, the ordering is preserved for the $r^{\rm th}$ moments. 
\end{proposition}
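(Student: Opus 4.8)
The core of the statement is a single elementary fact about geometric and exponential random variables, applied repeatedly and then propagated through sums, maxima, and powers. The plan is to first establish the ``atomic'' ordering $\frac{1}{\phi(q)}\tilde{X} \leq_{\rm st} X \leq_{\rm st} \frac{1}{\phi(q)}\tilde{X}+1$ for a single geometric RV $X \sim \mathrm{Geo}(q)$ and a single unit-rate exponential $\tilde{X}$, where $\phi(q) = -\log(1-q)$; this is parts $ii)$ and $iv)$. For this I would compare CCDFs directly: for $X \sim \mathrm{Geo}(q)$ we have $\bar{F}_X(z) = (1-q)^{\lceil z \rceil - 1}$ for $z > 0$ (with appropriate care at integer points and for $z \le 0$), while $\frac{1}{\phi(q)}\tilde{X} \sim \mathrm{Exp}(\phi(q))$ has $\bar{F}(z) = e^{-\phi(q) z} = (1-q)^{z}$. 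Then $(1-q)^{z} \le (1-q)^{\lceil z\rceil - 1}$ since $z \geq \lceil z \rceil - 1$ and $1-q \in (0,1)$, giving the left inequality; and $(1-q)^{\lceil z \rceil - 1} \le (1-q)^{z-1}$ since $\lceil z\rceil - 1 \geq z - 1$, which is exactly $X \leq_{\rm st} \frac{1}{\phi(q)}\tilde{X} + 1$ after recognizing $(1-q)^{z-1}$ as the CCDF of $\mathrm{Exp}(\phi(q))$ shifted by $1$. I would note explicitly that the same argument, verbatim with $q_j$ in place of $q$, gives the per-receiver atomic ordering used in part $i)$ at the level of individual inter-arrival times.

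Next I would lift the atomic ordering to sums. Stochastic ordering of independent components is preserved under coordinatewise nondecreasing functions (cited from Ross, Example 9.2(a) in the excerpt), and $(x_1,\ldots,x_c) \mapsto x_1 + \cdots + x_c$ is such a function. Applying this to the $c$ independent pairs $\bigl(\frac{1}{\phi(q_{j,k})}\tilde{X}, X_{j,k}\bigr)$ yields $\sum_{k=1}^c \frac{1}{\phi(q_{j,k})}\tilde{X}_k \leq_{\rm st} \sum_{k=1}^c X_{j,k} = Y_j^{(c)}$, i.e.\ $\tilde{Y}_j^{(c)} \leq_{\rm st} Y_j^{(c)}$ where the left side is by definition $\mathrm{HypoExp}(c,\phi(\qbf_j))$; and similarly $Y_j^{(c)} \leq_{\rm st} \sum_{k=1}^c \bigl(\frac{1}{\phi(q_{j,k})}\tilde{X}_k + 1\bigr) = \tilde{Y}_j^{(c)} + c$. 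This proves part $i)$; parts $iii)$ and $v)$ are the specializations $q_{j,k} = q_j$ (so the hypo-exponential collapses to $\mathrm{Gamma}(c,\phi(q_j))$) and $q_{j,k} = q$ respectively, and require no new argument.

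Finally, for the two closing assertions: raising to the $r^{\rm th}$ power preserves the ordering because $x \mapsto x^r$ is nondecreasing on $[0,\infty)$ and all RVs here are nonnegative (again Ross, Example 9.2(a) applied to a single coordinate); and stochastic ordering implies moment ordering, $Z_1 \leq_{\rm st} Z_2 \Rightarrow \Ebb[Z_1] \leq \Ebb[Z_2]$, by Ross Lemma 9.1.1, so applying this to $Z_i^r$ gives $\Ebb[Z_1^r] \leq \Ebb[Z_2^r]$. Both of these facts are already spelled out in the paragraph of the excerpt preceding the proposition, so the last part is essentially a citation.

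The only genuinely delicate point is the atomic CCDF comparison at the boundary: one must handle $z$ on the non-positive axis (where both CCDFs are $1$, or the exponential's is $< 1$ while the geometric's is $1$, which still respects the claimed direction since a nonnegative continuous RV started at a point mass issue does not arise for $\mathrm{Exp}$) and be careful that $\mathrm{Geo}(q)$ here is the number-of-trials parametrization taking values in $\{1,2,\ldots\}$ (so $\bar F_X(z) = 1$ for $z \le 0$ and the ceiling convention is the right one). I would state the ceiling/floor convention once at the start and check the inequality $z - 1 \le \lceil z \rceil - 1 \le z$ carefully; everything downstream is then purely formal, so I expect no real obstacle there.
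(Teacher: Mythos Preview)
Your proposal is correct and follows essentially the same approach as the paper: establish the atomic ordering between $\mathrm{Geo}(q)$ and $\mathrm{Exp}(\phi(q))$ by direct CCDF comparison, then lift to sums via the preservation of stochastic order under nondecreasing functions of independent components, with the remaining parts falling out as specializations. The only cosmetic difference is that the paper writes the geometric CCDF as $(1-q_j)^{\lfloor x\rfloor}$ using the floor rather than your $(1-q)^{\lceil z\rceil-1}$; the floor form is exact at integer arguments and so sidesteps the boundary care you flagged, but the substance is identical.
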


\begin{IEEEproof}
It suffices to prove cases $i)$ and $ii)$, since $iii)$ and $v)$ are special cases of $i)$, and $iv)$ is a special case of $ii)$.  First, we prove case $ii)$.  Let $X_j \sim \mathrm{Geo}(q_j)$ and recall $\tilde{X} \sim \mathrm{Exp}(1)$.  We first show $\frac{1}{\phi(q_j)} \tilde{X} \leq_{\rm st} X_{j}$.  Observe for any $x \geq 0$:
\begin{eqnarray}
\bar{F}_{\frac{1}{\phi(q_j)} \tilde{X}}(x) & = &\Pbb\left(\tilde{X} > \phi(q_j) x\right) = \erm^{-\phi(q_j)x} = (1-q_j)^{x} \nonumber \\
& \leq & (1-q_j)^{\lfloor x \rfloor} = \bar{F}_{X_j}(x),
\end{eqnarray}
where $\lfloor x \rfloor$ is the largest integer not exceeding $x$.  We next show $X_j \leq_{\rm st} \frac{1}{\phi(q_j)} \tilde{X}+1$:
\begin{eqnarray}
\bar{F}_{X_j}(x) & = & \Pbb(X_j > x) = (1-q_j)^{\lfloor x \rfloor} < (1-q_j)^{(x-1)} \nonumber \\
& = & \erm^{-\phi(q_j)(x-1)} = \bar{F}_{\frac{1}{\phi(q_j)} \tilde{X}+1}(x).
\end{eqnarray}
Second, we prove case $i)$.  The fact $\tilde{Y}_{j}^{(c)} \leq_{\rm st} Y_{j}^{(c)} \leq_{\rm st} \tilde{Y}_{j}^{(c)}+c$ follows from $1)$ the stochastic ordering of $\frac{1}{\phi(q_{j,k})} \tilde{X} \leq_{\rm st} X_{j,k} \leq_{\rm st} \frac{1}{\phi(q_{j,k})} \tilde{X}+1$, $2)$ the fact that the sum of independent nonnegative RVs is a nondecreasing function of a random vector composed of those random variables, and $3)$ the fact that stochastic ordering is preserved under nondecreasing functions of random vectors with independent stochastically ordered components (\cite[Example 9.2(A)]{Ros1996}).  
\end{IEEEproof}

\section{Delay under uncoded transmission} 
\label{sec:delayUT}

The $r^{\rm th}$ ($r \in \Nbb$) moment of delay under UT, $\Ebb\left[ X_{n:n}^{r} \right]$, is investigated in this section. We first give lower and upper bounds (\S\ref{ssec:utbounds}), then derive an exact closed-form expression using the (moment) generating function (\S\ref{ssec:utexact}), after which this section closes with further remarks on related work.  Throughout this section we assume A2: state-independent receptions, heterogeneous receivers.  State-independence follows from the fact that under UT there is no coding.

\subsection{Lower and upper bounds on the moments of delay under UT}
\label{ssec:utbounds}

Prop.\ \ref{prop:boundsOnEZ} below hinges upon the stochastic ordering between exponential and geometric RVs, and the following ``min-max identity''.  Define $[n]_s$ to be the set of all subsets of $[n]$ of size $s$.

\begin{proposition} [min-max identity, \cite{RosPek2007}, pp.\ 128] 
\label{prop:minmax}
For nonnegative (not necessarily independent) RVs $Z_{1}, \ldots, Z_{n}$:
\begin{equation}
\Ebb\left[ \max_{j \in [n]} Z_{j} \right] = \sum_{s=1}^{n}\left(-1\right)^{s+1} \sum_{A \in [n]_s} \Ebb \left[ \min_{j \in A} Z_j \right].
\end{equation}
\end{proposition}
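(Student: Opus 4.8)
The plan is to reduce the identity to the classical set-theoretic inclusion--exclusion principle by working at each ``level'' $t$ and then integrating, using the tail (``layer cake'') representation of a nonnegative quantity. This is the argument in \cite{RosPek2007}.

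First I would fix an outcome and a threshold $t \geq 0$, and consider the events $E_j = \{Z_j > t\}$ for $j \in [n]$. Observe that $\{\max_{j \in [n]} Z_j > t\} = \bigcup_{j \in [n]} E_j$, and that for any nonempty $A \subseteq [n]$ one has $\bigcap_{j \in A} E_j = \{\min_{j \in A} Z_j > t\}$. Expanding the product of complements,
\begin{equation}
\mathbf{1}\!\left\{ \bigcup_{j \in [n]} E_j \right\} = 1 - \prod_{j \in [n]} (1 - \mathbf{1}_{E_j}) = -\sum_{\emptyset \neq A \subseteq [n]} (-1)^{|A|} \prod_{j \in A} \mathbf{1}_{E_j} = \sum_{s=1}^{n} (-1)^{s+1} \sum_{A \in [n]_s} \mathbf{1}\!\left\{ \min_{j \in A} Z_j > t \right\},
\end{equation}
which is just inclusion--exclusion written with indicators, grouped by $|A| = s$.

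Next I would integrate both sides over $t \in (0,\infty)$. Since the right-hand side is a finite linear combination, the integral passes through it, and using $\int_0^\infty \mathbf{1}\{U > t\}\,\mathrm{d}t = U$ for any nonnegative $U$ (applied to $U = \max_{j\in[n]} Z_j$ on the left and to $U = \min_{j \in A} Z_j$ termwise on the right), this yields the \emph{pointwise} identity $\max_{j \in [n]} Z_j = \sum_{s=1}^n (-1)^{s+1} \sum_{A \in [n]_s} \min_{j \in A} Z_j$, valid for every outcome. Taking expectations of this finite linear combination of RVs gives the claimed formula.

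There is essentially no hard step here; the one point deserving care --- the ``main obstacle,'' such as it is --- is the legitimacy of taking expectations of the signed sum on the right. If $\Ebb[Z_j] = \infty$ for some $j$, then $\Ebb[\max_{j} Z_j] = \infty$ as well and the identity holds trivially in $[0,\infty]$; otherwise $\Ebb[\min_{j \in A} Z_j] \leq \Ebb[Z_{j_0}] < \infty$ for any fixed $j_0 \in A$ and every $A$, so linearity of expectation applies term by term with no $\infty - \infty$ ambiguity. Equivalently, one may first take expectations of the indicator identity to obtain the tail identity $\Pbb(\max_{j} Z_j > t) = \sum_{s=1}^n (-1)^{s+1} \sum_{A \in [n]_s} \Pbb(\min_{j \in A} Z_j > t)$, and then integrate it over $t \in (0,\infty)$.
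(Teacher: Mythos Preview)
The paper does not supply its own proof of this proposition; it is stated as a citation to \cite{RosPek2007}. Your argument via indicator inclusion--exclusion on the events $\{Z_j > t\}$ followed by the layer-cake integration is correct and is exactly the standard proof given in that reference, so there is nothing to add.
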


\begin{proposition} 
\label{prop:boundsOnEZ}
Assume A2: State-independent receptions, heterogeneous receivers.  The $r^{\rm th}$ moment of the maximum of $n$ independent geometric RVs $\Ebb[X_{n:n}^{r}]$ has lower and upper bounds
\begin{equation}
\label{eq:boundsOnEZ1}
\psi_{r}(\qbf) \leq \Ebb[X_{n:n}^{r}] \leq \sum_{s=0}^{r} \binom{r}{s} \psi_{s}(\qbf),
\end{equation}
where 
\begin{equation}
\label{eq:cd}
\psi_{r}(\qbf) \equiv r! \sum_{s=1}^{n}(-1)^{s+1} \sum_{A \in [n]_s} \left(\sum_{j \in A} \phi(q_{j})\right)^{-r}.
\end{equation}
\end{proposition}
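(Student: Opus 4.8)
The plan is to sandwich $\Ebb[X_{n:n}^{r}]$ between the $r^{\rm th}$ moment of the maximum of the associated exponential RVs and the $r^{\rm th}$ moment of that maximum shifted by one, and then to evaluate both bounding quantities in closed form via the min-max identity. First I would invoke Prop.\ \ref{prop:stochorder}$(ii)$, which gives $\frac{1}{\phi(q_j)} \tilde{X} \leq_{\rm st} X_j \leq_{\rm st} \frac{1}{\phi(q_j)} \tilde{X} + 1$ for each $j \in [n]$, with the RVs $\tilde{X}_j \equiv \frac{1}{\phi(q_j)}\tilde{X} \sim \mathrm{Exp}(\phi(q_j))$ independent across $j$. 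Since $\max$ is a nondecreasing function of a random vector with independent components and $z \mapsto z^r$ is nondecreasing on $[0,\infty)$, the chain of facts recorded in \S\ref{ssec:contdisbn} shows the stochastic ordering is preserved under both maps and implies moment ordering, yielding
\begin{equation}
\Ebb\left[ \tilde{X}_{n:n}^{r} \right] \leq \Ebb\left[ X_{n:n}^{r} \right] \leq \Ebb\left[ \left(\tilde{X}_{n:n}+1\right)^{r} \right],
\end{equation}
where $\tilde{X}_{n:n} = \max_{j \in [n]} \tilde{X}_j$.

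Next I would compute $\Ebb[\tilde{X}_{n:n}^{s}]$ for every integer $s \geq 0$ and show it equals $\psi_s(\qbf)$. Because $z \mapsto z^s$ is nondecreasing on $[0,\infty)$ we have $\tilde{X}_{n:n}^{s} = \max_{j \in [n]} \tilde{X}_j^{s}$, so the min-max identity (Prop.\ \ref{prop:minmax}) applied to the nonnegative RVs $(\tilde{X}_j^{s}, j \in [n])$ gives
\begin{equation}
\Ebb\left[ \tilde{X}_{n:n}^{s} \right] = \sum_{t=1}^{n}(-1)^{t+1} \sum_{A \in [n]_t} \Ebb\left[ \min_{j \in A} \tilde{X}_j^{s} \right] = \sum_{t=1}^{n}(-1)^{t+1} \sum_{A \in [n]_t} \Ebb\left[ \left(\min_{j \in A} \tilde{X}_j\right)^{s} \right].
\end{equation}
The minimum of the independent exponentials $(\tilde{X}_j, j \in A)$ is itself exponential with rate $\sum_{j \in A} \phi(q_j)$, and an $\mathrm{Exp}(\lambda)$ RV has $s^{\rm th}$ moment $s!/\lambda^{s}$; substituting gives $\Ebb[\tilde{X}_{n:n}^{s}] = \psi_s(\qbf)$. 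The case $s = r$ is precisely the claimed lower bound.

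Finally, for the upper bound I would expand $\left(\tilde{X}_{n:n}+1\right)^{r} = \sum_{s=0}^{r} \binom{r}{s} \tilde{X}_{n:n}^{s}$ by the binomial theorem and take expectations termwise, obtaining $\Ebb[(\tilde{X}_{n:n}+1)^{r}] = \sum_{s=0}^{r} \binom{r}{s} \psi_s(\qbf)$; here one notes $\psi_0(\qbf) = \sum_{t=1}^{n}(-1)^{t+1}\binom{n}{t} = 1 = \Ebb[\tilde{X}_{n:n}^{0}]$, consistent with the $s=0$ term. There is no substantive obstacle in this argument; the only points needing care are the simultaneous preservation of the stochastic order under the composition $\max$ followed by the power map (handled by the Ross facts of \S\ref{ssec:contdisbn}) and the elementary identities that a minimum of independent exponentials is exponential with the summed rate and that $\mathrm{Exp}(\lambda)$ has $s^{\rm th}$ moment $s!/\lambda^{s}$.
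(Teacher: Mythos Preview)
Your proposal is correct and follows essentially the same approach as the paper: stochastic ordering from Prop.\ \ref{prop:stochorder} to sandwich $X_{n:n}^r$ between $\tilde{X}_{n:n}^r$ and $(\tilde{X}_{n:n}+1)^r$, the min-max identity together with the closure of independent exponentials under minimum to evaluate $\Ebb[\tilde{X}_{n:n}^s]=\psi_s(\qbf)$, and the binomial theorem for the upper bound. Your write-up is in fact slightly more careful than the paper's, explicitly verifying the $s=0$ term $\psi_0(\qbf)=1$ needed in the binomial expansion.
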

\begin{IEEEproof} 
It follows from the stochastic ordering presented in \S \ref{sec:modelnotation} that
\begin{equation}
\Ebb[\tilde{X}_{n:n}^r] \leq \Ebb[X_{n:n}^r] \leq \Ebb[(\tilde{X}_{n:n}+1)^r].
\end{equation}
First, applying the min-max identity (Prop.\ \ref{prop:minmax}) to $(\tilde{X}_1^r,\ldots,\tilde{X}_n^r)$ yields
\begin{equation}
\Ebb[\max_{j \in [n]} \tilde{X}_j^r] = \sum_{s=1}^n (-1)^{s+1} \sum_{A \in [n]_s} \Ebb[\min_{j \in A} \tilde{X}_j^r].
\end{equation}
Since $i)$ the minimum of independent exponential RVs is an exponential RV whose rate is the sum of the rates of these individual exponential RVs, and $ii)$ the $r^{\rm th}$ moment of an exponential RV with rate $\lambda$ (say) is $r!/\lambda^r$, it follows that $\Ebb\left[\min_{j \in A} \tilde{X}_j^r\right]  =  \Ebb\left[\left(\min_{j \in A} \tilde{X}_j\right)^r\right]   = r! \left( \sum_{j \in A} \phi(q_j) \right)^{-r}$.  This gives the lower bound $\psi_r(\qbf)$ in \eqref{eq:boundsOnEZ1}.  Next, the binomial theorem gives
\begin{equation}
\left( \max_{j \in [n]} \tilde{X}_j+1 \right)^r = \sum_{s=0}^r \binom{r}{s} \max_{j \in [n]} \tilde{X}_j^s,
\end{equation}
and taking expectations gives the upper bound in \eqref{eq:boundsOnEZ1}.
\end{IEEEproof}

\begin{remark}
Notice the gap is given by: $\sum_{s=0}^{r-1} \binom{r}{s} \psi_{s}(\qbf)$,
and when $r=1$ the gap equals $1$, since $\psi_{0}(\qbf) = \sum_{s=1}^{n}\left(-1\right)^{s+1} \binom{n}{s} = - \left[ \sum_{s=0}^{n}\left(-1\right)^{s} \binom{n}{s} - 1\right] = 1$.
\end{remark}

\begin{remark}
Lemma 4 of \cite{CogShr2011a} derives a lower bound on the expected time slots required to broadcast a packet to all $n$ homogeneous receivers (i.e., $\Ebb[X_{n:n}]$ when $q_j = q$ for $j \in [n]$), in order to construct an outer bound on the stability region of scheduling policies (Theorem 6).  Our result is more general in that it $i)$ provides lower and upper bounds, $ii)$ applies to heterogeneous channels $\qbf = (q_1,\ldots,q_n)$, and $iii)$ works for arbitrary ($r^{\rm th}$) moments.  Further, the proof technique we employ is probabilistic instead of analytic.  
\end{remark}

\begin{remark} 
\label{rem:1}
The classic (sequential) coupon-collector problem (\cite[\S3.6]{MotRag1995}) asks for the expected time to receive all of $n$ coupons, when in each time slot a new coupon is selected uniformly at random from $[n]$.  The broadcast delay problem can be considered as a ``parallel'' coupon-collector problem in that in each time slot each of the $n$ coupons is received independently with probabilities $\qbf$, i.e., a reception by receiver $j$ corresponds to collecting a coupon of type $j$.  
\end{remark}

Specializing Prop.\ \ref{prop:boundsOnEZ} to the homogeneous channel case with $r=1, q_j = q$ for all $j \in [n]$ yields simpler expressions for the lower and upper bounds after using the combinatorial identity $\sum_{s=1}^{n} (-1)^{s}\binom{n}{s} ( - \frac{1}{s} )  = \sum_{j=1}^{n} \frac{1}{j}$.  We observe the lower bound in Cor.\ \ref{cor:1} given below is the same as the one from Lemma 4 in \cite{CogShr2011a}.
\begin{corollary}
\label{cor:1}
Assume A3: State-independent receptions, homogeneous receivers.  The expectation of the maximum of $n$ iid geometric RVs with parameter $q$ is bounded as:
\begin{equation}
\frac{H_n}{\phi(q)}  ~ \leq \Ebb[X_{n:n}] ~ \leq \frac{H_n}{\phi(q)} + 1,
\end{equation}
for $H_n \equiv 1 + 1/2 + \cdots + 1/n$ the $n^{\rm th}$ harmonic number.  
\end{corollary}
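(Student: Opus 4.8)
The plan is to specialize Proposition~\ref{prop:boundsOnEZ} to the homogeneous case $q_j = q$ and $r = 1$, and then evaluate $\psi_1(\qbf)$ in closed form. First I would invoke \eqref{eq:boundsOnEZ1} with $r = 1$, which reads $\psi_1(\qbf) \leq \Ebb[X_{n:n}] \leq \psi_0(\qbf) + \psi_1(\qbf)$; the remark immediately following Proposition~\ref{prop:boundsOnEZ} already records $\psi_0(\qbf) = 1$, so it remains only to show $\psi_1(\qbf) = H_n/\phi(q)$ when all $q_j$ equal $q$.

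Setting $q_j = q$ in \eqref{eq:cd} with $r = 1$, the inner sum over $A \in [n]_s$ consists of $\binom{n}{s}$ identical terms, each equal to $(s\,\phi(q))^{-1}$, so $\psi_1(\qbf) = \phi(q)^{-1}\sum_{s=1}^{n} (-1)^{s+1}\binom{n}{s}/s$. The only nontrivial step is then the combinatorial identity $\sum_{s=1}^{n} (-1)^{s+1}\binom{n}{s}/s = H_n$ (equivalently the form stated just before the corollary). I would establish this via the integral representation $1/s = \int_0^1 x^{s-1}\,\drm x$: exchanging the finite sum and the integral and applying the binomial theorem gives $\int_0^1 x^{-1}\bigl(1 - (1-x)^n\bigr)\,\drm x$, and the substitution $u = 1 - x$ turns this into $\int_0^1 (1 - u^n)/(1-u)\,\drm u = \int_0^1 \bigl(1 + u + \cdots + u^{n-1}\bigr)\,\drm u = H_n$. (An induction on $n$ using $\binom{n}{s} = \binom{n-1}{s} + \binom{n-1}{s-1}$ is an equally routine alternative.)

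Combining, $\psi_1(\qbf) = H_n/\phi(q)$, and substituting into the two-sided bound from Proposition~\ref{prop:boundsOnEZ} yields $H_n/\phi(q) \leq \Ebb[X_{n:n}] \leq H_n/\phi(q) + 1$, as claimed. I do not anticipate a genuine obstacle: the whole argument is a specialization of an already-proved proposition plus one standard identity, and the only point requiring a moment's care — interchanging the sum and the integral — is immediate because the sum is finite.
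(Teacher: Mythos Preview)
Your proposal is correct and follows essentially the same approach as the paper: specialize Proposition~\ref{prop:boundsOnEZ} to $r=1$ and $q_j=q$, then invoke the combinatorial identity $\sum_{s=1}^{n}(-1)^{s+1}\binom{n}{s}/s = H_n$. The paper simply cites this identity in the sentence preceding the corollary, whereas you supply a short integral proof of it, but the overall argument is the same.
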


\subsection{An exact moment expression for delay under UT}
\label{ssec:utexact}
The following proposition, whose proof can be found in Appendix \ref{app:delayUT}, derives an expression for the $r^{\rm th}$ moment of a geometric RV.
\begin{proposition} 
\label{prop:kthMomentGeo}
The $r^{\rm th}$ ($r \in \Nbb$) moment of a geometric RV $X \sim \mathrm{Geo}(q)$ is
\begin{equation}
\Ebb\left[ X^{r} \right] 
= \frac{1}{q} \sum_{l = 1}^{r} {r \brace l} \left( \frac{1}{q} -1 \right)^{l-1} l!, ~ q \in (0,1),
\end{equation}
where ${r \brace l}$ is the \textit{Stirling number of the second kind}.
\end{proposition}

It is interesting to observe Stirling numbers of the second kind (often defined as the number of partitions of $[r]$ into $l$ sets) often show up in expressions for the moments of discrete RVs. As another example, the $r^{\rm th}$ moment of a Poisson RV with parameter $\lambda$, say $Z \sim \mathrm{Poi}(\lambda)$, is given by $\Ebb[Z^r] = \sum_{l=1}^{r} { r \brace l } \lambda^{l}$.  Interpretations of the Stirling numbers of the second kind including their affinity to differential operators are discussed in \cite{website:StirlingNum2ndKind-planetmath}.

With Prop.\ \ref{prop:kthMomentGeo}, we can now present an exact moment expression by further leveraging the min-max identity (Prop.\ \ref{prop:minmax}) and the property that the minimum of independent geometric RVs is also a geometric RV.

\begin{proposition} 
\label{prop:kthMomentMaxGeo}
Assume A2: State-independent receptions, heterogeneous receivers.  The $r^{\rm th}$ moment of the maximum of $n$ independent geometric RVs $X_{j} \sim \mathrm{Geo}(q_{j}), j \in [n]$ is given by:
\begin{equation}
\Ebb[X_{n:n}^r] 
= \sum_{s=1}^{n}(-1)^{s+1} \sum_{A \in [n]_s} \frac{1}{q_{A}} \sum_{l = 1}^{r} {r \brace l} \left( \frac{1}{q_{A}} - 1 \right)^{l-1} l! ,
\end{equation}
where $q_{A} \equiv 1 - \prod_{j \in A} (1-q_j)$ and ${r \brace l}$ denotes the Stirling number of the second kind.
\end{proposition}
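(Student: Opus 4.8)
The plan is to combine the min-max identity (Prop.~\ref{prop:minmax}) with the closure of the geometric family under minima, thereby reducing the computation of $\Ebb[X_{n:n}^r]$ to the single-variable moment $\Ebb[X^r]$ for $X \sim \mathrm{Geo}(q)$, and then to evaluate that moment in closed form via factorial moments.

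First I would apply the min-max identity to the nonnegative RVs $(X_1^r,\ldots,X_n^r)$. Since each $X_j \geq 0$ and $x \mapsto x^r$ is nondecreasing on $[0,\infty)$, we have $\max_{j\in[n]} X_j^r = (\max_{j\in[n]} X_j)^r = X_{n:n}^r$ and, for every $A \in [n]_s$, $\min_{j\in A} X_j^r = (\min_{j\in A} X_j)^r$. Hence
\begin{equation}
\Ebb[X_{n:n}^r] = \sum_{s=1}^{n}(-1)^{s+1}\sum_{A\in[n]_s}\Ebb\!\left[\Big(\min_{j\in A}X_j\Big)^{\!r}\right].
\end{equation}
Next I would identify the law of $M_A \equiv \min_{j\in A}X_j$: by independence, $\bar F_{M_A}(k)=\prod_{j\in A}\bar F_{X_j}(k)=\prod_{j\in A}(1-q_j)^k=(1-q_A)^k$ for $k\in\Nbb\cup\{0\}$, so $M_A\sim\mathrm{Geo}(q_A)$ with $q_A=1-\prod_{j\in A}(1-q_j)$. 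It then remains to evaluate $\Ebb[X^r]$ for a single $X\sim\mathrm{Geo}(q)$ (this is Prop.~\ref{prop:kthMomentGeo}). For that I would pass through falling factorials: the classical expansion $x^r=\sum_{l=1}^{r}{r\brace l}\,x^{\underline l}$ (for $r\geq 1$, with $x^{\underline l}=x(x-1)\cdots(x-l+1)$), together with the geometric factorial moments $\Ebb[X^{\underline l}]=l!\,(1-q)^{l-1}/q^{l}$, gives $\Ebb[X^r]=\frac1q\sum_{l=1}^{r}{r\brace l}\,l!\,\big(\tfrac1q-1\big)^{l-1}$. Substituting $q=q_A$ and plugging back into the min-max expansion yields exactly the stated formula.

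The main obstacle is the single-variable geometric moment computation. The cleanest route is either to differentiate the probability generating function $G(z)=qz/(1-(1-q)z)$ at $z=1$, which gives $\Ebb[X^{\underline l}]=G^{(l)}(1)=l!(1-q)^{l-1}/q^{l}$ (justified since $G$ is analytic in a neighborhood of $z=1$ for $q\in(0,1)$), or to establish $\Ebb[X^{\underline l}]=l!(1-q)^{l-1}/q^{l}$ directly using $k^{\underline l}=l!\binom{k}{l}$ and the identity $\sum_{k\geq l}\binom{k}{l}(1-q)^{k-l}=q^{-(l+1)}$. Either way the argument is routine, but that is where the substantive content lies; the monotonicity of $x\mapsto x^r$, the product formula for the CCDF of the minimum, and the rearrangement of the double sum are all immediate.
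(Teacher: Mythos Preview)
Your proof is correct, and at the structural level it coincides with the paper's: apply the min-max identity to $(X_1^r,\ldots,X_n^r)$, use monotonicity of $x\mapsto x^r$ to pull the power through the $\max$ and $\min$, recognize $\min_{j\in A}X_j\sim\mathrm{Geo}(q_A)$, and reduce to the single-variable moment $\Ebb[X^r]$ for $X\sim\mathrm{Geo}(q)$.

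The only genuine difference is in how that single-variable moment (Prop.~\ref{prop:kthMomentGeo}) is obtained. The paper differentiates the moment generating function $M_X(t)=q\erm^t/(1-(1-q)\erm^t)$ after a change of variable $s=1-(1-q)\erm^t$, derives a recurrence for the coefficients appearing in $M_X^{(r)}(t)$, and identifies that recurrence as the one defining Stirling numbers of the second kind. Your route is to invoke the classical expansion $x^r=\sum_{l}{r\brace l}\,x^{\underline l}$ directly and compute the factorial moments $\Ebb[X^{\underline l}]=G^{(l)}(1)=l!\,(1-q)^{l-1}/q^{l}$ from the probability generating function. Your argument is shorter and more elementary, since it uses the Stirling identity as a known combinatorial fact rather than rediscovering it through the MGF recurrence; the paper's approach, on the other hand, is self-contained in that it explains \emph{why} Stirling numbers arise from repeated differentiation. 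Both are valid and lead to the same formula.
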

\begin{IEEEproof} 
Use min-max identity:
\begin{eqnarray}
\Ebb[X_{n:n}^r] & = & \Ebb [ \left(\max\left( X_1,\ldots,X_n\right)\right)^r ] \nonumber \\
& = & \Ebb [ \max\left( X_1^r,\ldots,X_n^r\right) ] \nonumber \\ 
& = & \sum_{s=1}^{n}\left(-1\right)^{s+1}\sum_{A \in [n]_s} \Ebb \left[ \min_{j \in A} X_j^r \right] \nonumber \\
& = & \sum_{s=1}^{n}\left(-1\right)^{s+1}\sum_{A \in [n]_s} \Ebb \left[ \left(\min_{j \in A} X_j\right)^r \right].
\end{eqnarray}
Now recognize $\min_{j \in A} X_j \sim \mathrm{Geo}(q_{A})$ (due to independence) and substitute the expression for the $r^{\rm th}$ moment of $\mathrm{Geo}(q)$ derived in Prop.\ \ref{prop:kthMomentGeo}.
\end{IEEEproof}
The following corollary illustrates the expressions from Prop.\ \ref{prop:kthMomentMaxGeo} for the first two moments ($r=1,2$).
\begin{corollary} 
\label{cor:firstTwoMomentsMaxGeo}
Assume A2: State-independent receptions, heterogeneous receivers.  The first two moments of maximum of $n$ independent geometric RVs with parameters $\qbf$ are:
\begin{IEEEeqnarray}{rCl}
\Ebb[X_{n:n}] & = & \sum_{s=1}^{n} \left(-1\right)^{s+1}\sum_{A \in [n]_s} \left(1-\prod_{j \in A}( 1 - q_j) \right)^{-1} \nonumber \\ 
\Ebb[X_{n:n}^2] & = & \sum_{s=1}^{n} \left(-1\right)^{s+1}\sum_{A \in [n]_s} \frac{1 + \prod_{j \in A} (1-q_j)}{\left(1 - \prod_{j \in A} (1-q_j) \right)^2}.\IEEEeqnarraynumspace
\end{IEEEeqnarray}
\end{corollary}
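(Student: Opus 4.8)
The plan is to specialize Proposition \ref{prop:kthMomentMaxGeo} to the cases $r=1$ and $r=2$ and simplify the resulting expressions using the known small-index values of the Stirling numbers of the second kind. Recall that Prop.\ \ref{prop:kthMomentMaxGeo} gives
\[
\Ebb[X_{n:n}^r] = \sum_{s=1}^{n}(-1)^{s+1} \sum_{A \in [n]_s} \frac{1}{q_A} \sum_{l=1}^{r} {r \brace l} \left( \frac{1}{q_A} - 1 \right)^{l-1} l!,
\]
with $q_A = 1 - \prod_{j \in A}(1-q_j)$. So it suffices to evaluate the innermost sum $\sum_{l=1}^{r} {r \brace l} (1/q_A - 1)^{l-1} l!$ for $r=1$ and $r=2$ and fold the result back into the double sum over subsets.

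First I would handle $r=1$: the inner sum has only the term $l=1$, and ${1 \brace 1} = 1$, $1! = 1$, and $(1/q_A - 1)^0 = 1$, so the inner sum equals $1$. Thus $\Ebb[X_{n:n}] = \sum_{s=1}^n (-1)^{s+1} \sum_{A \in [n]_s} 1/q_A$, which upon substituting $q_A = 1 - \prod_{j \in A}(1-q_j)$ is exactly the claimed first-moment formula. Next I would handle $r=2$: now the inner sum runs over $l=1,2$ with ${2 \brace 1} = 1$ and ${2 \brace 2} = 1$, giving ${2 \brace 1} \cdot 1! \cdot (1/q_A-1)^0 + {2 \brace 2} \cdot 2! \cdot (1/q_A-1)^1 = 1 + 2(1/q_A - 1) = 2/q_A - 1$. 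Multiplying by the outer factor $1/q_A$ gives $2/q_A^2 - 1/q_A = (2 - q_A)/q_A^2$. Finally, writing $q_A = 1 - \prod_{j \in A}(1-q_j)$ so that $2 - q_A = 1 + \prod_{j \in A}(1-q_j)$, this becomes $\frac{1 + \prod_{j \in A}(1-q_j)}{(1 - \prod_{j \in A}(1-q_j))^2}$, which is the claimed second-moment formula after inserting into the alternating sum over subsets.

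This is essentially a bookkeeping corollary, so there is no real obstacle; the only mild care needed is the algebraic simplification $\frac{1}{q_A}(2/q_A - 1) = (2-q_A)/q_A^2$ and the observation that $2 - q_A = 1 + \prod_{j\in A}(1-q_j)$. One could alternatively note that Prop.\ \ref{prop:kthMomentGeo} (cited but stated later in the paper) directly supplies $\Ebb[X] = 1/q$ and $\Ebb[X^2] = (2-q)/q^2$ for a single $\mathrm{Geo}(q)$ variable, in which case the corollary follows immediately by applying the min-max identity (Prop.\ \ref{prop:minmax}) to $(X_1^r, \ldots, X_n^r)$ together with the fact that $\min_{j \in A} X_j \sim \mathrm{Geo}(q_A)$; either route yields the stated identities.
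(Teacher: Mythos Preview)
Your proposal is correct and is exactly the approach the paper intends: the corollary is presented immediately after Prop.\ \ref{prop:kthMomentMaxGeo} as the specialization to $r=1,2$, with no separate proof given. Your explicit evaluation of the Stirling-number inner sum and the algebraic simplification $(2-q_A)/q_A^2 = \bigl(1+\prod_{j\in A}(1-q_j)\bigr)/\bigl(1-\prod_{j\in A}(1-q_j)\bigr)^2$ fill in precisely the details the paper leaves implicit.
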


\begin{remark}
Prop.\ \ref{prop:kthMomentMaxGeo} builds upon the min-max identity (Prop.\ \ref{prop:minmax}) from \cite[pp.\ 128]{RosPek2007}.  It is straightforward to use the principle of inclusion and exclusion (PIE) upon which Prop.\ \ref{prop:minmax} is proved to establish a sequence of lower and upper bounds on $\Ebb[Z_{n:n}]$.  Moreover, there is an analogous ``max-min'' identity giving $\Ebb[\min_{j \in [n]} Z_j]$ in terms of $\Ebb[\max_{j \in A} Z_j]$ for $A \subseteq [n]$.  Also, recent work \cite{Zas2012} has generalized the min-max and max-min identities to a more general ``sorting'' identity in a non-probabilistic setting.

Characterizing the expectation of the maximum of $n$ geometric RVs is addressed in \cite{Eis2008}.  Let $(X_1,\ldots,X_n)$ be iid geometric RVs with parameter $q$.  By using Fourier analysis of the distribution of the fractional part of the maximum of corresponding iid exponential RVs, \cite[Corollary 2]{Eis2008} obtains an exact formula:
\begin{equation} 
\Ebb[X_{n:n}] = \frac{1}{2} +\frac{H_n}{\phi(q)} - \sum_{k \neq 0} \frac{1}{2\pi k \irm} \prod_{j=1}^{n} \left( 1 + \frac{2\pi k \irm }{j \phi(q)}\right)^{-1},
\label{eq:eisenberg2008maxgeo}
\end{equation}
where $\irm = \sqrt{-1}$. This result improves an earlier result \cite[Eq.\ $2.8$]{SzpReg1990}.  Using our Corollary \ref{cor:1}, the infinite sum in \eqref{eq:eisenberg2008maxgeo} can be shown to have absolute value no larger than $1/2$.
\end{remark}

\section{Delay under random linear combinations} 
\label{sec:delayRLNC}

We now turn our attention from delay under UT to delay under RLC, with a focus on the $r^{\rm th}$ ($r \in \Nbb$) moment, $\Ebb\left[ \left(Y_{n:n}^{(c)}\right)^r \right]$. We provide $i)$ lower and upper bounds for $\Ebb\left[ \left(Y_{n:n}^{(c)}\right)^r \right]$ in \S\ref{ssec:rlncbounds} (Props.\ \ref{prop:boundsOnEY1}, \ref{prop:boundsOnEY2}, and \ref{prop:boundsOnEY3}), $ii)$ exact expressions for $\Ebb\left[ \left(Y_{n:n}^{(c)}\right)^r \right]$ in \S\ref{ssec:exact} (Prop.\ \ref{prop:exacty}) which involves a sum over an infinite number of terms, $iii)$ a recurrence for $Y_{n:n}$ in \S\ref{ssec:recurrence} (Prop.\ \ref{pro:recexp}) which allows $\Ebb\left[ \left(Y_{n:n}^{(c)}\right)^r \right]$ to be computed in a finite number of steps, and $iv)$ in \S\ref{ssec:extremalchannels} (Props.\ \ref{prop:extremaldelayUT} and \ref{prop:extremaldelayRLNC}) a characterization of the channel reception probabilities $\qbf = (q_1,\ldots,q_n)$ for which $\Ebb\left[ \left(Y_{n:n}^{(c)}\right)^r \right]$ is minimized.

\subsection{Lower and upper bounds on the moments of delay under RLC}
\label{ssec:rlncbounds}

Our first result, Prop.\ \ref{prop:boundsOnEY1}, holds for the most general case (Assumption A1), and follows immediately from the stochastic ordering between generalized negative binomial (sum of independent geometric) and hypo-exponential (sum of independent exponential) RVs. The drawback of this result is that the upper bound is loose, i.e., $Y_{j}^{(c)} \leq_{\rm st} \tilde{Y}_{j}^{(c)}+c$.  To address this, we present in Prop.\ \ref{prop:boundsOnEY2} a continuous RV $\tilde{W}^{(c)}$ such that $\tilde{W}^{(c)} \leq_{\rm st} Y^{(c)} \leq_{\rm st} \tilde{W}^{(c)}+1$, with the caveat that the RV $\tilde{W}^{(c)}$ is constructed only under Assumption A2,  in which case the hypo-exponential and the generalized negative binomial RVs reduce to Gamma and negative binomial RVs respectively.  Finally, we present in Prop.\ \ref{prop:boundsOnEY3} explicit lower and upper bounds on the delay moments under Assumption A3.

\begin{proposition}
\label{prop:boundsOnEY1}
Assume A1: State-dependent receptions, heterogeneous receivers.  The $r^{\rm th}$ moment of the maximum of $n$ independent generalized negative binomial RVs $\Ebb\left[ \left(Y_{n:n}^{(c)}\right)^r \right]$ has lower and upper bounds
\begin{equation}
\label{eq:boundsOnRLNC1}
\Psi_{r}(\Qbf) \leq \Ebb\left[ \left(Y_{n:n}^{(c)}\right)^r \right] \leq \sum_{s=0}^{r} \binom{r}{s} \Psi_{s}(\Qbf) c^{r-s},
\end{equation}
where 
\begin{equation}
\label{eq:cdtytyyt}
\Psi_{r}(\Qbf) \equiv \Ebb\left[ \left(\tilde{Y}_{n:n}^{(c)}\right)^r \right] = \int_0^{\infty} \left(1 - \prod_{j \in [n]} F_{\tilde{Y}_j^{(c)}} \left(y^{\frac{1}{r}} \right)\right) \drm y,
\end{equation}
and $F_{\tilde{Y}_j^{(c)}}(y)$ is the CDF for the hypo-exponential RV $\tilde{Y}_j^{(c)} \sim \mathrm{HypoExp}(c,\phi(\qbf_j))$, with parameter vector $\phi(\qbf_j) = (\phi(q_{j,1}),\ldots,\phi(q_{j,c}))$.
\end{proposition}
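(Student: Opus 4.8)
The plan is to derive both bounds directly from the stochastic orderings already in hand, with the integral formula for $\Psi_r(\Qbf)$ following from the standard layer-cake representation of a moment. First I would invoke Proposition~\ref{prop:stochorder}$(i)$, which under Assumption A1 gives, for each $j \in [n]$, the sandwich $\tilde{Y}_j^{(c)} \leq_{\rm st} Y_j^{(c)} \leq_{\rm st} \tilde{Y}_j^{(c)} + c$. The vectors $(\tilde{Y}_1^{(c)},\ldots,\tilde{Y}_n^{(c)})$, $(Y_1^{(c)},\ldots,Y_n^{(c)})$, and $(\tilde{Y}_1^{(c)}+c,\ldots,\tilde{Y}_n^{(c)}+c)$ all have independent components (independence across receivers is part of the model), so, as recalled in \S\ref{sec:modelnotation} from Ross, these componentwise stochastic orderings are preserved by any nondecreasing multivariate function. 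Applying this to $f(z_1,\ldots,z_n) = \max(z_1,\ldots,z_n)$, and noting $\max_{j}(\tilde{Y}_j^{(c)}+c) = \tilde{Y}_{n:n}^{(c)}+c$, yields $\tilde{Y}_{n:n}^{(c)} \leq_{\rm st} Y_{n:n}^{(c)} \leq_{\rm st} \tilde{Y}_{n:n}^{(c)}+c$. Since $z \mapsto z^r$ is nondecreasing on $[0,\infty)$ and all RVs here are nonnegative, this ordering is preserved under the $r^{\rm th}$ power, and hence (by the moment-ordering implication of stochastic ordering) under expectation:
\begin{equation}
\Ebb\left[\left(\tilde{Y}_{n:n}^{(c)}\right)^r\right] \leq \Ebb\left[\left(Y_{n:n}^{(c)}\right)^r\right] \leq \Ebb\left[\left(\tilde{Y}_{n:n}^{(c)}+c\right)^r\right].
\end{equation}

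For the lower bound, the left-hand side is $\Psi_r(\Qbf)$ by the definition in \eqref{eq:cdtytyyt}. For the upper bound I would expand the right-hand side by the binomial theorem, $\left(\tilde{Y}_{n:n}^{(c)}+c\right)^r = \sum_{s=0}^r \binom{r}{s} \left(\tilde{Y}_{n:n}^{(c)}\right)^s c^{r-s}$, and take expectations term by term using linearity, obtaining $\sum_{s=0}^r \binom{r}{s} c^{r-s} \Ebb\left[\left(\tilde{Y}_{n:n}^{(c)}\right)^s\right] = \sum_{s=0}^r \binom{r}{s} \Psi_s(\Qbf) c^{r-s}$, which is exactly the right-hand side of \eqref{eq:boundsOnRLNC1}.

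It remains to verify the integral formula for $\Psi_r(\Qbf)$. For a nonnegative RV $Z$ with CDF $F_Z$, the tail-integral (layer-cake) representation gives $\Ebb[Z^r] = \int_0^\infty \Pbb(Z^r > y)\,\drm y = \int_0^\infty \Pbb\left(Z > y^{1/r}\right)\,\drm y = \int_0^\infty \left(1 - F_Z\left(y^{1/r}\right)\right)\,\drm y$. Taking $Z = \tilde{Y}_{n:n}^{(c)} = \max_{j \in [n]} \tilde{Y}_j^{(c)}$ and using independence of the $\tilde{Y}_j^{(c)}$ to write $F_{\tilde{Y}_{n:n}^{(c)}}(x) = \prod_{j \in [n]} F_{\tilde{Y}_j^{(c)}}(x)$ gives the stated expression, with $F_{\tilde{Y}_j^{(c)}}$ the $\mathrm{HypoExp}(c,\phi(\qbf_j))$ CDF.

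There is no substantive obstacle here: the result is essentially a corollary of Proposition~\ref{prop:stochorder} together with the binomial theorem, and the only points requiring care are $(i)$ the commutation $\max_j(\tilde{Y}_j^{(c)}+c) = \tilde{Y}_{n:n}^{(c)}+c$, which is what turns the (loose) per-receiver upper bound into the additive-$c$ bound on the maximum, and $(ii)$ the appeal to preservation of stochastic ordering under the multivariate nondecreasing map $\max$, which is legitimate precisely because the components are independent across receivers.
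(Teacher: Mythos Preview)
Your proposal is correct and follows essentially the same approach as the paper's proof: invoke the stochastic ordering from Proposition~\ref{prop:stochorder}, pass to the maximum via preservation under nondecreasing functions of independent components, raise to the $r^{\rm th}$ power and take expectations, then expand the upper bound with the binomial theorem and derive the integral formula for $\Psi_r(\Qbf)$ via the tail-integral representation and independence. If anything, you are slightly more explicit than the paper about the commutation $\max_j(\tilde{Y}_j^{(c)}+c)=\tilde{Y}_{n:n}^{(c)}+c$ and the role of independence in preserving the ordering under $\max$.
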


\begin{IEEEproof} 
Recall the stochastic ordering discussed in \S\ref{sec:modelnotation}: $\left(\tilde{Y}_{n:n}^{(c)}\right)^r \leq_{\rm st} \left(Y_{n:n}^{(c)}\right)^r \leq_{\rm st} \left(\tilde{Y}_{n:n}^{(c)}+c\right)^r$. Taking expectations of these stochastically ordered RVs (and using the binomial theorem on the upper bound) yields \eqref{eq:boundsOnRLNC1}.  To obtain \eqref{eq:cdtytyyt} observe
\begin{eqnarray}
\Ebb\left[ \left(\tilde{Y}_{n:n}^{(c)}\right)^r \right]
\!\!&=&\!\! \int_0^{\infty} \Pbb\left(\left(\tilde{Y}_{n:n}^{(c)}\right)^r > y\right) \drm y \nonumber \\
\!\!&=&\!\! \int_0^{\infty} \left(1 - \Pbb\left(\left(\tilde{Y}_{n:n}^{(c)}\right)^r \leq y\right)\right) \drm y \nonumber \\
\!\!&=&\!\! \int_0^{\infty} \left(1 - \prod_{j \in [n]} \Pbb\left(\left(\tilde{Y}_j^{(c)}\right)^r \leq y\right) \right) \drm y \nonumber \\
\!\!&=&\!\! \int_0^{\infty} \left(1 - \prod_{j \in [n]} \Pbb \left(\tilde{Y}_j^{(c)} \leq y^{\frac{1}{r}} \right) \right) \drm y.
\end{eqnarray}
\end{IEEEproof}

\begin{proposition}
\label{prop:boundsOnEY2}
Assume A2: State-independent receptions, heterogeneous receivers.  The $r^{\rm th}$ moment of the maximum of $n$ independent generalized negative binomial RVs $\Ebb\left[ \left(Y_{n:n}^{(c)}\right)^r \right]$ has lower and upper bounds
\begin{equation}
\label{eq:boundsOnRLNC2}
\tilde{\Psi}_{r}(\qbf) \leq \Ebb\left[ \left(Y_{n:n}^{(c)}\right)^r \right] \leq \sum_{s=0}^{r} \binom{r}{s} \tilde{\Psi}_{s}(\qbf),
\end{equation}
where 
\begin{IEEEeqnarray}{rCl}
\label{eq:cd2}
\tilde{\Psi}_{r}(\qbf) & \equiv & \Ebb\left[ \left(\tilde{W}_{n:n}^{(c)}\right)^r \right] \nonumber \\
& = & c^{r} +\int_{c^{r}}^{\infty} \left( 1 - \prod_{j \in \left[ n \right]} I_{q_{j}}(c,w^{\frac{1}{r}}-c+1) \right) \drm w.\IEEEeqnarraynumspace
\end{IEEEeqnarray}
\end{proposition}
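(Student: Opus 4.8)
\emph{Plan.} The proof will follow the same template as Prop.\ \ref{prop:boundsOnEY1}, but with the loose continuous analog $\tilde Y_j^{(c)}$ replaced by a sharper one, $\tilde W_j^{(c)}$, tailored to Assumption A2. Under A2 we have $Y_j^{(c)} \sim \mathrm{NegBin}(c,q_j)$, the index of the $c$-th success in i.i.d.\ $\mathrm{Bern}(q_j)$ trials, so by the Beta--Binomial identity $\Pbb(W \geq l) = I_q(l,m-l+1)$ already recorded in Table \ref{tab:notationdisbn} (with $W \sim \mathrm{Bin}(m,q)$),
\begin{equation}
\bar F_{Y_j^{(c)}}(m) = \Pbb(Y_j^{(c)} > m) = \Pbb\big(\mathrm{Bin}(m,q_j) \leq c-1\big) = 1 - I_{q_j}(c, m-c+1), \qquad m \in \{c,c+1,\ldots\},
\end{equation}
and $\bar F_{Y_j^{(c)}}(m) = 1$ for integers $m < c$. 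The idea is to analytically continue the second Beta parameter: define $\tilde W_j^{(c)}$ by $\bar F_{\tilde W_j^{(c)}}(w) = 1 - I_{q_j}(c,w-c+1)$ for $w \geq c$ and $\bar F_{\tilde W_j^{(c)}}(w) = 1$ for $w < c$, and take $(\tilde W_j^{(c)}, j \in [n])$ independent with $\tilde W_{n:n}^{(c)} \equiv \max_{j\in[n]} \tilde W_j^{(c)}$. This is a valid CCDF: it is non-increasing by Lemma \ref{lem:regbetamono} (monotonicity of $I_x(a,b)$ in $b$), right-continuous, equals $1$ below $c$, agrees with the CCDF of $Y_j^{(c)}$ at integers $\geq c$, and hence decreases to $0$; note $\tilde W_j^{(c)}$ has an atom at $c$ of mass $1-I_{q_j}(c,1) = 1-q_j^c$, exactly matching $\Pbb(Y_j^{(c)} = c)$. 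Then I would lift the per-receiver ordering (below) to the maximum and to $r$-th powers, take expectations, and evaluate the resulting integral.

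\emph{Per-receiver stochastic ordering: $\tilde W_j^{(c)} \leq_{\rm st} Y_j^{(c)} \leq_{\rm st} \tilde W_j^{(c)} + 1$.} For the left inequality: on $w < c$ both CCDFs equal $1$; for $w \in [m,m+1)$ with integer $m \geq c$, since $Y_j^{(c)}$ is integer-valued $\bar F_{Y_j^{(c)}}(w) = 1 - I_{q_j}(c,m-c+1)$, while $\bar F_{\tilde W_j^{(c)}}(w) = 1 - I_{q_j}(c,w-c+1) \leq 1 - I_{q_j}(c,m-c+1)$ because $w \geq m$ and $I_{q_j}(c,\cdot)$ is increasing (Lemma \ref{lem:regbetamono}). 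For the right inequality, equivalently $\bar F_{Y_j^{(c)}}(w) \leq \bar F_{\tilde W_j^{(c)}}(w-1)$ for all $w$: when $w < c+1$ the right side equals $1$ (since $w-1 < c$), so it is trivial; when $w \in [m,m+1)$ with integer $m \geq c+1$, the right side is $1 - I_{q_j}(c,w-c)$ and the left side is $1 - I_{q_j}(c,m-c+1)$, so the claim reduces to $I_{q_j}(c,w-c) \leq I_{q_j}(c,m-c+1)$, which holds because $w < m+1$, again by Lemma \ref{lem:regbetamono}.

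\emph{Lifting and evaluation.} Since stochastic ordering of independent components is preserved by the coordinatewise-nondecreasing map $(z_1,\ldots,z_n) \mapsto \max_j z_j$ and by $z \mapsto z^r$ on $[0,\infty)$ (the exact facts from \S\ref{sec:modelnotation} used in the proof of Prop.\ \ref{prop:boundsOnEY1}), the per-receiver orderings give $\big(\tilde W_{n:n}^{(c)}\big)^r \leq_{\rm st} \big(Y_{n:n}^{(c)}\big)^r \leq_{\rm st} \big(\tilde W_{n:n}^{(c)}+1\big)^r$. Taking expectations and expanding the upper bound with the binomial theorem yields $\Ebb\big[(\tilde W_{n:n}^{(c)})^r\big] \leq \Ebb\big[(Y_{n:n}^{(c)})^r\big] \leq \sum_{s=0}^r \binom{r}{s}\Ebb\big[(\tilde W_{n:n}^{(c)})^s\big]$. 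It remains to identify $\tilde\Psi_r(\qbf) = \Ebb\big[(\tilde W_{n:n}^{(c)})^r\big]$. Each $\tilde W_j^{(c)} \geq c$ a.s., hence $\tilde W_{n:n}^{(c)} \geq c$ a.s.\ and $(\tilde W_{n:n}^{(c)})^r \geq c^r$ a.s., so by the tail formula and independence,
\begin{equation}
\Ebb\big[(\tilde W_{n:n}^{(c)})^r\big] = c^r + \int_{c^r}^{\infty} \Pbb\big(\tilde W_{n:n}^{(c)} > w^{1/r}\big)\,\drm w = c^r + \int_{c^r}^{\infty}\Big(1 - \prod_{j\in[n]} I_{q_j}(c, w^{1/r}-c+1)\Big)\,\drm w,
\end{equation}
valid because $w^{1/r} \geq c$ on the range of integration; this is \eqref{eq:cd2}, and with $\tilde\Psi_0(\qbf) = 1$ the bounds \eqref{eq:boundsOnRLNC2} follow.

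\emph{Main obstacle.} The real content is in the first two paragraphs: recognizing that the sharp continuous analog is the real-argument continuation of the negative-binomial CCDF through the Beta function (as opposed to a Gamma RV, which only produces the coarser $Y_j^{(c)} \leq_{\rm st} \tilde Y_j^{(c)}+c$ of Prop.\ \ref{prop:boundsOnEY1}), and then verifying the two CCDF inequalities uniformly over all real $w$ --- in particular handling the boundary region near $w=c$, where the atom of $\tilde W_j^{(c)}$ must line up with $\Pbb(Y_j^{(c)}=c)$ and where the second Beta argument drops to $1$. This is also the step that forces the restriction to A2, since under A1 the generalized negative binomial CCDF lacks such a closed form. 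Everything after the ordering is the bookkeeping already carried out for Prop.\ \ref{prop:boundsOnEY1}.
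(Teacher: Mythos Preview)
Your proof is correct and follows essentially the same approach as the paper: define $\tilde W_j^{(c)}$ via the Beta-continued CCDF, verify the stochastic sandwich $\tilde W_j^{(c)} \leq_{\rm st} Y_j^{(c)} \leq_{\rm st} \tilde W_j^{(c)}+1$ using Lemma~\ref{lem:regbetamono}, then lift to the maximum and $r$-th power exactly as in Prop.~\ref{prop:boundsOnEY1}. One small slip in your commentary: the atom of $\tilde W_j^{(c)}$ at $c$ has mass $F_{\tilde W_j^{(c)}}(c)-F_{\tilde W_j^{(c)}}(c^-)=I_{q_j}(c,1)-0=q_j^{\,c}$, not $1-q_j^{\,c}$; this matches $\Pbb(Y_j^{(c)}=c)=q_j^{\,c}$, and the correction is immaterial to the argument.
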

\begin{IEEEproof}
For integer blocklength $c$ and success probability $q \in (0,1)$, define the continuous RV $\tilde{W}^{(c)}$ with support $[c,\infty)$ and CDF $F_{\tilde{W}^{(c)}}(w) = I_q(c,w-c+1) \mathbf{1}_{\left\{ w \geq c \right\}}$.  We first show that $\tilde{W}^{(c)}$ is a valid continuous RV.  Clearly $F_{\tilde{W}^{(c)}}(w) \in [0,1]$, and by construction $\lim_{w \to -\infty} F_{\tilde{W}^{(c)}}(w) = 0$. By Lem.\ \ref{lem:regbetamono} in Appendix \ref{app:delayRLNC} we have: $\frac{\drm}{\drm w} F_{\tilde{W}^{(c)}}(w) > 0$, and it also follows that  $\lim_{w \to \infty} F_{\tilde{W}^{(c)}}(w) = 1$. 

We now show that $\tilde{W}^{(c)} \leq_{\rm st} Y^{(c)} \leq_{\rm st} \tilde{W}^{(c)}+1$ for $Y^{(c)} \sim \mathrm{NegBin}(c,q)$.  Observe that the CDF for $\tilde{W}^{(c)}$ and $Y^{(c)}$ are equal at every integer $m \geq c$:
\begin{eqnarray}
F_{\tilde{W}^{(c)}}(m) & = & I_q(c,m-c+1) = \Pbb(\mathrm{Bin}(m,q) \geq c) \nonumber \\
& = & \Pbb(Y^{(c)} \leq m) = F_{Y^{(c)}}(m).
\end{eqnarray}
This, along with the facts that $\frac{\drm}{\drm w} F_{\tilde{W}^{(c)}}(w) > 0$ and $F_{Y^{(c)}}(w)$ is piecewise constant in $w$, guarantees $F_{\tilde{W}^{(c)}}(w) \geq F_{Y^{(c)}}(w)$, and thus $\tilde{W}^{(c)} \leq_{\rm st} Y^{(c)}$.  Similarly, 
\begin{IEEEeqnarray}{rCl}
\IEEEeqnarraymulticol{3}{l}
{F_{\tilde{W}^{(c)}+1}(m)
}\nonumber \\* \quad
& = & I_q(c,(m-1)-c+1) = \Pbb(\mathrm{Bin}(m-1,q) \geq c) \nonumber \\
& = & \Pbb(Y^{(c)} \leq m-1) = F_{Y^{(c)}}(m-1),
\end{IEEEeqnarray}
and thus $F_{\tilde{W}^{(c)}+1}(m) = F_{Y^{(c)}}(m-1) \leq F_{Y^{(c)}}(m)$, which establishes $Y^{(c)} \leq_{\rm st} \tilde{W}^{(c)}+1$.  It follows that $\left( \tilde{W}_{n:n}^{(c)} \right)^r \leq_{\rm st} \left(Y_{n:n}^{(c)}\right)^r \leq_{\rm st} \left(\tilde{W}_{n:n}^{(c)}+1\right)^r$.  Taking expectations of these stochastically ordered RVs (and using the binomial theorem on the upper bound) yields \eqref{eq:boundsOnRLNC2}.  To obtain \eqref{eq:cd2} observe
\begin{eqnarray}
\Ebb\left[\left( \tilde{W}_{n:n}^{(c)} \right)^r\right] \!\!& = &\!\! \int_0^{\infty} \Pbb\left(\left( \tilde{W}_{n:n}^{(c)} \right)^r  > w\right) \drm w \nonumber \\
\!\!& = &\!\! c^{r} + \int_{c^{r}}^{\infty} \Pbb\left(\left( \tilde{W}_{n:n}^{(c)} \right)^r > w\right) \drm w,
\end{eqnarray}
and apply the same steps used at the end of the proof of Prop.\ \ref{prop:boundsOnEY1}.
\end{IEEEproof}

Observe the bounds given in the previous two propositions may not be easy to compute since they themselves involve calculation of moments of the maximum order statistic of a continuous RV, the support of which may be infinite. The following proposition shows that by further bounding the continuous RV we obtain bounds that also have simple structure. More precisely, we will be working with the Gamma distribution, for which the bounds will be used in investigating the dependence on $c$ when $n$ is fixed (\S \ref{ssec:rossdlctightinc}), and on $n$ when $c$ is fixed (\S \ref{sec:deponnumrx}). For simplicity, the following proposition is stated under Assumption A3, yet generalizing to the heterogeneous receivers case is not hard.

\begin{proposition}
\label{prop:boundsOnEY3}
Assume A3: State-independent receptions, homogeneous receivers.  The $r^{\rm th}$ moment of the maximum of $n$ iid negative binomial RVs $\Ebb\left[\left(Y_{n:n}^{(c)}\right)^{r}\right]$ has lower and upper bounds
\begin{eqnarray}
\label{eq:boundsOnRLNC3}
\Ebb\left[\left(Y_{n:n}^{(c)}\right)^{r}\right] & \geq & \frac{1}{\phi(q)^{r}} l_{\tilde{Y}}(t,n,c,r) \nonumber \\
\Ebb\left[\left(Y_{n:n}^{(c)}\right)^{r}\right]  & \leq  & \sum_{p=0}^{r} \binom{r}{p} \frac{1}{\phi(q)^{p}} u_{\tilde{Y}}(s,n,c,p)  c^{r-p},
\end{eqnarray}
where 
\begin{IEEEeqnarray}{rCl}
\label{eq:delaCalandRossSecFourA}
& & \!\!\!\!\!\!\! l_{\tilde{Y}}(t,n,c,r)  \equiv t^{r} - \left( t^{r} - \Ebb \left[ \left( \tilde{Y}^{(c)} \right)^{r} \right] \right) \left(1 - Q(c, t) \right)^{n-1} \nonumber \\
\IEEEeqnarraymulticol{3}{l}
{u_{\tilde{Y}}(s,n,c,r) 
}\nonumber \\* 
& \equiv &  s + n \left( \sizecorr{\frac{\Gamma(c+r)}{\Gamma(c)} s^{\frac{c+r-1}{r}} \erm^{-s^{\frac{1}{r}}} \sum_{m=0}^{r-1} \frac{1}{s^{\frac{m}{r}}\Gamma(c+r-m)}}   \left( \frac{\Gamma(c+r)}{\Gamma(c)} - s \right) Q(c,s^{\frac{1}{r}}) \right. \nonumber \\
& & \negmedspace{} + \left. \frac{\Gamma(c+r)}{\Gamma(c)} s^{\frac{c+r-1}{r}} \erm^{-s^{\frac{1}{r}}} \sum_{m=0}^{r-1} \frac{1}{s^{\frac{m}{r}}\Gamma(c+r-m)} \right).
\end{IEEEeqnarray}
For the lower bound, $t \geq \left(\Ebb\left[ \left( \tilde{Y}^{(c)}\right)^{r}\right]\right)^{\frac{1}{r}}$ and $\tilde{Y}^{(c)} \sim \mathrm{Gamma}(c,1)$; it is made the tightest by solving for the unique stationary point $t^{*}$ such that $\frac{\partial}{\partial t} l_{\tilde{Y}}(t,n,c,r) = 0$. For the upper bound, $s \geq 0$, and the optimal $s^{*}$ is such that $n Q\left(c, \left(s^*\right)^{\frac{1}{r}}\right) = 1$ for $Q(c,s) \equiv \frac{\Gamma(c,s)}{\Gamma(c)} = \Pbb\left( \tilde{Y}^{(c)} > s \right)$ the CCDF of the $\mathrm{Gamma}(c,1)$ distribution.  For $r=1$, the optimal upper bound is
\begin{equation} 
\label{eq:rossuboptfornegbin}
\Ebb\left[Y_{n:n}^{(c)}\right] \leq 
c\left(1 + \frac{1}{\phi(q)} n Q\left(c+1,Q^{-1}\left(c,\frac{1}{n}\right)\right) \right).
\end{equation}
\end{proposition}
The proof can be found in Appendix \ref{app:delayRLNC}.

Observe we have used two distinct continuous-RV stochastic orderings in the last two propositions. Prop.\ \ref{prop:boundsOnEY2} relies upon $\tilde{W}_j^{(c)} \leq_{\rm st} Y_j^{(c)} \leq_{\rm st} \tilde{W}_j^{(c)} + 1$ under Assumption A2, while Prop.\ \ref{prop:boundsOnEY3} relies upon $\tilde{Y}_j^{(c)} \leq_{\rm st} Y_j^{(c)} \leq_{\rm st} \tilde{Y}_j^{(c)} + c$ under Assumption A3.  These orderings are illustrated in Fig.\ \ref{fig:stochorder}.  Note the stochastic ordering with $\tilde{W}_j^{(c)}$ is much tighter to $Y_j^{(c)}$ than is that with $\tilde{Y}_j^{(c)}$, although the latter is much easier to compute than the former.  

\begin{figure}[ht!]
\centering
\includegraphics[width=0.49\textwidth]{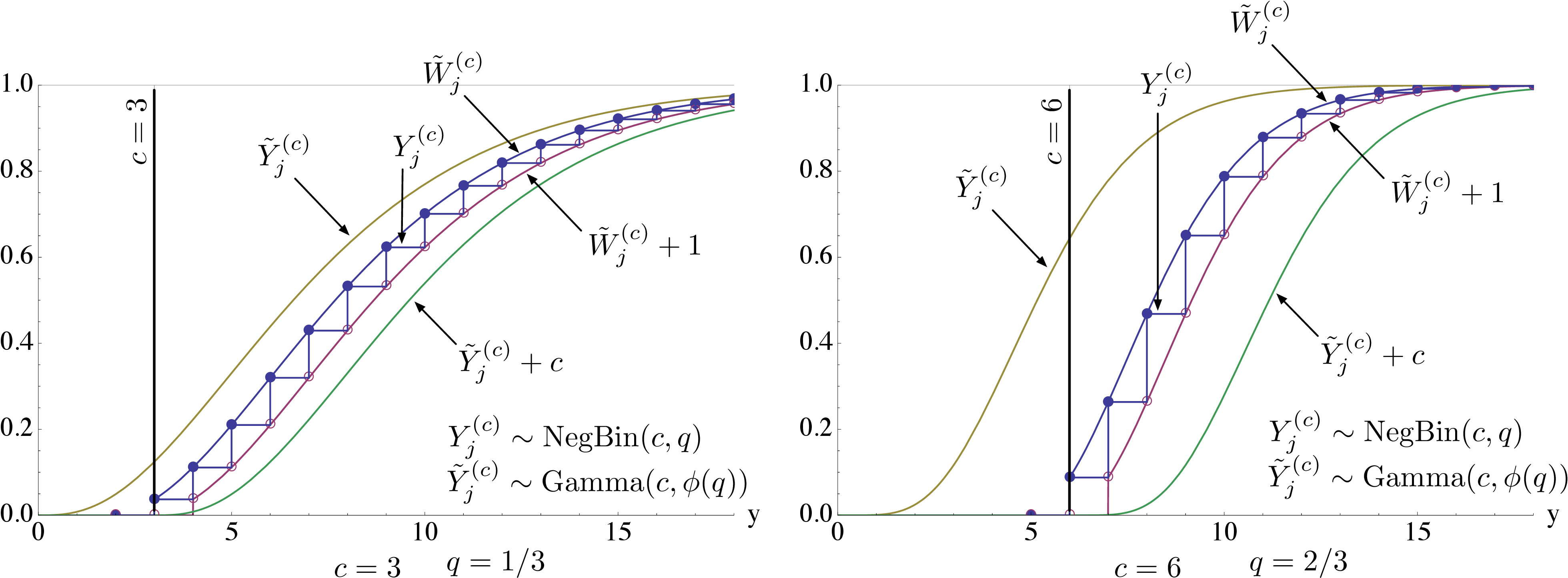}
\caption{Illustration of the CDFs for RVs in the stochastic orderings $\tilde{W}_j^{(c)} \leq_{\rm st} Y_j^{(c)} \leq_{\rm st} \tilde{W}_j^{(c)} + 1$ and $\tilde{Y}_j^{(c)} \leq_{\rm st} Y_j^{(c)} \leq_{\rm st} \tilde{Y}_j^{(c)} + c$ for $c = 3$ and $q = 1/3$ (left), and $c=6$ and $q=2/3$ (right).}
\label{fig:stochorder}
\end{figure}

\subsection{Exact expressions of the moments of delay under RLC}
\label{ssec:exact}

Prop.\ \ref{prop:exacty} below, proved in Appendix \ref{app:delayRLNC}, gives an expression for the exact delay $\Ebb\left[ \left(Y_{n:n}^{(c)}\right)^r \right]$ under RLC for the state-dependent case as well as two equivalent expressions for $\Ebb\left[ \left(Y_{n:n}^{(c)}\right)^r \right]$ under RLC for the state-independent case.  
\begin{proposition}
\label{prop:exacty}
Assume A1: State-dependent receptions, heterogeneous receivers:
\begin{IEEEeqnarray}{rCl}
  \IEEEeqnarraymulticol{3}{l}{\Ebb\left[ \left(Y_{n:n}^{(c)}\right)^r \right] = 
  }\nonumber\\* \quad
  c^{r} + \sum_{m = c^{r}}^{\infty} \left(  1 - \prod_{j=1}^{n} \sum_{t=c}^{\lfloor m^{\frac{1}{r}} \rfloor} \sum_{\boldsymbol\alpha \in \Amc_t} \prod_{k=1}^{c}(1-q_{j,k})^{\alpha_{k} - 1}q_{j,k} \right),
  \nonumber\\*
\label{eq:tu}
\end{IEEEeqnarray}
where $\Amc_t$ is the finite set of all $c$-vectors of positive integers that sum to $t$, i.e., 
\begin{equation}
\Amc_t = \left\{ \boldsymbol\alpha = (\alpha_1,\ldots,\alpha_c) ~:~ \alpha_k \in \Nbb, ~ \sum_{k=1}^c \alpha_k = t \right\}.
\end{equation}
Assume A2: State-independent receptions, heterogeneous receivers.  Eq.\ \eqref{eq:tu} simplifies to 
\begin{IEEEeqnarray}{rCl}
  \IEEEeqnarraymulticol{3}{l}{\Ebb\left[ \left(Y_{n:n}^{(c)}\right)^r \right] = 
  }\nonumber\\* \quad
c^{r}+ \sum_{m = c^{r}}^{\infty} \left(  1 - \prod_{j=1}^{n} \sum_{t=c}^{\lfloor m^{\frac{1}{r}} \rfloor} \binom{t-1}{c-1}(1-q_{j})^{t-c}q_{j}^{c} \right).
  \nonumber\\*
\label{eq:tv}
\end{IEEEeqnarray}
An alternative expression for the state-independent case is
\begin{equation}
\label{eq:tw}
\Ebb\left[ \left(Y_{n:n}^{(c)}\right)^r \right] = c^{r} + \sum_{m=c^{r}}^{\infty} \left( 1 - \prod_{j=1}^{n} I_{q_{j}}\left( c, \lfloor m^{\frac{1}{r}} \rfloor-c+1 \right) \right),
\end{equation}
where in the last equation, $I_{x}(\alpha, \beta)$ is the regularized incomplete beta function, which can be used as both the CDF of the beta distribution and the tail probability of the binomial RV $W \sim \mathrm{Bin}(m,q)$:
\begin{equation}
\label{eq:tx}
\Pbb(W \geq l) = \sum_{r=l}^{m} \binom{m}{r}q^{r}(1-q)^{m-r} = I_{q}(l, m-l+1).
\end{equation}
\end{proposition}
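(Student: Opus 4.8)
The plan is to start from the tail–sum representation of the expectation of a nonnegative integer-valued random variable and then reduce everything to the (one-receiver) CDFs $F_{Y_j^{(c)}}$. Since $Y_{n:n}^{(c)}$ is supported on $\{c,c+1,\ldots\}$, the variable $\left(Y_{n:n}^{(c)}\right)^r$ is integer-valued and bounded below by $c^r$, so
\begin{equation}
\Ebb\left[\left(Y_{n:n}^{(c)}\right)^r\right] = \sum_{m=0}^{\infty}\Pbb\left(\left(Y_{n:n}^{(c)}\right)^r > m\right) = c^r + \sum_{m=c^r}^{\infty}\left(1 - \Pbb\left(\left(Y_{n:n}^{(c)}\right)^r \leq m\right)\right),
\end{equation}
where the first $c^r$ summands each equal $1$ because $\left(Y_{n:n}^{(c)}\right)^r \geq c^r$ surely. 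Finiteness of this moment (which legitimizes splitting and re-indexing the series) follows from Prop.\ \ref{prop:stochorder}, since $Y_j^{(c)}$ is stochastically dominated by a shifted $\mathrm{Gamma}$/$\mathrm{HypoExp}$ RV, all of whose moments are finite; and there are no interchange-of-summation issues because every summand is nonnegative.

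Next I would rewrite the event in terms of the maximum itself. Because $Y_{n:n}^{(c)}$ is integer-valued and $x\mapsto x^r$ is increasing on $[0,\infty)$, we have $\left\{\left(Y_{n:n}^{(c)}\right)^r \leq m\right\} = \left\{Y_{n:n}^{(c)} \leq m^{1/r}\right\} = \left\{Y_{n:n}^{(c)} \leq \lfloor m^{1/r}\rfloor\right\}$, and independence of $(Y_j^{(c)})_{j\in[n]}$ gives $\Pbb(Y_{n:n}^{(c)} \leq t) = \prod_{j=1}^n \Pbb(Y_j^{(c)} \leq t)$. For the summation range $m \geq c^r$ we have $m^{1/r}\geq c$, hence $\lfloor m^{1/r}\rfloor \geq c$ (as $c\in\Nbb$), so the one-receiver CDFs are only ever evaluated at integers in their support $\{c,c+1,\ldots\}$.

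It then remains to identify $F_{Y_j^{(c)}}$ under each assumption. Under A1, $Y_j^{(c)} = \sum_{k=1}^{c} X_{j,k}$ with independent $X_{j,k}\sim\mathrm{Geo}(q_{jk})$ having PMF $(1-q_{jk})^{\alpha-1}q_{jk}$ on $\alpha\in\Nbb$; convolving the $c$ geometrics, $\Pbb(Y_j^{(c)} = t) = \sum_{\boldsymbol\alpha\in\Amc_t}\prod_{k=1}^{c}(1-q_{jk})^{\alpha_k-1}q_{jk}$, and summing this over $t$ from $c$ to $\lfloor m^{1/r}\rfloor$ yields $F_{Y_j^{(c)}}(\lfloor m^{1/r}\rfloor)$ and hence \eqref{eq:tu}. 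Under A2 the $q_{jk}$ collapse to $q_j$ and the convolution is the textbook negative-binomial PMF $\binom{t-1}{c-1}(1-q_j)^{t-c}q_j^{c}$; summing gives \eqref{eq:tv}. For the third form \eqref{eq:tw} I would instead invoke the classical negative-binomial/binomial duality: the event that at most $t$ Bernoulli$(q_j)$ trials are needed to collect $c$ successes coincides with the event that $t$ trials produce at least $c$ successes, i.e.\ $\Pbb(Y_j^{(c)} \leq t) = \Pbb(\mathrm{Bin}(t,q_j)\geq c) = I_{q_j}(c,\,t-c+1)$ by \eqref{eq:tx}; substituting $t=\lfloor m^{1/r}\rfloor$ gives \eqref{eq:tw}.

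The argument is essentially bookkeeping once the tail-sum identity is in place; the only steps warranting care are (i) the passage from $\left(Y_{n:n}^{(c)}\right)^r \leq m$ to the \emph{floored} threshold $Y_{n:n}^{(c)} \leq \lfloor m^{1/r}\rfloor$, together with the check that $\lfloor m^{1/r}\rfloor \geq c$ throughout the summation range, and (ii) separating the deterministic contribution $c^r$ before forming the tail sum. The one conceptual point — the regularized-incomplete-Beta form of the negative-binomial CDF used for \eqref{eq:tw} — is already supplied by the stated identity \eqref{eq:tx}, so I anticipate no genuine obstacle.
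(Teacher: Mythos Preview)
Your proposal is correct and follows essentially the same approach as the paper: both start from the tail-sum representation $\Ebb[Z] = \sum_{m\geq 0}\Pbb(Z>m)$, pass to the floored threshold via integer-valuedness, factor over receivers by independence, and then identify the one-receiver CDF either by convolving geometrics (for \eqref{eq:tu} and \eqref{eq:tv}) or by the negative-binomial/binomial duality (for \eqref{eq:tw}). Your write-up is, if anything, slightly more careful than the paper's in explicitly noting the finiteness of the moment and the check that $\lfloor m^{1/r}\rfloor \geq c$ on the summation range.
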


\subsection{Recurrence for the (moments of) delay under RLC}
\label{ssec:recurrence}
A limitation of Prop.\ \ref{prop:exacty} is that the expression involves an infinite summation.  In this subsection, we offer a recurrence equation for the RV $Y_{n:n}$ that permits calculation  of the exact value of $\Ebb\left[ \left(Y_{n:n}^{(c)}\right)^r \right]$ using only a finite number of steps.  In fact it is convenient to generalize our setting slightly, in that we now suppose that receiver $j$ requires reception of $c_j$ packets; prior to this we have assumed $c_j = c$ for all $j \in [n]$.  Let $\cbf^0 = (c_1^0,\ldots,c_n^0)$ be the $n$-vector of required number of successes for each receiver.  The recurrence will be in terms of the generic vector $\cbf \leq \cbf^0$ (component-wise), interpreted as the number of successes left to go for each receiver,  as explained below. We shall also in this subsection write $Y_{n:n}^{(\cbf)}$ to emphasize this.

We introduce some shorthand notation.  First, define $[n(\cbf)] \equiv \{ j \in [n] : c_j > 0\}$ as the set of active receivers, i.e., those still requiring an additional reception to complete.  Second, define the probabilities of success and failure by the active receiver subsets $S$ and $[n(\cbf)]\setminus S$ respectively, to be: 
\begin{eqnarray}
q(S,\cbf) & \equiv  & \prod_{j \in S} q_{j,c_j^0-c_j+1}, \nonumber \\
\bar{q}([n(\cbf)]\setminus S,\cbf)  & \equiv  & \prod_{j \in [n(\cbf)]\setminus S} (1-q_{j,c_j^0-c_j+1}).  
\end{eqnarray}
In words, $q(S,\cbf)$ is the probability of success for active receivers with indices in $S$ where the ``successes to go'' $c_j$ and initial number of successes required $c_j^0$ determine the state index $k = c_j^0 - c_j + 1$ for receiver $j$.  Further, $\bar{q}([n(\cbf)]\setminus S,\cbf)$ is the probability of failure for the active receivers not indexed by $S$, where again the state index for each such receiver $j$ is $k = c_j^0 - c_j + 1$.  

Third, define $\mathbf{1}_S$ to be the $n$-vector with ones in the positions indexed by $S$ and zero elsewhere.  We reiterate that in the state-dependent case with success probability matrix $\Qbf$, we have $Y_{n:n}^{(\cbf)} = \max(Y_1^{(c_{1})},\ldots,Y_n^{(c_{n})})$ where $(Y_1^{(c_{1})},\ldots,Y_n^{(c_{n})})$ are independent with $Y_j^{(c_{j})} = \sum_{k=1}^{c_j} X_{j,k}$ and the $X_{j,k} \sim \mathrm{Geo}(q_{j,k})$ are themselves independent in $j$ and $k$.  In the state-independent case we have $X_{j,k} \sim \mathrm{Geo}(q_j)$ for $k \in [c_j]$.  

\begin{proposition}
\label{pro:recexp}
Assume A1: State-dependent receptions, heterogeneous receivers.  The RV $Y_{n:n}^{(\cbf^{0})}$ defined above admits the recurrence 
\begin{equation} 
\label{eq:rutt}
Y_{n:n}^{(\cbf)} = 1 + \sum_{S \subseteq [n(\cbf)]} q(S,\cbf) \bar{q}([n(\cbf)]\setminus S,\cbf) Y_{n:n}^{(\cbf- \mathbf{1}_S)},
\end{equation}
with boundary condition $Y_{n:n}^{(\mathbf{0})} = 0$.  Assume A2: State-independent receptions, heterogeneous receivers.  The RV $Y_{n:n}^{(\cbf^0)}$ defined above admits the recurrence
\begin{equation}
\label{eq:rvtt}
Y_{n:n}^{(\cbf)} = 1 + \sum_{S \subseteq [n(\cbf)]} \prod_{j \in S} q_j  \prod_{j \in [n(\cbf)] \setminus S} (1-q_j)  Y_{n:n}^{(\cbf - \mathbf{1}_S)},
\end{equation}
again with boundary condition $Y_{n:n}^{(\mathbf{0})} = 0$.  
\end{proposition}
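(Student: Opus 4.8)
The plan is a one-step (single time slot) conditioning argument leveraging the lack-of-memory property of the geometric inter-success times. Fix the initial requirement vector $\cbf^0$ and a generic $\cbf \leq \cbf^0$; for each active receiver $j \in [n(\cbf)]$ let $k_j \equiv c_j^0 - c_j + 1$ denote its current state index, so that the number of slots until its next linearly independent reception is $X_{j,k_j} \sim \mathrm{Geo}(q_{j,k_j})$, with the $X_{j,k_j}$ independent across $j$. Consider the first time slot: receiver $j \in [n(\cbf)]$ obtains a new reception with probability $q_{j,k_j}$, and let $S \subseteq [n(\cbf)]$ be the (random) subset that does so. By the independence of the channels across receivers, the event $\{\text{succeeding set} = S\}$ has probability $\prod_{j \in S} q_{j,k_j} \prod_{j \in [n(\cbf)]\setminus S} (1 - q_{j,k_j}) = q(S,\cbf)\,\bar{q}([n(\cbf)]\setminus S,\cbf)$, and these weights sum to one over $S \subseteq [n(\cbf)]$, so the right-hand side of \eqref{eq:rutt} is a genuine mixture.

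Next I would establish the regeneration step: conditioned on the succeeding set in the first slot equalling $S$, the remaining delay is distributed as $Y_{n:n}^{(\cbf - \mathbf{1}_S)}$. Indeed, after that slot each $j \in S$ has advanced to state $k_j + 1$ with $c_j - 1$ receptions to go; each $j \in [n(\cbf)]\setminus S$ carries $X_{j,k_j}$ conditioned on $\{X_{j,k_j} > 1\}$, which shifted by $-1$ is again $\mathrm{Geo}(q_{j,k_j})$, i.e.\ it restarts in state $k_j$ with $c_j$ to go; and already-completed receivers are unaffected. Since the channel variables $(Z_{j,t})$ are independent across $t$ (as well as across $j$), the system from slot $2$ onward is a fresh, independent copy of the same model with requirement vector $\cbf - \mathbf{1}_S$. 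Hence $Y_{n:n}^{(\cbf)} \stackrel{d}{=} 1 + Y_{n:n}^{(\cbf - \mathbf{1}_S)}$ with $S$ drawn from the law above, and expressing this as the corresponding mixture over $S \subseteq [n(\cbf)]$ yields \eqref{eq:rutt}; the boundary case $\cbf = \mathbf{0}$ has no active receivers and hence zero remaining delay. For Assumption A2 the per-slot success probabilities $q_{j,k} = q_j$ do not depend on the state index, so $q(S,\cbf) = q(S,\cbf^0)$ and $\bar{q}([n(\cbf)]\setminus S,\cbf) = \bar{q}([n(\cbf)]\setminus S,\cbf^0)$, and \eqref{eq:rutt} specializes to \eqref{eq:rvtt}.

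The step requiring the most care — and the main obstacle — is the regeneration claim: making precise that the ``$=$'' in \eqref{eq:rutt} is equality in distribution of the stated mixture (equivalently, taking expectations throughout turns it into a genuine deterministic linear recurrence, which is what Cor.\ \ref{cor:exprec} relies on) and that conditioning on $S$ produces an independent fresh copy of the process. This rests squarely on (i) the independence of $(Z_{j,t})$ in both $t$ and $j$, and (ii) the lack-of-memory property of the geometric distribution for the receivers that fail in the first slot. One should also note that the term $S = \emptyset$ (all active receivers fail) occurs with coefficient $\bar{q}([n(\cbf)],\cbf) < 1$ and leaves $\cbf$ unchanged; the distributional identity remains valid as stated, and isolating this term both shows the recurrence is well-posed and confirms termination, since every other term strictly decreases $\sum_{j} c_j$.
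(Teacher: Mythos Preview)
Your proposal is correct and follows essentially the same approach as the paper: a first-step (one time slot) conditioning argument on the set $S$ of active receivers that succeed, using the memoryless property of the geometric inter-success times to identify the residual delay as $Y_{n:n}^{(\cbf - \mathbf{1}_S)}$. Your treatment is in fact more explicit than the paper's about the distributional meaning of the recurrence, the regeneration step, and the handling of the $S=\emptyset$ term (which the paper deals with in a separate remark).
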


\begin{IEEEproof}
The recurrence is obtained by conditioning on the outcomes of the current trial with $\cbf^0$ indicating the initial target number of successes required by each receiver. The set of outcomes for the trial corresponds to the set of all subsets $S$ of the active receivers $[n(\cbf)]$, i.e., all possible subsets of potential successes.  The probability of success by active receivers in $S$ and failure by active receivers not in $S$ is $q(S,\cbf) \bar{q}([n(\cbf)] \setminus S,\cbf)$.  The effect of successes by receivers in $S$ is to reduce the number of required successes by those receivers by one, i.e., $\cbf \to \cbf - \mathbf{1}_S$, thus advancing the active receivers in $S$ to the next ``state''.  That is, their success probability advances from $q_{j,k_j}$ where $k_j = c_j^0 - c_j + 1$ to $q_{j,k_j + 1}$.  This gives \eqref{eq:rutt}.  To obtain \eqref{eq:rvtt}, observe that in the state-independent case, the probability of success by $S$ and failure by $[n(\cbf)]\setminus S$ is $\prod_{j \in S} q_j$ times $\prod_{j \in [n(\cbf)] \setminus S} (1-q_j)$.
\end{IEEEproof}

\begin{remark}
The empty set is included in the set of subsets of $[n(\cbf)]$ in \eqref{eq:rutt} and \eqref{eq:rvtt}, which must be ``subtracted out'' to express the recurrence for $Y_{n:n}^{(\cbf)}$ in terms of strictly smaller vectors $\cbf' < \cbf$ (in at least one component).  Doing this gives, for \eqref{eq:rutt}, 
\begin{equation}
Y_{n:n}^{(\cbf)} = \frac{1 + \sum_{S \subseteq [n(\cbf)] \setminus \emptyset} q(S,\cbf) \bar{q}([n(\cbf)]\setminus S,\cbf) Y_{n:n}^{(\cbf - \mathbf{1}_S)}}{1 - q(\emptyset,\cbf) \bar{q}([n(\cbf)],\cbf)},
\end{equation}
where $[n(\cbf)] \setminus \emptyset$ denotes all non-empty subsets of the set of active receivers.
\end{remark}

Taking the expectations of \eqref{eq:rutt} and \eqref{eq:rvtt} yields recurrences on $\Ebb[Y_{n:n}^{(c)}]$ for the state dependent and independent cases.
\begin{corollary}
\label{cor:exprec}Assume A1: State-dependent receptions, heterogeneous receivers.  The expectation $\Ebb[Y_{n:n}^{(\cbf^0)}]$ admits the recurrence
\begin{equation} 
\label{eq:rutt2}
\Ebb\left[Y_{n:n}^{(\cbf)}\right] = 1 + \sum_{S \subseteq [n(\cbf)]} q(S,\cbf) \bar{q}([n(\cbf)]\setminus S,\cbf) \Ebb\left[Y_{n:n}^{(\cbf - \mathbf{1}_S)}\right],
\end{equation}
with boundary condition $\Ebb[Y_{n:n}^{(\mathbf{0})}] = 0$.  Assume A2: State-independent receptions, heterogeneous receivers.  The expectation $\Ebb[Y_{n:n}^{(\cbf^0)}]$ admits the recurrence
\begin{equation}
\label{eq:rvtt2}
\Ebb\left[Y_{n:n}^{(\cbf)}\right] = 1 + \sum_{S \subseteq [n(\cbf)]} \prod_{j \in S} q_j   \prod_{j \in [n(\cbf)] \setminus S} (1-q_j)  \Ebb\left[Y_{n:n}^{(\cbf - \mathbf{1}_S)}\right],
\end{equation}
again with boundary condition $\Ebb[Y_{n:n}^{(\mathbf{0})}] = 0$.  
\end{corollary}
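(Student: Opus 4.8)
The plan is to obtain Cor.\ \ref{cor:exprec} directly from Prop.\ \ref{pro:recexp} by taking expectations, exploiting the fact that the coefficients multiplying the $Y_{n:n}^{(\cbf - \mathbf{1}_S)}$ terms are deterministic constants. Concretely, I would reuse the conditioning argument behind Prop.\ \ref{pro:recexp}: condition on the set $S \subseteq [n(\cbf)]$ of active receivers that succeed in the next (single) transmission slot. By the standing independence assumptions this event has probability $q(S,\cbf)\,\bar{q}([n(\cbf)]\setminus S,\cbf)$, and, by the memoryless property of the geometric inter-success times, conditioned on this event the residual delay after that slot is distributed as an independent copy of $Y_{n:n}^{(\cbf - \mathbf{1}_S)}$ (the receivers in $S$ simply advance one state, i.e.\ $\cbf \mapsto \cbf - \mathbf{1}_S$). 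The law of total expectation then gives
\begin{equation}
\Ebb\left[Y_{n:n}^{(\cbf)}\right] = 1 + \sum_{S \subseteq [n(\cbf)]} q(S,\cbf)\,\bar{q}([n(\cbf)]\setminus S,\cbf)\, \Ebb\left[Y_{n:n}^{(\cbf - \mathbf{1}_S)}\right],
\end{equation}
which is \eqref{eq:rutt2}. Equivalently, one may simply apply $\Ebb[\cdot]$ to both sides of the distributional identity \eqref{eq:rutt} and invoke linearity of expectation; since every term is nonnegative, exchanging the expectation with the finite sum raises no integrability issue.

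For the boundary condition, $\cbf = \mathbf{0}$ means no receiver requires any further receptions, so $Y_{n:n}^{(\mathbf{0})} = 0$ deterministically and hence $\Ebb[Y_{n:n}^{(\mathbf{0})}] = 0$. The state-independent recurrence \eqref{eq:rvtt2} follows by the same argument applied to \eqref{eq:rvtt}, using the identity already recorded in the proof of Prop.\ \ref{pro:recexp} that under Assumption A2 the success/failure probabilities factor as $q(S,\cbf^0)\,\bar{q}([n]\setminus S,\cbf^0) = \prod_{j \in S} q_j \prod_{j \in [n(\cbf)]\setminus S}(1-q_j)$, independent of the current state index.

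The only point needing care, and it is genuinely minor, is justifying that the residual-delay term in the conditioning step is an independent copy of $Y_{n:n}^{(\cbf-\mathbf{1}_S)}$ with the claimed distribution; this is precisely the ``probabilistic restart'' observation in the first remark following Prop.\ \ref{pro:recexp}, resting on the memorylessness of the geometric RVs together with the independence of the channel realizations across slots and receivers. Once that is in place the corollary is immediate. I would also note that, unlike Prop.\ \ref{prop:exacty}, these recurrences terminate after finitely many steps, since $\cbf$ strictly decreases in at least one component toward $\mathbf{0}$; combined with the ``subtract out the empty set'' rearrangement in the earlier remark, \eqref{eq:rutt2} and \eqref{eq:rvtt2} can be solved by backward recursion on $\cbf$, yielding $\Ebb[Y_{n:n}^{(\cbf)}]$ exactly in finitely many operations.
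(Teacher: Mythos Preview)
Your proposal is correct and follows exactly the paper's approach: the paper simply states that the corollary is obtained by taking expectations of \eqref{eq:rutt} and \eqref{eq:rvtt} in Prop.\ \ref{pro:recexp}. Your write-up is more detailed than the paper's one-line justification, but the underlying argument (linearity of expectation applied to the distributional recurrence, with the memoryless ``restart'' already established in the proposition) is the same.
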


It is straightforward to see that for $\cbf = c_j \ebf_j$ (the $n$-vector of all zeros with value $c_j$ in position $j$), in the state-independent case we have the boundary condition $\Ebb[Y_{n:n}^{(c_j \ebf_j)}] = c_j/q_j$.  That is, when there is only one receiver left to receive, the expected duration is the expectation of a negative binomial RV, which of course is $c_j$ successes required times $1/q_j$ time slots on average per success. 

\begin{remark}
A parallel recurrence holds for the RV $Y_{1:n} = Y_{1:n}^{(\cbf^0)}$ where $Y_{1:n} ^{(\cbf^{0})} \equiv \min(Y_1^{(c_{1}^{0})},\ldots,Y_n^{(c_{n}^{0})})$ is the random time until the {\em first} receiver, say $j$, receives its target number of successes $c_j^0$ (as opposed to counting until all receivers reach their targets).  For the state-dependent case the RV $Y_{1:n} = Y_{1:n}^{(\cbf^0)}$ defined above admits the recurrence 
\begin{equation} 
\label{eq:rutt3}
Y_{1:n}^{(\cbf)} = 1 + \sum_{S \subseteq [n]} q(S,\cbf) \bar{q}([n]\setminus S,\cbf) Y_{1:n}^{(\cbf - \mathbf{1}_S)},
\end{equation}
with boundary condition $Y_{1:n}^{(\cbf)} = 0$ for any $\cbf$ containing one or more zeros.  For the state-independent case the RV $Y_{1:n} = Y_{1:n}^{(\cbf^0)}$ defined above admits the recurrence
\begin{equation}
\label{eq:rvtt3}
Y_{1:n}^{(\cbf)} = 1 + \sum_{S \subseteq [n]} \prod_{j \in S} q_j  \prod_{j \in [n] \setminus S} (1-q_j)  Y_{1:n}^{(\cbf - \mathbf{1}_S)},
\end{equation}
again with boundary condition $Y_{1:n}^{(\cbf)} = 0$ for any $\cbf$ containing one or more zeros.  
\end{remark}

\begin{remark}
In fact we can use the \textit{first-step analysis} technique in probability to establish recurrences to compute arbitrary moments of $Y_{n:n}$, i.e., $\Ebb\left[\left(Y_{n:n}^{(\cbf^0)}\right)^r\right]$. To show this, first we observe
\begin{IEEEeqnarray}{rCl}
\IEEEeqnarraymulticol{3}{l}
{\Pbb(Y_{n:n}^{(\cbf)} = y ~|~ S \in [n(\cbf)] \mbox{ succeed in c.t.s.})
}\nonumber \\* \quad
&  = & \Pbb(1 + Y_{n:n}^{(\cbf - \mathbf{1}_S)} = y)
\end{IEEEeqnarray}
where ``c.t.s.'' stands for {\em current time slot}.  That is, the event $\{Y_{n:n}^{(\cbf)} = y\}$ conditioned on the event of successes for active receivers $S$ in the current time slot, is the same as the (unconditional) event $\{ 1 + Y_{n:n}^{(\cbf-\mathbf{1}_S)} = y \}$. In other words, conditioning on the number of successes in the current time slot affects the state (the number of successes to go) but, aside from that, the system probabilistically restarts itself. Next, we can equate arbitrary powers $r$ of both sides, i.e., 
\begin{IEEEeqnarray}{rCl}
\IEEEeqnarraymulticol{3}{l}
{\Pbb\left(\left(Y_{n:n}^{(\cbf)}\right)^r= y^r ~|~ S \in [n(\cbf)] \mbox{ succeed in c.t.s}\right)
}\nonumber \\* \quad
&  = & \Pbb\left(\left(1 + Y_{n:n}^{(\cbf - \mathbf{1}_S)}\right)^r = y^r \right). 
\end{IEEEeqnarray}
On this basis, our recurrence can be generalized to one for higher moments of $Y_{n:n}$, e.g., for the state-dependent case:
\begin{IEEEeqnarray}{rCl}
\IEEEeqnarraymulticol{3}{l}
{\Ebb\left[\left(Y_{n:n}^{(\cbf)}\right)^r\right]
}\nonumber \\* \quad
&  = & \sum_{S \subseteq [n(\cbf)]} q(S,\cbf) \bar{q}([n(\cbf)]\setminus S,\cbf) \Ebb\left[\left(1+Y_{n:n}^{(\cbf - \mathbf{1}_S)}\right)^r\right], \IEEEeqnarraynumspace
\label{eq:rutt4}
\end{IEEEeqnarray}
with boundary condition $\Ebb\left[\left(Y_{n:n}^{(\mathbf{0})}\right)^r\right] = 0$.
\end{remark}

\begin{remark}
Recurrences for the maximum of geometric RVs and in fact the maximum of sums of geometric RVs are found in the literature.  We mention in particular \cite[Eq.\ (2.5)]{SzpReg1990} which provided the inspiration for our recurrence, but in the simpler context of the expected maximum of iid geometric RVs.   Further, recent work \cite{DikDim2010} on the expected delay of network coded packets routed simultaneously over two paths employs a similar recurrence to ours, but with significant differences.  In particular, since in their context there is a single receiver looking for $c$ innovative packets over $n=2$ disjoint routes, their recurrence is univariate in the scalar $c$.  The general problem of expressing the maximum of negative binomial RVs is addressed in \cite{GraPro1997}.
\end{remark}

\subsection{Erasure channels that minimize delay}
\label{ssec:extremalchannels}

In this subsection we are interested in characterizing the channels for which $\Ebb\left[ \left(Y_{n:n}^{(c)}\right)^r \right]$ is minimized, subject to an equality constraint on the average over the $n$ channel reception probabilities.  Throughout this subsection we restrict our attention to the case $q_{j,k} = q_j$ for all $j \in [n]$, or equivalently, we assume the field size $d$ is infinite in \eqref{eq:tyt}.  Denote the corresponding channel reception probabilities as $\qbf = (q_j, j \in [n])$, the average as $\bar{\qbf} = \left(1/n\right)\sum_{j=1}^{n} q_{j}$, and the set of possible channel vectors with average $q$ as $\Qmc_q = \{ \qbf \in (0,1)^n : \bar{\qbf} = q\}$, for $q \in (0,1)$.  Prop.\ \ref{prop:extremaldelayUT} (\ref{prop:extremaldelayRLNC}) establishes that the delay minimizing channel vector is $\qbf = (q,\ldots,q)$ when minimized over $\Qmc_q$, for UT (RLC), respectively.  As RLC includes UT as a special case under Assumption A2 (state-independent receptions, heterogeneous receivers), Prop.\ \ref{prop:extremaldelayRLNC} implies Prop.\ \ref{prop:extremaldelayUT}.  We include both because the proof of Prop.\ \ref{prop:extremaldelayUT} is simpler than that of Prop.\ \ref{prop:extremaldelayRLNC} while still retaining the  key ideas.  
\begin{proposition} 
\label{prop:extremaldelayUT}
For any $q \in (0,1)$, the $r^{\rm th}$ ($r \in \Nbb$) moment of delay under UT, $\Ebb[X_{n:n}^r]$, is minimized over $\qbf \in \Qmc_q$ for $\qbf = (q,\ldots,q)$.
\end{proposition}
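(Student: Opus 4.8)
The plan is to avoid the alternating min-max expansion of Prop.\ \ref{prop:kthMomentMaxGeo} (whose signs obstruct a direct convexity argument) and instead work level-by-level with the complementary CDF. Since $X_{n:n}$ is supported on $\{1,2,\ldots\}$ with $\Pbb(X_j \le m) = 1-(1-q_j)^m$, independence gives $\Pbb(X_{n:n} > m) = 1 - \prod_{j=1}^n\big(1-(1-q_j)^m\big)$, and the telescoping identity $\sum_{m=0}^{k-1}\big[(m+1)^r - m^r\big] = k^r$ yields
\[
\Ebb[X_{n:n}^r] = \sum_{m=0}^\infty \big[(m+1)^r - m^r\big]\,\Pbb(X_{n:n} > m),
\]
in which every coefficient $(m+1)^r - m^r$ is nonnegative. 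The $m=0$ term equals $1$ regardless of $\qbf$ (since $X_{n:n}\ge 1$), so it suffices to prove that for each fixed $m\ge 1$ the product $\prod_{j=1}^n\big(1-(1-q_j)^m\big)$ is \emph{maximized} over $\qbf\in\Qmc_q$ at $\qbf=(q,\ldots,q)$; this minimizes every $\Pbb(X_{n:n}>m)$ at once, hence the whole sum.

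For the per-level claim I would substitute $p_j \equiv 1-q_j$, so that $\bar{\qbf}=q$ becomes $\sum_{j=1}^n p_j = n(1-q)$ with $p_j\in(0,1)$, and take logarithms: the objective is $\sum_{j=1}^n h_m(p_j)$ with $h_m(p)\equiv\log(1-p^m)$. A routine computation gives $h_m''(p) = -\,m\,p^{m-2}\big[(m-1)+p^m\big]\big/(1-p^m)^2 < 0$ for all $p\in(0,1)$ and $m\ge 1$, so $h_m$ is strictly concave on $(0,1)$. Jensen's inequality then gives $\tfrac1n\sum_{j}h_m(p_j)\le h_m\big(\tfrac1n\sum_j p_j\big)=h_m(1-q)$, with equality iff $p_1=\cdots=p_n=1-q$, i.e.\ $\qbf=(q,\ldots,q)$; exponentiating recovers the claimed maximizer, and recombining over $m$ completes the proof.

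I expect the only non-mechanical ingredient to be the concavity of $h_m$, in particular the sign bookkeeping in $h_m''$ and the edge case $m=1$, where $p^{m-2}=p^{-1}$ and the bracket collapses to $p$, recovering $h_1''(p)=-(1-p)^{-2}$. One conceptual remark is also worth making: since $\Qmc_q\subseteq(0,1)^n$ is open, the assertion is precisely that the interior point $(q,\ldots,q)$ attains the minimum of $\Ebb[X_{n:n}^r]$ over $\Qmc_q$, and the Jensen step delivers this directly (with strict inequality off the diagonal), so no appeal to compactness is needed. The same level-by-level scheme is what should extend to the RLC statement in Prop.\ \ref{prop:extremaldelayRLNC}, with the only change being that the per-level factor is a product of regularized incomplete Beta functions $I_{q_j}(c,\cdot)$ rather than $1-(1-q_j)^m$, whose log-concavity in the success probability is the natural analogue of the concavity of $h_m$.
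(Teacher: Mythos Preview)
Your proof is correct and is essentially the same argument as the paper's: both decompose $\Ebb[X_{n:n}^r]$ as a nonnegative combination of tail probabilities $\Pbb(X_{n:n}>m)$ and then, for each fixed level $m$, apply Jensen's inequality to the concave function $\log\bigl(1-(1-q)^m\bigr)$ to maximize the product $\prod_j\bigl(1-(1-q_j)^m\bigr)$ at the diagonal. The only cosmetic differences are that you group the sum by $m$ with explicit weights $(m+1)^r-m^r$ (the paper instead sums over $x$ and carries $\lfloor x^{1/r}\rfloor$), you reparametrize in $p_j=1-q_j$, and you actually verify the concavity via $h_m''$ whereas the paper simply asserts it.
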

\begin{IEEEproof}
Write $q \mathbf{1}$ to denote $(q,\ldots,q)$.  We need to show $\Ebb[X_{n:n}^{r}(\qbf)] \geq \Ebb[X_{n:n}^{r}(q \mathbf{1})]$ for all $\qbf \in \Qmc_q$. Toward this, we write
\begin{IEEEeqnarray}{rCl}
\Ebb[X_{n:n}^r]  & = & \sum_{x=0}^{\infty} \Pbb(X_{n:n}^r > x) \nonumber \\
&=& \sum_{x=0}^{\infty} \Pbb(X_{n:n}> \lfloor x^{\frac{1}{r}} \rfloor) \nonumber \\
&=& \sum_{x=0}^{\infty} \left( 1 - \prod_{j \in [n]} \Pbb(X_{j} \leq \lfloor x^{\frac{1}{r}} \rfloor) \right) \nonumber \\
& = & 1 + \sum_{x=1}^{\infty} \left( 1 - \prod_{j \in [n]} \left( 1 - \left( 1 - q_{j} \right) ^{\lfloor x^{\frac{1}{r}} \rfloor} \right) \right).
\end{IEEEeqnarray}
It suffices to show the desired inequality holds termwise with respect to the summing variable $x \in \Nbb$:
\begin{equation} 
1 - \prod_{j \in [n]} \left( 1 - \left( 1 - q_{j} \right) ^{\lfloor x^{\frac{1}{r}} \rfloor} \right) \geq 1 -  \left( 1 - \left( 1 - q \right) ^{\lfloor x^{\frac{1}{r}} \rfloor} \right)^{n}.
\end{equation}
This is equivalent to showing
\begin{equation} 
\label{eq:termwiseInequalityUnderUT}
\frac{1}{n} \sum_{j \in [n]} \log \left( 1 - (1-q_{j})^{\lfloor x^{\frac{1}{r}} \rfloor} \right) \leq \log \left(1 - (1-q)^{\lfloor x^{\frac{1}{r}} \rfloor} \right).
\end{equation}
Given $x$, define $f(z) \equiv \log(1 - (1-z)^{\lfloor x^{\frac{1}{r}} \rfloor})$. Further, define a discrete uniform random variable $Z$ with support $\{q_{1}, \ldots, q_{n} \}$.  By assumption $\Ebb[Z] = \bar{\qbf} = q$.  Consequently \eqref{eq:termwiseInequalityUnderUT} directly follows from Jensen's inequality and the concavity of $f(z)$ on $z \in (0,1)$, for each $x \in \Nbb$.
\end{IEEEproof}

\begin{proposition} 
\label{prop:extremaldelayRLNC}
Assume A2: State-independent receptions, heterogeneous receivers. For any $c \in \Nbb$ and $q \in (0,1)$, the $r^{\rm th}$ ($r \in \Nbb$) moment of delay under RLC, $\Ebb\left[ \left(Y_{n:n}^{(c)}\right)^r \right]$, is minimized over $\qbf \in \Qmc_q$ for $\qbf = (q,\ldots,q)$.
\end{proposition}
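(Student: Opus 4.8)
The plan is to follow the template of the proof of Prop.\ \ref{prop:extremaldelayUT}, but to start from the exact moment expression \eqref{eq:tw} of Prop.\ \ref{prop:exacty} rather than from the UT complementary-CDF sum. Under Assumption A2, \eqref{eq:tw} gives
\[
\Ebb\left[\left(Y_{n:n}^{(c)}\right)^r\right] = c^r + \sum_{m=c^r}^{\infty}\left(1 - \prod_{j=1}^n I_{q_j}\!\left(c,\lfloor m^{1/r}\rfloor - c + 1\right)\right),
\]
using $\Pbb(Y_j^{(c)} \le t) = I_{q_j}(c, t-c+1)$ for $Y_j^{(c)} \sim \mathrm{NegBin}(c,q_j)$. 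Since the constant $c^r$ and the exponent $\lfloor m^{1/r}\rfloor$ do not depend on $\qbf$, I would reduce the claim to a termwise statement in $m$: for each integer $m \ge c^r$, setting $t = \lfloor m^{1/r}\rfloor \ge c$, I want $\prod_{j=1}^n I_{q_j}(c, t-c+1)$ to be maximized over $\qbf \in \Qmc_q$ at $\qbf = (q,\ldots,q)$, i.e.\ $\prod_{j=1}^n I_{q_j}(c,t-c+1) \le (I_q(c,t-c+1))^n$; summing these over $m$ then yields the proposition, and in particular $c=1$ recovers Prop.\ \ref{prop:extremaldelayUT}.

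Next I would take logarithms and rewrite the termwise inequality as $\frac{1}{n}\sum_{j=1}^n \log I_{q_j}(c,t-c+1) \le \log I_q(c,t-c+1)$. Introducing a discrete-uniform RV $Z$ on $\{q_1,\ldots,q_n\}$ (so that $\Ebb[Z] = \bar\qbf = q$ by the definition of $\Qmc_q$) and the function $g(z) \equiv \log I_z(c,t-c+1)$ on $z \in (0,1)$, the inequality is exactly Jensen's inequality $\Ebb[g(Z)] \le g(\Ebb[Z])$ --- valid provided $g$ is concave. The parameters $a=c \ge 1$ and $b = t-c+1 \ge 1$ are admissible throughout, so the entire proposition reduces to a single analytic fact: for fixed positive integers $a,b$, the map $z \mapsto \log I_z(a,b)$ is concave on $(0,1)$.

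That concavity is the step I expect to be the main obstacle, and it is precisely Lem.\ \ref{lem:regbetalogconcave} established in the appendix. The most transparent route is to note that the Beta density $u^{a-1}(1-u)^{b-1}$ is log-concave on $(0,1)$ whenever $a,b \ge 1$, and that the running integral of a log-concave density is again log-concave (log-concavity is preserved under this form of integration, cf.\ Pr\'ekopa--Leindler); since $B(a,b)\,I_z(a,b)$ is exactly such an integral, $\log I_z(a,b)$ is concave in $z$. Alternatively one can compute $g'(z) = z^{a-1}(1-z)^{b-1}/(B(a,b)\,I_z(a,b))$ and verify $g''(z) \le 0$ directly, which after integration by parts on the incomplete Beta integral amounts to bounding the ratio $z^{a-1}(1-z)^{b-1}/I_z(a,b)$ --- the same estimate underlying the monotonicity Lem.\ \ref{lem:regbetamono}. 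Once Lem.\ \ref{lem:regbetalogconcave} is invoked, the termwise Jensen argument together with the sum over $m \ge c^r$ completes the proof with no further computation, the bookkeeping being essentially identical to that in Prop.\ \ref{prop:extremaldelayUT}.
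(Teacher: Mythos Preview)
Your proposal is correct and follows essentially the same approach as the paper: start from the exact expression \eqref{eq:tw}, reduce to a termwise inequality in $m$, take logarithms, and apply Jensen's inequality with a discrete-uniform RV on $\{q_1,\ldots,q_n\}$, invoking Lem.\ \ref{lem:regbetalogconcave} for the concavity of $z\mapsto\log I_z(a,b)$. The only addition is your sketch of alternative routes to the log-concavity (Pr\'ekopa--Leindler versus direct second-derivative computation); the paper's appendix takes the latter route.
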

\begin{IEEEproof}
The proof mirrors that of Prop.\ \ref{prop:extremaldelayUT}.  Write $q \mathbf{1}$ to denote $(q,\ldots,q)$.  We need to show $\Ebb\left[\left(Y_{n:n}^{(c)} (\qbf) \right)^{r}\right] \geq \Ebb\left[\left(Y_{n:n}^{(c)} (q \mathbf{1}) \right)^{r}\right]$ for all $\qbf \in \Qmc_q$.  An expression for $\Ebb\left[ \left(Y_{n:n}^{(c)}\right)^r \right]$ in terms of regularized incomplete beta function is given in \S\ref{ssec:exact} Eq.\ \eqref{eq:tw}.  To establish this inequality, it suffices to show this inequality holds for each term in the sum.  That is, for each integer $m \geq c^{r}$, we must show
\begin{equation}
1 - \prod_{j=1}^{n} I_{q_{j}}\left( c,  \lfloor m^{\frac{1}{r}} \rfloor -c+1 \right)  \! \geq \!  1 - \left( I_{q}\left( c,  \lfloor m^{\frac{1}{r}} \rfloor -c+1 \right) \right)^{n}.
\end{equation}
Since the function $I_{x}(a,b)$ is always nonnegative, the above is equivalent to
\begin{equation} 
\label{eq:termwiseInequalityUnderRLNC}
\frac{1}{n} \sum_{j=1}^{n} \log I_{q_{j}} (c,  \lfloor m^{\frac{1}{r}} \rfloor -c+1) \leq \log I_{q} (c,  \lfloor m^{\frac{1}{r}} \rfloor -c+1).
\end{equation}
Given integer $m \geq c^{r}$, define $f(z) \equiv \log I_{z}(c,  \lfloor m^{\frac{1}{r}} \rfloor -c+1)$, which is shown in Appendix \ref{app:delayRLNC} (Lem.\ \ref{lem:regbetalogconcave}) to be concave on $z \in (0,1)$. Then \eqref{eq:termwiseInequalityUnderRLNC} follows from Jensen's inequality where the associated RV is again a discrete uniform RV with support $\{q_{1}, \ldots, q_{n}\}$.
\end{IEEEproof}
\begin{remark}
A similar argument establishes the fact that replacing the success probabilities of any subset of receivers with the average over that subset will yield a lower average delay.  Specifically, for any $\qbf \in (0,1)^n$ and any nonempty $\Smc \subseteq [n]$, form $\qbf'$ with $q_j' = q_j$ for $j \not\in \Smc$ and $q_j' = \sum_{i \in \Smc} q_i/|\Smc|$ for $j \in \Smc$.  Then $\Ebb\left[ \left(Y_{n:n}^{(c)}\right)^r \right]$ is no larger under $\qbf'$ than under $\qbf$.
\end{remark}
\begin{remark}
As will be shown in Prop.\ \ref{prop:convergence} (\S\ref{ssec:convergence}), the asymptotic delay per packet (as $c \to \infty$) converges to $1/\min_{j \in [n]} q_j$, meaning the asymptotic delay depends upon the channel vector $\qbf$ only through its minimum component value (bottleneck channel).  Given this, Prop.\ \ref{prop:extremaldelayRLNC} is not surprising since the maximizer of $\min \qbf$ over $\qbf \in \Qmc_q$ is $\qbf = (q,\ldots,q)$: 
\begin{equation}
\argmin_{\qbf \in \Qmc_q} \frac{1}{\min_{j \in [n]} q_j} 
= \argmax_{\qbf \in \Qmc_q} \min_{j \in [n]} q_j = (q,\ldots,q).
\end{equation}
We emphasize, however, that Prop.\ \ref{prop:extremaldelayRLNC} holds for all $c \in \Nbb$, whereas Prop.\ \ref{prop:convergence} holds as $c \to \infty$.
\end{remark}

\section{Dependence of RLC delay on the blocklength $c$}
\label{sec:deponbl} 

In the previous section we derived exact expressions (involving infinite sums) as well as a recurrence for the moments of the RLC block delay.  Although these expressions have the virtue of being exact and computable, they fail to provide insight on the delay's dependence on the key model parameters $c,n$.  This and the subsequent section address this deficiency by studying the asymptotic (occasionally normalized) block delay as $c$ or $n$ grows to infinity.  In this section we investigate the dependence on $c$ with $n$ fixed, while in the next section (\S\ref{sec:deponnumrx}) we investigate the dependence on $n$ with $c$ fixed. This section contains several subsections.  First, we show in \S\ref{ssec:convergence} the delay per packet converges in probability and in $r^{\rm th}$ mean.  This result allows us to easily establish that the asymptotic (in $c$) delay per packet under RLC is lower than the packet delay under UT.  Second, we show in \S\ref{ssec:mono} a stronger result (implying the previous result) that the expected delay per packet under RLC is monotone decreasing in $c$ for both infinite and finite field sizes. Furthermore, the monotonicity properties are investigated from a sample path perspective. Third, we show in \S\ref{ssec:rossdlctightinc} that the normalized lower and upper bounds in \S\ref{ssec:rlncbounds}, specialized to the first moment, are (almost) asymptotically tight as $c \to \infty$.  Throughout this section we employ the superscript $(c)$ on all quantities (not just $Y_{n:n}^{(c)},Y_j^{(c)}$) to emphasize the dependence upon $c$.

\subsection{Asymptotic delay per packet as $c \to \infty$}
\label{ssec:convergence}

Recall the $n \times c$ matrix $\Qbf^{(c)}$ holds the parameters $q_{j,k}^{(c)}$, where $X_{j,k}^{(c)} \sim \mathrm{Geo}(q_{j,k}^{(c)})$. The first main result (Prop.\ \ref{prop:convergence}) establishes that, under some convergence requirements on $\{X_{j,k}^{(c)}\}$, the asymptotic per packet delay ($Y_{n:n}^{(c)}/c$) converges to $1/\min_j q_j$ as $c \to \infty$ in both probability and $r^{\rm th}$ mean, where $q_j$ is the asymptotic (in $c$) average of row $j$ of $\Qbf^{(c)}$.  That is, the asymptotic packet delay depends solely on the reception probability of the bottleneck channel(s), $\min_{j \in [n]} q_j$.  The conditions required in Prop.\ \ref{prop:convergence} are in fact satisfied for the specific $q_{j,k}^{(c)}$ in \eqref{eq:tyt} capturing the effect of the field size on the probability of linear independence (Prop.\ \ref{prop:asymcondsat}).  The second main result (Prop.\ \ref{prop:comparingAsymptotic}) uses this to conclude the asymptotic per packet delay of RLC is lower than the packet delay under UT.  Normalize $(Y_1^{(c)},\ldots,Y_n^{(c)})$ as $(\hat{Y}_1^{(c)},\ldots,\hat{Y}_n^{(c)})$ with 
\begin{equation}
\hat{Y}^{(c)}_j \equiv \frac{Y_j^{(c)}}{c} = \frac{1}{c} \sum_{k=1}^c X_{j,k}^{(c)}, ~ j \in [n].  
\end{equation}
Similarly, normalize $Y_{n:n}^{(c)}$ as $\hat{Y}_{n:n}^{(c)} \equiv Y_{n:n}^{(c)}/c$.
\begin{proposition} 
\label{prop:convergence}
Assume A1: State-dependent receptions, heterogeneous receivers.  Given the $n \times c$ matrix $\Qbf^{(c)}$ with success probabilities $q_{j,k}^{(c)}$, suppose there exist $n$-vectors $\qbf = (q_1,\ldots,q_n)$ and $\boldsymbol\sigma^2 = (\sigma^2_1,\ldots,\sigma^2_n)$ such that the following conditions hold for all $j \in [n]$:
\begin{eqnarray} 
\lim_{c \to \infty} \frac{1}{c} \sum_{k=1}^c \Ebb[X_{j,k}^{(c)}] &=& \frac{1}{q_j} < \infty \nonumber \\
\lim_{c \to \infty} \frac{1}{c} \sum_{k=1}^c \mathrm{Var}(X_{j,k}^{(c)}) &=& \sigma_j^2 < \infty.
\label{eq:convcond}
\end{eqnarray}
Then, as $c \to \infty$, $\hat{Y}_{n:n}^{(c)}$ converges in probability:
\begin{equation}
\hat{Y}_{n:n}^{(c)} \stackrel{\Pbb}{\longrightarrow} (\min_{j \in [n]} q_j)^{-1}.
\end{equation}
Further, fix $r \in \Nbb$ and suppose that each $X_{j,k}^{(c)}$ has uniformly bounded moments up to order $2r$.  Then \eqref{eq:convcond} along with this assumption ensures that, as $c \to \infty$, $\hat{Y}_{n:n}^{(c)}$ converges in $r^{\rm th}$ mean:
\begin{equation}
\hat{Y}_{n:n}^{(c)} \stackrel{L^r}{\longrightarrow} (\min_{j \in [n]} q_j)^{-1}.
\end{equation}
\end{proposition}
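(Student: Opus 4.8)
The plan is to establish convergence in probability first and then upgrade it to $r^{\rm th}$-mean convergence via a uniform-integrability argument. Both parts rest on a weak law of large numbers applied separately to each row of $\Qbf^{(c)}$, together with the trivial observation that the maximum of $n$ nonnegative quantities is a $1$-Lipschitz function of those quantities in the $\ell_\infty$ (hence $\ell_1$) sense.

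For the convergence in probability, fix $j \in [n]$ and use independence of the $X_{j,k}^{(c)}$ in $k$ to write $\Ebb[\hat{Y}_j^{(c)}] = \frac1c\sum_{k=1}^{c}\Ebb[X_{j,k}^{(c)}]$ and $\mathrm{Var}(\hat{Y}_j^{(c)}) = \frac{1}{c^2}\sum_{k=1}^{c}\mathrm{Var}(X_{j,k}^{(c)}) = \frac1c\big(\frac1c\sum_{k=1}^{c}\mathrm{Var}(X_{j,k}^{(c)})\big)$. The first hypothesis in \eqref{eq:convcond} makes the mean tend to $1/q_j$, and the second makes the parenthesized average of variances tend to the finite constant $\sigma_j^2$, so $\mathrm{Var}(\hat{Y}_j^{(c)}) \to 0$; Chebyshev's inequality then gives $\hat{Y}_j^{(c)} \stackrel{\Pbb}{\longrightarrow} 1/q_j$ (in fact in $L^2$). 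Since $|\max_j a_j - \max_j b_j| \le \sum_{j} |a_j - b_j|$ and $\max_{j\in[n]}(1/q_j) = (\min_{j\in[n]} q_j)^{-1}$, one obtains
\begin{equation}
\Big|\hat{Y}_{n:n}^{(c)} - (\min_{j\in[n]}q_j)^{-1}\Big| \le \sum_{j\in[n]}\Big|\hat{Y}_j^{(c)} - \tfrac{1}{q_j}\Big|,
\end{equation}
whose right-hand side is a finite sum of terms each tending to $0$ in probability, hence tends to $0$ in probability; this proves the first assertion.

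For the $r^{\rm th}$-mean statement, let $M$ bound the moments up to order $2r$ of every $X_{j,k}^{(c)}$. Convexity of $x \mapsto x^{2r}$ and Jensen's inequality give $(\hat{Y}_j^{(c)})^{2r} \le \frac1c\sum_{k=1}^{c}(X_{j,k}^{(c)})^{2r}$, whence $\Ebb[(\hat{Y}_j^{(c)})^{2r}] \le M$ uniformly in $c$; and $(\hat{Y}_{n:n}^{(c)})^{2r} = \max_{j}(\hat{Y}_j^{(c)})^{2r} \le \sum_{j}(\hat{Y}_j^{(c)})^{2r}$ then yields $\Ebb[(\hat{Y}_{n:n}^{(c)})^{2r}] \le nM$ uniformly in $c$. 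Writing $c_0 \equiv (\min_j q_j)^{-1}$, which is finite since each $q_j > 0$, the inequality $(a+b)^{2r}\le 2^{2r-1}(a^{2r}+b^{2r})$ shows $\sup_c \Ebb\big[(|\hat{Y}_{n:n}^{(c)}-c_0|^{r})^2\big] < \infty$, so the family $\{|\hat{Y}_{n:n}^{(c)}-c_0|^r : c\in\Nbb\}$ is bounded in $L^2$ and hence uniformly integrable. Combined with $|\hat{Y}_{n:n}^{(c)}-c_0|^r \stackrel{\Pbb}{\longrightarrow} 0$ from the first part (continuous mapping), the Vitali convergence theorem gives $\Ebb[|\hat{Y}_{n:n}^{(c)}-c_0|^r]\to 0$, i.e.\ $\hat{Y}_{n:n}^{(c)} \stackrel{L^r}{\longrightarrow} (\min_j q_j)^{-1}$.

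The one step requiring genuine care is obtaining a moment bound on the normalized maximum $\hat{Y}_{n:n}^{(c)}$ that is \emph{uniform in $c$}: Jensen's inequality transfers the per-increment bound up to $\hat{Y}_j^{(c)}$ without loss, and the crude bound $\max \le \text{sum}$ handles the maximum over the fixed, finite set of receivers, so this is the linchpin that makes the uniform integrability and hence the $L^r$ conclusion go through. A secondary point worth flagging is that the triangular-array dependence of the $X_{j,k}^{(c)}$ on $c$ is the reason \eqref{eq:convcond} must be imposed as an explicit hypothesis on the limiting averages of the means and variances rather than inherited from a single ambient sequence; with \eqref{eq:convcond} assumed, the Chebyshev step needs nothing further.
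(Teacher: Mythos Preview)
Your proof is correct and follows essentially the same strategy as the paper: a WLLN-type argument per receiver (you do Chebyshev directly; the paper invokes a textbook WLLN for independent non-identically-distributed summands), followed by passing to the maximum via continuity, and then a uniform-integrability upgrade to $L^r$ based on a uniform bound on the $2r$-th moments of $\hat{Y}_{n:n}^{(c)}$.

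The presentational differences are minor but your version is in a couple of places cleaner than the paper's. Where the paper asserts without detail that $\Ebb\big[\sum_j(\frac{1}{c}\sum_k X_{j,k}^{(c)})^{2r}\big]$ is uniformly bounded ``provided each $X_{j,k}^{(c)}$ has uniformly bounded moments up to order $2r$,'' your Jensen step makes this transparent and shows why only the $2r$-th moment bound is actually used. Likewise, you apply uniform integrability directly to the family $\{|\hat{Y}_{n:n}^{(c)}-c_0|^r\}$ and invoke Vitali, whereas the paper works with $(\hat{Y}_{n:n}^{(c)})^r$ and a slightly less explicit ``UI with parameter $r$'' formulation; both are equivalent, but yours is more self-contained. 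Your Lipschitz inequality $|\max_j a_j - \max_j b_j|\le\sum_j|a_j-b_j|$ in place of the continuous-mapping theorem is a cosmetic choice.
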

The proof of this proposition, as well as that of the following one, can be found in Appendix \ref{app:convergencePf}.

\begin{proposition}
\label{prop:asymcondsat}
Assume A1: State-dependent receptions, heterogeneous receivers.  Conditions \eqref{eq:convcond} of Prop.\ \ref{prop:convergence} on the mean and variance of $\{X_{j,k}^{(c)}\}$ for the weak law of large numbers to apply are satisfied for $q_{j,k}^{(c)} = (1-d^{k-1-c}) q_j$ as in \eqref{eq:tyt}, for any field size $d \geq 2$. 
\end{proposition}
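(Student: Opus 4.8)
The plan is to verify the two limits in \eqref{eq:convcond} directly, exploiting the facts that (i) $X_{j,k}^{(c)} \sim \mathrm{Geo}(q_{j,k}^{(c)})$ has mean $1/q_{j,k}^{(c)}$ and variance $(1-q_{j,k}^{(c)})/(q_{j,k}^{(c)})^2$, and (ii) the parameter $q_{j,k}^{(c)} = (1-d^{k-1-c})q_j$ from \eqref{eq:tyt} depends on $k$ and $c$ only through the ``distance to the end of the block'' $i \equiv c-k$. Re-indexing the inner sums by $i$ (so $i$ runs over $0,1,\ldots,c-1$ as $k$ runs over $c,c-1,\ldots,1$), the explicit $c$-dependence disappears from the summands: $q_{j,k}^{(c)} = (1-d^{-(i+1)})q_j \equiv q_j(i)$, a fixed sequence in $i$ with $q_j(i) \uparrow q_j$ and $q_j(i) \geq (1-d^{-1})q_j \geq q_j/2 > 0$ for all $i$ (using $d \geq 2$). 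Hence both averages in \eqref{eq:convcond} become ordinary Cesàro means of fixed sequences, and the result follows from elementary Cesàro-summability facts.

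For the mean condition, I would write $\frac{1}{c}\sum_{k=1}^c \Ebb[X_{j,k}^{(c)}] = \frac{1}{q_j}\cdot \frac{1}{c}\sum_{i=0}^{c-1}\frac{1}{1-d^{-(i+1)}}$ and decompose $\frac{1}{1-d^{-(i+1)}} = 1 + \frac{d^{-(i+1)}}{1-d^{-(i+1)}}$. The correction terms are nonnegative and, since $1-d^{-(i+1)} \geq 1-d^{-1}$ for $i \geq 0$, are bounded above by $\frac{d^{-(i+1)}}{1-d^{-1}}$; their partial sums are therefore bounded by the convergent geometric series $\frac{1}{1-d^{-1}}\sum_{i\geq 0}d^{-(i+1)} = \frac{d^{-1}}{(1-d^{-1})^2} < \infty$. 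Consequently $\frac{1}{c}\sum_{i=0}^{c-1}\frac{1}{1-d^{-(i+1)}} \to 1$ as $c\to\infty$, so $\lim_{c\to\infty}\frac{1}{c}\sum_{k=1}^c \Ebb[X_{j,k}^{(c)}] = 1/q_j < \infty$, which also identifies the limiting row average $q_j$ referenced in Prop.\ \ref{prop:convergence}.

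For the variance condition, note $\mathrm{Var}(X_{j,k}^{(c)}) = \frac{1-q_j(i)}{q_j(i)^2} =: v_j(i)$, again a fixed sequence in $i$. Since $q\mapsto (1-q)/q^2$ is continuous on $(0,1]$ and $q_j(i)\to q_j$, we have $v_j(i) \to (1-q_j)/q_j^2 =: \sigma_j^2 < \infty$; a convergent sequence has Cesàro averages converging to the same limit, so $\lim_{c\to\infty}\frac{1}{c}\sum_{k=1}^c \mathrm{Var}(X_{j,k}^{(c)}) = \sigma_j^2$. (One may also note $v_j$ is decreasing in its argument and $v_j(0) < \infty$, so the sequence is uniformly bounded, which makes the Cesàro step immediate.)

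There is no substantive obstacle here; the only step requiring any care is the reindexing $i = c-k$ that removes the $c$-dependence of the summands, after which both claims reduce to the fact that a sequence which is either convergent or a summable perturbation of a constant is Cesàro summable to the expected value. Incidentally, the same computation trivially covers the degenerate case $d=\infty$ of Assumption A2, where $q_j(i)\equiv q_j$ and both averages are constant in $c$.
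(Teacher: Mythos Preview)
Your proof is correct, and it takes a genuinely different route from the paper's. The key move you make --- reindexing by $i=c-k$ so that each summand $q_{j,k}^{(c)}=(1-d^{-(i+1)})q_j$ becomes a \emph{fixed} sequence in $i$ with no residual $c$-dependence --- is not in the paper. Once that observation is in place, both conditions in \eqref{eq:convcond} reduce to Ces\`aro averages of convergent (or summably-perturbed constant) sequences, and the result is essentially one line each.

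The paper instead keeps the original index $k$ and treats the $c$-dependent summands $f_a(k)=(1-d^{k-1-c})^{-1}$ and $f_b(k)=(1-d^{k-1-c})^{-2}$ as Riemann sums: it sandwiches $\sum_{k=1}^c f(k)$ between $\int_0^c f(x)\,\drm x$ and $\int_1^c f(x)\,\drm x + f(c)$, evaluates the integrals explicitly via the substitution $z=d^{x-1-c}$ and partial fractions, and then applies L'H\^opital's rule to show that both bounds, divided by $c$, tend to $1$. Your argument is considerably more elementary and transparent; the paper's approach, while more laborious, does yield closed-form expressions for the integral bounds at each finite $c$, which could in principle be used to extract a quantitative rate of convergence rather than just the limit.
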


\begin{remark}
In particular for $r=1$ it easily follows from Prop.\ \ref{prop:convergence} that $\Ebb[\hat{Y}_{n:n}^{(c)}] \to 1 / \min_j q_j $ as $c \to \infty$.  Note  \cite{CogShr2011a} establishes this fact as well, but by quite different means. Namely, they derive lower and upper bounds on $\Ebb[\hat{Y}_{n:n}^{(c)}]$ for each $c$ and show that these bounds converge in $c$ to $1 / \min_j q_j $.
\end{remark}

\begin{remark}
Prop.\ \ref{prop:convergence} reveals something important about RLC over the broadcast erasure channel.  Namely, $\Ebb[\hat{Y}_{n:n}^{(c)}] \to \left(\min_{j \in [n]} q_j \right)^{-1}$ depends upon $\qbf$ (and thus $n$) only through the statistic $\min_{j \in [n]} q_j$.  Thus, in particular, the average delay per packet for $n$ channels with success probabilities $\Qbf^{(c)}$ is asymptotically in $c$ the same as the average delay per packet for a single erasure channel (i.e., $n'=1$) with $q_1 = \min_{j \in [n]} q_j$.  In short, the performance of the broadcast erasure channel is asymptotically (in $c$) limited by the bottleneck channel. 
\end{remark}

\begin{proposition}
\label{prop:comparingAsymptotic}
Assume A1: State-dependent receptions, heterogeneous receivers.  The asymptotic expected delay per packet under RLC is superior to the expected delay per packet under UT:
\begin{equation}
\lim_{c \to \infty} \frac{\Ebb[Y_{n:n}^{(c)}]}{c} = \frac{1}{\min_{j \in [n]} q_j}  \leq \Ebb[X_{n:n}].  
\end{equation}
\end{proposition}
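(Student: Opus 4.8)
The plan is to split the claim into its equality part and its inequality part and dispatch each from results already in hand. For the equality $\lim_{c\to\infty}\Ebb[Y_{n:n}^{(c)}]/c = 1/\min_{j\in[n]}q_j$, I would invoke Prop.\ \ref{prop:convergence} specialized to $r=1$. Its two hypotheses in \eqref{eq:convcond}, on the Ces\`aro averages of the means and variances of $\{X_{j,k}^{(c)}\}$, are exactly what Prop.\ \ref{prop:asymcondsat} certifies for $q_{j,k}^{(c)} = (1-d^{k-1-c})q_j$, and the additional ``uniformly bounded moments up to order $2$'' requirement holds because $q_{j,k}^{(c)} \geq (1-d^{-1})q_j \geq (1-d^{-1})\min_{j\in[n]}q_j > 0$ for every admissible $j,k,c$ (the exponent $k-1-c$ is at most $-1$ for $k\in[c]$), so each $X_{j,k}^{(c)}\sim\mathrm{Geo}(q_{j,k}^{(c)})$ has second moment bounded by a constant independent of $c$. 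Prop.\ \ref{prop:convergence} then gives $\hat{Y}_{n:n}^{(c)} = Y_{n:n}^{(c)}/c \to (\min_{j\in[n]}q_j)^{-1}$ in $L^1$, and $L^1$ convergence implies convergence of expectations, which is precisely the asserted limit.

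For the inequality $1/\min_{j\in[n]}q_j \leq \Ebb[X_{n:n}]$, the observation is simply that $X_{n:n} = \max(X_1,\ldots,X_n) \geq X_i$ pointwise for every $i\in[n]$, so by monotonicity of expectation $\Ebb[X_{n:n}] \geq \Ebb[X_i] = 1/q_i$ (recall $X_i\sim\mathrm{Geo}(q_i)$ has mean $1/q_i$); maximizing over $i$ gives $\Ebb[X_{n:n}] \geq \max_{i\in[n]} 1/q_i = 1/\min_{i\in[n]}q_i$. The one bookkeeping point to flag is that the $q_j$ in the asymptotic statement (the limiting row-averages of $\Qbf^{(c)}$) coincide with the UT reception probabilities $q_j$ of $X_{n:n}$; this is immediate from \eqref{eq:tyt}, since $q_{j,k}^{(c)} \to q_j$ and the Ces\`aro averages in \eqref{eq:convcond} converge to $1/q_j$, so no reconciliation is needed.

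I do not expect a genuine obstacle here: once Prop.\ \ref{prop:convergence} and Prop.\ \ref{prop:asymcondsat} are granted, the equality is immediate, and the inequality is just the trivial domination of a maximum by its components together with the mean of a geometric RV. The only care required is the bookkeeping above --- verifying the order-$2$ moment hypothesis of Prop.\ \ref{prop:convergence} at $r=1$ and identifying the two families of $q_j$'s --- rather than any real analytic work.
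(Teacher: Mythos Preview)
Your proposal is correct. The equality step is handled the same way the paper handles it (invoke Prop.\ \ref{prop:convergence} at $r=1$, with hypotheses supplied by Prop.\ \ref{prop:asymcondsat}), and your extra bookkeeping on the uniform second-moment bound is a clean way to close that loop.

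For the inequality step, your route is genuinely different from the paper's and in fact more elementary. The paper prunes to a two-receiver system $\{i,k\}$ with $i \in \arg\min_j q_j$, applies the exact formula from Cor.\ \ref{cor:firstTwoMomentsMaxGeo} to compute $\Ebb[\max(X_i,X_k)]$, and then verifies algebraically that $1/q_i \leq \Ebb[\max(X_i,X_k)] \leq \Ebb[X_{n:n}]$. Your argument bypasses all of this: the pointwise domination $X_{n:n} \geq X_i$ immediately gives $\Ebb[X_{n:n}] \geq \Ebb[X_i] = 1/q_i$, and maximizing over $i$ is exactly the claim. This is shorter and requires no exact formulas. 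What the paper's detour buys is a characterization of when equality holds (only for degenerate $q_i = 0$ or $q_k = 1$), and the remark following the paper's proof explains the motivation: it is a cautionary note that the \emph{lower bound} $\psi_1(\qbf)$ on $\Ebb[X_{n:n}]$ from Prop.\ \ref{prop:boundsOnEZ} would not have sufficed, so an exact expression was needed there. Your argument sidesteps that concern entirely by comparing to $\Ebb[X_i]$ rather than to any bound on $\Ebb[X_{n:n}]$.
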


\begin{IEEEproof}
Fix the number of receivers $n$ and the reception probabilities $\qbf = (q_1,\ldots,q_n)$.  Suppose $i \in \arg\min_{j \in [n]} q_j$ and fix an arbitrary other index $k \in [n] \setminus \{i\}$.  Consider two copies of this channel, one running RLC with $c \to \infty$ for all $n$ receivers with reception probabilities $\qbf$, and the other running UT with only the pair of $2$ receivers, $i,k$, selected from $[n]$, with reception probabilities $0 < q_i \leq q_k$.  The asymptotic expected delay per packet under RLC is $1/q_i$, and the expected delay per packet under UT in the pruned systems with only these $2$ receivers is, using Cor.\ \ref{cor:firstTwoMomentsMaxGeo} (\S\ref{ssec:utexact}), 
\begin{eqnarray}
\Ebb[\max\left(X_{i}, X_{k} \right)] &=& \frac{1}{q_{i}} + \frac{1}{q_{k}} - \frac{1}{q_{i} + q_{k} - q_{i} q_{k}}.
\end{eqnarray}
Observe that if we can show $1 / \min_j q_j \leq \Ebb[X_{n:n}]$ for the pruned system, then clearly it also holds for the original system with all $n$ receivers.  Simple algebra establishes that $1/q_i \leq \Ebb[\max\left(X_{i}, X_{k} \right)] $, where the inequality is tight only in the degenerate cases when either $q_i = 0$ or $q_k = 1$. 
\end{IEEEproof}

\begin{remark}
The above proof attempted using the lower bound $\psi_{1}(\qbf)$ from Prop.\ \ref{prop:boundsOnEZ}, instead of the exact expression for $\Ebb[X_{n:n}]$ from Prop.\ \ref{prop:kthMomentMaxGeo} (Cor.\ \ref{cor:firstTwoMomentsMaxGeo} in \S\ref{ssec:utexact}) would fail as there are non-degenerate choices for $\qbf$ for which the lower bound on expected delay per packet under UT could be superior to that of RLC.  This is in fact a key motivation for developing exact expressions for the moment of $X_{n:n}$ (Prop.\ \ref{prop:kthMomentMaxGeo}).  The desired inequality does go through when using the upper bound on $\Ebb[X_{n:n}]$, namely $\psi_{1}(\qbf) + 1$, from Prop.\ \ref{prop:boundsOnEZ}, but this inequality is inconclusive as it only relates an upper bound on UT to the asymptotic performance of RLC.  
\end{remark}

\subsection{Monotonicity properties of delay per packet} 
\label{ssec:mono}
The previous subsection demonstrates the advantage in expected delay per packet of RLC over UT in the asymptotic regime, as the blocklength $c \to \infty$.  In this subsection we establish certain monotonicity properties for the delay per packet under RLC as a function of $c$, with the number of receivers $n$ held fixed.  We provide four results.  Props.\ \ref{prop:monotonicityHomogeneous} and \ref{prop:monotonicityHeterogeneous} establish the expected delay per packet is monotone decreasing in the blocklength when the field size is infinite and finite, respectively.  Thus, given two blocklengths $c_1$ and $c_2$, respectively, the above propositions give that the expected delay per packet for under $c_1$ is less than that under $c_2$ if $c_1 > c_2$.  In contrast, Props.\ \ref{prop:nostochasticordering} and \ref{prop:monoSamplePathCombined} establish ``negative'' results.  Prop.\ \ref{prop:nostochasticordering} establishes the sequence of random delay per packet at each receiver $j$, $\{Y^{(c)}_j/c\}_c$, is not stochastically ordered, while Prop.\ \ref{prop:monoSamplePathCombined} (case $ii)$) establishes that in order for the delay per packet under $c_1$ to be no larger than that under $c_2$ for all sample path realizations, it is necessary and sufficient that $c_1 = k c_2$ for some integer $k \geq 2$.  The proofs of Props.\ \ref{prop:monotonicityHomogeneous}, \ref{prop:monotonicityHeterogeneous}, and \ref{prop:monoSamplePathCombined} are in Appendix \ref{app:monoPf}.
 
Consider a collection of RVs $(X_{j,k}^{(c)}, ~ j \in [n], k \in [c])$ where $X_{j,k}^{(c)} \sim \mathrm{Geo}(q_{j,k}^{(c)})$ represents the elapsed time slots between obtaining the $(k-1)^{\rm th}$ and $k^{\rm th}$ linearly independent packet encodings by receiver $j$.  Recall our assumption that $X_{j,k}$ are independent in $j$ and $k$.  Further recall $Y_j^{(c)} = X_{j,1}^{(c)}+\cdots+X_{j,c}^{(c)}$ for $j \in [n]$ represents the decoding delay for each receiver $j$ to recover all $c$ packets.  Define the normalized expected maximum delay difference as
\begin{equation}
\label{eq:delc}
\Delta^{(c)} \equiv \frac{1}{c} \Ebb\left[ Y_{n:n}^{(c)}\right] - \frac{1}{c+1} \Ebb\left[Y_{n:n}^{(c+1)}\right].
\end{equation}
Note that $\{\Delta^{(c)}\}$ is positive for all $c$ iff $\{\Ebb[Y_{n:n}^{(c)}]/c\}$ is monotone decreasing in $c$, and we prove monotonicity by establishing the sign of $\Delta^{(c)}$ for each $c$. 

In the proofs that follow it will be of use to characterize the random set of indices $\Jmc^{(c)} = \argmax_{j \in [n]} Y_j^{(c)}$ with the largest decoding delay.  Equivalently, $\Jmc^{(c)}$ is the set of row indices in the random matrix $\Xbf^{(c)} \equiv (X_{j,k}^{(c)}, j \in [n], k \in [c])$ with the largest row-sum, where ``row-sum'' is defined as the sum of all the elements in the same row.  More generally, we are interested in matrix row selection rules that are column-invariant, as is the max row-sum selection rule.  Define a row-index selection rule $\chi : \Rbb^{n \times c} \to [n]$ that produces a row index $j = \chi(A)$ for any matrix $A$, and in particular returns a random variable $J = \chi(\Xbf^{(c)})$ for the random matrix $\Xbf^{(c)}$.  Let $\sigma$ be any permutation of $[c]$ and $\sigma(\Xbf^{(c)})$ is the matrix $\Xbf^{(c)}$ with its columns permuted according to $\sigma$.  Say $\chi$ is column invariant if $\chi(\Xbf^{(c)}) = \chi(\sigma(\Xbf^{(c)}))$ for all $\sigma \in \Sigma$, where $\Sigma$ is the set of all permutations of $[c]$. In particular we will be using the column invariant selection rule $\hat{J}^{(c)} = \min \hat{\Jmc}^{(c)}$.  In words, $\hat{J}^{(c)}$ is the smallest index in the (random) set of indices in $[n]$ that achieve the maximum $Y_{j}^{(c)}$.  It would work as well to select any other element from $\hat{\Jmc}^{(c)}$; the important point is that the selection $\hat{J}^{(c)}$ is uniquely determined for each $\hat{\Jmc}^{(c)}$.
\begin{proposition} 
\label{prop:monotonicityHomogeneous}
Assume A2: State-independent receptions, heterogeneous receivers.  Then the expected delay per packet $\Ebb[Y_{n:n}^{(c)}/c]$ is decreasing in the code blocklength $c$. 
\end{proposition}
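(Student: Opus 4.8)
The plan is to prove the equivalent statement $\frac1c\Ebb[Y_{n:n}^{(c)}] \ge \frac{1}{c+1}\Ebb[Y_{n:n}^{(c+1)}]$ for every $c$ (equivalently, $\Delta^{(c)} \ge 0$) by a single coupling that realizes both delays on one probability space. Under Assumption A2, fix $n$ and $\qbf$ and take a random $n \times (c+1)$ matrix $\Xbf^{(c+1)} = (X_{j,k}, j\in[n], k\in[c+1])$ with independent rows, the entries of row $j$ being iid $\mathrm{Geo}(q_j)$ across the $c+1$ columns. Restricting to the first $c$ columns reproduces exactly the joint law defining $Y_{n:n}^{(c)}$, while using all $c+1$ columns gives $Y_{n:n}^{(c+1)}$. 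Let $\hat{J} \equiv \hat{J}^{(c+1)}$ be the smallest index attaining $\max_{i\in[n]} \sum_{k=1}^{c+1} X_{i,k}$; this is a \emph{column-invariant} row-selection rule in the sense of Lemma~\ref{lem:randind}, since permuting columns leaves every row sum — hence the set of maximizing rows — unchanged, and picking the minimal maximizing index makes the selection a well-defined deterministic function of the matrix.

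The crux is to combine column-invariance with the exchangeability of the columns within each row to show that the marginal law of $X_{\hat{J}, k}$ is the same for every $k \in [c+1]$: applying the transposition of columns $1$ and $k$ produces a matrix equal in distribution to $\Xbf^{(c+1)}$, on which $\hat{J}$ takes the same value, so $X_{\hat{J}, k} \stackrel{d}{=} X_{\hat{J}, 1}$. Since $Y_{n:n}^{(c+1)} = \sum_{k=1}^{c+1} X_{\hat{J}, k}$ by definition of $\hat{J}$, summing these identical expectations gives
\[
\Ebb\left[Y_{n:n}^{(c+1)}\right] = (c+1)\,\Ebb\left[X_{\hat{J}, 1}\right], \qquad \Ebb\left[\sum_{k=1}^{c} X_{\hat{J}, k}\right] = c\,\Ebb\left[X_{\hat{J}, 1}\right] = \frac{c}{c+1}\,\Ebb\left[Y_{n:n}^{(c+1)}\right].
\]

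It remains to compare $\Ebb[\sum_{k=1}^{c} X_{\hat{J}, k}]$ with $\Ebb[Y_{n:n}^{(c)}]$. On every sample path $\max_{j\in[n]} \sum_{k=1}^{c} X_{j,k} \ge \sum_{k=1}^{c} X_{\hat{J}, k}$, since the left-hand maximum over $j$ includes the index $j=\hat{J}$; moreover this inequality is strict on the positive-probability event (present when $n\ge 2$) that the row maximizing all $c+1$ column sums fails to maximize the first-$c$ column sums — this is exactly Lemma~\ref{lem:tuwerty}. Taking expectations, and using that $\max_{j\in[n]}\sum_{k=1}^{c} X_{j,k} \stackrel{d}{=} Y_{n:n}^{(c)}$ (the first $c$ columns have the right joint law), yields $\Ebb[Y_{n:n}^{(c)}] \ge \frac{c}{c+1}\Ebb[Y_{n:n}^{(c+1)}]$; dividing by $c$ is the claimed monotonicity (strict when $n\ge 2$, with equality throughout when $n=1$, consistent with a single negative-binomial receiver having constant delay per packet $1/q_1$).

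The step I expect to be the main obstacle is the symmetrization in the second paragraph: rigorously establishing that $\Ebb[X_{\hat{J}, k}]$ does not depend on $k$ requires a careful pairing of column-invariance of the selection rule with within-row exchangeability, precisely the content of Lemma~\ref{lem:randind}, and it is exactly this symmetry that disappears under Assumption A1, where $X_{j,k}\sim\mathrm{Geo}(q_{j,k})$ with $q_{j,k}$ depending on $k$; the finite-field-size case (Prop.~\ref{prop:monotonicityHeterogeneous}) must therefore replace this clean exchangeability argument with a substantially more intricate coupling.
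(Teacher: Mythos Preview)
Your proposal is correct and follows essentially the same approach as the paper's own proof: the same coupling of $Y_{n:n}^{(c)}$ and $Y_{n:n}^{(c+1)}$ via the first $c$ columns of a common $n\times(c+1)$ matrix, the same column-invariant selection rule $\hat{J}^{(c+1)}$, the same identity $\Ebb[X_{\hat{J},k}]$ independent of $k$ (the content of Lemma~\ref{lem:randind}, which you rederive via the column-transposition/exchangeability argument), and the same sample-path bound with strictness supplied by Lemma~\ref{lem:tuwerty}. Your closing remark on why the exchangeability step fails under Assumption~A1 is also on point and matches the paper's motivation for the separate, more intricate proof of Prop.~\ref{prop:monotonicityHeterogeneous}.
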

\begin{proposition} 
\label{prop:monotonicityHeterogeneous}
Assume A1: State-dependent receptions, heterogeneous receivers.  Then the expected delay per packet $\Ebb[Y_{n:n}^{(c)}/c]$ is decreasing in the code blocklength $c$.
\end{proposition}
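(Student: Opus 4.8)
The plan is to prove the equivalent statement that $c\,\Ebb[Y_{n:n}^{(c+1)}]<(c+1)\,\Ebb[Y_{n:n}^{(c)}]$ for every $c\in\Nbb$ --- i.e.\ that $\Delta^{(c)}$ from \eqref{eq:delc} is positive, equivalently that $\Ebb[Y_{n:n}^{(c)}/c]$ is strictly decreasing in $c$ (each $\Ebb[Y_{n:n}^{(c)}]$ is finite). The strategy mirrors the proof of Prop.\ \ref{prop:monotonicityHomogeneous}, but under A1 the columns of the delay matrix $\Xbf^{(c+1)}=(X_{j,k}^{(c+1)},\,j\in[n],\,k\in[c+1])$ are no longer exchangeable ($q_{j,k}^{(c+1)}$ genuinely depends on $k$), so rather than symmetrizing over which column of $\Xbf^{(c+1)}$ to delete I will treat the columns asymmetrically, using the precise form of \eqref{eq:tyt}.

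The first step records a structural fact about \eqref{eq:tyt}: for every $j\in[n]$ one has $q_{j,k}^{(c+1)}=q_{j,k-1}^{(c)}$ for $k\in\{2,\dots,c+1\}$, while $q_{j,1}^{(c+1)}=(1-d^{-c-1})q_j$ satisfies $q_{j,1}^{(c+1)}>q_{j,1}^{(c)}\geq q_{j,l}^{(c)}$ for all $l\in[c]$. (The identity holds because ``$k$ of $c+1$'' and ``$k-1$ of $c$'' leave the same number of still-needed innovative combinations; the inequality because $d^{-c-1}<d^{-c}$ and $q_{j,l}^{(c)}=(1-d^{l-1-c})q_j$ is decreasing in $l$.) Two consequences follow: (a) the columns $\{2,\dots,c+1\}$ of $\Xbf^{(c+1)}$, taken jointly over $j$ and $k$, have exactly the law of $\Xbf^{(c)}$; and (b) since $\bar{F}_{\mathrm{Geo}(p)}(x)=(1-p)^{\lfloor x\rfloor}$ is decreasing in $p$ (cf.\ Lem.\ \ref{lem:StochasticOrderingOfGeoRVs}), $X_{j,1}^{(c+1)}$ is stochastically strictly smaller than $X_{j,l}^{(c)}$ for each $l\in[c]$.

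The core estimate starts from the deterministic identity $\sum_{m=1}^{c+1}\bigl(Y_{\hat{J}^{(c+1)}}^{(c+1)}-X_{\hat{J}^{(c+1)},m}^{(c+1)}\bigr)=c\,Y_{\hat{J}^{(c+1)}}^{(c+1)}=c\,Y_{n:n}^{(c+1)}$, where $\hat{J}^{(c+1)}$ is the column-invariant smallest maximal-row-sum index of $\Xbf^{(c+1)}$. Taking expectations and bounding each summand by the corresponding column-deleted maximum gives
\[
c\,\Ebb[Y_{n:n}^{(c+1)}]=\sum_{m=1}^{c+1}\Ebb\Bigl[\sum_{k\in[c+1]\setminus\{m\}}X_{\hat{J}^{(c+1)},k}^{(c+1)}\Bigr]\leq\sum_{m=1}^{c+1}\Ebb\Bigl[\max_{j\in[n]}\sum_{k\in[c+1]\setminus\{m\}}X_{j,k}^{(c+1)}\Bigr].
\]
For $m=1$ the inner expectation equals $\Ebb[Y_{n:n}^{(c)}]$ by consequence (a). For $m\geq2$, couple the columns $\{2,\dots,c+1\}\setminus\{m\}$ of $\Xbf^{(c+1)}$ with the corresponding $c-1$ columns of $\Xbf^{(c)}$; then for each $j$ the deleted row sum differs from $Y_j^{(c)}$ only in that the term $X_{j,m-1}^{(c)}$ is replaced by the stochastically strictly smaller $X_{j,1}^{(c+1)}$ from consequence (b), so $\sum_{k\in[c+1]\setminus\{m\}}X_{j,k}^{(c+1)}\leq_{\rm st}Y_j^{(c)}$ (preservation of $\leq_{\rm st}$ under sums of independent RVs, \S\ref{sec:modelnotation}); applying $\max_{j}$ (a componentwise-nondecreasing map of independent coordinates) gives $\max_{j}\sum_{k\neq m}X_{j,k}^{(c+1)}\leq_{\rm st}Y_{n:n}^{(c)}$, strictly in mean. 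Summing over $m=1,\dots,c+1$ (there is always at least one index $m\geq2$) yields $c\,\Ebb[Y_{n:n}^{(c+1)}]<(c+1)\,\Ebb[Y_{n:n}^{(c)}]$, and dividing by $c(c+1)$ finishes the proof.

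I expect the main obstacle to be the $m\geq2$ step: rigorously showing that deleting any non-first column of $\Xbf^{(c+1)}$ leaves a system stochastically faster than $\Xbf^{(c)}$, and that this speed-up survives the outer $\max_{j}$ and remains strict after taking expectations. This is where the state-dependence of A1 genuinely matters, and it is handled via the monotonicity of $\mathrm{Geo}(\cdot)$ in the stochastic order (Lem.\ \ref{lem:StochasticOrderingOfGeoRVs}) together with the preservation of $\leq_{\rm st}$ under sums of independent RVs and nondecreasing maps of random vectors (\S\ref{sec:modelnotation}). A secondary technical point is that the column-deletion identity must be applied to a single well-defined row and the selection $\hat{J}^{(c+1)}$ must not covertly depend on how a fixed row's total splits across its columns --- precisely the conditional-independence (Markov-chain) content of Lem.\ \ref{lem:markovchainselectionrule}, that $\hat{J}^{(c+1)}$ depends on any fixed row only through that row's sum.
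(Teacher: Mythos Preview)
Your argument is correct and takes a genuinely different route from the paper's.  Both proofs start from the same structural observation $q_{j,k}^{(c+1)}=q_{j,k-1}^{(c)}$ for $k\geq 2$ (so that $\Xbf^{(c+1)}_{2:c+1}\stackrel{\drm}{=}\Xbf^{(c)}$) and the same column-invariant selection index $\hat{J}^{(c+1)}$.  From there the paper deletes only the \emph{first} column, reducing the task to showing $\Ebb[X_{\hat{J},k+1}]>\Ebb[X_{\hat{J},1}]$ for each $k\in[c]$; because $\hat{J}$ is itself a function of the entire matrix, this requires the conditional-independence argument of Lem.~\ref{lem:markovchainselectionrule} (that $\hat{J}$ depends on $(X_{j,k},X_{j,k'})$ only through their sum) together with the pairwise conditional comparison of Lem.~\ref{lem:StochasticOrderingOfGeoRVs}.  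You instead symmetrize over \emph{all} $c+1$ column deletions via the identity $\sum_{m}(Y_{\hat{J}}-X_{\hat{J},m})=cY_{n:n}^{(c+1)}$, then bound each summand by the corresponding column-deleted $\max_j$; the $m=1$ term equals $\Ebb[Y_{n:n}^{(c)}]$ exactly, and each $m\geq 2$ term is strictly smaller than $\Ebb[Y_{n:n}^{(c)}]$ by unconditional stochastic ordering (replace $X_{j,m-1}^{(c)}$ by the strictly stochastically smaller $X_{j,1}^{(c+1)}$, then sum independent pieces and take $\max_j$).  The payoff of your approach is that it sidesteps the delicate Markov-chain argument entirely: you never need to analyze the distribution of $\hat{J}$, only the pointwise inequality $\sum_{k\neq m}X_{\hat{J},k}\leq\max_j\sum_{k\neq m}X_{j,k}$.

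Two minor remarks on citations.  First, the fact you actually use in consequence~(b) --- that $\mathrm{Geo}(p_1)\leq_{\rm st}\mathrm{Geo}(p_2)$ when $p_1>p_2$ --- is the one-line CCDF comparison $(1-p_1)^x<(1-p_2)^x$; Lem.~\ref{lem:StochasticOrderingOfGeoRVs} in the paper is a strictly stronger \emph{conditional} statement (ordering given the sum $X_1+X_2=a$) and is not what you need here.  Second, and relatedly, your closing paragraph invokes Lem.~\ref{lem:markovchainselectionrule}, but your proof does not use it: the bound ``evaluate the $m$-deleted sum at row $\hat{J}$, then dominate by the $m$-deleted max over all rows'' is deterministic and requires no information about how $\hat{J}$ depends on the individual column entries.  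You may simply drop that reference.
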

\begin{remark}
The fact that the row selection rule $\chi$ is column-invariant is leveraged in both of the proofs of the above two propositions.  The key difference between them is that when the field size is infinite the matrix $\Xbf^{(c)}$ has iid columns, whereas when the field size is finite the matrix $\Xbf^{(c)}$ has independent columns with a column-dependent distribution.  In fact, in the former case, the proof holds regardless of the distribution (hence our use of the generic RV $Z$ in the proof), whereas in the latter case the proof requires properties of the assumed distribution for $\Xbf^{(c)}$.  
\end{remark}

The following corollary illustrates a consequence of the above monotonicity propositions.  Given a fixed number of $M$ packets, any block coding based strategy must select a {\em block partition} $(m,\cbf)$ consisting of $i)$ the number of blocks  $m \in \Nbb$, and  $ii)$ block sizes $\cbf = (c_1,\ldots,c_m) \in \Nbb^m$ satisfying $c_1 + \cdots + c_m = M$.  
\begin{corollary}
\label{cor:blockpartitionopt}
Assume A1: State-dependent receptions, heterogeneous receivers.  Given a fixed number of $M$ packets, the expected total delay to broadcast those packets is minimized over all block partitions $(m,\cbf)$ by using a single block of $M$ packets. 
\end{corollary}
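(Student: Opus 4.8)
The plan is to reduce the claim to the monotonicity result of Prop.\ \ref{prop:monotonicityHeterogeneous} via a one-line averaging argument. First I would observe that, under any block partition $(m,\cbf)$ with $\cbf=(c_1,\ldots,c_m)$ and $c_1+\cdots+c_m=M$, the transmitter processes the $m$ blocks sequentially (moving on to the next block only once all $n$ receivers have decoded the current one, as specified in \S\ref{sec:modelnotation}), so the total broadcast delay is the sum $\sum_{i=1}^m Y_{n:n}^{(c_i)}$ of the $m$ individual block delays. Since the erasure realizations are independent across time slots, these $m$ delays are independent copies of the block-delay RV for the respective block sizes; but for the expectation bound only linearity is needed, giving expected total delay $\sum_{i=1}^m \Ebb\!\left[Y_{n:n}^{(c_i)}\right]$.

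Next I would invoke Prop.\ \ref{prop:monotonicityHeterogeneous}, which says $c \mapsto \Ebb[Y_{n:n}^{(c)}]/c$ is decreasing in $c$. Because each $c_i \le M$, this yields $\Ebb[Y_{n:n}^{(c_i)}]/c_i \ge \Ebb[Y_{n:n}^{(M)}]/M$, i.e.\ $\Ebb[Y_{n:n}^{(c_i)}] \ge (c_i/M)\,\Ebb[Y_{n:n}^{(M)}]$ for every $i \in [m]$. Summing over $i$ and using $\sum_i c_i = M$ gives
\[
\sum_{i=1}^m \Ebb\!\left[Y_{n:n}^{(c_i)}\right] \;\ge\; \frac{\Ebb\!\left[Y_{n:n}^{(M)}\right]}{M}\sum_{i=1}^m c_i \;=\; \Ebb\!\left[Y_{n:n}^{(M)}\right],
\]
which is precisely the expected delay incurred by the single-block partition $(1,(M))$. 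Hence the minimum over all block partitions is attained at $m=1$, proving the corollary.

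There is no substantive obstacle here: the entire content sits in Prop.\ \ref{prop:monotonicityHeterogeneous}, and the corollary is just the observation that a decreasing per-packet average forces the aggregate to be minimized by the coarsest partition. The only point warranting a careful sentence is the additivity of the expected total delay over blocks, which follows from the sequential block processing described in the model together with linearity of expectation (the independence-in-time assumption makes the summands genuine independent copies, but that is not needed for the inequality on expectations).
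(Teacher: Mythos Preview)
Your proposal is correct and is essentially identical to the paper's own proof: both write $\sum_i \Ebb[Y_{n:n}^{(c_i)}] = \sum_i (\Ebb[Y_{n:n}^{(c_i)}]/c_i)\,c_i$, apply the monotonicity of the per-packet expected delay from Prop.~\ref{prop:monotonicityHeterogeneous} to bound each ratio below by $\Ebb[Y_{n:n}^{(M)}]/M$, and sum using $\sum_i c_i = M$. Your added remarks justifying the additivity of expected total delay over blocks are sound and slightly more explicit than the paper, but the argument is the same.
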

\begin{IEEEproof}
Following the notation in previous proofs, let $\Ebb[Y_{n:n}^{(c_{i})}]$ be the expected time slots to broadcast a block of $c_{i}$ packets under RLC.  By the monotonicity propositions:
\begin{equation}
\sum_{i=1}^{m} \Ebb[Y_{n:n}^{(c_{i})}]   \! =  \! \sum_{i=1}^{m} \frac{\Ebb[Y_{n:n}^{(c_{i})}]}{c_{i}} c_{i} \geq \sum_{i=1}^{m} \frac{\Ebb[Y_{n:n}^{(M)}]}{M} c_{i} = \Ebb[Y_{n:n}^{(M)}],
\end{equation}
which establishes the desired inequailty for all feasible block partitions $(m,\cbf)$.
\end{IEEEproof}
Having established $\Ebb[Y^{(c)}_{n:n}]/c > \Ebb[Y^{(c+1)}_{n:n}]/(c+1)$ for all $c$, it is natural to wonder if in fact a stochastic ordering holds.  The following proposition is a partial negative answer for the case of infinite field size.  
\begin{proposition}
\label{prop:nostochasticordering}
Assume A2: State-independent receptions, heterogeneous receivers.  The RVs $Y^{(c)}_{j}/c$ and $Y^{(c+1)}_{j}/(c+1)$ are not stochastically ordered for any $c$ and $j$.  
\end{proposition}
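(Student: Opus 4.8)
The plan is to argue by contradiction, exploiting the fact that $Y_j^{(c)}/c$ has the same mean for every $c$ but a $c$-dependent distribution. Fix $j \in [n]$ and $c \in \Nbb$, write $q = q_j \in (0,1)$, and set $Z_1 \equiv Y_j^{(c)}/c$ and $Z_2 \equiv Y_j^{(c+1)}/(c+1)$; under Assumption A2 we have $Y_j^{(c)} \sim \mathrm{NegBin}(c,q)$ and $Y_j^{(c+1)} \sim \mathrm{NegBin}(c+1,q)$. First I would record two elementary facts. (i) $\Ebb[Z_1] = \Ebb[Z_2] = 1/q$, since $\Ebb[Y_j^{(c)}] = c/q$. (ii) $Z_1$ and $Z_2$ are not equal in distribution: both RVs are $\geq 1$ almost surely, with $\Pbb(Z_1 = 1) = \Pbb(Y_j^{(c)} = c) = q^{c}$ and $\Pbb(Z_2 = 1) = \Pbb(Y_j^{(c+1)} = c+1) = q^{c+1}$, and $q^{c} \neq q^{c+1}$ because $q \in (0,1)$.

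Next I would show these two facts preclude either ordering. Suppose $Z_1 \leq_{\rm st} Z_2$, i.e.\ $\bar F_{Z_1}(z) \leq \bar F_{Z_2}(z)$ for all $z$. Using the identity $\Ebb[Z] = \int_0^{\infty} \bar F_Z(z)\, \drm z$, valid for nonnegative RVs (and employed elsewhere in the paper),
\begin{equation}
0 = \Ebb[Z_2] - \Ebb[Z_1] = \int_0^{\infty} \left( \bar F_{Z_2}(z) - \bar F_{Z_1}(z) \right) \drm z,
\end{equation}
whose integrand is nonnegative; hence $\bar F_{Z_1} = \bar F_{Z_2}$ Lebesgue-almost everywhere. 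Both CCDFs are right-continuous and piecewise constant (constant on each interval between consecutive atoms, each such interval having positive length), so almost-everywhere equality upgrades to equality everywhere, i.e.\ $Z_1 \stackrel{d}{=} Z_2$ --- contradicting (ii). By symmetry, $Z_2 \leq_{\rm st} Z_1$ leads to the same contradiction. Hence $Z_1$ and $Z_2$ are not stochastically ordered, and this holds for every $c$ and every $j$.

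Alternatively, and more concretely, one can exhibit the crossing of the CCDFs directly: on $[1, 1 + 1/(c+1))$ one has $\bar F_{Z_1}(z) = \Pbb(Y_j^{(c)} \geq c+1) = 1 - q^{c}$ while $\bar F_{Z_2}(z) = \Pbb(Y_j^{(c+1)} \geq c+2) = 1 - q^{c+1}$, so $\bar F_{Z_1} < \bar F_{Z_2}$ there (this already rules out $Z_2 \leq_{\rm st} Z_1$); and since $\bar F_{Z_i} = 1$ on $[0,1)$ gives $\int_1^{\infty} \bar F_{Z_i}\,\drm z = 1/q - 1$ for both $i$, with strict inequality between the two integrands on a set of positive measure just above $1$, the CCDFs must also cross the other way, yielding a point at which $\bar F_{Z_1} > \bar F_{Z_2}$ and ruling out $Z_1 \leq_{\rm st} Z_2$. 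There is no substantive obstacle in this argument; the only step that warrants a sentence of care is the passage from almost-everywhere to everywhere equality of the CCDFs in the first approach, which rests on the observation that a right-continuous step function vanishing Lebesgue-almost everywhere vanishes identically.
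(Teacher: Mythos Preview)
Your argument is correct and complete. The key observation---that $Y_j^{(c)}/c$ and $Y_j^{(c+1)}/(c+1)$ share the common mean $1/q_j$ but are not equal in distribution---is exactly what is needed, and your passage from ``stochastically ordered with equal means'' to ``equal in distribution'' via $\int_0^\infty(\bar F_{Z_2}-\bar F_{Z_1})\,\drm z=0$ with a nonnegative integrand is clean; the upgrade from a.e.\ equality to everywhere equality using right-continuity of the (step) CCDFs is the right way to finish it. Your alternative, exhibiting an explicit crossing of the CCDFs just above $1$ and then invoking the equal-integral argument to force a crossing in the other direction, is equally valid.

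This is a genuinely different route from the paper's. The paper instead establishes the equivalence
\[
\hat{Y}^{(c)}_j \gtrless_{\rm st} \hat{Y}^{(c+1)}_j \quad \Longleftrightarrow \quad \hat{Y}^{(c)}_j \gtrless_{\rm st} X_j,
\]
via the decomposition $\hat{Y}^{(c+1)}_j \stackrel{\drm}{=} \tfrac{1}{c+1}X_j+\tfrac{c}{c+1}\hat{Y}^{(c)}_j$ (valid under A2). That reduction is structurally interesting---it ties the stochastic ordering question for consecutive blocklengths back to the single-packet delay $X_j$---but the paper's written proof stops at the equivalence and does not spell out why $\hat{Y}^{(c)}_j$ and $X_j$ fail to be stochastically ordered; one still needs an argument like yours (same mean, different distribution) to close it. Your approach bypasses the reduction entirely and is shorter, more elementary, and self-contained. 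The paper's route, on the other hand, would buy you the additional corollary that any stochastic ordering between $\hat{Y}^{(c)}_j$ and $\hat{Y}^{(c+1)}_j$ is equivalent to one between $\hat{Y}^{(c)}_j$ and $X_j$, which may be of independent interest but is not needed for the proposition as stated.
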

\begin{IEEEproof}
We use the shorthand notation $\hat{Y}^{(c)}_j \equiv Y^{(c)}_j/c$.  We first show a stochastic ordering equivalence among RVs $\hat{Y}^{(c)}_j$, $\hat{Y}^{(c+1)}_j$, and $X_j$:
\begin{equation}
\label{eq:nx00}
\left( \hat{Y}^{(c)}_j \gtrless_{\rm st} \hat{Y}^{(c+1)}_j \right) ~ \Leftrightarrow ~ \left( \hat{Y}^{(c)}_j \gtrless_{\rm st} X_j \right), ~ j \in [n].
\end{equation}
We prove the equivalence $\leq_{\rm st}$; the proof for $\geq_{\rm st}$ is similar.    Decompose $\hat{Y}^{(c+1)}_j$ as
\begin{IEEEeqnarray}{rCl}
\IEEEeqnarraymulticol{3}{l}
{\hat{Y}^{(c+1)}_j = \frac{1}{c+1} \left( X_{j,1}^{(c+1)} + \sum_{k=2}^{c+1} X_{j,k}^{(c+1)} \right)
}\nonumber \\* \quad
& \stackrel{\rm d}{=} & \frac{1}{c+1} \left( X_j + \sum_{k=1}^c X_{j,k}^{(c)} \right) = \frac{1}{c+1} X_j + \frac{c}{c+1} \hat{Y}_j^{(c)},\IEEEeqnarraynumspace
\label{eq:yplwqwe}
\end{IEEEeqnarray}
where the equality in distribution holds due to the assumption that the field size is infinite.  Next, decompose $\hat{Y}^{(c)}_j = \frac{1}{c+1} \hat{Y}^{(c)}_j + \frac{c}{c+1} \hat{Y}^{(c)}_j$.  Suppose $\hat{Y}^{(c)}_j \leq_{\rm st} \hat{Y}^{(c+1)}_j$.  Substitution of these two decompositions under the assumed stochastic ordering yields
\begin{equation}
\frac{1}{c+1} \hat{Y}^{(c)}_j + \frac{c}{c+1} \hat{Y}^{(c)}_j \leq_{\rm st} \frac{1}{c+1} X_j + \frac{c}{c+1} \hat{Y}_j^{(c)},
\end{equation}
which is equivalent to $\hat{Y}^{(c)}_j \leq_{\rm st} X_j$.  Next, suppose $\hat{Y}^{(c)}_j \leq_{\rm st} X_j$.  Then reversing the equivalences used in the proof of the forward direction yields $\hat{Y}^{(c)}_j \leq_{\rm st} \hat{Y}^{(c+1)}_j$.

We now show how \eqref{eq:nx00} leads to a contradiction. Due to Prop.\ \ref{prop:monotonicityHomogeneous}, it has to hold that $\hat{Y}_{j}^{(c)} \geq_{\rm st} \hat{Y}_{j}^{(c+1)}$ (note by definition $\Delta^{(c)} = \Ebb\left[\hat{Y}_{n:n}^{(c)}\right] - \Ebb\left[\hat{Y}_{n:n}^{(c+1)}\right]$ and recall properties of stochastic ordering discussed in \S \ref{ssec:contdisbn}). According to \eqref{eq:nx00} we have $\hat{Y}_{j}^{(c)} \geq_{\rm st} X_{j}$ $\forall j$, which gives $\Ebb\left[\hat{Y}_{n:n}^{(c)}\right] - \Ebb\left[X_{n:n}\right] \geq 0$. But this is a contradiction by repeated application of Prop.\ \ref{prop:monotonicityHomogeneous} (also recall Prop.\ \ref{prop:comparingAsymptotic}). Therefore we conclude that $Y_{j}^{(c)} / c$ and $Y_{j}^{(c+1)} / (c+1)$ are not stochastically ordered for any $c$ and $j$.
\end{IEEEproof}

\begin{remark}
Our proof techniques for both Prop.\ \ref{prop:monotonicityHomogeneous} and \ref{prop:monotonicityHeterogeneous} are based on the inequality that the maximum (over row indices $j \in [n]$) of the sum of the first $c$ entries, $\max_{j \in [n]} (X_{j,1} + \cdots + X_{j,c})$, is lower bounded by $X_{\hat{J},1}+\cdots+X_{\hat{J},c}$, where $\hat{J}$ is a (random) row index that maximizes the sum of the first $c+1$ entries.  As illustrated in the proofs of these propositions, application of this inequality requires careful manipulation of the distribution of this random index $\hat{J}$.  From the definition of $\Delta^{(c)}$, one might be led to hope that a simpler inequality may suffice to establish $\Delta^{(c)} > 0$, possibly leading to a simpler proof.  The purpose of this remark is to show that at least one such simpler inequality is insufficient.  Suppose the field size is infinite.  Applying the decomposition in \eqref{eq:yplwqwe} to the definition of $\Delta^{(c)}$ gives 
\begin{IEEEeqnarray}{rCl}
\IEEEeqnarraymulticol{3}{l}
{\Delta^{(c)}  \stackrel{}{=} \Ebb \left[ \frac{1}{c} Y_{n:n}^{(c)} - \frac{1}{c+1} \max_{j \in [n]} \left( X_j  + Y_j^{(c)} \right) \right]
}\nonumber \\* \qquad 
& \geq &  \Ebb \left[ \frac{1}{c} Y_{n:n}^{(c)} - \frac{1}{c+1} \max_{j \in [n]} X_j  - \frac{1}{c+1} \max_{j \in [n]} Y_j^{(c)} \right] \label{eq:rtrtyuyuioio} \IEEEeqnarraynumspace \\
& = & \frac{1}{c+1} \Ebb \left[ \frac{1}{c} Y_{n:n}^{(c)} - X_{n:n} \right] \label{eqref:tmqwxcpot} 
\end{IEEEeqnarray}
where the inequality used in \eqref{eq:rtrtyuyuioio} is $\max_j (a_j + b_j) \leq \max_j a_j + \max_j b_j$.  But, repeated application of Prop.\ \ref{prop:monotonicityHomogeneous} shows \eqref{eqref:tmqwxcpot} is negative, and therefore the inequality \eqref{eq:rtrtyuyuioio} is too weak to establish $\Delta^{(c)} > 0$. 
\end{remark}

We now turn our focus to the sample path relationship of these quantities.  Consider the total delay to transmit $m$ packets using blocklengths $c,c'$, where without loss of generality we suppose $c' > c$.   Using a blocklength $c$ ($c'$) requires $\lceil m/c \rceil$ ($\lceil m/c' \rceil$) blocks, respectively, with the last block possibly being a partial block.  We consider two cases (Prop.\ \ref{prop:monoSamplePathCombined}): $i)$ $m \leq c'$ and $ii)$ $m > c'$, and assume the field size $d$ to be infinite for both.  Under this assumption the only randomness in the delay is from the erasure or nonerasure of each transmission to each receiver, and does not include the randomness from the random linear combinations.  We introduce some notation.  Let $T_{n:n}^{(c,m)}(\omega)$, $T_{n:n}^{(c',m)}(\omega)$ be the delay to broadcast a workload of $m$ packets using a blocklength $c$, $c'$, respectively, for realization $\omega$. 

\begin{proposition}
\label{prop:monoSamplePathCombined}
Assume A2: State-independent receptions, heterogeneous receivers.  Consider any sample path (realization) $\omega$ of erasures and nonerasures to each receiver over the sequence of transmissions.  The total time to complete the transmission of a workload of $m$ packets under blocklengths $c,c'$ with $c < c'$ obeys
\begin{itemize}
\item[$i)$] $T_{n:n}^{(c',m)}(\omega) \leq T_{n:n}^{(c,m)}(\omega)$, if $m \leq c'$,
\item[$ii)$] $T_{n:n}^{(c',m)}(\omega) \leq T_{n:n}^{(c,m)}(\omega)$ iff $c' = k c$ for some integer $k \geq 2$, if $m > c'$,
\end{itemize}
for all realizations $\omega \in \Omega$.
\end{proposition}

\begin{remark}
It follows that the total delay under RLC (for any blocklength $c' \geq 2$) is no worse than that of UT for all sample paths, provided $d = \infty$.
\end{remark}
In summary, Props.\ \ref{prop:monotonicityHomogeneous} and \ref{prop:monotonicityHeterogeneous} establish that the expected delay per packet is decreasing in the blocklength $c$, Prop.\ \ref{prop:nostochasticordering} establishes the random delay per packet sequence $\{Y^{(c)}_{j}/c\}_c$ is not stochastically ordered, and Prop.\ \ref{prop:monoSamplePathCombined} establishes that the delay per packet is not necessarily nonincreasing in $c$ on a sample path basis.  In particular, when the workload exceeds the larger blocklength, sample path ordering of delay per packet is only guaranteed when the blocklength is increased by some integer multiple.

\subsection{Bounds on expected delay per packet}
\label{ssec:rossdlctightinc}

In the previous subsection we established that the expected delay per packet, $\Ebb[Y_{n:n}^{(c)}]/c$, is decreasing in the blocklength $c$ for any finite $n$.  In this subsection we supply lower and upper bounds on $\Ebb[Y_{n:n}^{(c)}]/c$ that provide a more explicit characterization of the dependence of the delay per packet on the blocklength.  These bounds will be shown to be (almost) asymptotically tight in $c$, in that the asymptotic difference between the lower and upper bounds is one.  

In this subsection we restrict our attention to Assumption A3 (state-independent receptions, homogeneous receivers). Recall in Prop.\ \ref{prop:stochorder} we established the stochastic ordering $\frac{1}{\phi(q)} \tilde{Y}^{(c)} \leq_{\rm st} Y^{(c)} \leq_{\rm st} \frac{1}{\phi(q)} \tilde{Y}^{(c)}+c$.  When $d = \infty$ the RVs $Y^{(c)},\tilde{Y}^{(c)}$ are $\mathrm{NegBin}(c,q)$ and $\mathrm{Gamma}(c,1)$, respectively. It follows
\begin{equation}
\frac{1}{\phi(q)} \frac{\Ebb[\tilde{Y}_{n:n}^{(c)}]}{c} \leq \frac{\Ebb[Y_{n:n}^{(c)}]}{c} \leq \frac{1}{\phi(q)} \frac{\Ebb[\tilde{Y}_{n:n}^{(c)}]}{c} + 1.
\end{equation}
First, consider the asymptotic regime. Specializing Prop.\ \ref{prop:convergence} to the homogeneous case yields $\lim_{c \to \infty} \Ebb[Y_{n:n}^{(c)}]/c = 1/q$ and, as the proof of Prop.\ \ref{prop:LBandUBinCHomoRxInftyFieldsize} will show, $\lim_{c \to \infty} \Ebb[\tilde{Y}_{n:n}^{(c)}]/c = 1$. These give the asymptotic ordering 
\begin{IEEEeqnarray}{rCl}
\lim_{c \to \infty} \frac{\Ebb[Y_{n:n}^{(c)}]}{c} = \frac{1}{q}  & \geq & \lim_{c \to \infty} \frac{1}{\phi(q)} \frac{\Ebb[\tilde{Y}_{n:n}^{(c)}]}{c} = \frac{1}{\phi(q)} \nonumber \\ 
\lim_{c \to \infty} \frac{\Ebb[Y_{n:n}^{(c)}]}{c} = \frac{1}{q}  &  \leq & \lim_{c \to \infty} \frac{1}{\phi(q)} \frac{\Ebb[\tilde{Y}_{n:n}^{(c)}]}{c} + 1 = \frac{1}{\phi(q)}+1. \IEEEeqnarraynumspace
\end{IEEEeqnarray}
The fact that $\frac{1}{\phi(q)} \leq \frac{1}{q} \leq \frac{1}{\phi(q)} + 1, ~ \forall q \in (0,1)$ may also be verified directly. Next, consider the finite parameter regime. We have the following proposition.
\begin{proposition}
\label{prop:LBandUBinCHomoRxInftyFieldsize}
Assume A3: State-independent receptions, homogeneous receivers.  For any positive integers $n,c$ we have bounds on the expected delay per packet
\begin{IEEEeqnarray}{rCl}
\frac{1}{\phi(q)} \tilde{l}(n,c) \leq  \! \frac{\Ebb\left[Y_{n:n}^{(c)}\right]}{c}  \! & \leq & \frac{1}{\phi(q)} n Q\left(c+1,Q^{-1}\left(c,\frac{1}{n}\right)\right)  \! +  \! 1 \nonumber \\ 
& \leq & \frac{1}{\phi(q)} \tilde{u}(n,c) + 1,
\end{IEEEeqnarray}
where
\begin{IEEEeqnarray}{rCl}
\tilde{l}(n,c) &  \equiv  & 1+\sqrt{\frac{\log n}{c}}\left(1 - \left(1 - Q\left(c,c + \sqrt{c \log n} \right) \right)^{n-1} \right) \nonumber \\
\tilde{u}(n,c)  & \equiv  & 1 + n/\sqrt{2 \pi c}.
\label{eq:LBandUBinCHomoRxInftyFieldsize}
\end{IEEEeqnarray}
These bounds are asymptotically (in $c$) tight in the sense that:
\begin{equation}
\label{eq:bothConvToOne}
\lim_{c \to \infty} \tilde{l}(n,c) = \lim_{c \to \infty} \tilde{u}(n,c) = 1.
\end{equation}
\end{proposition}
The proof can be found in Appendix \ref{app:rossdlctightincPf}. It is by specializing Prop.\ \ref{prop:boundsOnEY3} to the $r = 1$ case and further bounding the bounds on the normalized (by $c$) expected maximum of iid $\mathrm{Gamma}(c,1)$ RVs: in particular, the lower and upper bounds on $\Ebb[\tilde{Y}_{n:n}^{(c)}]/c$ can both be shown to converge to $1$ (c.f., \eqref{eq:bothConvToOne}).

Fig.\ \ref{fig:rossdlcvsc} shows the exact expected delay per packet and the lower and upper bounds from Prop.\ \ref{prop:LBandUBinCHomoRxInftyFieldsize} vs.\ the blocklength $c$.   Recall the asymptotic delay per packet is $1/q$, which equals $12$ (left) and $5$ (right), respectively.  The lower bound appears to reach its asymptotic value $(1/\phi(q))$ too quickly, while the upper bound appears to track the actual value better.

\begin{figure}[!ht]
\centering
\includegraphics[width=0.49\textwidth]{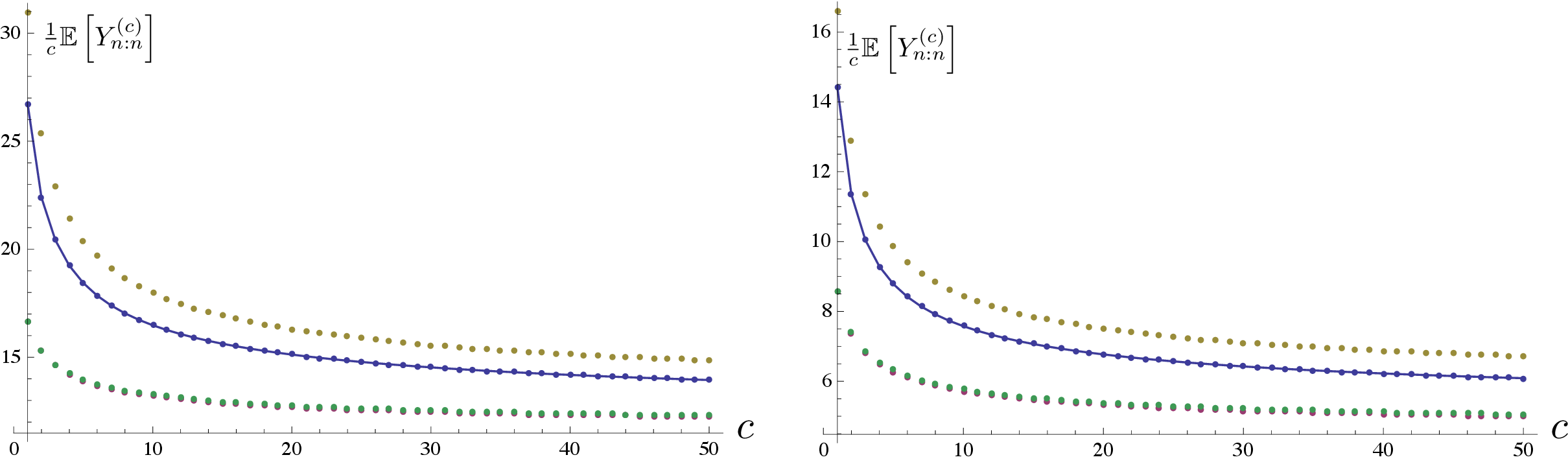}
\caption{The exact expected delay per packet $\frac{1}{c} \Ebb\left[Y_{n:n}^{(c)}\right]$ and the lower and upper bounds from Prop.\ \ref{prop:LBandUBinCHomoRxInftyFieldsize} vs.\ the blocklength $c$ for $n=5$ and $q=1/12$ (left), and $n=12$ and $q = 1/5$ (right). Shown are the exact delay per packet (blue, joined), optimal Ross's upper bound (yellow), de la Cal's lower bound with heuristically chosen free parameter $t(n,c) = c + \sqrt{c \log n}$ (red) and (numerically) optimal free parameter $t(n,c)^{*}$ (green). The exact expected delay per packet, the upper bound, and the lower bounds tend to $1/q$, $1 + 1/\phi(q)$, and $1/\phi(q)$ respectively.}
\label{fig:rossdlcvsc}
\end{figure}

\section{Dependence of RLC delay on the number of receivers $n$}
\label{sec:deponnumrx}

The previous section investigated the dependence of RLC delay on the blocklength $c$, holding the number of receivers $n$ fixed.  In this section we investigate the dependence on $n$, the number of receivers, holding the blocklength $c$ fixed.  Throughout this section we make Assumption A3 (state-independent receptions, homogeneous receivers). The key analytical tool in this section is extreme value theory (EVT) \cite{Res1987,LeaLin1983}, a closely related field of order statistics \cite{DavNag2003}, which studies the convergence of minima and maxima of collections of random variables.  A difficulty in applying EVT to our framework is the fact that for many common discrete distributions (including geometric, Poisson, negative binomial, etc.)  there does not exist a linear normalization such that the normalized maximum order statistic converges in distribution (e.g., \cite[Thm.\ 1.7.13]{LeaLin1983}). This difficulty is circumvented through the use of the stochastic ordering relationship between our discrete delay RV and a continuous analog, as leveraged in \S\ref{sec:delayRLNC}. 

Specifically we will again use the stochastic ordering $\frac{1}{\phi(q)} \tilde{Y}_{n:n}^{(c)} \leq_{\rm st} Y_{n:n}^{(c)} \leq_{\rm st} \frac{1}{\phi(q)} \tilde{Y}_{n:n}^{(c)}+c$ (Prop.\ \ref{prop:stochorder}), relating the discrete $Y_{n:n}^{(c)}$ to its continuous Gamma-distributed analog $\tilde{Y}_{n:n}^{(c)}$.  Thus, the focus in much of this section is the application of EVT (scaling $n \to \infty$) to the continuous RV $\tilde{Y}_{n:n}^{(c)} = \max(\tilde{Y}_1^{(c)},\ldots,\tilde{Y}_n^{(c)})$, with iid $\tilde{Y}_j^{(c)} \sim \mathrm{Gamma}(c,1)$.  The EVT framework requires identification of sequences $(a_n,b_n)$ such that the linearly normalized sequence $(\tilde{Y}_{n:n}^{(c)} - b_n)/a_n$ converges in distribution in $n$ to one of three possible extreme value distributions (Gumbel, Fr\'{e}chet, or Reversed Weibull).   

We note that asymptotic (in $n$) expressions for the first two moments of $Y_{n:n}^{(c)}$ are derived in \cite[Prop.\ 4]{EryOzd2008}, with proof techniques adapted from \cite{GraPro1997}, which relies on tools from complex analysis (e.g., Mellin transform).  Our contribution is two-fold. First, we provide an alternate proof technique to that of \cite{EryOzd2008,GraPro1997}, i.e., EVT applied to the continuous distribution $\tilde{Y}_{n:n}^{(c)}$ stochastically ordered with $Y_{n:n}^{(c)}$, which can be used to demonstrate the same dependence upon $n$ of the (first) moment of $Y_{n:n}^{(c)}$. Although the EVT framework naturally gives convergence in distribution of the normalized sequence of RVs, under mild technical conditions, this convergence also implies convergence in $r^{\rm th}$ mean.  Using this relationship, we are able to derive asymptotic bounds for the first moment of $Y_{n:n}^{(c)}$ and show that they match reasonably well with those obtained in \cite{EryOzd2008}.  Second, we establish that the lower and upper bounds on $\Ebb[\tilde{Y}_{n:n}^{(c)}]$ from \S\ref{sec:delayRLNC} are asymptotically tight in $n$ as $n \to \infty$.  Proving this requires selecting the free parameters (i.e., $s = s_n$ and $t = t_n$) such that the limits can be established.  Our choice of $(s_n,t_n)$ is informed by the normalizing sequences $(a_n,b_n)$ identified in the EVT analysis.  

\subsection{Asymptotic delay as $n \to \infty$}
\label{ssec:evtstuff}

The next two propositions are well-known: the first gives the normalizing sequence $(a_n,b_n)$ for the Gamma distribution to be attracted to the standard Gumbel distribution, and the second relates this convergence to convergence of the normalized moments.  The subsequent corollary follows immediately from these two propositions and the stochastic ordering between $Y_{n:n}^{(c)},\tilde{Y}_{n:n}^{(c)}$.  
\begin{proposition}[\cite{Res1987} \S 1.5 Example 3]
\label{prop:normalizingConstantsCunchangingGamma}
Let $\tilde{Y}_{n:n}^{(c)}$ be the maximum of $n$ iid $(\tilde{Y}_1^{(c)},\ldots,\tilde{Y}_n^{(c)})$ Gamma RVs, with $\tilde{Y}_j^{(c)} \sim \mathrm{Gamma}(c,1)$.  Then
\begin{equation}
\lim_{n \to \infty}
F_{\tilde{Y}_{n:n}^{(c)}}(a_{n} \tilde{y} + b_{n}) 
= \Lambda(\tilde{y}) \equiv \erm^{-\erm^{-\tilde{y}}}, ~ \tilde{y} \in \Rbb,
\end{equation}
for normalizing sequences $(a_n,b_n)$ with
\begin{equation}
\label{eq:bnsequencegamma}
a_{n} = 1, ~~b_{n} = \log n - \log \Gamma(c) + (c-1) \log \log n.
\end{equation}
\end{proposition}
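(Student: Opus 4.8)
The plan is to verify the classical Poisson-type criterion for convergence to an extreme value law: for i.i.d.\ RVs with common complementary CDF $\bar{F}$ and a threshold sequence $u_{n}$, one has $\Pbb(\tilde{Y}_{n:n}^{(c)} \leq u_{n}) = (1-\bar{F}(u_{n}))^{n} \to \erm^{-\tau}$ if and only if $n\bar{F}(u_{n}) \to \tau$ (see, e.g., \cite{LeaLin1983}, Thm.\ 1.5.1). Since $a_{n}=1$ here, it therefore suffices to show, for each fixed $\tilde{y} \in \Rbb$, that
\begin{equation}
n\,\Pbb\left(\tilde{Y}_{1}^{(c)} > \tilde{y}+b_{n}\right) = n\,Q(c,\tilde{y}+b_{n}) \longrightarrow \erm^{-\tilde{y}}, \qquad n \to \infty,
\end{equation}
after which the stated limit $\Lambda(\tilde{y}) = \erm^{-\erm^{-\tilde{y}}}$ follows at once.

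First I would record the standard tail estimate for the upper incomplete Gamma function: integrating $\int_{x}^{\infty} u^{c-1}\erm^{-u}\,\drm u$ by parts (equivalently, applying l'H\^{o}pital to $\Gamma(c,x)/(x^{c-1}\erm^{-x})$) gives $\Gamma(c,x) = x^{c-1}\erm^{-x}(1+O(1/x))$ as $x \to \infty$, hence $Q(c,x) \sim x^{c-1}\erm^{-x}/\Gamma(c)$. Next I would substitute $x = \tilde{y}+b_{n}$. From \eqref{eq:bnsequencegamma} one has the exact identity $\erm^{-b_{n}} = \Gamma(c)/(n(\log n)^{c-1})$, and moreover $b_{n} \sim \log n$ (the $\log\log n$ and constant terms being of lower order), so $(\tilde{y}+b_{n})/\log n \to 1$. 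Combining these,
\begin{equation}
n\,Q(c,\tilde{y}+b_{n}) \;\sim\; n\cdot\frac{(\tilde{y}+b_{n})^{c-1}}{\Gamma(c)}\,\erm^{-\tilde{y}}\,\erm^{-b_{n}} \;=\; \erm^{-\tilde{y}}\left(\frac{\tilde{y}+b_{n}}{\log n}\right)^{c-1} \longrightarrow \erm^{-\tilde{y}},
\end{equation}
since the bracketed factor tends to $1$ and the $O(1/x)$ error vanishes as $\tilde{y}+b_{n} \to \infty$. This is exactly the required criterion.

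Because the statement is quoted verbatim from \cite{Res1987}, the main work of an independent proof is bookkeeping rather than any genuine obstacle. The one place where care is needed is replacing the informal ``$\sim$'' in the incomplete-Gamma tail estimate by explicit two-sided bounds on $\Gamma(c,x)/(x^{c-1}\erm^{-x})$ valid for all large $x$, so that the limit passes through rigorously and uniformly for $\tilde{y}$ in compact sets; this is routine. I would also point out that the normalizing constants need not be guessed: solving $Q(c,b_{n}) \approx 1/n$ for $b_{n}$ to leading and next order, using the same tail estimate, reproduces $b_{n} = \log n - \log\Gamma(c) + (c-1)\log\log n$, and the fact that a nondegenerate limit is obtained with $a_{n}=1$ reflects that the Gamma law lies in the Gumbel domain of attraction (exponentially decaying tail, von Mises condition).
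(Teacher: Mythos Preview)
The paper does not supply its own proof of this proposition; it is quoted directly from \cite{Res1987} and used as a black box. Your argument is correct and is precisely the standard derivation one finds in that reference: reduce to the Poisson criterion $n\bar{F}(u_n)\to\tau$, invoke the tail asymptotic $Q(c,x)\sim x^{c-1}\erm^{-x}/\Gamma(c)$, and check that the stated $b_n$ makes the limit come out to $\erm^{-\tilde{y}}$. Your computation of $\erm^{-b_n}$ and the $(\tilde{y}+b_n)/\log n\to 1$ step are both right, and your closing remark about recovering $b_n$ by solving $Q(c,b_n)\approx 1/n$ is exactly how the normalizing constants are derived in \cite{Res1987}. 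In short, you have reproduced the cited proof; there is nothing to contrast with the paper because the paper offers no alternative argument.
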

In other words, with the above choice of $(a_{n},b_{n})$, the normalized maximum order statistic $\left(\tilde{Y}_{n:n}^{(c)} - b_{n}\right)/a_n$ converges in distribution to the standard Gumbel, with CDF $\Lambda(y)$.  A random variable $\tilde{Z}$ with distribution $F_{\tilde{Z}}$ is said to belong to the domain of attraction of an extreme value distribution (i.e., Gumbel, Fr\'{e}chet, or Reversed Weibull) if there is a choice of $(a_n,b_n)$ such that the linear scaling $(\tilde{Z}_{n:n} - b_n)/a_n$ converges in distribution to that extreme value distribution, where $\tilde{Z}_{n:n} = \max(\tilde{Z}_1,\ldots,\tilde{Z}_n)$.  The next proposition states that, subject to a mild technical condition, if $\tilde{Z} \sim F_{\tilde{Z}}$ belongs to the Gumbel domain of attraction, then its normalized moments converge to a moment-specific constant.  
\begin{proposition}[\cite{Res1987} Prop.\ 2.1]
\label{prop:momentConvergenceResnick1987}
If $\tilde{Z} \sim F_{\tilde{Z}}$ belongs to the Gumbel domain of attraction under scaling $(a_n,b_n)$, and if $\int_{ - \infty}^{0} \vert z \vert^r \drm F_{\tilde{Z}}(z) < \infty$ for some $r \in \Nbb$, then
\begin{equation} 
\lim_{n \to \infty} \Ebb\left[ \left( \frac{\tilde{Z}_{n:n} - b_{n}}{a_{n}} \right)^r \right] = (-1)^r \Gamma^{(r)}(1),
\end{equation}
where $\tilde{Z}_{n:n} = \max(\tilde{Z}_1,\ldots,\tilde{Z}_n)$, and $\Gamma^{(r)}(1)$ is the $r^{\rm th}$ derivative of the Gamma function evaluated at $1$.
\end{proposition}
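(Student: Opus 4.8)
The plan is to combine the convergence in distribution supplied by the domain-of-attraction hypothesis with a uniform integrability argument, and separately to identify the limiting constant. Write $W_n \equiv (\tilde{Z}_{n:n} - b_n)/a_n$. Since $\tilde{Z} \sim F_{\tilde{Z}}$ belongs to the Gumbel domain of attraction under $(a_n,b_n)$, by definition $F_{\tilde{Z}}^n(a_n x + b_n) \to \Lambda(x) = \erm^{-\erm^{-x}}$ for every $x \in \Rbb$, i.e.\ $W_n$ converges in distribution to a standard Gumbel RV $G$; by the continuous mapping theorem $W_n^r$ converges in distribution to $G^r$. It then suffices to show $\Ebb[W_n^r] \to \Ebb[G^r]$ and to evaluate $\Ebb[G^r]$.

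To evaluate the limit, I would compute the Gumbel moment generating function: substituting $u = \erm^{-x}$ gives $\Ebb[\erm^{tG}] = \int_{\Rbb} \erm^{tx}\erm^{-x}\erm^{-\erm^{-x}}\,\drm x = \int_0^{\infty} u^{-t}\erm^{-u}\,\drm u = \Gamma(1-t)$ for $t < 1$, which is analytic near $t = 0$. Differentiating $r$ times and setting $t = 0$ yields $\Ebb[G^r] = (-1)^r \Gamma^{(r)}(1)$, the claimed limit.

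The remaining (and main) work is to upgrade convergence in distribution of $W_n^r$ to convergence of $\Ebb[W_n^r]$, for which it is enough to prove the family $\{|W_n|^r : n \in \Nbb\}$ is uniformly integrable. Starting from the tail identity
\begin{equation}
\Ebb[W_n^r] = \int_0^{\infty} r y^{r-1}\left(1 - F_{\tilde{Z}}^n(a_n y + b_n)\right)\drm y + (-1)^r \int_0^{\infty} r y^{r-1}\, F_{\tilde{Z}}^n(b_n - a_n y)\,\drm y,
\end{equation}
I would bound the two pieces separately. For the first, use $1 - F_{\tilde{Z}}^n \leq n(1 - F_{\tilde{Z}})$ together with the von Mises / auxiliary-function characterization of the Gumbel domain and Potter-type bounds to produce constants $C,\theta > 0$ and $n_0$ with $n(1 - F_{\tilde{Z}}(a_n y + b_n)) \leq C\erm^{-\theta y}$ for all $y \geq 0$, $n \geq n_0$; since this decays faster than any polynomial, the tails $\int_K^{\infty} r y^{r-1}(1 - F_{\tilde{Z}}^n(\cdot))\,\drm y$ are small uniformly in $n \geq n_0$. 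For the second, use $F_{\tilde{Z}}^n \leq \exp(-n(1 - F_{\tilde{Z}}))$ and the same estimates (now showing the ratio $\bar{F}_{\tilde{Z}}(b_n - a_n y)/\bar{F}_{\tilde{Z}}(b_n)$ grows at least geometrically in $y$) to obtain a bound of the form $\exp(-c\,\erm^{\theta' y})$ valid for all large $n$, which is integrable against $y^{r-1}$; the hypothesis $\int_{-\infty}^0 |z|^r\,\drm F_{\tilde{Z}}(z) < \infty$ enters here to keep the finitely many terms with $n < n_0$ finite and, crucially, to tame the far-left region where the auxiliary-function approximation is no longer available. Together these show $\sup_n \Ebb[\,|W_n|^r\mathbf{1}_{\{|W_n| > K\}}\,] \to 0$ as $K \to \infty$; uniform integrability plus convergence in distribution then gives $\Ebb[W_n^r] \to \Ebb[G^r] = (-1)^r\Gamma^{(r)}(1)$.

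I expect the uniform-in-$n$ domination above to be the crux: the bare domain-of-attraction hypothesis gives only pointwise (in $x$) convergence, and converting it into tail bounds holding simultaneously for all large $n$ and all $y$ requires the regular-variation machinery for the Gumbel auxiliary function, with the left tail needing particular care since crude estimates such as $(W_n)^- \leq (b_n - \tilde{Z}_1)^+/a_n$ can diverge (e.g.\ when $b_n \to \infty$, as for the Gamma case with $a_n = 1$). I note finally that in the application of this proposition below, $\tilde{Z} = \tilde{Y}^{(c)} \sim \mathrm{Gamma}(c,1)$ is supported on $[0,\infty)$, so the moment hypothesis holds trivially and the left-tail estimate simplifies, leaving only the right-tail bound to be verified in that setting.
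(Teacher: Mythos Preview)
The paper does not supply a proof of this proposition: it is quoted as a known result with the attribution ``\cite{Res1987} Prop.\ 2.1'' and is used as a black box, so there is no in-paper argument to compare against. Your outline --- convergence in distribution to a standard Gumbel, identification of the Gumbel moments via $\Ebb[\erm^{tG}]=\Gamma(1-t)$, and then an upgrade to moment convergence through uniform integrability of $\{|W_n|^r\}$ --- is the standard route and is essentially the argument behind the cited result. Your MGF computation and the identity $\Ebb[G^r]=(-1)^r\Gamma^{(r)}(1)$ are correct, and your assessment that the uniform right- and left-tail control (via the von Mises representation / auxiliary function and Potter-type bounds) is the substantive step is accurate; that is exactly where the hypothesis $\int_{-\infty}^{0}|z|^r\,\drm F_{\tilde Z}(z)<\infty$ is consumed.
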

Note in particular $(-1)^1\Gamma^{(1)}(1) = \gamma$ for $\gamma \approx 0.5772$ the Euler-Mascheroni constant, and $(-1)^2 \Gamma^{(2)}(1)= \gamma^2 + \pi^2/6 \approx 1.9781$.  Since $\tilde{Y}_j^{(c)}$ is a nonnegative RV, the technical condition is satisfied. 
\begin{corollary}
\label{cor:lowupboundsnegbinevt}
Assume A3: State-independent receptions, homogeneous receivers.  The following lower and upper bounds hold for the asymptotic in $n$ scaled $r^{\rm th}$ power (for $r$ odd) of $Y_{n:n}^{(c)} = \max(Y_1^{(c)},\ldots,Y_n^{(c)})$, for iid $(Y_1^{(c)},\ldots,Y_n^{(c)})$ with $Y_j^{(c)} \sim \mathrm{NegBin}(c,q)$:
\begin{IEEEeqnarray}{rCl}
\IEEEeqnarraymulticol{3}{l}
{(-1)^r \Gamma^{(r)}(1) \leq \lim_{n \to \infty}  \Ebb \left[ \left(\phi(q) Y_{n:n}^{(c)} - b_n\right)^r \right]
}\nonumber \\* \qquad\qquad\qquad \!\!
& \leq & \sum_{s=0}^r \binom{r}{s} (-1)^s \Gamma^{(s)}(1) (\phi(q) c)^{r-s},
\end{IEEEeqnarray}
for $b_n$ in \eqref{eq:bnsequencegamma} and $\Gamma^{(r)}(1)$ the $r^{\rm th}$ derivative of the Gamma function evaluated at $1$.
\end{corollary}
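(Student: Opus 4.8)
The plan is to push the stochastic ordering $\frac{1}{\phi(q)}\tilde{Y}_{n:n}^{(c)} \leq_{\rm st} Y_{n:n}^{(c)} \leq_{\rm st} \frac{1}{\phi(q)}\tilde{Y}_{n:n}^{(c)}+c$ (Prop.\ \ref{prop:stochorder}, with $\tilde{Y}_j^{(c)}\sim\mathrm{Gamma}(c,1)$ iid and $\tilde{Y}_{n:n}^{(c)}=\max_{j\in[n]}\tilde{Y}_j^{(c)}$) through a chain of order-preserving operations, and then to read off the two bounding limits from Props.\ \ref{prop:normalizingConstantsCunchangingGamma} and \ref{prop:momentConvergenceResnick1987}.

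First I would rescale by the positive constant $\phi(q)$ and translate by $-b_n$, both of which preserve $\leq_{\rm st}$, to get
\[
\tilde{Y}_{n:n}^{(c)}-b_n \ \leq_{\rm st}\ \phi(q)\,Y_{n:n}^{(c)}-b_n \ \leq_{\rm st}\ \tilde{Y}_{n:n}^{(c)}-b_n+\phi(q)c .
\]
Because $r$ is odd, the map $z\mapsto z^r$ is nondecreasing on all of $\Rbb$, so applying it to each of the three RVs preserves the ordering even though $\tilde{Y}_{n:n}^{(c)}-b_n$ is negative with positive probability; this is the one place where the oddness hypothesis is essential (for even $r$ the power map is decreasing on the negative axis and the argument breaks down). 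Taking expectations — all finite, since $\tilde{Y}_{n:n}^{(c)}$ has finite moments of every order — yields
\[
\Ebb\!\left[\left(\tilde{Y}_{n:n}^{(c)}-b_n\right)^r\right]\ \leq\ \Ebb\!\left[\left(\phi(q)\,Y_{n:n}^{(c)}-b_n\right)^r\right]\ \leq\ \Ebb\!\left[\left(\tilde{Y}_{n:n}^{(c)}-b_n+\phi(q)c\right)^r\right].
\]

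For the left-hand quantity, Prop.\ \ref{prop:normalizingConstantsCunchangingGamma} identifies $a_n=1$ and $b_n$ as in \eqref{eq:bnsequencegamma} as the Gumbel-attracting normalization for the $\mathrm{Gamma}(c,1)$ maximum, and since $\tilde{Y}_j^{(c)}\geq 0$ the integrability condition of Prop.\ \ref{prop:momentConvergenceResnick1987} holds (as noted right after that proposition); hence $\lim_{n\to\infty}\Ebb[(\tilde{Y}_{n:n}^{(c)}-b_n)^r]=(-1)^r\Gamma^{(r)}(1)$, which gives the lower bound. For the right-hand quantity I would expand by the binomial theorem, interchange the finite sum with the expectation, and apply Prop.\ \ref{prop:momentConvergenceResnick1987} to each term $\Ebb[(\tilde{Y}_{n:n}^{(c)}-b_n)^s]$ for $s=0,\dots,r$ (the $s=0$ term being $\Gamma(1)=1$), obtaining $\lim_{n\to\infty}\Ebb[(\tilde{Y}_{n:n}^{(c)}-b_n+\phi(q)c)^r]=\sum_{s=0}^r\binom{r}{s}(-1)^s\Gamma^{(s)}(1)(\phi(q)c)^{r-s}$. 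Together the two limits sandwich $\liminf_n$ and $\limsup_n$ of $\Ebb[(\phi(q)Y_{n:n}^{(c)}-b_n)^r]$ by the asserted constants, which is the sense in which the displayed inequalities are to be read (an exact limit need not exist, since $Y_{n:n}^{(c)}$ admits no linearly normalized distributional limit).

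I do not anticipate a genuine obstacle beyond the point already flagged: care is needed only in verifying that each transformation — positive scaling, translation, the odd power, and the binomial expansion — is compatible with the stochastic order, and it is exactly the odd power that forces the restriction to odd $r$; the remaining steps are direct invocations of the stochastic-ordering machinery of \S\ref{sec:modelnotation} together with the cited extreme-value results.
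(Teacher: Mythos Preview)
Your proposal is correct and follows essentially the same route as the paper: start from the stochastic ordering of Prop.\ \ref{prop:stochorder}, multiply by $\phi(q)$, subtract $b_n$, raise to the (odd) $r^{\rm th}$ power, take expectations, expand the upper bound binomially, and pass to the limit via Prop.\ \ref{prop:momentConvergenceResnick1987}. Your remark that the displayed inequality should be read as bounding $\liminf$ and $\limsup$ (since the discrete maximum has no limiting extreme-value law) is a useful clarification that the paper leaves implicit.
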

\begin{IEEEproof}
Multiplying the stochastic ordering $\frac{1}{\phi(q)} \tilde{Y}_{n:n}^{(c)} \leq_{\rm st} Y_{n:n}^{(c)} \leq_{\rm st} \frac{1}{\phi(q)} \tilde{Y}_{n:n}^{(c)}+c$ by $\phi(q)$, subtracting $b_n$, raising to the $r^{\rm th}$ power, and applying the binomial theorem to the upper bound gives
\begin{IEEEeqnarray}{rCl}
\IEEEeqnarraymulticol{3}{l}
{\left(\tilde{Y}_{n:n}^{(c)} - b_n \right)^r \leq_{\rm st} \left( \phi(q) Y_{n:n}^{(c)} - b_n\right)^r
}\nonumber \\* \qquad\qquad\qquad \!\!
&  \leq_{\rm st} & \left( \left(\tilde{Y}_{n:n}^{(c)} - b_n \right) + \phi(q)c \right)^r \nonumber \\
& = & \sum_{s=0}^r \binom{r}{s} \left(\tilde{Y}_{n:n}^{(c)} - b_n \right)^s (\phi(q) c)^{r-s}. \IEEEeqnarraynumspace
\end{IEEEeqnarray}
Taking expectations preserves stochastic order, and we apply linearity of expectation to the upper bound:
\begin{IEEEeqnarray}{rCl}
\IEEEeqnarraymulticol{3}{l}
{\Ebb \left[ \left(\tilde{Y}_{n:n}^{(c)} - b_n \right)^r \right] \leq \Ebb \left[ \left( \phi(q) Y_{n:n}^{(c)} - b_n\right)^r \right]
}\nonumber \\* \quad
& \leq & \sum_{s=0}^r \binom{r}{s} \Ebb \left[ \left(\tilde{Y}_{n:n}^{(c)} - b_n \right)^s \right] (\phi(q) c)^{r-s}. 
\end{IEEEeqnarray}
Taking limits and applying Prop.\ \ref{prop:momentConvergenceResnick1987} gives the corollary.
\end{IEEEproof}
The corollary is only established for $r$ odd on account of the fact that the function $z^r$ is increasing in its argument for all $z \in \Rbb$ only for $r$ odd, i.e., it is decreasing in $z$ for $z < 0$ for $r$ even.
For $r=1$, Corollary \ref{cor:lowupboundsnegbinevt} gives
\begin{equation}
0 \leq \lim_{n \to \infty} \left( \Ebb[Y_{n:n}^{(c)}] - \frac{b_n+\gamma}{\phi(q)} \right) \leq c,
\end{equation}
which captures the same dependence upon $n$ as given in the expression for $m_1^{\rm RBC}$ (i.e., $\Ebb[Y_{n:n}^{(c)}]$) in Proposition 4 of \cite{EryOzd2008}. To see this, recall\begin{equation}
m_1^{\rm RBC} = {\rm lq} \left(T\right)+ \frac{1}{2} + \frac{\gamma}{\phi(q)} + h\left({\rm lq } \left(T\right) \right) + o \left(1\right),
\end{equation}
where ${\rm lq} (\cdot) \equiv \log_{\frac{1}{1-q}}\left( \cdot \right) $, 
\begin{equation}
T = n \left( \frac{q}{1-q} \right)^{c-1} \left(\log_{\frac{1}{1-q}}n \right)^{c-1}\left(c-1\right)!^{-1},
\end{equation}
and $h$ is a periodic $C^{\infty}$-function of period $1$ and mean value $0$, whose Fourier coefficients are $\hat{h}(k) = \frac{1}{\log (1-q)} \Gamma\left(\frac{2 \irm k \pi}{\log(1-q)}\right)$.
Observe ${\rm lq}\left(T\right)$ can be rewritten as:
\begin{eqnarray}
{\rm lq} \left(T\right)  =  \frac{1}{\phi(q)} \left( b_{n} + \left(c-1\right) \log \left( \frac{q}{(1-q) \phi(q)}\right) \right).
\end{eqnarray}

We now give a lemma which $i)$ implies the scaling of $\Ebb\left[ \left( Y_{n:n}^{(c)}\right)^{r} \right]$ w.r.t. $n$ is $\left( \frac{1}{\phi(q)} \log n \right)^{r}$ (Prop.\ \ref{prop:scalingOfYnnWRTn}), and $ii)$ is central to proving the subsequent two propositions showing the asymptotic tightness as $n \to \infty$ of the lower and upper bounds on the $r^{\rm th}$ moment of $\tilde{Y}_{n:n}^{(c)}$.
\begin{lemma}
\label{lem:momentConvergence}
Suppose a sequence of random variables $(Z_n)$ is such that 
\begin{equation}
\label{eq:tbbnmnmjk}
\lim_{n \to \infty} \Ebb \left[ \left( \frac{Z_n - b_n}{a_n} \right)^r \right] = c_r
\end{equation}
for sequences $(a_n,b_n)$ independent of $r$ and $(c_r)$ independent of $n$, where $a_n = o(b_n)$.  Then
\begin{equation}
\label{eq:tererqwsdfff}
\lim_{n \to \infty} \frac{\Ebb[Z_n^r]}{b_n^r} = 1, ~ r \in \Nbb.
\end{equation}
\end{lemma}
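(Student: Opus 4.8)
The plan is to reduce the claim to a single algebraic identity --- the binomial expansion of $Z_n^r$ about the centering sequence $b_n$ --- and then to exploit that the standardized moments converge while $a_n/b_n \to 0$, so that the centered fluctuations of $Z_n$ contribute nothing after normalization by $b_n^r$.

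First I would write, for each realization, $Z_n^r = \left((Z_n - b_n) + b_n\right)^r = \sum_{s=0}^{r} \binom{r}{s} (Z_n - b_n)^s \, b_n^{r-s}$, which is just the binomial theorem for real numbers. Before taking expectations I would record that all the moments involved are finite: applying the hypothesis \eqref{eq:tbbnmnmjk} with an even exponent $2k \geq r$ shows $\Ebb[(Z_n - b_n)^{2k}] < \infty$, hence by Lyapunov's inequality $\Ebb[|Z_n - b_n|^s] < \infty$ for $s \le r$, and then $\Ebb[|Z_n|^r] < \infty$ by expanding $|Z_n| \le |Z_n - b_n| + |b_n|$. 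Taking expectations of the identity therefore gives $\Ebb[Z_n^r] = \sum_{s=0}^{r} \binom{r}{s} \Ebb\!\left[(Z_n - b_n)^s\right] b_n^{r-s}$. Dividing by $b_n^r$ --- legitimate for all large $n$, since $a_n = o(b_n)$ forces $b_n \neq 0$ eventually (and in the applications $b_n \to \infty$) --- and rewriting each centered moment in standardized form via $\Ebb[(Z_n-b_n)^s] = a_n^s\, \Ebb[((Z_n-b_n)/a_n)^s]$ yields
\begin{equation}
\frac{\Ebb[Z_n^r]}{b_n^r} = \sum_{s=0}^{r} \binom{r}{s} \left(\frac{a_n}{b_n}\right)^{s} \Ebb\!\left[\left(\frac{Z_n - b_n}{a_n}\right)^{s}\right].
\end{equation}

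Then I would pass to the limit $n \to \infty$ term by term --- a finite sum, so no interchange of limit and sum is at stake. For $s \in \{1,\ldots,r\}$ the hypothesis gives $\Ebb[((Z_n-b_n)/a_n)^s] \to c_s$, a bounded quantity, while $(a_n/b_n)^s \to 0$ because $a_n = o(b_n)$; hence every term with $s \geq 1$ vanishes in the limit. The $s=0$ term is identically $1$. Therefore $\Ebb[Z_n^r]/b_n^r \to 1$, which is \eqref{eq:tererqwsdfff}.

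There is no deep obstacle: the lemma is essentially bookkeeping, resting on the single fact that $b_n$ dominates $a_n$. The points that need a little care are (i) the finiteness of the moments and the admissibility of dividing by $b_n^r$, both handled above using that \eqref{eq:tbbnmnmjk} is assumed for every $r \in \Nbb$ and that $a_n = o(b_n)$, and (ii) that one must invoke the standardized-moment convergence at all orders $s \le r$, not merely at $s = r$ --- which is exactly why the hypothesis is stated uniformly over exponents.
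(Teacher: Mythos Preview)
Your proof is correct and rests on the same two ingredients as the paper's --- the binomial expansion and the hypothesis $a_n = o(b_n)$ --- but you organize them more directly. The paper first observes that $(a_n/b_n)^r\bigl(\Ebb[((Z_n-b_n)/a_n)^r] - c_r\bigr) \to 0$, equivalently $\Ebb[(Z_n/b_n - 1)^r] \to 0$, and then runs an induction on $r$: expanding $(Z_n/b_n - 1)^r$ by the binomial theorem, invoking the induction hypothesis on the terms of order $s \le r-1$, and solving for $\Ebb[(Z_n/b_n)^r]$. You instead expand $Z_n^r = ((Z_n - b_n) + b_n)^r$ directly, so that after dividing by $b_n^r$ each term $s \ge 1$ is already the product of $(a_n/b_n)^s \to 0$ with a convergent standardized moment, and no induction is needed. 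The gain is modest --- the argument is short either way --- but your route makes it transparent that the result for exponent $r$ uses the hypothesis \eqref{eq:tbbnmnmjk} at all orders $s \le r$ simultaneously, rather than recursively. Your attention to finiteness of moments and to $b_n \neq 0$ eventually is more explicit than the paper's.
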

\begin{IEEEproof}
Using \eqref{eq:tbbnmnmjk} and $a_n = o(b_n)$ gives
\begin{eqnarray}
\label{eq:tbnmklopioyree}
0 & = & \lim_{n \to \infty} \left( \frac{a_n}{b_n} \right)^r \left( \Ebb \left[ \left( \frac{Z_n - b_n}{a_n} \right)^r \right] - c_r \right) \nonumber \\
& = & \lim_{n \to \infty} \Ebb \left[ \left( \frac{Z_n}{b_n} - 1 \right)^r \right].
\end{eqnarray}
Our proof is by induction on $r$.  The above equation immediately gives the base case for $r=1$.  Suppose that \eqref{eq:tererqwsdfff} (with $r$ replaced by $s$) is true for all $s = 1,\ldots,r-1$.  We show this is sufficient to establish the same is true for $s=r$.  Using \eqref{eq:tbnmklopioyree}, the binomial theorem, and the induction hypothesis gives the conclusion:
\begin{eqnarray}
0 
&=& \lim_{n \to \infty} \Ebb \left[ \sum_{s=0}^r \binom{r}{s} \left( \frac{Z_n}{b_n} \right)^s (-1)^{r-s} \right] \nonumber \\
& = & \lim_{n \to \infty}  \Ebb \left[ \left( \frac{Z_n}{b_n} \right)^r \right]  \nonumber \\
& & \negmedspace{} + \sum_{s=0}^{r-1} \binom{r}{s} \left( \lim_{n \to \infty} \Ebb \left[ \left( \frac{Z_n}{b_n} \right)^s \right] \right) (-1)^{r-s}  \nonumber \\
&=& \lim_{n \to \infty}  \Ebb \left[ \left( \frac{Z_n}{b_n} \right)^r \right] + 
\sum_{s=0}^{r-1} \binom{r}{s} (-1)^{r-s}  \nonumber \\
&=& \lim_{n \to \infty}  \Ebb \left[ \left( \frac{Z_n}{b_n} \right)^r \right] - 1.
\end{eqnarray}
\end{IEEEproof}

\begin{proposition}
\label{prop:scalingOfYnnWRTn}
Assume A3: State-independent receptions, homogeneous receivers.  As $n \to \infty$, the scaling of the $r^{\rm th}$ moment of RLC delay $Y_{n:n}^{(c)}$ is: $\Ebb\left[ \left( Y_{n:n}^{(c)}\right)^{r} \right] \sim \left( \frac{1}{\phi(q)} \log n \right)^{r}$.
\end{proposition}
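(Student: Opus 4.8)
The plan is to transfer the large-$n$ asymptotics from the continuous Gamma-distributed analog $\tilde{Y}_{n:n}^{(c)}$ (for which the extreme value machinery applies) to the discrete RV $Y_{n:n}^{(c)}$ via the stochastic ordering $\frac{1}{\phi(q)}\tilde{Y}_{n:n}^{(c)} \leq_{\rm st} Y_{n:n}^{(c)} \leq_{\rm st} \frac{1}{\phi(q)}\tilde{Y}_{n:n}^{(c)}+c$ of Prop.~\ref{prop:stochorder}, using Lemma~\ref{lem:momentConvergence} to do the work on the continuous side.

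First I would apply Lemma~\ref{lem:momentConvergence} to $Z_n = \tilde{Y}_{n:n}^{(c)}$. Its hypotheses are met: by Prop.~\ref{prop:normalizingConstantsCunchangingGamma} the scaling $a_n = 1$, $b_n = \log n - \log\Gamma(c) + (c-1)\log\log n$ is independent of $r$, and $a_n = 1 = o(b_n)$ since $b_n \to \infty$; and by Prop.~\ref{prop:momentConvergenceResnick1987} (whose left-tail integrability condition holds trivially because $\tilde{Y}_j^{(c)} \geq 0$, so the relevant integral over $(-\infty,0)$ is zero) the limit $\lim_{n\to\infty}\Ebb[((\tilde{Y}_{n:n}^{(c)} - b_n)/a_n)^r] = (-1)^r\Gamma^{(r)}(1)$ is a constant independent of $n$. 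Hence Lemma~\ref{lem:momentConvergence} gives $\Ebb[(\tilde{Y}_{n:n}^{(c)})^r] \sim b_n^r$. Since $b_n/\log n \to 1$ (the terms $-\log\Gamma(c)$ and $(c-1)\log\log n$ are both $o(\log n)$), this yields $\Ebb[(\tilde{Y}_{n:n}^{(c)})^s] \sim (\log n)^s$ for every $s \in \Nbb$.

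Next I would invoke the stochastic ordering above, raise to the $r$-th power (which preserves stochastic order because $z\mapsto z^r$ is nondecreasing on $[0,\infty)$ and both $Y_{n:n}^{(c)}$ and $\frac{1}{\phi(q)}\tilde{Y}_{n:n}^{(c)}+c$ are nonnegative), expand the upper bound with the binomial theorem, and take expectations (preserving moment order and using linearity on the upper bound) to get
\begin{equation*}
\frac{1}{\phi(q)^r}\,\Ebb\!\left[(\tilde{Y}_{n:n}^{(c)})^r\right] \;\leq\; \Ebb\!\left[(Y_{n:n}^{(c)})^r\right] \;\leq\; \frac{1}{\phi(q)^r}\sum_{s=0}^{r}\binom{r}{s}\,\Ebb\!\left[(\tilde{Y}_{n:n}^{(c)})^s\right](\phi(q)c)^{r-s}.
\end{equation*}
Dividing through by $\left(\tfrac{1}{\phi(q)}\log n\right)^r$ and letting $n\to\infty$: the lower side tends to $1$ by the previous paragraph, while on the upper side the $s=r$ term tends to $1$ and every term with $s<r$ is $O\!\left((\log n)^{s-r}\right) \to 0$. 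A squeeze then gives $\lim_{n\to\infty}\Ebb[(Y_{n:n}^{(c)})^r]/\left(\tfrac{1}{\phi(q)}\log n\right)^r = 1$, which is the claimed asymptotic $\Ebb[(Y_{n:n}^{(c)})^r] \sim \left(\tfrac{1}{\phi(q)}\log n\right)^r$.

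I expect no serious obstacle here, since the substantive content—the EVT limit for the Gamma maximum and the passage from standardized to raw moments—has already been packaged into Props.~\ref{prop:normalizingConstantsCunchangingGamma} and \ref{prop:momentConvergenceResnick1987} and Lemma~\ref{lem:momentConvergence}. The only points needing (routine) care are verifying $b_n \sim \log n$ so the lower-order corrections in $b_n$ do not affect the leading term, and checking that the cross terms in the binomial expansion of the upper bound vanish after normalization by $(\log n)^r$.
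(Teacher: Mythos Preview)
Your proposal is correct and follows essentially the same route as the paper: use Lemma~\ref{lem:momentConvergence} (fed by Props.~\ref{prop:normalizingConstantsCunchangingGamma} and~\ref{prop:momentConvergenceResnick1987}) to get $\Ebb[(\tilde{Y}_{n:n}^{(c)})^s]\sim b_n^s\sim(\log n)^s$, then sandwich $\Ebb[(Y_{n:n}^{(c)})^r]$ via the stochastic ordering, the binomial expansion of the upper bound, and a normalization-and-squeeze. The only cosmetic difference is that the paper normalizes by $b_n^r$ first and then invokes $b_n\sim\log n$, whereas you normalize by $(\phi(q)^{-1}\log n)^r$ directly.
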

\begin{IEEEproof}
Raising the stochastic ordering $\frac{1}{\phi(q)} \tilde{Y}_{n:n}^{(c)} \leq_{\rm st} Y_{n:n}^{(c)} \leq_{\rm st} \frac{1}{\phi(q)} \tilde{Y}_{n:n}^{(c)}+c$  to the $r^{\rm th}$ power, applying binomial theorem and taking expectations gives
$\frac{1}{\phi(q)^{r}} \Ebb\left[ \left( \tilde{Y}^{(c)}_{n:n}\right)^{r}\right] \leq \Ebb\left[ \left( {Y}^{(c)}_{n:n}\right)^{r}\right] \leq \sum_{p=0}^{r} \binom{r}{p} \frac{1}{\phi(q)^{p}} \Ebb\left[ \left( \tilde{Y}^{(c)}_{n:n}\right)^{p}\right] c^{r-p}$. Now dividing through by $b_{n}^{r}$ for $b_n$ in \eqref{eq:bnsequencegamma} and taking the limit as $n \to \infty$, we have on the left side of the inequality: $\frac{1}{\phi(q)^{r}} \lim_{n \to \infty} \frac{\Ebb\left[ \left( \tilde{Y}^{(c)}_{n:n}\right)^{r}\right]}{b_{n}^{r}} =  \frac{1}{\phi(q)^{r}}$ by Lem.\ \ref{lem:momentConvergence}, and furthermore since $b_{n} \to \infty$ the only term that survives on the right side of the inequality is the one corresponding to $p=r$. Applying Lem.\ \ref{lem:momentConvergence} again, we have
\begin{equation}
\lim_{n \to \infty} \frac{\Ebb\left[ \left(Y^{(c)}_{n:n}\right)^{r}\right]}{b_{n}^{r}} = \frac{1}{\phi(q)^{r}}.
\end{equation}
Finally the observation $b_{n} \sim \log n$ concludes the proof.
\end{IEEEproof}

\subsection{Bounds on the moments of delay as $n \to \infty$}
\label{ssec:rossdlctightinn}

Prop.\ \ref{prop:dlcasymptoticinn} and Prop.\ \ref{prop:rossasymptoticinn}  establish that the lower and upper bounds on $\Ebb\left[\left(Y_{n:n}^{(c)}\right)^r\right]$, resulting from application of the inequalities in Prop.\ \ref{prop:delacal} and Prop.\ \ref{prop:ross} to the Gamma RV stochastic ordering from Prop.\ \ref{prop:stochorder}, can be made (almost) asymptotically tight as $n \to \infty$ for fixed $c$.  
\begin{proposition}
\label{prop:dlcasymptoticinn}
Assume A3: State-independent receptions, homogeneous receivers.  There exists a parameter sequence $(t_n)$ such that the lower bound on $\Ebb\left[\left(\tilde{Y}_{n:n}^{(c)}\right)^r\right]$ from \eqref{eq:delaCallb} can be made asymptotically tight in $n$ for every integer $r \geq 1$.
\end{proposition}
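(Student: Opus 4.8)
The plan is to use the de la Cal lower bound in the form it appears in the proof of Prop.\ \ref{prop:boundsOnEY3}: applying \eqref{eq:delaCallb} to $\Ebb[(\tilde{Y}_{n:n}^{(c)})^r]$ gives, for every $t \geq \Ebb[(\tilde{Y}^{(c)})^r]$,
\[
\Ebb\left[\left(\tilde Y_{n:n}^{(c)}\right)^r\right] \;\geq\; l_{\tilde Y}(t,n,c,r) = t - \left(t - \Ebb\left[\left(\tilde Y^{(c)}\right)^r\right]\right)\left(1 - Q(c, t^{1/r})\right)^{n-1},
\]
so the task is to exhibit a sequence $(t_n)$ with $l_{\tilde Y}(t_n,n,c,r)/\Ebb[(\tilde Y_{n:n}^{(c)})^r] \to 1$. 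Two ingredients drive the choice of $t_n$. First, combining Prop.\ \ref{prop:momentConvergenceResnick1987} (the technical condition holds since $\tilde Y^{(c)} \geq 0$) with Lem.\ \ref{lem:momentConvergence} (using $a_n = 1 = o(b_n)$) yields $\Ebb[(\tilde Y_{n:n}^{(c)})^r] \sim b_n^r$, with $b_n = \log n - \log\Gamma(c) + (c-1)\log\log n$ as in \eqref{eq:bnsequencegamma}. Second, the $\mathrm{Gamma}(c,1)$ tail satisfies $Q(c,s) = \frac{s^{c-1}\erm^{-s}}{\Gamma(c)}(1+o(1))$ as $s \to \infty$ (repeated integration by parts, equivalently the recursion $Q(c+1,x) = Q(c,x) + x^c \erm^{-x}/\Gamma(c+1)$ used in the proof of Prop.\ \ref{prop:boundsOnEY3}), and by definition of $b_n$ we have $\erm^{-b_n} = \Gamma(c)/(n(\log n)^{c-1})$.

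Guided by the EVT normalizing sequence $b_n$, I would take $t_n$ so that $t_n^{1/r} = b_n - g_n$ for a sequence $g_n \to \infty$ with $g_n = o(\log n)$, for instance $g_n = \log\log n$. Then $t_n^{1/r}/b_n \to 1$, hence $t_n \sim b_n^r$, and $t_n \geq \Ebb[(\tilde Y^{(c)})^r]$ for all large $n$ because $t_n \to \infty$ while the right side is a fixed finite constant ($c$ being fixed). Substituting $t_n^{1/r} = b_n - g_n$ into the tail estimate and using $(b_n - g_n)/\log n \to 1$,
\[
Q(c, t_n^{1/r}) = \frac{(b_n - g_n)^{c-1}}{\Gamma(c)}\,\erm^{-(b_n - g_n)}\,(1+o(1)) = \frac{(b_n - g_n)^{c-1}}{(\log n)^{c-1}}\cdot\frac{\erm^{g_n}}{n}\,(1+o(1)) = \frac{\erm^{g_n}}{n}\,(1+o(1)).
\]
Consequently $(n-1)Q(c,t_n^{1/r}) \sim \erm^{g_n} \to \infty$, so $(1 - Q(c,t_n^{1/r}))^{n-1} \leq \erm^{-(n-1)Q(c,t_n^{1/r})} \to 0$ (for $g_n = \log\log n$ this decays like $n^{-(1+o(1))}$). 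Therefore $l_{\tilde Y}(t_n,n,c,r) \geq t_n\left(1 - (1 - Q(c,t_n^{1/r}))^{n-1}\right) = t_n(1-o(1))$, and since $t_n(1-o(1)) \sim b_n^r$ while $l_{\tilde Y}(t_n,n,c,r) \leq \Ebb[(\tilde Y_{n:n}^{(c)})^r] \sim b_n^r$, a squeeze gives $\lim_{n\to\infty} l_{\tilde Y}(t_n,n,c,r)/\Ebb[(\tilde Y_{n:n}^{(c)})^r] = 1$ for every integer $r \geq 1$, which is the asserted asymptotic tightness.

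The main obstacle is choosing $g_n$ in the right window: it must diverge so that $nQ(c,t_n^{1/r}) \to \infty$ and the $(1-Q)^{n-1}$ factor becomes negligible, yet it must be $o(\log n)$ so that shifting $t_n^{1/r}$ down from $b_n$ does not spoil the leading order $t_n \sim b_n^r$. A secondary point to verify carefully is that the $(1+o(1))$ error in the incomplete-Gamma tail asymptotic is genuinely negligible along the sequence $s = t_n^{1/r} \to \infty$ and does not interfere with the exponential smallness of $(1-Q)^{n-1}$; this is routine but should be checked. (The analogous parameter selection for the upper bound, appearing in Prop.\ \ref{prop:rossasymptoticinn}, uses $s_n$ chosen the same way.)
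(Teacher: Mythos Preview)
Your argument is correct and follows the same overall strategy as the paper: choose $t_n^{1/r}$ just below the EVT centering sequence $b_n$ so that $t_n \sim b_n^r$ while $(1-Q(c,t_n^{1/r}))^{n-1}\to 0$, then combine with Lem.~\ref{lem:momentConvergence}. The execution differs in one respect worth noting. The paper sets $t_n^{1/r}=\tilde y + b_{n-1}$ for a \emph{fixed} real $\tilde y$, invokes the EVT distributional convergence (Prop.~\ref{prop:normalizingConstantsCunchangingGamma}) to get $(1-Q(c,t_n^{1/r}))^{n-1}\to\Lambda(\tilde y)$, obtains $\lim_n d(n,t_n)=1-\Lambda(\tilde y)$, and only then sends $\tilde y\to-\infty$; strictly speaking this is a double limit and an implicit diagonalization is needed to extract a single sequence $(t_n)$. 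You instead let the shift depend on $n$, taking $t_n^{1/r}=b_n-g_n$ with $g_n\to\infty$, $g_n=o(\log n)$, and use the explicit Gamma tail asymptotic $Q(c,s)\sim s^{c-1}\erm^{-s}/\Gamma(c)$ to show $nQ(c,t_n^{1/r})\sim \erm^{g_n}\to\infty$ directly. This buys you a concrete sequence in one pass, matching the proposition's statement verbatim, at the cost of appealing to the incomplete-Gamma asymptotic rather than the already-packaged EVT limit. Both routes are valid; yours is a touch more self-contained here.
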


\begin{proposition} 
\label{prop:rossasymptoticinn}
Assume A3: State-independent receptions, homogeneous receivers.  There exists a parameter sequence $(s_n)$ such that the upper bound on $\Ebb\left[\left(\tilde{Y}_{n:n}^{(c)}\right)^r\right]$ from \eqref{eq:rossub} can be made asymptotically tight in $n$ for every integer $r \geq 1$.  In particular $s_n = b_n^r$ is sufficient to establish the bound is asymptotically tight, as well as asymptotically optimal, in the asymptotic sense of \eqref{eq:rossuboptcond}. 
\end{proposition}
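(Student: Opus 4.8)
\emph{Proof proposal.} The plan is to work with the explicit closed form of the Ross upper bound obtained midway through the proof of Prop.\ \ref{prop:boundsOnEY3}: specializing \eqref{eq:rossub} to $\tilde{Y}^{(c)} \sim \mathrm{Gamma}(c,1)$ gives, for every $s \geq 0$,
\begin{equation}
\label{eq:rossclosedform}
\Ebb\left[\left(\tilde{Y}_{n:n}^{(c)}\right)^r\right] \leq u_r(s) \equiv s + n\left( \frac{\Gamma(c+r)}{\Gamma(c)}\, Q\!\left(c+r, s^{\frac{1}{r}}\right) - s\, Q\!\left(c, s^{\frac{1}{r}}\right) \right).
\end{equation}
I would take $s_n = b_n^r$ (so that $s_n^{1/r} = b_n$) with $b_n$ as in \eqref{eq:bnsequencegamma}; since $b_n \to \infty$ we have $s_n > 0$ for all large $n$ and \eqref{eq:rossclosedform} applies. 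Because Lem.\ \ref{lem:momentConvergence} already gives $\Ebb[(\tilde{Y}_{n:n}^{(c)})^r]/b_n^r \to 1$, establishing asymptotic tightness reduces to proving $u_r(s_n)/b_n^r \to 1$.

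Dividing \eqref{eq:rossclosedform} with $s = s_n$ by $b_n^r$ yields
\begin{equation}
\frac{u_r(s_n)}{b_n^r} = 1 + \frac{n}{b_n^r}\,\frac{\Gamma(c+r)}{\Gamma(c)}\, Q(c+r, b_n) - n\, Q(c, b_n),
\end{equation}
and two asymptotic inputs then close the argument. First, the defining property of the Gumbel normalizing sequence in Prop.\ \ref{prop:normalizingConstantsCunchangingGamma} (evaluated at $\tilde{y}=0$, with $a_n=1$) is exactly $n\,Q(c,b_n) = n\,\bar{F}_{\tilde{Y}^{(c)}}(b_n) \to 1$. Second, the standard incomplete-gamma tail estimate $Q(a,x) \sim x^{a-1}\erm^{-x}/\Gamma(a)$ as $x \to \infty$ --- equivalently, $r$-fold use of the recursion $Q(a+1,x) = Q(a,x) + x^a \erm^{-x}/\Gamma(a+1)$ already invoked in the proof of Prop.\ \ref{prop:boundsOnEY3}, keeping only the dominant added term --- gives $\frac{\Gamma(c+r)}{\Gamma(c)}\,Q(c+r,b_n) = b_n^r\, Q(c,b_n)\,(1+o(1))$ since $b_n \to \infty$. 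Writing $\frac{\Gamma(c+r)}{\Gamma(c)}Q(c+r,b_n) = b_n^r Q(c,b_n) + \epsilon_n$ with $\epsilon_n/(b_n^r Q(c,b_n)) \to 0$, the middle term equals $nQ(c,b_n) + n\epsilon_n/b_n^r$, the two copies of $nQ(c,b_n)$ cancel, and $u_r(s_n)/b_n^r = 1 + [nQ(c,b_n)]\cdot[\epsilon_n/(b_n^r Q(c,b_n))] \to 1 + 1\cdot 0 = 1$. Combined with Lem.\ \ref{lem:momentConvergence}, this gives $u_r(s_n)/\Ebb[(\tilde{Y}_{n:n}^{(c)})^r] \to 1$, i.e., the bound with parameter $s_n$ is asymptotically tight.

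For the asymptotic-optimality claim I would note that the first-order condition characterizing the exact optimizer of $u_r$ is $n\,\Pbb((\tilde{Y}^{(c)})^r > s) = n\,Q(c,s^{1/r}) = 1$ (cf.\ Prop.\ \ref{prop:boundsOnEY3}), and $s_n = b_n^r$ satisfies $n\,Q(c,s_n^{1/r}) = n\,Q(c,b_n) \to 1$ by the first asymptotic input; hence $s_n$ is asymptotically optimal in the sense of \eqref{eq:rossuboptcond}. I expect the \textbf{main obstacle} to be the delicate cancellation in the second paragraph: both $\frac{n}{b_n^r}\frac{\Gamma(c+r)}{\Gamma(c)}Q(c+r,b_n)$ and $nQ(c,b_n)$ individually converge to $1$, so the argument must pass through the \emph{ratio} form of the incomplete-gamma asymptotic rather than a leading-order statement; concretely, $b_n^r$ grows only polylogarithmically in $n$ while $nQ(c,b_n)$ sits precisely at the critical order $1$, so the error $\epsilon_n$ has to be controlled relative to $b_n^r Q(c,b_n)$, not merely relative to $b_n^r$.
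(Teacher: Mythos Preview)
Your argument is correct and takes a genuinely different route from the paper's own proof. The paper works directly with the integral form of the Ross bound, $b_n^r + n\int_{b_n^r}^{\infty}\Pbb((\tilde{Y}^{(c)})^r>t)\,\drm t$, and shows the remainder term $\frac{n}{b_n^r}\int_{b_n^r}^{\infty}\Pbb((\tilde{Y}^{(c)})^r>t)\,\drm t \to 0$ via L'H\^opital's and Leibniz's rules, followed by a somewhat intricate chain of limits involving $\frac{\drm}{\drm n}b_n$. You instead start from the already-evaluated closed form \eqref{eq:rossclosedform} (obtained in the proof of Prop.\ \ref{prop:boundsOnEY3}) and reduce everything to the single ratio asymptotic $\frac{\Gamma(c+r)}{\Gamma(c)}Q(c+r,b_n)\big/\big(b_n^r Q(c,b_n)\big)\to 1$, which follows cleanly from $Q(a,x)\sim x^{a-1}\erm^{-x}/\Gamma(a)$. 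Your route is shorter and avoids differentiating in $n$; the paper's route is more self-contained in that it does not rely on the closed-form evaluation from an earlier proposition.

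Two small remarks. First, your appeal to Prop.\ \ref{prop:normalizingConstantsCunchangingGamma} at $\tilde{y}=0$ literally yields $(1-Q(c,b_n))^n\to\erm^{-1}$, from which $nQ(c,b_n)\to 1$ follows only after the (easy) observation that $Q(c,b_n)\to 0$ and $n\log(1-Q(c,b_n))\to -1$; the paper instead verifies $nQ(c,b_n)\to 1$ by direct substitution of the explicit $b_n$ into the tail asymptotic. Second, you correctly anticipated the delicate point: the cancellation works precisely because you control $\epsilon_n$ relative to $b_n^r Q(c,b_n)$ (so that $n\epsilon_n/b_n^r = [nQ(c,b_n)]\cdot o(1)\to 0$), not merely relative to $b_n^r$.
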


The proofs of Props.\ \ref{prop:dlcasymptoticinn} and \ref{prop:rossasymptoticinn} can be found in Appendix \ref{app:deponnumrx}.

Fig.\ \ref{fig:scaling} shows several of the bounds discussed in this section vs.\ the number of receivers, $n$.  The plots attest to the fact that our lower and upper bounds do enclose the exact expected delay for all $n$.  Note that our lower bound is in fact better than the approximation\footnote{By ignoring the $o(1)$ term in Prop.\ 4 of \cite{EryOzd2008}.} of $m_1^{\rm RBC}$ from \cite[Prop.\ 4]{EryOzd2008} for larger $c$, but not for smaller $c$.  

\begin{figure}[!ht]
\centering
\includegraphics[width=0.49\textwidth]{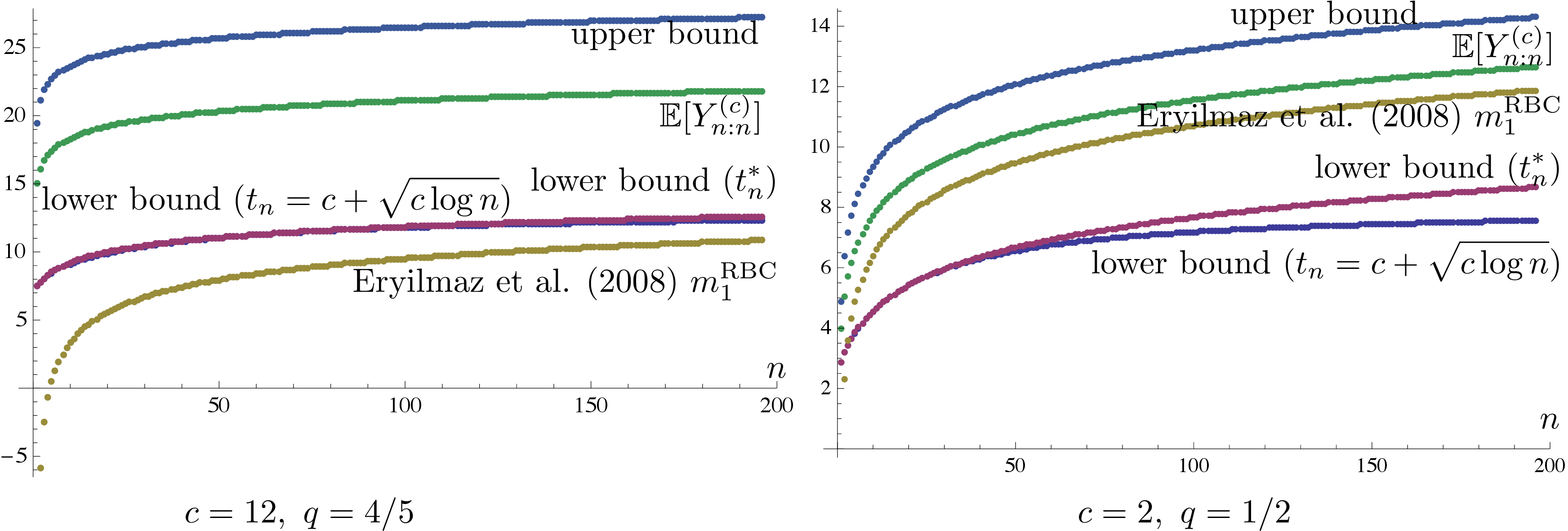}
\caption{The exact block delay ($\Ebb[Y_{n:n}^{(c)}]$), the optimized upper bound, the lower bound with $t_n = c + \sqrt{c \log n}$, the numerically optimized lower bound (with $t_n^*$), and the approximation of $m_1^{\rm RBC}$ from \cite{EryOzd2008}, for $c=12$ and $q=4/5$ (left) and $c=2$ and $q=1/2$ (right).}
\label{fig:scaling}
\end{figure}

\section{Conclusion} 
\label{sec:conclusion}

We have investigated the random block delay, $Y_{n:n}^{(c)}$, when transmitting $c$ packets over a broadcast erasure channel to $n$ receivers with reception probabilities $\qbf$ using both uncoded transmission (UT) and random linear combinations/coding (RLC).  Our key contributions involve bounds, exact expressions, and asymptotic properties (in $c$, or in $n$) for the $r^{\rm th}$ moment of $Y_{n:n}^{(c)}$.  

Several extensions of this work seem natural to us.  First and foremost, our results in \S\ref{sec:deponnumrx} study the delay as $n$ grows large for fixed $c$, while the results in \cite{SwaEry2013} have established that $c = c(n)$ should grow with $n$ according to $\Omega(\log n)$ in order to have a non-diminishing throughput.  One possible approach is to apply our framework of order statistic inequalities, stochastic ordering, and extreme value theory to address this case.  

Second, there are additional results that seem possible.  In particular, $i)$ in \S\ref{ssec:mono} we conjecture the expected delay per packet is not only decreasing in $c$, but in fact convex decreasing in $c$, although we have not been able to prove this; and $ii)$ since the optimal free parameter in de la Cal's bound may be hard to be expressed in closed-form, it is desirable to find a systematic approach for proposing tractable approximations of the optimizer while still keeping the quality of the bound good.

\bibliographystyle{IEEEtran}
\bibliography{BroadcastDelayNetCodBib}

\appendices


\section{Proofs from \S\ref{sec:delayUT}}
\label{app:delayUT}

This appendix contains the proof of Prop.\ \ref{prop:kthMomentGeo} in \S\ref{sec:delayUT}.

\begin{IEEEproof}[Proof of Prop.\ \ref{prop:kthMomentGeo}]
Let $M_X(t) = \frac{q \erm^{t}}{1-(1-q)\erm^{t}}$ for $t < -\log\left(1-q\right)$ denote the moment generating function (MGF) for the geometric distribution. It is well-known that the $r^{\rm th}$ moment of a RV can be obtained by evaluating the $r^{\rm th}$ derivative of its MGF at $0$. Here to retain tractability, we will employ a change of variable, so that the $r^{\rm th}$ derivative of the original MGF (in terms of $t$) is expressed as a weighted sum of all lower-order derivatives of a related function ($\tilde{M}_{X}(s)$) with respect to a new variable ($s$), for which these lower-order derivatives admit a simple general formula and their weights (coefficients) obey a recurrence that can be further solved.

Define a monotone function $s(t) \equiv 1- (1-q) \erm^{t}$, and $\tilde{M}_X(s) \equiv \frac{q}{1-q}\left( s^{-1} - 1 \right)$. Observe $\frac{\drm s(t)}{\drm t} = s(t)-1$ and $\tilde{M}_X(s)$ and $M_X(t)$ are related by $\tilde{M}_X(s(t)) = M_X(t)$. We first prove\footnote{The superscripts of the MGF $M_{X}(t)$ are indicating its (higher) derivatives, whereas the superscripts of the coefficients $a_{l}$ are for indexing (together with their subscripts) these coefficients.} 
\begin{equation} 
\label{eq:rthDerivativeGeoMGF}
M_X^{(r)}(t) =  \left. f_{}(s; r) \right\vert_{s = s(t)},
\end{equation}
where we define a function of $s$ (parameterized by $r$)
\begin{equation}
f_{}(s; r) \equiv  \sum_{l=1}^{r} a_{l}^{(r)} \left( \frac{\drm^{l}}{\drm s^{l}} \tilde{M}_X(s) \right) \left( s -1 \right)^{l},
\end{equation}
and the coefficients $a_{l}^{(r)}$ satisfy the following recurrence:
\begin{equation}
\label{eq:recurrenceGeorthMoment}
a_{l}^{(r)} = l \cdot a_{l}^{(r-1)} + a_{l-1}^{(r-1)}, ~ l \in [r], 
\end{equation}
with boundary conditions\footnote{The support of $r$ in the statement does not contain $0$, yet for the recurrence to work, $r$ is allowed to take $0$, so is the subscript of the $a_{l}^{(r)}$'s.} $a_{l}^{(r)} = 0$ if $l > r$, and $a_{0}^{(r)} = \mathbf{1}_{\left\{ r=0 \right\}}$.

Observe $a_{r}^{(r)} = 1$ for all $r \geq 0$, and $a_{1}^{(r)} = 1$ for all $r > 0$. We prove \eqref{eq:rthDerivativeGeoMGF} by induction on $r$. The base case $r = 1, 2$ can be verified to be true, with $M_X^{(1)}(t) =\frac{q}{1-q} \left( -s(t)^{-1} + s(t)^{-2} \right), M_X^{(2)}(t) = \frac{q}{1-q} \left( s(t)^{-1} - 3 s(t)^{-2} + 2 s(t)^{-3} \right)$.

Assuming \eqref{eq:rthDerivativeGeoMGF} holds for $r > 0$ and applying the chain rule, we can compute the $(r+1)^{\rm th}$ derivative
\begin{IEEEeqnarray} {rCl}
& & M_X^{(r+1)}(t)  \nonumber \\
& = &   \frac{\drm }{\drm t} \left( \left. f_{}(s; r) \right\vert_{s = s(t)} \right)
=  \left. \left[ \left( \frac{\drm }{\drm s} f_{}(s; r) \right) \cdot \frac{\drm s(t)}{\drm t} \right] \right\vert_{s = s(t)}. \IEEEeqnarraynumspace
\end{IEEEeqnarray}
We need to show the terms in the above brackets, viewed as a whole and as a function of $s$, equals $f(s; r+1)$. Applying the rules of differentiation to $\frac{\drm }{\drm s} f_{}(s; r)$ and recalling $\frac{\drm s(t)}{\drm t} = s(t)-1$, we have
\begin{eqnarray}
\label{eq:rthDerivativeGeoMGFInductionStep}
& & \sum_{l=1}^{r} a_{l}^{(r)} \left( \frac{\drm^{l+1}}{\drm s^{l+1}} \tilde{M}_X(s) \right) (s-1)^{l+1} \nonumber \\
 & & \qquad \qquad \quad \! \negmedspace{} + \sum_{l=1}^{r} l \cdot a_{l}^{(r)} \left( \frac{\drm^{l}}{\drm s^{l}} \tilde{M}_X(s) \right) (s-1)^{l}.
\end{eqnarray}
By changing the summing variable from $l$ to $l' \equiv l +1$,  the first summand in \eqref{eq:rthDerivativeGeoMGFInductionStep} can be written as
\begin{eqnarray}
& & \sum_{l'=2}^{r+1} a_{l' - 1}^{(r)} \left( \frac{\drm^{l'}}{\drm s^{l'}} \tilde{M}_X(s) \right) (s-1)^{l'} \nonumber \\
& = & \sum_{l'=1}^{r} a_{l' - 1}^{(r)} \left( \frac{\drm^{l'}}{\drm s^{l'}} \tilde{M}_X(s) \right) (s-1)^{l'} \nonumber \\
& & \qquad \quad \negmedspace{} + a_{r+1}^{(r+1)} \left( \frac{\drm^{r+1}}{\drm s^{r+1}} \tilde{M}_X(s) \right) (s-1)^{r+1},
\end{eqnarray}
due to the fact that $a_{0}^{(r)} = 0$ for $r > 0$, and $a_{r}^{(r)} = 1 = a_{r+1}^{(r+1)}$. Now substituting it back to \eqref{eq:rthDerivativeGeoMGFInductionStep}, we have
\begin{IEEEeqnarray}{rCl}
& & \sum_{l=1}^{r} \left( a_{l - 1}^{(r)} + l \cdot a_{l}^{(r)} \right) \left( \frac{\drm^{l}}{\drm s^{l}} \tilde{M}_X(s) \right) (s-1)^{l}  \nonumber \\
& & \qquad \quad \negmedspace{} + a_{r+1}^{(r+1)} \left( \frac{\drm^{r+1}}{\drm s^{r+1}} \tilde{M}_X(s) \right) (s-1)^{r+1} \nonumber \\
& = & \sum_{l=1}^{r+1} a_{l}^{(r+1)} \left( \frac{\drm^{l}}{\drm s^{l}} \tilde{M}_X(s) \right) (s-1)^{l} = f(s; r+1),
\end{IEEEeqnarray}
where the first equality follows from the recurrence \eqref{eq:recurrenceGeorthMoment}.

Now that we have established \eqref{eq:rthDerivativeGeoMGF} (through \eqref{eq:recurrenceGeorthMoment}), we then need to solve for the $a_{l}^{(r)}$'s and $\frac{\drm^{l}}{\drm s^{l}} \tilde{M}_X(s)$. First, note the recurrence \eqref{eq:recurrenceGeorthMoment} (together with the boundary conditions) exactly coincides with that of the Stirling number of the second kind for which this recurrence can be solved using generating functions (\cite[\S 1.6]{Wil1994}):
\begin{equation}
a_{l}^{(r)} = {r \brace l} = \sum_{m=1}^{l} \left( -1 \right)^{l-m} \frac{m^{r}}{m! \left( l - m \right)!}, ~ r, l \geq 0.
\end{equation}
Second, it can be verified that $\frac{\drm^{l}}{\drm s^{l}} \tilde{M}_X(s) = \frac{q}{1-q} \left( -1 \right)^{l} l! \cdot s^{- \left( l + 1 \right)}$. 
Finally, substitution and evaluation of \eqref{eq:rthDerivativeGeoMGF} at $t = 0$ (equivalently, at $\left. s(t) \right\vert_{t = 0} = q$) yields the desired formula.
\end{IEEEproof}


\section{Proofs from \S \ref{sec:delayRLNC}}
\label{app:delayRLNC}

We first give two lemmas regarding some properties of the \textit{regularized incomplete beta function} $I_{x}(a,b)$ that will be used in the proofs of Prop.\ \ref{prop:boundsOnEY2} (\S\ref{ssec:rlncbounds}) and Prop.\ \ref{prop:extremaldelayRLNC} (\S\ref{ssec:extremalchannels}) respectively. 

The following definitions apply for $a>0$, $b>0$ and $x \in (0,1)$:
\begin{equation} 
\label{eq:defofIncompleteBetaFunc}
I_x(a,b)= \frac{B(x; a, b)}{B(a,b)}, ~B(x; a,b) = \int_{0}^{x} t^{a-1} (1-t)^{b-1} \drm t,
\end{equation}
where $B(x; a,b)$ is the incomplete beta function and $B(a,b) = B(1;a,b)$ is the beta function.

\begin{lemma} 
\label{lem:regbetamono}
$I_x(a,b)$ is increasing in $b$.
\end{lemma}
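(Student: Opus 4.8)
The plan is to prove the slightly stronger statement that $I_x(a,b)$ is \emph{strictly} increasing in $b$ for every fixed $a>0$ and $x\in(0,1)$ — this is what is actually needed in the proof of Prop.\ \ref{prop:boundsOnEY2} — by logarithmic differentiation under the integral sign. Starting from $I_x(a,b)=B(x;a,b)/B(a,b)$ with $B(x;a,b)$ and $B(a,b)$ as in \eqref{eq:defofIncompleteBetaFunc}, and using $\frac{\partial}{\partial b}\big(t^{a-1}(1-t)^{b-1}\big)=t^{a-1}(1-t)^{b-1}\log(1-t)$, I would differentiate $\log I_x(a,b)$ to obtain
\[
\frac{\partial}{\partial b}\log I_x(a,b)
=\frac{\int_0^x t^{a-1}(1-t)^{b-1}\log(1-t)\,\drm t}{\int_0^x t^{a-1}(1-t)^{b-1}\,\drm t}
-\frac{\int_0^1 t^{a-1}(1-t)^{b-1}\log(1-t)\,\drm t}{\int_0^1 t^{a-1}(1-t)^{b-1}\,\drm t}.
\]
In probabilistic terms, if $T\sim\mathrm{Beta}(a,b)$ this right-hand side equals $\Ebb[\log(1-T)\mid T\le x]-\Ebb[\log(1-T)]$, and the goal reduces to showing this difference is strictly positive.

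For the positivity step I would condition the unconditional expectation on $\{T\le x\}$ versus $\{T>x\}$: with $p\equiv\Pbb(T\le x)=I_x(a,b)\in(0,1)$,
\[
\Ebb[\log(1-T)\mid T\le x]-\Ebb[\log(1-T)]
=(1-p)\Big(\Ebb[\log(1-T)\mid T\le x]-\Ebb[\log(1-T)\mid T>x]\Big).
\]
Since $t\mapsto\log(1-t)$ is strictly decreasing on $(0,1)$, on $\{T\le x\}$ we have $\log(1-T)>\log(1-x)$ on a set of positive probability, so $\Ebb[\log(1-T)\mid T\le x]>\log(1-x)$; symmetrically $\Ebb[\log(1-T)\mid T>x]<\log(1-x)$. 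As $1-p>0$ for $x<1$, the bracketed difference is strictly positive, hence $\frac{\partial}{\partial b}\log I_x(a,b)>0$ and therefore $\frac{\partial}{\partial b}I_x(a,b)>0$. (An equivalent, more ``probabilistic'' route avoiding differentiation: write $\mathrm{Beta}(a,b)\stackrel{\drm}{=}U/(U+V)$ for independent $U\sim\mathrm{Gamma}(a,1)$, $V\sim\mathrm{Gamma}(b,1)$, and for $b'>b$ couple $V'=V+W$ with $W\sim\mathrm{Gamma}(b'-b,1)$ independent; then $U/(U+V')\le U/(U+V)$ pointwise, so $\mathrm{Beta}(a,b')\leq_{\rm st}\mathrm{Beta}(a,b)$ and $I_x(a,b')=\Pbb(\mathrm{Beta}(a,b')\le x)\ge I_x(a,b)$, with strictness because $W$ exceeds any threshold with positive probability and $\Pbb(\mathrm{Beta}(a,b)>x)>0$.)

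The only technical point needing care — and the main, mild, obstacle — is justifying differentiation under the integral sign, since $\log(1-t)\to-\infty$ as $t\to1$. Here I would note that on any neighborhood $[b_0,b_1]\subset(0,\infty)$ of $b$ the integrand $t^{a-1}(1-t)^{b-1}|\log(1-t)|$ is dominated by $t^{a-1}(1-t)^{b_0-1}|\log(1-t)|$, and $\int_0^1(1-t)^{\beta-1}|\log(1-t)|\,\drm t<\infty$ for every $\beta>0$ (substitute $s=1-t$ and use $|\log s|=o(s^{-\varepsilon})$ with $\varepsilon<\beta$); dominated convergence then legitimizes the interchange for both $B(x;a,b)$ and $B(a,b)$, and positivity of the denominators makes the quotient rule / log-derivative valid. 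Everything else is elementary.
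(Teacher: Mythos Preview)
Your proposal is correct and, at its core, follows the same route as the paper: differentiate $I_x(a,b)$ with respect to $b$ (you use the log-derivative, the paper uses the quotient rule — these are equivalent), reduce to showing
\[
\frac{\int_0^x t^{a-1}(1-t)^{b-1}\log(1-t)\,\drm t}{\int_0^x t^{a-1}(1-t)^{b-1}\,\drm t}
\;>\;
\frac{\int_0^1 t^{a-1}(1-t)^{b-1}\log(1-t)\,\drm t}{\int_0^1 t^{a-1}(1-t)^{b-1}\,\drm t},
\]
and then invoke the monotonicity of $t\mapsto\log(1-t)$. The execution of this last step differs: the paper differentiates the left-hand ratio again in $x$ and shows it is strictly decreasing (so its value at any $x<1$ exceeds its value at $x=1$), whereas you read the two sides as $\Ebb[\log(1-T)\mid T\le x]$ and $\Ebb[\log(1-T)]$ for $T\sim\mathrm{Beta}(a,b)$ and compare them via a clean conditioning identity. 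Your argument is a bit more economical (one differentiation instead of two) and the probabilistic interpretation is nice; the paper's argument is slightly more mechanical but avoids any appeal to the Beta law. Your parenthetical Gamma-coupling route is a genuinely different, self-contained alternative that bypasses differentiation entirely and yields the stochastic ordering $\mathrm{Beta}(a,b')\leq_{\rm st}\mathrm{Beta}(a,b)$ directly — this is arguably the cleanest proof of all, and also makes the strictness transparent. Finally, your attention to justifying differentiation under the integral sign (which the paper leaves implicit) is a welcome addition.
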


\begin{IEEEproof}
As 
\begin{equation}  
\frac{\partial}{\partial b} I_{x}(a,b) = \frac{B(a,b) \frac{\partial}{\partial b} B(x; a,b)  - B(x; a,b) \frac{\partial}{\partial b} B(a,b) }{B^{2}(a,b)},
\end{equation}
it suffices to show the positivity of the numerator in the above expression, which becomes:
\begin{eqnarray}
\int_{0}^{x} t^{a-1} (1-t)^{b-1} \log (1-t) \drm t \cdot \int_{0}^{1} t^{a-1} (1-t)^{b-1} \drm t  \nonumber \\
\negmedspace{} -  \int_{0}^{x} t^{a-1} (1-t)^{b-1} \drm t \cdot \int_{0}^{1} t^{a-1} (1-t)^{b-1} \log (1-t) \drm t.
\end{eqnarray}
Observe that due to the negativity of $ \log (1-t)$ for $t \in (0,1)$, the signs of the above four integrals are respectively: $-, +, +, -$, and hence we need to show:
\begin{IEEEeqnarray}  {rCl}
f(x;a,b) & \equiv & \frac{\int_{0}^{x} t^{a-1} (1-t)^{b-1} \log (1-t) \drm t}{\int_{0}^{x} t^{a-1} (1-t)^{b-1} \drm t} \nonumber \\
& > & \frac{\int_{0}^{1} t^{a-1} (1-t)^{b-1} \log (1-t) \drm t}{\int_{0}^{1} t^{a-1} (1-t)^{b-1} \drm t} = f(1;a,b).\IEEEeqnarraynumspace
\end{IEEEeqnarray}
It suffices to show $f(x;a,b)$ is decreasing in $x$ for $x \in (0,1)$, i.e., $\frac{\partial}{\partial x} f(x; a,b) < 0$.
Applying Leibniz's rule, we can compute this partial derivative, which would be negative if
\begin{IEEEeqnarray} {rCl}
& & \int_{0}^{x} t^{a-1} (1-t)^{b-1} \log (1-x)  \drm t \nonumber \\
& < &  \int_{0}^{x} t^{a-1} (1-t)^{b-1} \log (1-t) \drm t. 
\end{IEEEeqnarray}
It can be seen that $0 \leq \log (1-x) \leq \log (1-t)$ when $0 \leq t \leq x < 1$, which means the above inequality holds. Hence we have shown $I_x(a,b)$ is increasing in $b$ when $a > 0, b > 0 \text{ and }x \in (0,1)$.
\end{IEEEproof}

\begin{lemma} 
\label{lem:regbetalogconcave}
The function $\log I_{x}(a,b)$ is concave on $x \in (0,1)$, for integers $a \geq 1, b \geq 1$.
\end{lemma}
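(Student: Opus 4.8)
The plan is to reduce the assertion to the log-concavity of the integrand appearing in $B(x;a,b)$ and then exploit a tangent-line bound. Since $B(a,b)$ is a positive constant in $x$, we have $\log I_x(a,b) = \log B(x;a,b) - \log B(a,b)$, so it suffices to show that $x \mapsto \log B(x;a,b)$ is concave on $(0,1)$. Write $g(x) \equiv B(x;a,b) = \int_0^x h(t)\,\drm t$ with $h(t) \equiv t^{a-1}(1-t)^{b-1}$. On $(0,1)$ the function $h$ is continuous and strictly positive, so $g > 0$, $g$ is $C^2$, and $g'(x) = h(x) > 0$, $g''(x) = h'(x)$. Hence $(\log g)'' = \bigl(g''g - (g')^2\bigr)/g^2$, and concavity of $\log g$ on $(0,1)$ is equivalent to
\begin{equation}
h'(x) \int_0^x h(t)\,\drm t \le h(x)^2, \qquad x \in (0,1).
\end{equation}

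First I would dispose of the easy case: if $h'(x) \le 0$, the left side is $\le 0$ while the right side is $\ge 0$, so the inequality is immediate. It remains to treat $h'(x) > 0$. Here the key point is that $h$ is log-concave on $(0,1)$: $\log h(t) = (a-1)\log t + (b-1)\log(1-t)$ has second derivative $-(a-1)/t^2 - (b-1)/(1-t)^2 \le 0$, using $a \ge 1$ and $b \ge 1$. Concavity of $\log h$ gives the tangent bound $\log h(t) \le \log h(x) + \frac{h'(x)}{h(x)}(t-x)$ for all $t \in (0,1)$; exponentiating and setting $c \equiv h'(x)/h(x) > 0$ yields $h(t) \le h(x)\erm^{c(t-x)}$. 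Integrating over $t \in (0,x)$,
\begin{equation}
\int_0^x h(t)\,\drm t \le h(x)\int_0^x \erm^{c(t-x)}\,\drm t = h(x)\,\frac{1-\erm^{-cx}}{c} \le \frac{h(x)}{c} = \frac{h(x)^2}{h'(x)},
\end{equation}
which is exactly the desired inequality, completing the proof.

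I do not anticipate a serious obstacle; the only mildly delicate point is justifying the reduction to $g''g \le (g')^2$, which needs $g > 0$ and $g \in C^2$ on the \emph{open} interval $(0,1)$ — both clear from the positivity and smoothness of $h$ there (the endpoints, where $h$ may vanish or blow up, are excluded by the statement). It is worth noting that this argument in fact goes through verbatim for all real $a \ge 1$ and $b \ge 1$, not merely integers; the integrality in the hypothesis is incidental to how the lemma is used elsewhere.
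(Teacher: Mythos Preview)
Your proof is correct and takes a genuinely different route from the paper's. Both arguments make the same initial reduction to $h'(x)\int_0^x h(t)\,\drm t \le h(x)^2$ with $h(t)=t^{a-1}(1-t)^{b-1}$, but from there the paper proceeds by explicit case analysis: it locates the sign change of $h'$ at $x^*=(a-1)/(a+b-2)$, handles $x\ge x^*$ trivially, and for $x<x^*$ defines $f(x)$ as the difference of the two sides and studies $f$, $f'$, and $f''$ through further subcases ($a\le b$ versus $a>b$), ultimately invoking the integrality of $a,b$ to conclude $ab>a+b$ in the last step. Your argument instead exploits the single structural fact that $h$ is log-concave for $a,b\ge 1$, and dispatches the $h'(x)>0$ case in one line via the tangent (exponential-majorant) bound and an elementary integral. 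This is both shorter and strictly more general---as you note, it covers all real $a,b\ge 1$, whereas the paper's final quadratic estimate genuinely needs $a,b$ to be integers. Your approach is in fact the standard proof that the running integral of a log-concave density is log-concave; the paper's is a self-contained but more laborious direct verification.
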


\begin{IEEEproof}
We shall verify the second derivative is non-positive. Toward this, 
\begin{IEEEeqnarray}  {rCl}
\frac{\partial }{\partial x} \log I_{x}(a,b)  &=& \frac{1}{I_{x}(a,b)} \frac{\partial }{\partial x} I_{x}(a,b) \nonumber \\
\frac{\partial^{2} }{\partial x^{2}} \log I_{x}(a,b) &=&  \frac{I_{x}(a,b) \frac{\partial^{2} }{\partial x^{2}} I_{x}(a,b) - \left( \frac{\partial }{\partial x} I_{x}(a,b) \right)^{2} }{\left( I_{x}(a,b) \right)^{2}}.
\end{IEEEeqnarray} 
Hence $\frac{\partial^{2} }{\partial x^{2}} \log I_{x}(a,b) \leq 0 \Leftrightarrow I_{x}(a,b) \frac{\partial^{2} }{\partial x^{2}} I_{x}(a,b)  \leq \left( \frac{\partial }{\partial x} I_{x}(a,b) \right)^{2}$. 
As
\begin{eqnarray} 
& & \frac{\partial }{\partial x} I_{x}(a,b)  =  \frac{1}{B(a,b)} \frac{\partial }{\partial x} B(x; a,b), \nonumber \\
& & \frac{\partial^{2} }{\partial x^{2}} I_{x}(a,b)  =  \frac{1}{B(a,b)} \frac{\partial^{2} }{\partial x^{2}} B(x; a,b),
\end{eqnarray}
we need to show:
\begin{equation} 
\label{eq:ththth}
B(x;a,b) \frac{\partial^{2} }{\partial x^{2}} B(x; a,b) \leq \left( \frac{\partial }{\partial x} B(x; a,b) \right)^{2}.
\end{equation}
Recalling the definition of the incomplete beta function given in \eqref{eq:defofIncompleteBetaFunc}, we have 
\begin{IEEEeqnarray}   {rCl}
\frac{\partial }{\partial x} B(x; a,b) &=& x^{a-1} (1-x)^{b-1} \nonumber \\
\frac{\partial^{2} }{\partial x^{2}} B(x; a,b) &=& (a-1)x^{a-2} (1-x)^{b-1} \nonumber \\
& & \negmedspace{} - (b-1)x^{a-1} (1-x)^{b-2}.
\end{IEEEeqnarray}
After canceling $x^{a-2} (1-x)^{b-2}$, \eqref{eq:ththth} becomes
\begin{equation} 
((a-1)(1-x) - (b-1)x) \int_{0}^{x} t^{a-1} (1-t)^{b-1} \drm t \leq x^{a} (1-x)^{b}.
\end{equation}
Define $x^{*} \equiv \frac{a-1}{a-1 + b-1}$ to satisfy $(a-1)(1-x^{*}) - (b-1)x^{*} = 0$ so that $(a-1)(1-x) - (b-1)x \gtrless 0$ iff $x \lessgtr x^*$ (recall $a \geq 1, b \geq 1$). When $x \geq x^{*}$, the above inequality holds since the LHS is nonpositive.

It remains to consider $x < x^*$.  When $a=1$ and/or $b = 1$, the above inequality can be verified to hold for all $x$. Therefore, from now on assume $a > 1, b > 1$ and $0 < x < x^{*}$.  Note $x^* < 1$ under these assumptions. We seek to show 
\begin{IEEEeqnarray}{rCl}
f(x) & \equiv & ((a-1)(1-x) - (b-1)x) \nonumber \\
& & \cdot \int_{0}^{x} t^{a-1} (1-t)^{b-1} \drm t - x^{a} (1-x)^{b} \leq 0. 
\end{IEEEeqnarray}
Note $f(0) = 0$, thus it suffices to show $f'(x) \leq 0$, where 
\begin{IEEEeqnarray}{rCl}
\label{eq:yjyjyj}
f'(x) & = & (2-a-b) \int_{0}^{x} t^{a-1} (1-t)^{b-1} \drm t \nonumber \\
& & \negmedspace{} + x^{a-1} (1-x)^{b-1} (2x-1). 
\end{IEEEeqnarray}
The assumption $a>1$ and $b>1$ ensures the first term in the sum is nonpositive. Observe when $x \in (0, 1/2]$ the second term in \eqref{eq:yjyjyj} is nonpositive, making $f'(x) \leq 0$. Thus in the following we only need to discuss the regime $x \in (1/2, x^{*})$. We again consider two cases: $a \leq b$ and $a >b$.  When $a \leq b$ (equivalently, $x^* \leq 1/2$), there is no $x$ satisfying $x \in (1/2, x^{*})$.  When $a > b$ (equivalently $x^* > 1/2$), and in particular $1/2 < x < x^* < 1$.  Observe $f'(1/2)<0$, so it further suffices to show $f^{''}(x) < 0$ for the above regime of interest.  After some algebra, one can show that
\begin{IEEEeqnarray}{rCl}
f^{''}(x) & = & x^{a-2} (1-x)^{b-2} \nonumber \\
& & \negmedspace{} \cdot \left[ -(a+b) \left( x - \frac{a}{a+b} \right)^{2} + 1 -\frac{ab}{a+b} \right],\IEEEeqnarraynumspace
\end{IEEEeqnarray}
where the sign of $f^{''}(x)$ is determined by the expression in the brackets, which is a concave quadratic taking maximum value of $1-ab/(a+b)$ at $x = a/(a+b)$.  For integers $a,b$ satisfying $a>b>1$, observe $b > 1+b/a$ and thus $ab > a+b > 0$, and so the value of the quadratic at its maximum is negative, establishing $f^{''}(x) < 0$ for $x \in (1/2,x^*)$.  
\end{IEEEproof}

The following three propositions are used in the proof of Prop.\ \ref{prop:boundsOnEY3} (\S \ref{ssec:rlncbounds}). Specifically, Prop.\ \ref{prop:LogConcaveSuffConditions} gives some sufficient conditions for the CCDF of the maximum order statistic of $n-1$ iid continuous non-negative RVs to be logarithmically concave. Prop.\ \ref{prop:GammaLogConcave} verifies for $\mathrm{Gamma}(c,1)$ ($c \in \Nbb$), one of the sufficient conditions in Prop.\ \ref{prop:LogConcaveSuffConditions} is satisfied and hence it can be shown in Prop.\ \ref{prop:delaCalOptimizationGammaRV} that de la Cal's lower bound on $\Ebb \left[ \left( \tilde{Y}^{(c)}_{n:n}\right)^{r} \right]$ can be optimized by finding its unique stationary point.

\begin{proposition}
\label{prop:LogConcaveSuffConditions}
Let an integer $n \geq 2$ be given, for a continuous non-negative random variable $Z$ with distribution and density functions denoted by $F_{Z}$ and $f_{Z}$ respectively, the function $1 - F_{Z}^{n-1}(t)$ (which can be viewed as the CCDF of the maximum order statistic of $n-1$ such iid $Z$'s) is logarithmically concave in $t$ if any of the following conditions holds, for $t$ in a convex subset of $\Rbb$ such that $F_{Z}(t) \in (0, 1)$:
\begin{eqnarray}
\label{eq:SuffCondLogConcavity}
(n-2) \frac{f_{Z}^{2}(t)}{F_{Z}(t)} + \frac{\drm}{\drm t} f_{Z}(t) & \geq & 0 \nonumber \\
\frac{\drm}{\drm t} f_{Z}(t) & \geq & 0  \nonumber \\
\frac{f_{Z}^{2}(t)}{1 - F_{Z}(t)} + \frac{\drm}{\drm t} f_{Z}(t) & \geq & 0.
\end{eqnarray}
Furthermore, if any of the above inequalities is strict, then strict logarithmic concavity holds.
\end{proposition}

\begin{IEEEproof}
By definition, logarithmic concavity means the logarithm of the (positive) function is concave, which in the case of a twice differentiable function $g(t)$ defined on a convex domain, is equivalent to verifying \cite[\S 3.5.2 pp.\ 105]{BoyVan2004} $g(t)  g''(t) - (g'(t))^{2} \leq 0$, and for strictly log-concavity, it suffices to verify this holds with strict inequality. Substituting $1 - F_{Z}^{n-1}(t)$ for $g(t)$, we have
\begin{equation}
\label{eq:logconcaveCCDFMaxos}
\left( 1 - F_{Z}^{n-1}(t) \right) \frac{\drm^{2}}{\drm t^{2}} F_{Z}^{n-1}(t) + \left( \frac{\drm}{\drm t} F_{Z}^{n-1}(t) \right)^{2} \geq 0.
\end{equation}
First, observe that if $\frac{\drm^{2}}{\drm t^{2}} F_{Z}^{n-1}(t) \geq 0$, then \eqref{eq:logconcaveCCDFMaxos} evidently holds. Since
\begin{equation}
\frac{\drm^{2}}{\drm t^{2}} F_{Z}^{n-1}(t) = (n-1) F_{Z}^{n-2}(t) \left[ (n-2) \frac{f_{Z}^{2}(t)}{F_{Z}(t)} + \frac{\drm}{\drm t} f_{Z}(t)\right],
\end{equation}
this yields the first condition in the proposition statement (i.e., the above bracketed terms is non-negative).

Next, we show the second or the third condition in the proposition statement will make \eqref{eq:logconcaveCCDFMaxos} hold. For notational simplicity, we may suppress the dependence on $t$ and also write $f_{Z}'$ for $\frac{\drm}{\drm t} f_{Z}(t)$. Substituting the expressions of the derivatives of $F_{Z}^{n-1}$, \eqref{eq:logconcaveCCDFMaxos} becomes
\begin{IEEEeqnarray}{rCl}
& & (n-1)F_{Z}^{n-2} \left[ (n-2) \frac{f_{Z}^{2}}{F_{Z}} + f_{Z}' \right] (1 - F_{Z}^{n-1}) \nonumber \\
& & \qquad \qquad \qquad \qquad \qquad \!\! \negmedspace{} + (n-1)^{2} F_{Z}^{2 (n-2)} f_{Z}^{2} \geq 0.\IEEEeqnarraynumspace
\end{IEEEeqnarray}
Multiplying through by $F_{Z}$, canceling $(n-1)F_{Z}^{n-2}$ and simplifying, we get
\begin{equation}
\label{eq:logconcaveCCDFMaxosTransf}
\left( 1 - F_{Z}^{n-1} \right) \left[ \left( \frac{n-1}{1 - F_{Z}^{n-1}}  -  1 \right) f_{Z}^{2} + F_{Z} f_{Z}' \right] \geq 0.
\end{equation}
We now focus on conditions ensuring the terms in the brackets in \eqref{eq:logconcaveCCDFMaxosTransf} to be non-negative.

Define a function of $n$ (parameterized by $t$)
\begin{equation}
h(n; t) \equiv \frac{n-1}{1 - F_{Z}^{n-1}}  -  1.
\end{equation}
Observe for the regimes of interest ($n \geq 2$, $F_{Z} \in (0,1)$), $h(n; t) \geq 0$, which implies the second condition in the proposition statement (i.e., $f_{Z}' \geq 0$) will make \eqref{eq:logconcaveCCDFMaxosTransf} (hence \eqref{eq:logconcaveCCDFMaxos}) hold.

The derivative of $h(n; t)$ (w.r.t.\ n) is
\begin{IEEEeqnarray}{rCl}
\label{eq:dhdn}
\frac{\partial}{\partial n} h(n; t) 
& = & \frac{ \left(1 - F_{Z}^{n-1}\right) - (n-1) \left( - F_{Z}^{n-1} \log F_{Z} \right) }{\left( 1 - F_{Z}^{n-1} \right)^{2}}  \nonumber \\
& = & \frac{1 - F_{Z}^{n-1} + F_{Z}^{n-1} \log F_{Z}^{n-1} }{\left( 1 - F_{Z}^{n-1} \right)^{2}},
\end{IEEEeqnarray}
whose numerator can be verified to be decreasing in $F_{Z}^{n-1}$ for $F_{Z}^{n-1} \in (0, 1)$. Since $F_{Z}^{n-1}$ is itself decreasing in $n$ (for fixed $t$), this means the numerator of \eqref{eq:dhdn} is increasing in $n$ for $n \geq 1$ (and fixed $t$). Evaluation of \eqref{eq:dhdn}'s numerator at $n = 1$ yields $0$: this allows us to conclude that $\frac{\partial}{\partial n} h(n; t) \geq 0$ for all $n \geq 2$.

Now that we know $h(n; t)$ is non-negative and is increasing in $n$, in order for \eqref{eq:logconcaveCCDFMaxosTransf} to hold, we only need to verify the terms in its brackets when $n = 2$ is non-negative i.e., 
\begin{equation}
\frac{F_{Z}}{1 - F_{Z}} f_{Z}^{2} + F_{Z} f_{Z}' \geq 0,
\end{equation}
which after canceling $F_{Z}$ is the same as the third condition in the proposition statement.
\end{IEEEproof}

\begin{proposition}
\label{prop:GammaLogConcave}
The CCDF of the maximum order statistic of $n-1$ ($n \in \Nbb$) iid $\mathrm{Gamma}(c,1)$ (for $c \in \Nbb$) RVs is log-concave on $(0, \infty)$.
\end{proposition}
\begin{IEEEproof}
We shall show, for $\mathrm{Gamma}(c,1)$, the third condition in Prop.\ \ref{prop:LogConcaveSuffConditions} is satisfied. 
Toward this, we need to know the CDF, PDF, and the derivative of the PDF of $\tilde{Y}^{(c)} \sim \mathrm{Gamma}(c,1)$. The CDF is usually given as $F_{\tilde{Y}^{(c)}}(\tilde{t}_{c}) = 1 - Q(c, \tilde{t}_{c})$ for $Q(c, \tilde{t}_{c})$ the CCDF of $\mathrm{Gamma}(c,1)$ RV's which is called the \textit{regularized Gamma function}. The PDF and its derivative can then be computed as
\begin{IEEEeqnarray}{rCl}
\label{eq:PDFAndDerivativeTildeYc}
f_{\tilde{Y}^{(c)}} (\tilde{t}_{c})  & =  & \frac{1}{\Gamma(c)}   \frac{\tilde{t}_{c}^{c-1}}{\erm^{\tilde{t}_{c}}}, \nonumber \\
\frac{\drm}{\drm \tilde{t}_{c}} f_{\tilde{Y}^{(c)}} (\tilde{t}_{c})   & =   & \frac{1}{\Gamma(c)}   \frac{\tilde{t}_{c}^{c-2} \left( c - 1 - \tilde{t}_{c} \right)}{\erm^{\tilde{t}_{c}}}.
\end{IEEEeqnarray}
Recall $Q(c, x) \equiv \frac{\Gamma(c, x)}{\Gamma(c)}$ where $\Gamma(c, x) \equiv \int_{x}^{\infty} t^{c-1} \erm^{-t} \drm t$ is the upper incomplete Gamma function. For our problem since $c$ is a natural number, we leverage a connection between the CDF of Poisson and Gamma distributions. Specifically, for $c \in \Nbb$, the CCDF of $\mathrm{Gamma}(c,1)$ evaluated at $\lambda > 0$ namely $Q(c, \lambda)$, is equal to the CDF of $\mathrm{Poi}(\lambda)$ evaluated at $c-1$ namely $\erm^{-\lambda} \sum_{i=0}^{c-1}\frac{\lambda^{i}}{i!}$. This equivalence can be verified by working with $Q(c,\lambda)$ using integration by parts (and mathematical induction). Therefore we have $F_{\tilde{Y}^{(c)}}(\tilde{t}_{c})$ re-expressed as
\begin{equation}
\label{eq:CDFTildeYc}
F_{\tilde{Y}^{(c)}}(\tilde{t}_{c}) = 1 - \erm^{-\tilde{t}_{c}} \sum_{i=0}^{c-1} \frac{\tilde{t}_{c}^{i}}{i!}, \text{ for } c \in \Nbb, \tilde{t}_{c} > 0.
\end{equation}
Denote by $h_{Z, 3}(t)$ the function given in the third equation in \eqref{eq:SuffCondLogConcavity} in the statement of Prop.\ \ref{prop:LogConcaveSuffConditions}. Specializing $h_{Z, 3}(t)$ to our problem using the expressions from \eqref{eq:PDFAndDerivativeTildeYc} and \eqref{eq:CDFTildeYc} and observing $\Gamma(c) = (c-1)!$ for $c \in \Nbb$ yields (after some algebra)
\begin{equation}
\label{eq:ThirdSuffCondTransf}
(c-1)!  \frac{\erm^{\tilde{t}_{c}}}{\tilde{t}_{c}^{c-2}} \sum_{i=0}^{c-1} \frac{\tilde{t}_{c}^{i}}{i!} \cdot h_{\tilde{Y}^{(c)}, 3}(\tilde{t}_{c}) 
 \!=\!   \frac{\tilde{t}_{c}^{c}}{(c-1)!  } + \left( c - 1 - \tilde{t}_{c} \right) \sum_{i=0}^{c-1} \frac{\tilde{t}_{c}^{i}}{i!}.
\end{equation}
Our goal is to show $h_{\tilde{Y}^{(c)}, 3}(\tilde{t}_{c})$ is non-negative. Equivalently, we need to show the RHS of \eqref{eq:ThirdSuffCondTransf} is so. We have

\begin{eqnarray}
\label{eq:polyWithPositiveCoeff}
&  &  \frac{\tilde{t}_{c}^{c}}{(c-1)!  } + \left( c - 1 - \tilde{t}_{c} \right) \sum_{i=0}^{c-1} \frac{\tilde{t}_{c}^{i}}{i!} \nonumber \\
& = &  \sum_{i=0}^{c-1} \frac{c-1}{i!} \tilde{t}_{c}^{i}   -  \sum_{i=0}^{c-2} \frac{\tilde{t}_{c}^{i+1}}{i!} \nonumber \\
& \stackrel{i' \equiv i + 1}{=} & \sum_{i=0}^{c-1} \frac{c-1}{i!} \tilde{t}_{c}^{i}   -  \sum_{i' = 1}^{c-1} \frac{\tilde{t}_{c}^{i'}}{(i'-1)!} \nonumber \\
& = &  (c - 1) + \sum_{i=1}^{c-1} \left( \frac{c-1}{i!}  -  \frac{1}{(i-1)!} \right) \tilde{t}_{c}^{i}  \nonumber \\
& = & \sum_{i=0}^{c-2} \frac{c - 1 - i}{i!} \tilde{t}_{c}^{i},
\end{eqnarray}
which is a polynomial in $\tilde{t}_{c}$ of order $c-2$ with all the coefficients being positive: this verifies, when $c \geq 2$, the non-negativeness of $h_{\tilde{Y}^{(c)}, 3}(\tilde{t}_{c})$ for the domain of interest. When $c = 1$, the RHS of \eqref{eq:ThirdSuffCondTransf} can be verified to equal $0$. Alternatively, we might address the $c = 1$ case by recognizing that $\mathrm{Gamma}(1,1)$ is $\mathrm{Exp}(1)$ and computing $h_{Z, 3}(t)$ using the functions associated with unit rate exponential RV's (instead of \eqref{eq:PDFAndDerivativeTildeYc} and \eqref{eq:CDFTildeYc}), which can be easily shown to be $0$, hence fulfilling the third sufficient condition in Prop.\ \ref{prop:LogConcaveSuffConditions} as well.
\end{IEEEproof}

\begin{proposition}
\label{prop:delaCalOptimizationGammaRV}
de la Cal's lower bound on the $r^{\rm th}$ moment of the maximum order statistic $\tilde{Y}^{(c)}_{n:n}$ for $\tilde{Y}^{(c)}_{j \in [n]} \sim \mathrm{Gamma}(c,1)$ (for $c \in \Nbb$) may be maximized (i.e., made the tightest) by finding its unique stationary point for $\tilde{t}_{c} \geq \left( \Ebb \left[ \left( \tilde{Y}^{(c)} \right)^{r} \right] \right)^{\frac{1}{r}}$. More precisely, this $\tilde{t}_{c}^{*}$ is solved from the following equation on $[\prod_{i=1}^{r} \left(c - 1 + i \right)^{\frac{1}{r}}, \infty)$, where $F_{\tilde{Y}^{(c)}}$ and $f_{\tilde{Y}^{(c)}}$ denote the CDF and PDF of $\mathrm{Gamma}(c,1)$.
\begin{IEEEeqnarray}{rCl}
\label{eq:delaCalLBGammac1Derivative}
& & \left(1 - F_{\tilde{Y}^{(c)}}(\tilde{t}_{c})^{n-1} \right) r \tilde{t}_{c}^{r-1} \nonumber \\
& = & (n-1) \left( \tilde{t}_{c}^{r} - \prod_{i=1}^{r}(c-1+i) \right) F_{\tilde{Y}^{(c)}}(\tilde{t}_{c})^{n-2} f_{\tilde{Y}^{(c)}}(\tilde{t}_{c}).\IEEEeqnarraynumspace
\end{IEEEeqnarray}
\end{proposition}

\begin{IEEEproof}
Recall de la Cal's lower bound for $Z_{n:n}$, the maximum order statistics of $n$ ($n \in \Nbb$) iid random variables $Z_{j \in [n]}$, is given by 
\begin{equation}
\label{eq:delaCalLB}
\Ebb[Z_{n:n}] \geq t_{Z} -  (t_{ Z}- \Ebb[Z]) F_{Z}^{n-1} (t_{Z}) \equiv l_{Z}(t_{Z}), \forall t_{Z} \geq \Ebb[Z],
\end{equation}
where $Z \stackrel{\rm d}{=} Z_{j \in [n]}$, and thus $\Ebb[Z]$ and $F_{Z}$ denote the common mean and CDF of iid $Z_{j}$'s, respectively.

{\bf First}, we start by extending the bounding to the $r^{\rm th}$ ($ r \in \Nbb$) moment of a RV. Denote by $l_{Z}(t_{Z})$ to be the RHS of \eqref{eq:delaCalLB}. If each $Z_{j}$ is itself the $r^{\rm th}$ power of some \textit{non-negative} RV, $V_{j}^{r}$, applying \eqref{eq:delaCalLB} to $\left(V^{r}\right)_{j}, j \in [n]$ gives
\begin{equation}
\label{eq:delaCalLBrthmoment}
\Ebb[\left(V^{r}\right)_{n:n}] \geq t_{V^{r}} -  \left(t_{V^{r}} - \Ebb[\left(V^{r}\right)] \right) F_{\left(V^{r}\right)}^{n-1} (t_{V^{r}}) = l_{V^{r}}(t_{V^{r}}),
\end{equation}
for all $ t_{V^{r}} \geq \Ebb[\left(V^{r}\right)]$,
where similarly $\left(V^{r}\right) \stackrel{\rm d}{=} \left(V^{r}\right)_{j \in [n]}$, and $\Ebb[\left(V^{r}\right)]$ and $F_{\left(V^{r}\right)}$ denote the common mean and CDF of iid $\left(V^{r}\right)_{j}$'s, respectively. Note here and henceforth we may sometimes write $\left(V^{r}\right)$ to highlight it is in itself the RV of interest, whereas $V^{r}$ (i.e., with no parenthesis) denotes the $r^{\rm th}$ power of the random variable $V$ (namely a function of the RV of interest). Now observing 
\begin{equation}
F_{\left(V^{r}\right)} (t_{V^{r}}) 
= \Pbb\left( \left(V^{r}\right) \leq t_{V^{r}} \right) 
= \Pbb\left( V \leq t_{V^{r}}^{\frac{1}{r}}\right)
= F_{V} (t_{V^{r}}^{\frac{1}{r}}),
\end{equation}
where the second equality follows from the non-negativeness of the RV, and further relabeling $t_{V^{r}}^{\frac{1}{r}}$ as $\tilde{t}_{V} \geq 0$, we can re-express $l_{V^{r}}(t_{V^{r}})$ (defined as the RHS of \eqref{eq:delaCalLBrthmoment}) by a new function
\begin{equation}
\tilde{l}_{V} (\tilde{t}_{V}) \equiv \tilde{t}_{V}^{r} - \left( \tilde{t}_{V}^{r} - \Ebb\left[ V^{r} \right] \right) F_{V}^{n-1} (\tilde{t}_{V}), \forall \tilde{t}_{V} \geq \left( \Ebb\left[ V^{r} \right] \right)^{\frac{1}{r}},
\end{equation}
where $V \stackrel{\rm d}{=} V_{j \in [n]}$, and $\Ebb\left[ V^{r} \right]$ and $F_{V}$ denote the common $r^{\rm th}$ moment and CDF of iid $V_{j}$'s, respectively.

For non-negative RVs, it always holds that $\left(V^{r}\right)_{n:n} = V_{n:n}^{r}$, and thus we have obtained the de la Cal's bound applied to the $r^{\rm th}$ moment of $V_{n:n}$
\begin{equation}
\label{eq:delaCalLBrthmomentReexp}
\Ebb\left[V_{n:n}^{r} \right] \geq \tilde{l}_{V} (\tilde{t}_{V}), \text{ for all } \tilde{t}_{V} \geq \left( \Ebb\left[ V^{r} \right] \right)^{\frac{1}{r}}.
\end{equation}
Note, when $r = 1$, $\tilde{l}_{V} (\tilde{t}_{V})$ (\eqref{eq:delaCalLBrthmomentReexp}) recovers $l_{Z}(t_{Z})$ (\eqref{eq:delaCalLB}).

{\bf Second}, we shall show de la Cal's bound for the $r^{\rm th}$ moment of $\tilde{Y}^{(c)}_{n:n}$ (recall each $\tilde{Y}_{j}^{(c)} \sim \mathrm{Gamma}(c,1)$) after a translation is strictly logarithmically concave, which then implies there exists a unique stationary point $\tilde{t}_{c}^{*}$ maximizing de la Cal's bound.

Specializing \eqref{eq:delaCalLBrthmomentReexp} to the $r^{\rm th}$ moment of the maximum order statistic of iid $\tilde{Y}_{j \in [n]}^{(c)} \sim \mathrm{Gamma}(c,1)$ (for $c \in \Nbb$) with free variable denoted $\tilde{t}_{c}$, we have
\begin{IEEEeqnarray}{rCl}
\label{eq:delaCalLBGammac1}
& & \Ebb \left[ \left( \tilde{Y}^{(c)}_{n:n}\right)^{r} \right] 
= \Ebb \left[ \left( \tilde{Y}^{(c)}\right)^{r}_{n:n} \right] \nonumber \\
& \geq & \tilde{t}_{c}^{r} - \left( \tilde{t}_{c}^{r} - \Ebb \left[ \left( \tilde{Y}^{(c)} \right)^{r} \right] \right) F_{\tilde{Y}^{(c)}}^{n-1} (\tilde{t}_{c}) 
\equiv \tilde{l}_{\tilde{Y}^{(c)}} (\tilde{t}_{c}),\IEEEeqnarraynumspace
\end{IEEEeqnarray}
for all $ \tilde{t}_{c} \geq \left( \Ebb \left[ \left( \tilde{Y}^{(c)} \right)^{r} \right] \right)^{\frac{1}{r}}$, where $\tilde{Y}^{(c)} \stackrel{\drm}{=}\tilde{Y}^{(c)}_{j \in [n]}$, and $\Ebb \left[ \left( \tilde{Y}^{(c)} \right)^{r} \right]$ and $F_{\tilde{Y}^{(c)}}$ denote the common $r^{\rm th}$ moment and CDF of iid $\tilde{Y}^{(c)}_{j}$'s, respectively. Note the $r^{\rm th}$ moment of $\mathrm{Gamma}(\alpha,\beta)$ (using the convention (``shape'', ``rate'') for the parameters $(\alpha, \beta)$) has a closed-form expression $\prod_{i=1}^{r} (\alpha - 1 + i)/\beta^{r}$, which gives $\Ebb \left[ \left( \tilde{Y}^{(c)} \right)^{r} \right] = \prod_{i=1}^{r} (c-1+i)$, and thus $\tilde{t}_{c} \in [\prod_{i=1}^{r} \left(c - 1 + i \right)^{\frac{1}{r}}, \infty)$. Also, recognize the lower bound in \eqref{eq:delaCalandRossSecFourA} is just \eqref{eq:delaCalLBGammac1} by expressing $F_{\tilde{Y}^{(c)}}(\tilde{t}_{c})$ as $1 - Q(c,\tilde{t}_{c})$ and relabeling $\tilde{t}_{c}$ as $t$.

Applying a constant (in $\tilde{t}_{c}$) translation, we get
\begin{IEEEeqnarray}{rCl}
\label{eq:delaCalLBTranslated}
& & \tilde{l}_{\tilde{Y}^{(c)}} (\tilde{t}_{c}) - \Ebb \left[ \left( \tilde{Y}^{(c)} \right)^{r} \right] \nonumber \\
& = & \left( \tilde{t}_{c}^{r} - \Ebb \left[ \left( \tilde{Y}^{(c)} \right)^{r} \right] \right)    \left( 1 - F_{\tilde{Y}^{(c)}}^{n-1} (\tilde{t}_{c}) \right).
\end{IEEEeqnarray}
We shall show \eqref{eq:delaCalLBTranslated} is log-concave. Recall log-concavity is preserved under the product of two functions (\cite[3.5.2 pp.\ 105]{BoyVan2004}) and it is not hard to verify that strict log-concavity is preserved if furthermore at least one of the functions is strictly log-concave. We use the following two steps to show the two multiplicative terms in the RHS of \eqref{eq:delaCalLBTranslated} are (strictly) log-concave: one for each term.

\underline{Step 1}: $\tilde{t}_{c}^{r} - \Ebb \left[ \left( \tilde{Y}^{(c)} \right)^{r} \right]$  is strictly log-concave in $\tilde{t}_{c}$ for $\tilde{t}_{c} \in [\prod_{i=1}^{r} \left(c - 1 + i \right)^{\frac{1}{r}}, \infty)$.

To verify a twice differentiable univariate function $g(t)$ defined on a convex domain to be (strictly) log-concave, we need to verify (strict inequality suffices [but is not necessary] for verifying strict log-concavity) (\cite[\S 3.5.2 pp.\ 105]{BoyVan2004})
\begin{equation}
\label{eq:CondForLogConcavity}
g(t) g'' (t) - (g'(t))^{2} \leq 0.
\end{equation}
Substituting $\tilde{t}_{c}$ for $t$ and $\tilde{t}_{c}^{r} - \Ebb \left[ \left( \tilde{Y}^{(c)} \right)^{r} \right]$ for $g(t)$, the LHS of \eqref{eq:CondForLogConcavity} becomes
\begin{equation}
- r \tilde{t}_{c}^{r-2} \left( \tilde{t}_{c}^{r} + (r-1) \Ebb \left[ \left( \tilde{Y}^{(c)} \right)^{r} \right] \right) < 0,
\end{equation}
which verifies the strict log-concavity of $\tilde{t}_{c}^{r} - \Ebb \left[ \left( \tilde{Y}^{(c)} \right)^{r} \right]$ for the domain of interest.

\underline{Step 2}: $1 - F_{\tilde{Y}^{(c)}}^{n-1} (\tilde{t}_{c})$ is log-concave in $\tilde{t}_{c}$ for $\tilde{t}_{c} \in  [\prod_{i=1}^{r} \left(c - 1 + i \right)^{\frac{1}{r}}, \infty)$.

This follows from Prop.\ \ref{prop:GammaLogConcave}.

Finally, we can verify the RHS of \eqref{eq:delaCalLBGammac1} evaluated at $\tilde{t}_{c} = \prod_{i=1}^{r} \left(c - 1 + i \right)^{\frac{1}{r}}$ and $\tilde{t}_{c} \to \infty$ (applying the L'H\^opital's rule) both give $\Ebb \left[ \left( \tilde{Y}^{(c)} \right)^{r} \right]$, the trivial lower bound on $\Ebb \left[ \left( \tilde{Y}^{(c)}_{n:n}\right)^{r} \right]$, this means the de la Cal's bound can indeed be made the tightest over the (interior of the) domain of interest, where the optimizer $\tilde{t}_{c}^{*}$ is the unique stationary point of \eqref{eq:delaCalLBGammac1}. Differentiation of \eqref{eq:delaCalLBGammac1} yields \eqref{eq:delaCalLBGammac1Derivative} and the proof is completed.
\end{IEEEproof}

\begin{IEEEproof}[Proof of Prop.\ \ref{prop:boundsOnEY3}]
The ordering $\frac{1}{\phi(q)^{r}} \Ebb\left[ \left( \tilde{Y}^{(c)}_{n:n}\right)^{r}\right] \leq \Ebb\left[ \left( {Y}^{(c)}_{n:n}\right)^{r}\right] \leq \sum_{p=0}^{r} \binom{r}{p} \frac{1}{\phi(q)^{p}} \Ebb\left[ \left( \tilde{Y}^{(c)}_{n:n}\right)^{p}\right] c^{r-p}$ follows from Prop.\ \ref{prop:stochorder}. It remains to show the bounds as given in \eqref{eq:delaCalandRossSecFourA} are valid. The bounds we employ are introduced in Appendix \ref{app:delaCalAndRossBounds}. The lower bound is an application of \eqref{eq:delaCallb}. The derivation of the bound as well as its optimization can be found in Prop.\ \ref{prop:delaCalOptimizationGammaRV}.

The upper bound is a direct application of \eqref{eq:rossub} to $\Ebb\left[ \left( \tilde{Y}^{(c)}_{n:n}\right)^{r}\right]$, followed by an exchange of the order of integration, and repeated use of the recursion $Q(c+1,x) = Q(c,x) + x^c \erm^{-x}/\Gamma(c+1)$.  In particular, applying the upper bound and exchanging the order of integration gives:
\begin{eqnarray}
\label{eq:RossHighermomentClosedForm}
\Ebb\left[\left(\tilde{Y}_{n:n}^{(c)}\right)^{r}\right] 
\!\!& \leq &\!\! s + n \int_s^{\infty} \Pbb \left(\left(\tilde{Y}^{(c)}\right)^{r} > t \right) \drm t \nonumber \\
&=& s + n \int_s^{\infty} Q(c, t^{\frac{1}{r}}) \drm t \nonumber \\
&=& s + \frac{n}{\Gamma(c)} \int_s^{\infty} \left( \int_{t^{\frac{1}{r}}}^{\infty} u^{c-1} \erm^{-u} \drm u \right) \drm t \nonumber \\
&=& s + \frac{n}{\Gamma(c)} \int_{s^{\frac{1}{r}}}^{\infty} \left( \int_s^{u^{r}} \drm t \right) u^{c-1} \erm^{-u} \drm u \nonumber \\
&=& s + \frac{n}{\Gamma(c)} \int_{s^{\frac{1}{r}}}^{\infty} (u^{r}-s) u^{c-1} \erm^{-u} \drm u.
\end{eqnarray}
Now after the last step of the following equation, repeated application of the recursion mentioned above gives the upper bound $u_{\tilde{Y}}(s,n,c,r)$ in \eqref{eq:delaCalandRossSecFourA}.
\begin{IEEEeqnarray}{rCl}
\IEEEeqnarraymulticol{3}{l}
{\Ebb\left[\left(\tilde{Y}_{n:n}^{(c)}\right)^{r}\right]
}\nonumber \\* \quad
& \leq & s + \frac{n}{\Gamma(c)} \left( \int_{s^{\frac{1}{r}}}^{\infty} u^{r+c-1} \erm^{-u} \drm u - s \int_{s^{\frac{1}{r}}}^{\infty} u^{c-1} \erm^{-u} \drm u \right) \nonumber \\
&=& s + \frac{n}{\Gamma(c)} \left( \Gamma(c+r,s^{\frac{1}{r}}) - s \Gamma(c,s^{\frac{1}{r}}) \right) \nonumber \\
&=& s + n \left( \frac{\Gamma(c+r)}{\Gamma(c)} Q(c+r,s^{\frac{1}{r}}) - s Q(c,s^{\frac{1}{r}}) \right).
\end{IEEEeqnarray}
The optimal $s^*$ may be found as (see e.g., \eqref{eq:rossuboptcond}):
\begin{IEEEeqnarray}{rCl}
n \Pbb\left(\left(\tilde{Y}^{(c)}\right)^{r} > s^*\right) = 1  & \Rightarrow &  n Q\left(c, \left(s^*\right)^{\frac{1}{r}}\right) = 1 \nonumber \\
& \Rightarrow &  s^* = \left( Q^{-1}\left(c,\frac{1}{n}\right) \right)^{r}. \quad
\end{IEEEeqnarray}
Here, $Q^{-1}(c,u) = x$ is the inverse with respect to $x$ of $Q(c,x)$ so that $Q(c,x) = u$, for $u \in [0,1]$.  Substitution of the optimal $s^{*}$ into the upper bound $u_{\tilde{Y}}(s,n,c,r)$ in \eqref{eq:delaCalandRossSecFourA} gives
\begin{IEEEeqnarray}{rCl}
& & \frac{\Gamma(c+r)}{\Gamma(c)} \nonumber \\
& &  \negmedspace{} + n  \frac{\Gamma(c+r)}{\Gamma(c)}  \left( s^{*}\right)^{\frac{c+r-1}{r}} \erm^{-\left(s^{*}\right)^{\frac{1}{r}}} \sum_{m=0}^{r-1} \frac{1}{\left( s^{*}\right)^{\frac{m}{r}} \Gamma(c+r-m)}.\IEEEeqnarraynumspace
\end{IEEEeqnarray}
For $r=1$, the above expression simplifies to $c + n c \frac{(s^*)^c \erm^{-s^*}}{\Gamma(c+1)}$, which after applying the recursion and taking into account the stochastic ordering between $Y_{n:n}^{(c)}$ and $\tilde{Y}_{n:n}^{(c)}$ yields \eqref{eq:rossuboptfornegbin}.  
\end{IEEEproof}

\begin{IEEEproof}[Proof of Prop.\ \ref{prop:exacty}]
We first derive \eqref{eq:tu}. We start by using an expression for the expectation of a nonnegative RV with support $\{c^{r}, (c+1)^{r}, \ldots \}$ in terms of its complementary CDF:
\begin{IEEEeqnarray} {rCl}
\Ebb\left[\left(Y_{n:n}^{(c)}\right)^r \right] & = & c^{r} + \sum_{m = c^{r}}^{\infty} \Pbb\left(\left(Y_{n:n}^{(c)}\right)^r > m\right) \nonumber \\ 
& = & c^{r} + \sum_{m = c^{r}}^{\infty} \Pbb\left(Y_{n:n}^{(c)} > \lfloor m^{\frac{1}{r}} \rfloor \right) \nonumber \\ 
& = & c^{r} + \sum_{m = c^{r}}^{\infty} \left(  1 -  \Pbb\left(Y_{n:n}^{(c)} \leq \lfloor m^{\frac{1}{r}} \rfloor \right) \right) \nonumber \\
& = & c^{r} + \sum_{m = c^{r}}^{\infty} \left(  1 -   \prod_{j=1}^{n} \Pbb\left(Y_j^{(c)} \leq \lfloor m^{\frac{1}{r}} \rfloor \right) \right).\IEEEeqnarraynumspace
\label{eq:tz}
\end{IEEEeqnarray}
Next, fix $j \in [n]$ and observe $\{Y_j^{(c)} \leq \lfloor m^{\frac{1}{r}} \rfloor \}$ means that there are $c$ successes in the first $t$ trials, for some $t \in \{c,c+1,\ldots,\lfloor m^{\frac{1}{r}} \rfloor\}$.  Conditioning on $t$, the $c$ successes occurred over trials $\{1,\ldots,t\}$ with the number of trials between successive successes given by a $c$-vector $\boldsymbol\alpha = (\alpha_1,\ldots,\alpha_c)$ such that $\alpha_k \in \Nbb$ and $\alpha_1 + \cdots + \alpha_c = t$.  Fixing the particular sequence of successes and failures over trials $\{1,\ldots,t\}$, that sequence of outcomes has probability $\prod_{k \in [c]} (1-q_{j,k})^{\alpha_k-1} q_{j,k}$.  This yields \eqref{eq:tu}.

To obtain \eqref{eq:tv} from \eqref{eq:tu} set $q_{j,k} = q_j$ and observe
\begin{IEEEeqnarray}{rCl}
\sum_{\boldsymbol\alpha \in \Amc_t} \prod_{k=1}^{c}(1-q_{j})^{\alpha_{k} - 1}q_j 
& = & \sum_{\boldsymbol\alpha \in \Amc_t} q_j^{c} (1-q_j)^{t-c} \nonumber \\
& = & q_j^{c} (1-q_j)^{t-c} |\Amc_t|,
\end{IEEEeqnarray}
where $|\Amc_t|$ is the cardinality of $\Amc_t$, i.e., the number of $c$-vectors of positive integers that sum to $t$.  It can be shown that $|\Amc_t| = \binom{t-1}{c-1}$; this yields \eqref{eq:tv}.

To derive \eqref{eq:tw} which holds for the state-independent case, observe $Y_j^{(c)}$ is a negative binomial RV, i.e., the number of trials required to obtain $c$ receptions where each trial yields a reception with probability $q_j$.  Let $W_j \sim \mathrm{Bin}(\lfloor m^{\frac{1}{r}} \rfloor,q_j)$ be a binomial RV counting the number of receptions in $\lfloor m^{\frac{1}{r}} \rfloor$ trials, each trial having reception probability $q_j$.  Observe the equivalence of the events $\{Y_j^{(c)} > \lfloor m^{\frac{1}{r}} \rfloor\} = \{ W_j < c\}$.  Taking probabilities of both sides, and applying \eqref{eq:tx} yields:
\begin{eqnarray}
\Pbb\left(Y_j^{(c)} > \lfloor m^{\frac{1}{r}} \rfloor \right) &=& \Pbb(W_j < c) = 1 - \Pbb(W_j \geq c) \nonumber \\
&=& 1 - I_{q_j}(c,\lfloor m^{\frac{1}{r}} \rfloor-c+1).
\label{eq:taa}
\end{eqnarray}
Starting from \eqref{eq:tz} and substituting \eqref{eq:taa} yields \eqref{eq:tw}.
\end{IEEEproof}

\section{Proofs from \S \ref{sec:deponbl}}
\label{app:deponbl}

Proofs from \S\ref{ssec:convergence}, \S\ref{ssec:mono}, and \S\ref{ssec:rossdlctightinc} are given in Appendix \ref{app:convergencePf}, Appendix \ref{app:monoPf}, and Appendix \ref{app:rossdlctightincPf}, respectively.

\subsection{Proofs from \S\ref{ssec:convergence}}
\label{app:convergencePf}

\begin{IEEEproof}[Proof of Prop.\ \ref{prop:convergence}]
We first provide the outline of the proof.  To establish convergence in probability of $\hat{Y}_{n:n}^{(c)}$ we first establish convergence in probability of each component of the vector $(\hat{Y}^{(c)}_1,\ldots,\hat{Y}^{(c)}_n)$ via the weak law of large numbers (WLLN) for independent but not necessarily identically distributed RVs.  Convergence in probability of each component of a sequence of vector-valued random variables to a limit ensures convergence in probability of the vector as a whole.  Next, use the fact that convergence in probability is preserved under continuous functions, and in particular the function $\max(y_1,\ldots,y_n)$.  This establishes convergence in probability of $\hat{Y}_{n:n}^{(c)}$.  

Furthermore, to establish convergene in $r^{\rm th}$ mean of $\hat{Y}_{n:n}^{(c)}$, we shall show the sequence $\left\{ \right\vert \hat{Y}_{n:n}^{(c)} \vert^{r}\}$ ($c = 1, 2, \ldots$) is \textit{uniformly integrable} (UI), we can then employ the fact that a sequence of RVs that converges in probability and is UI with parameter $r$ must converge in $r^{\rm th}$ mean.

We now establish convergence in probability.  Prop.\ \ref{prop:asymcondsat} shows \eqref{eq:convcond} is satisfied, then according to the WLLN for independent but not necessarily identically distributed RVs (e.g., \cite[Thm.\ 1.1 in \S 7]{Gut2009}), we have:
\begin{equation}
\hat{Y}_j^{(c)} \stackrel{\Pbb}{\longrightarrow} \Ebb\left[ \lim_{c \to \infty} \frac{1}{c} \sum_{k=1}^c X_{jk}^{(c)}\right] = \frac{1}{q_j}, ~ j \in [n].
\end{equation}
Convergence in probability of each component $j \in [n]$ ensures convergence in probability of the vector as a whole:
\begin{equation}
\left(\hat{Y}_1^{(c)}, \ldots,\hat{Y}_n^{(c)}\right) \stackrel{\Pbb}{\longrightarrow} \left(\frac{1}{q_1},\ldots,\frac{1}{q_n} \right).
\end{equation}
Since convergence in probability is preserved under continuous functions such as $\max(y_1,\ldots,y_n)$, it follows that:
\begin{equation}
\hat{Y}_{n:n}^{(c)} = \max\left(\hat{Y}_1^{(c)}, \ldots,\hat{Y}_n^{(c)}\right) \stackrel{\Pbb}{\longrightarrow} \max \left(\frac{1}{q_1},\ldots,\frac{1}{q_n} \right).
\end{equation}
This establishes convergence in probability of $\hat{Y}_{n:n}^{(c)}$ in $c$ to $1/\min_j q_j$.  

We next establish convergence in $r^{\rm th}$ mean. We use a theorem (\cite[Thm.\ 4.1 in \S7]{Gut2009}) which states that convergence in probability along with uniform integrability with parameter $r$ ensures convergence in $L^{r}$. Namely, we need to show the sequence $\left\{ \right\vert \hat{Y}_{n:n}^{(c)} \vert^{r}\}_{c}$ is UI. Toward this, we apply Thm.\ 4.2 (also see Remark 4.3 on pp.\ 198) in \cite[\S 7]{Gut2009} which states to show a sequence $\{Z_{c}\}$ is UI, it \textit{suffices} to find $\eta > 1$ so that $\sup_{c} \Ebb[\vert Z_{c} \vert^{\eta}] < \infty$. Here we choose $\eta = 2$, that is, we wish to show
\begin{equation}
\label{eq:UIobjective}
\sup_{c} \Ebb \left[ \left(\hat{Y}_{n:n}^{(c)}\right)^{2r}\right] < \infty.
\end{equation}
Note the RVs are non-negative, $r \in \Nbb$ is given and fixed. We observe 
\begin{IEEEeqnarray}{rCl}
\IEEEeqnarraymulticol{3}{l}
{\Ebb \left[ \left(\hat{Y}_{n:n}^{(c)}\right)^{2r}\right] 
 =  \Ebb \left[ \left(\max_{j \in [n]}\hat{Y}_{j}^{(c)}\right)^{2r}\right] 
}\nonumber \\* \quad
& = & \Ebb \left[ \max_{j \in [n]} \left( \hat{Y}_{j}^{(c)}\right)^{2r} \right]
\leq \sum_{j=1}^{n} \Ebb \left[ \left( \hat{Y}_{j}^{(c)}\right)^{2r} \right].\IEEEeqnarraynumspace
\end{IEEEeqnarray}
Since $n$ is fixed, to show \eqref{eq:UIobjective}, we need to show for any $j \in [n]$, $\Ebb \left[ \left( \hat{Y}_{j}^{(c)}\right)^{2r} \right] < \infty$ uniformly in $c$. By definition
\begin{equation}
\label{eq:substituteX}
\Ebb \left[ \left( \hat{Y}_{j}^{(c)}\right)^{2r} \right]
 = \Ebb \left[ \left( \frac{1}{c} \sum_{k = 1}^{c} X_{j, k}^{(c)} \right)^{2r} \right].
\end{equation}
As a consequence of the convexity of the function $z^{\alpha}$ for $\alpha = 2r > 0, z > 0$, applying Jensen's inequality (with a discrete uniform random variable $V$ with PMF $\Pbb(V = X_{j,k}^{(c)})  = \frac{1}{c}, \forall k \in [c]$) gives
\begin{equation}
\label{eq:AppJensen}
\left( \frac{1}{c} \sum_{k = 1}^{c} X_{j, k}^{(c)} \right)^{2r} \leq  \frac{1}{c} \sum_{k = 1}^{c} \left( X_{j, k}^{(c)} \right)^{2r}.
\end{equation}
Now taking expectation, we have from \eqref{eq:substituteX} and \eqref{eq:AppJensen}
\begin{equation}
\label{eq:UpToOrder2r}
\Ebb \left[ \left( \hat{Y}_{j}^{(c)}\right)^{2r} \right]
\leq \Ebb \left[ \frac{1}{c} \sum_{k = 1}^{c} \left( X_{j, k}^{(c)} \right)^{2r} \right]
\! = \! \frac{1}{c} \sum_{k=1}^{c} \Ebb \left[ \left( X_{j, k}^{(c)} \right)^{2r} \right].
\end{equation}
Therefore, if each $X_{j,k}^{(c)}$ has uniformly in $c$ bounded moments up to order $2r$, i.e., $\Ebb\left[ \left( X_{j, k}^{(c)} \right)^{2r} \right] \leq R_{X} < \infty, \forall c$, then we conclude from \eqref{eq:UpToOrder2r} that $\Ebb \left[ \left( \hat{Y}_{j}^{(c)}\right)^{2r} \right]$ is also uniformly bounded in $c$ by $R_{X} < \infty$: this shows \eqref{eq:UIobjective} and hence the proof is completed.
\end{IEEEproof}

\begin{IEEEproof}[Proof of Prop.\ \ref{prop:asymcondsat}]
Assume A1: State-dependent receptions, heterogeneous receivers.  In order to establish \eqref{eq:convcond} holds for $q_{j,k} = (1-d^{k-1-c})q_j$ it suffices to establish $a)$ $\lim_{c \to \infty} \frac{1}{c}  \sum_{k=1}^{c} (1 - d^{k-1-c})^{-1} = 1$ and $b)$ $\lim_{c \to \infty} \frac{1}{c}  \sum_{k=1}^{c} (1 - d^{k-1-c})^{-2} = 1$.  The first limit $a)$ establishes the convergence of the means, and the two limits together, $a)$ and $b)$, establish the convergence of the variances.  We establish both $a)$ and $b)$ by sandwiching the sum between lower and upper bounds which are integrals of the function being summed.  In particular, if $\{f(k)\}_{k=0}^c$ is a nondecreasing sequence in $k$ then 
\begin{equation}
\label{eq:ineqforboundsum}
\int_0^c f(x) \drm x \leq \sum_{k=1}^c f(k) \leq \int_1^c f(x) \drm x + f(c),
\end{equation}
where the lower (upper) bound follows by viewing $\{f(k)\}$ as a Riemann sum of $c$ terms with unit intervals and heights at the right (left) endpoints, respectively. 
 Both $f_a(k) \equiv (1 - d^{k-1-c})^{-1}$ and $f_b(k) \equiv ( 1 - d^{k-1-c})^{-2}$ are increasing in $k \in [c]$, and so \eqref{eq:ineqforboundsum} holds for $f(k) = f_a(k)$ and $f(k) = f_b(k)$.  We now proceed to evaluate the integrals.  We first find the indefinite integral of $g_a(x) \equiv \int f_a(x) \drm x$ and $g_b(x) \equiv \int f_b(x) \drm x$ for $d > 1$ and $c \geq 1$.  Use the change of variables $z = d^{x-1-c}$ so that $\drm x = \drm z / (z \log d)$, yielding:
\begin{IEEEeqnarray}{rCl}
g_a(x(z))  & =  & \frac{1}{\log d} \int \frac{1}{z(1-z)} \drm z, \nonumber \\
g_b(x(z))  & =  & \frac{1}{\log d} \int \frac{1}{z(1-z)^2} \drm z
\end{IEEEeqnarray}
for $x(z) \equiv c+1+ \log z / \log d$.  Use partial fraction expansion to get
\begin{IEEEeqnarray}{rCl}
g_a(x(z))  & =  & \frac{1}{\log d} \int \left( \frac{1}{z} + \frac{1}{1-z} \right)\drm z,\nonumber \\
g_b(x(z))  & =  & \frac{1}{\log d} \int \left( \frac{1}{z} + \frac{1}{1-z} + \frac{1}{(1-z)^2} \right)\drm z
\end{IEEEeqnarray}
which integrates to
\begin{IEEEeqnarray}{rCl}
g_a(x(z))  & =  & \frac{\log z - \log (1-z)}{\log d}, \nonumber \\
g_b(x(z))  & =  & \frac{\log z - \log(1-z) + \frac{1}{1-z}}{\log d}.
\end{IEEEeqnarray}
Substituting back $z = d^{x-1-c}$ yields:
\begin{IEEEeqnarray}{rCl}
g_a(x)  & =  & \frac{\log d^{x-1-c} - \log (1-d^{x-1-c})}{\log d}, \nonumber \\
g_b(x)  & =  &  \frac{\log d^{x-1-c} - \log(1-d^{x-1-c}) + \frac{1}{1-d^{x-1-c}}}{\log d}.\IEEEeqnarraynumspace
\end{IEEEeqnarray}
We next integrate from $0$ to $c$ and from $1$ to $c$ respectively, and simplify:
\begin{IEEEeqnarray}{rCl}
\left. g_a(x) \right|_0^c &=& \frac{\log (d^{c+1}-1) - \log (d-1)}{\log d}  \nonumber \\
\left. g_a(x) \right|_1^c &=& \frac{\log (d^c-1) - \log(d-1)}{\log d} \nonumber \\
\left. g_b(x) \right|_0^c&=& \frac{\log (d^{c+1}-1) - \log (d-1) + \frac{d(d^c-1)}{(d-1)(d^{c+1}-1)}}{\log d} \nonumber \\
\left. g_b(x) \right|_1^c &=& \frac{\log (d^c-1) - \log(d-1) + \frac{d(d^{c-1}-1)}{(d-1)(d^c-1)}}{\log d}.\IEEEeqnarraynumspace
\end{IEEEeqnarray}
We are interested in the limits of these integrals divided by $c$ as $c \to \infty$.  Observe the last term in the numerators of the latter two expressions is dominated by the first term as $c \to \infty$ and may be ignored.  It is clear both the numerator and denominator diverge, therefore we differentiate both and apply L'H\^opital's rule.
This yields 
\begin{IEEEeqnarray}{rCl}
\lim_{c \to \infty} \left. \frac{1}{c} g_a(x) \right|_0^c = \lim_{c \to \infty} \left. \frac{1}{c} g_a(x) \right|_1^c = 1,  \nonumber \\
\lim_{c \to \infty} \left. \frac{1}{c} g_b(x) \right|_0^c = \lim_{c \to \infty} \left. \frac{1}{c} g_b(x) \right|_1^c = 1. 
\end{IEEEeqnarray}

\end{IEEEproof}

\subsection{Proofs from \S\ref{ssec:mono}}
\label{app:monoPf}
Assume A2: State-independent receptions, heterogeneous receivers.  The proof below holds for an arbitrary $n \times c$ matrix of discrete RVs $\Zbf = (Z_{j,k})$ such that each row $(Z_{j,k}, k \in [c])$ holds entries iid in $k$, and the rows are independent of one another.  We begin with a lemma that holds for a column-invariant row selection rule $\chi$.

The following lemma will be used in the proof of Prop.\ \ref{prop:monotonicityHomogeneous}.

\begin{lemma}
\label{lem:randind}
Let $\Zbf$ be an $n \times c$ random matrix with entries $Z_{j,k}$ iid in $k$ and independent in $j$.  Let $J = \chi(\mathbf{Z}) \in [n]$ be the index selected by any column invariant row-index selection rule $\chi$.  Then the RVs $(Z_{J,k}, k \in [c])$ in row $J$ are identically distributed.
\end{lemma}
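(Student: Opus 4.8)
\emph{Proof proposal.} The plan is to leverage the exchangeability of the entries within each row of $\Zbf$ together with the defining property of a column-invariant selection rule. First I would note that since the entries of each row $(Z_{j,k}, k \in [c])$ are iid (hence exchangeable) and the rows are independent of one another, the matrix $\Zbf$ is invariant in distribution under every column permutation; that is, $\sigma(\Zbf) \stackrel{d}{=} \Zbf$ for all $\sigma \in \Sigma$, where $\sigma(\Zbf)$ denotes $\Zbf$ with its columns permuted by $\sigma$.

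Next, fix two column indices $k, k' \in [c]$ and let $\tau \in \Sigma$ be the transposition that swaps columns $k$ and $k'$. For a matrix $\mathbf{A} \in \Rbb^{n \times c}$ set $g_k(\mathbf{A}) \equiv A_{\chi(\mathbf{A}),k}$, the entry appearing in column $k$ of the row selected by $\chi$. Two observations combine to give a pointwise identity: column invariance of $\chi$ gives $\chi(\tau(\mathbf{A})) = \chi(\mathbf{A})$, while the action of $\tau$ gives $(\tau(\mathbf{A}))_{j,k} = A_{j,k'}$ for every row $j$ (this holds under either convention for permuting columns, since $\tau$ is an involution). Hence for every $\mathbf{A}$,
\begin{equation}
g_k(\tau(\mathbf{A})) = (\tau(\mathbf{A}))_{\chi(\tau(\mathbf{A})),k} = (\tau(\mathbf{A}))_{\chi(\mathbf{A}),k} = A_{\chi(\mathbf{A}),k'} = g_{k'}(\mathbf{A}).
\end{equation}

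Applying the distributional invariance from the first step to the measurable map $g_k$ then yields $Z_{J,k} = g_k(\Zbf) \stackrel{d}{=} g_k(\tau(\Zbf)) = g_{k'}(\Zbf) = Z_{J,k'}$, where $J = \chi(\Zbf)$. Since $k$ and $k'$ were arbitrary, $Z_{J,1},\ldots,Z_{J,c}$ all share a common marginal distribution, which is the assertion of the lemma. (Running the same argument with the vector-valued map $\mathbf{A} \mapsto (A_{\chi(\mathbf{A}),l}, l \in [c])$ and an arbitrary $\sigma \in \Sigma$ in fact shows the stronger statement that the selected row $(Z_{J,l}, l \in [c])$ is exchangeable, not merely identically distributed entrywise.)

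There is no deep obstacle in this argument; the only point demanding care is the permutation bookkeeping — one must verify that applying a column swap to $\Zbf$ moves the targeted entry from column $k$ to column $k'$ while, by column invariance, leaving the selected row index $J$ unchanged, so that the swap cleanly interchanges the roles of columns $k$ and $k'$ within the selected row. It is precisely the hypothesis that $\chi$ is column invariant that makes this transfer go through: without it the selected index could itself change under the swap, and the identity $g_k \circ \tau = g_{k'}$ would fail.
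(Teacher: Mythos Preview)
Your argument is correct. Both your proof and the paper's rest on the same two ingredients --- column invariance of $\chi$ and the iid-in-$k$ structure of the rows --- but they package them differently. The paper partitions on the event $\{J=j\}$ and manipulates conditional probabilities, the key step being $\Pbb(J=j \mid Z_{j,k}=z) = \Pbb(J=j \mid Z_{j,k'}=z)$, which is justified by swapping columns $k$ and $k'$ (this swap leaves both the law of $\Zbf$ and the value of $\chi$ unchanged). You instead phrase the same swap as a global distributional invariance $\tau(\Zbf)\stackrel{d}{=}\Zbf$ and push it through the deterministic identity $g_k\circ\tau=g_{k'}$, avoiding any conditioning. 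Your route is slightly cleaner and, as you note, immediately upgrades to full exchangeability of the selected row by replacing $\tau$ with an arbitrary $\sigma\in\Sigma$; the paper's conditional-probability computation only delivers equality of one-dimensional marginals (which is all the lemma claims and all that is needed downstream).
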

\begin{IEEEproof} 
Fix column indices $k,k' \in [c]$ and some $z \in \Zmc$, the support of the RVs comprising $\Zbf$.  Then:
\begin{IEEEeqnarray}{rCl}
\Pbb(Z_{J,k} = z)
& = & \sum_{j=1}^{n} \Pbb(Z_{J, k} = z, J=j)  \nonumber \\
& = & \sum_{j=1}^{n} \Pbb(Z_{j, k} = z, J=j)  \nonumber \\ 
& = & \sum_{j=1}^{n} \Pbb(J = j | Z_{j, k}=z) \Pbb(Z_{j, k} = z)  \nonumber \\ 
& = & \sum_{j=1}^{n} \Pbb(J = j | Z_{j, k'}=z) \Pbb(Z_{j, k'} = z)  \nonumber \\ 
&=& \Pbb(Z_{J,k'}=z),
\end{IEEEeqnarray}
where we have used the facts that $i)$ $\chi$ is column invariant, i.e., $\Pbb(J = j | Z_{j, k}=z) = \Pbb(J = j | Z_{j, k'}=z)$, and $ii)$ each row of $\Zbf$ has entries iid in $k$, i.e., $\Pbb(Z_{j, k} = z) = \Pbb(Z_{j, k'} = z)$.  
\end{IEEEproof}

\begin{IEEEproof} [Proof of Prop.\ \ref{prop:monotonicityHomogeneous} (monotonicity for infinite field size)]
Let $\Zbf^{(c)}$ and $\Zbf^{(c+1)}$ be $n \times c$ and $n \times (c+1)$ matrices, respectively.  The entries $Z_{j,k}^{(c)}$ in $\Zbf^{(c)}$ are assumed iid in $k$ and independent in $j$, as are the entries $Z_{j,k}^{(c+1)}$ in $\Zbf^{(c+1)}$.  Furthermore, for each $j,j' \in [n]$ and each $k \in [c]$, $k' \in [c+1]$, the RVs $Z_{j,k}^{(c)}$ and $Z_{j',k'}^{(c+1)}$ are assumed independent.  Let $\Zbf^{(c+1)}_{1:c}$ be the $n \times c$ submatrix of $\Zbf^{(c+1)}$ obtained by removing the entries in column $c+1$.  The above assumptions assert $\Zbf^{(c)} \stackrel{\drm}{=} \Zbf^{(c+1)}_{1:c}$.  As equality in distribution is preserved under any measurable function, say $f$, it follows that $f(\Zbf^{(c)}) \stackrel{\drm}{=} f(\Zbf^{(c+1)}_{1:c})$ for all such functions.  Finally, equality in distribution ensures equality in expectation, thus $\Ebb[f(\Zbf^{(c)})] = \Ebb[f(\Zbf^{(c+1)}_{1:c})]$.  This will be used in step $(b)$ of \eqref{eq:monproofInfFieldsize} below.

Let $\hat{\chi}$ be the column invariant row-index selection rule that selects the minimum row-index among the row sum maximizing indices, and let $\hat{J}^{(c+1)} = \hat{\chi}(\Zbf^{(c+1)})$ be the index selected under $\hat{\chi}$ for $\Zbf^{(c+1)}$.  Then:
\begin{IEEEeqnarray}{rCl}
\label{eq:monproofInfFieldsize}
\Delta^{(c)} 
& \stackrel{(a)}{=} & \Ebb\left[ \frac{1}{c} \max_{j \in [n]} \sum_{k=1}^{c} Z_{j,k}^{(c)} \right] - \Ebb\left[ \frac{1}{c+1} \max_{j \in [n]} \sum_{k=1}^{c+1} Z_{j,k}^{(c+1)} \right] \nonumber \\
& \stackrel{(b)}{=} & \Ebb\left[ \frac{1}{c} \max_{j \in [n]} \sum_{k=1}^{c} Z_{j,k}^{(c+1)} \right] - \Ebb\left[ \frac{1}{c+1} \max_{j \in [n]} \sum_{k=1}^{c+1} Z_{j,k}^{(c+1)} \right] \nonumber \\
& \stackrel{(c)}{=} & \Ebb\left[ \frac{1}{c} \max_{j \in [n]} \sum_{k=1}^{c} Z_{j,k}^{(c+1)} \right] - \Ebb \left[ \frac{1}{c+1} \sum_{k=1}^{c+1} Z_{\hat{J}^{(c+1)},k}^{(c+1)} \right] \nonumber \\
& \stackrel{(d)}{\geq} & \Ebb\left[ \frac{1}{c} \sum_{k=1}^{c} Z_{\hat{J}^{(c+1)},k}^{(c+1)} \right] - \Ebb \left[ \frac{1}{c+1} \sum_{k=1}^{c+1} Z_{\hat{J}^{(c+1)},k}^{(c+1)} \right]  \nonumber \\
& \stackrel{(e)}{=} & 0 
\end{IEEEeqnarray}
where $(a)$ follows by linearity of expectation, $(b)$ follows since $\Ebb[f(\Zbf^{(c)})] = \Ebb[f(\Zbf^{(c+1)}_{1:c})]$ for any measurable $f$, $(c)$ is by definition of the index $\hat{J}^{(c+1)}$, $(d)$ follows since the maximum of a set is no smaller than any of its elements, and $(e)$ follows by applying linearity of expectation and Lemma \ref{lem:randind}.  This establishes $\Delta^{(c)} \geq 0$.  Lemma \ref{lem:tuwerty} below establishes that the inequality in \eqref{eq:monproofInfFieldsize} is in fact strict, and thus $\Delta^{(c)} > 0$.  
\end{IEEEproof}

\begin{lemma}
\label{lem:tuwerty}
Let $\Zbf^{(c+1)}$ and $\hat{J}^{(c+1)}$ be as in the proof of Prop.\ \ref{prop:monotonicityHomogeneous}.  Then:
\begin{equation}
\Ebb\left[ \max_{j \in [n]} \sum_{k=1}^{c} Z_{j,k}^{(c+1)} \right] > \Ebb\left[ \sum_{k=1}^{c} Z_{\hat{J}^{(c+1)},k}^{(c+1)} \right].
\end{equation}
\end{lemma}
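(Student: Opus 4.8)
The plan is to notice that the lemma is the strict form of an almost-sure pointwise inequality, so the entire task reduces to producing one positive-probability event that witnesses strictness. First I would record the trivial bound: for every realization, $\max_{j \in [n]} \sum_{k=1}^{c} Z_{j,k}^{(c+1)} \geq \sum_{k=1}^{c} Z_{\hat{J}^{(c+1)},k}^{(c+1)}$, because the maximum over $j$ of the first-$c$-column row-sums is at least the first-$c$-column row-sum of the particular row $\hat{J}^{(c+1)}$. Taking expectations yields the non-strict inequality, and since the difference $D \equiv \max_{j \in [n]} \sum_{k=1}^{c} Z_{j,k}^{(c+1)} - \sum_{k=1}^{c} Z_{\hat{J}^{(c+1)},k}^{(c+1)}$ is thus nonnegative almost surely, it suffices to exhibit an event $A$ with $\Pbb(A) > 0$ on which $D > 0$; then $\Ebb[D] \geq \Ebb[D \, \mathbf{1}_{A}] > 0$, which is the claim.

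Next I would construct $A$ explicitly, using the fact that each entry $Z_{j,k}^{(c+1)} \sim \mathrm{Geo}(q_j)$ with $q_j \in (0,1)$ assigns positive probability to every positive integer (any nondegenerate discrete law would do, so the argument needs nothing beyond the hypotheses already imposed on $\Zbf^{(c+1)}$ in Prop.\ \ref{prop:monotonicityHomogeneous}). Assume $n \geq 2$; the degenerate $n=1$ case is precisely where the two sides coincide and Prop.\ \ref{prop:monotonicityHomogeneous} yields a constant rather than a strictly decreasing sequence, so it is excluded. Let $A$ be the event on which row $1$ has $Z_{1,k}^{(c+1)} = 2$ for $k \in [c]$ and $Z_{1,c+1}^{(c+1)} = 1$; row $2$ has $Z_{2,k}^{(c+1)} = 1$ for $k \in [c]$ and $Z_{2,c+1}^{(c+1)} = c+2$; and every row $j \in \{3,\ldots,n\}$ has $Z_{j,k}^{(c+1)} = 1$ for all $k \in [c+1]$. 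By independence across rows together with positivity of the prescribed point masses, $\Pbb(A) > 0$.

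I would then verify the event does what is wanted. On $A$ the full $(c+1)$-column row-sums are $2c+1$ (row $1$), $2c+2$ (row $2$), and $c+1$ (each row $j \geq 3$); since $c \geq 1$ these three values are distinct and row $2$ is the unique maximizer, so $\hat{J}^{(c+1)} = 2$ irrespective of the tie-breaking convention. The first-$c$-column row-sums on $A$ are $2c$ (row $1$), $c$ (row $2$), and $c$ (each row $j \geq 3$), so $\max_{j \in [n]} \sum_{k=1}^{c} Z_{j,k}^{(c+1)} = 2c$ while $\sum_{k=1}^{c} Z_{\hat{J}^{(c+1)},k}^{(c+1)} = c$. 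Hence $D = c \geq 1$ on $A$, and combined with $D \geq 0$ everywhere this gives $\Ebb[D] \geq c \, \Pbb(A) > 0$, which is the assertion of the lemma.

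There is no genuine analytic obstacle here; the only point demanding care is the design of $A$, where one must simultaneously (i) keep all prescribed entries in the positive-probability range of the geometric law, and (ii) arrange that the winner of the full $(c+1)$-column sum is \emph{uniquely} determined and is a row that strictly loses on the first $c$ columns, so that $\hat{J}^{(c+1)}$ is pinned down and its partial sum is strictly dominated by the maximum. The only subtlety worth flagging explicitly in the write-up is the exclusion of $n = 1$, where the strict inequality genuinely fails.
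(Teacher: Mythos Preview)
Your proof is correct. Both your argument and the paper's reduce the strict inequality to exhibiting a positive-probability event on which the pointwise inequality is strict, but the executions differ. The paper first rewrites the expectation comparison as a comparison of tail sums $\sum_z \Pbb(Z_1 > z)$ versus $\sum_z \Pbb(Z_3 > z)$ (with $Z_1$ the maximum over the first $c$ columns and $Z_3$ the $\hat{J}^{(c+1)}$-row partial sum), establishes the termwise inequality $\Pbb(Z_1 > z) \geq \Pbb(Z_3 > z)$, and then isolates a single $z'$ where strictness holds by conditioning on $Z_3$ and invoking the existence of a suitable realization; so the paper's argument also terminates in a realization, but reaches it only after the CCDF detour. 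Your route is more direct: you note $D \geq 0$ almost surely and immediately write down an explicit event $A$ on which $D = c$, avoiding the tail-sum machinery entirely. Your concrete choice of entries also makes the uniqueness of the $(c+1)$-column maximizer transparent, so no appeal to the tie-breaking rule is needed. Finally, your explicit flagging of the $n=1$ exclusion is apt; the paper leaves this edge case implicit.
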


\begin{IEEEproof}
Observe
\begin{IEEEeqnarray}{rCl}
\IEEEeqnarraymulticol{3}{l}
{\max_{j \in [n]} \sum_{k=1}^c Z_{j,k}^{(c+1)} 
}\nonumber \\* \quad
&  = & \max \left( \max_{j \in [n] \setminus \{\hat{J}^{(c+1)}\}} \sum_{k=1}^{c}  Z_{j,k}^{(c+1)}, ~\sum_{k=1}^c Z_{\hat{J}^{(c+1)},k}^{(c+1)} \right),\IEEEeqnarraynumspace
\end{IEEEeqnarray}
which we henceforth abbreviate as $Z_1 = \max(Z_2,Z_3)$. We further denote by $A_{3\geq}$ the event $\{Z_{2} \leq Z_{3} \}$, and $A_{2>}$ the complement event $\{Z_{2} > Z_{3} \}$. The lemma is equivalent to the assertion $\Ebb[Z_1] > \Ebb[Z_3]$, for which we need to show the following strict inequality, where the first (last) equality follows from (reversely) applying the total expectation theorem.
\begin{IEEEeqnarray}{rCl}
\Ebb[Z_1] 
& = & \Ebb[Z_1 \vert A_{3\geq}] \Pbb(A_{3\geq}) + \Ebb[Z_1 \vert A_{2>}] \Pbb(A_{2>}) \nonumber \\
& = &  \Ebb[Z_3 \vert A_{3\geq}] \Pbb(A_{3\geq}) + \Ebb[Z_2 \vert A_{2>}] \Pbb(A_{2>}) \nonumber \\
& > & \Ebb[Z_3 \vert A_{3\geq}] \Pbb(A_{3\geq}) +  \Ebb[Z_3 \vert A_{2>}] \Pbb(A_{2>}) \nonumber \\
& = & \Ebb[Z_{3}].
\end{IEEEeqnarray}
The inequality in the above equation is equivalent to $\Ebb[Z_{2} - Z_{3} \vert Z_{2} > Z_{3}] \Pbb(Z_{2} > Z_{3}) > 0$, for which we need to show $\Pbb(Z_{2} > Z_{3}) > 0$.

Clearly $\Pbb(Z_{2} > Z_{3}) > 0$ has to hold, for otherwise $Z_{1} = \max(Z_{2}, Z_{3}) = Z_{3}$ always holds, that is
\begin{eqnarray}
\max_{j \in [n]} \sum_{k=1}^c Z_{j,k}^{(c+1)}  = \sum_{k=1}^c Z_{\hat{J}^{(c+1)},k}^{(c+1)}.
\end{eqnarray}
But this can not be true. Recall $\hat{J}^{(c+1)}$ is defined to be the minimum row index for which the sum of all the $c+1$ columns is the maximum. This index can not be \textit{always} a row index for which the sum of the first $c$ columns is the maximum.

\end{IEEEproof}

\begin{IEEEproof}[Proof of Prop.\ \ref{prop:monotonicityHeterogeneous} (monotonicity for finite field size)]
Assume A1: State-dependent receptions, heterogeneous receivers.  Note $\Delta^{(c)} > 0 \Leftrightarrow c(c+1) \Delta^{(c)} > 0$.  Fix the field size $d$ to be a finite integer $\geq 2$, fix the blocklength $c \in \Nbb$, and fix the number of receivers $n \in \Nbb$.  Let $\Xbf^{(c)},\Xbf^{(c+1)}$ be $n \times c$ and $n \times (c+1)$ random matrices with entries $X_{j,k}^{(c)} \sim \mathrm{Geo}(q_{j,k}^{(c)})$ and $X_{j,k}^{(c+1)}\sim \mathrm{Geo}(q_{j,k}^{(c+1)})$, respectively.  Then:
\begin{equation} 
c(c+1) \Delta^{(c)} \!=\! \Ebb\left[ \left(c+1\right) \max_{j \in [n]} \sum_{k=1}^c X_{j,k}^{(c)} - c \max_{j \in [n]} \sum_{k=1}^{c+1} X_{j,k}^{(c+1)} \right].
\end{equation}
By the standing model assumptions, the entries $X_{j,k}^{(c)}$ in $\Xbf^{(c)}$ are independent in $j$ and $k$, as are the entries $X_{j,k}^{(c+1)}$ in $\Xbf^{(c+1)}$.  Furthermore, for each $j,j' \in [n]$ and each $k \in [c]$, $k' \in [c+1]$, the RVs $X_{j,k}^{(c)}$ and $X_{j',k'}^{(c+1)}$ are assumed independent.  Recall $q_{j,k}^{(c)} = (1-d^{k-1-c})q_j$.  It is immediate that $X_{j,k}^{(c)} \stackrel{\drm}{=} X_{j,k+1}^{(c+1)}$.  Let $\Xbf^{(c+1)}_{2:c+1}$ be the submatrix obtained by removing the entries in column $1$ of $\Xbf^{(c+1)}$.  The above equality in distribution then gives $\Xbf^{(c)} \stackrel{\drm}{=} \Xbf^{(c+1)}_{2:c+1}$.  Further, for any measurable function $f$, $\Ebb[f(\Xbf^{(c)})] = \Ebb[f(\Xbf^{(c+1)}_{2:c+1})]$.  It follows that
\begin{IEEEeqnarray}{rCl}
\IEEEeqnarraymulticol{3}{l}
{c(c+1) \Delta^{(c)} 
}\nonumber \\* \quad
& = & \Ebb\left[ \left(c+1\right) \max_{j \in [n]} \sum_{k=1}^c X_{j,k+1}^{(c+1)} - c \max_{j \in [n]} \sum_{k=1}^{c+1} X_{j,k}^{(c+1)} \right]. \IEEEeqnarraynumspace
\end{IEEEeqnarray}
Now that $c(c+1) \Delta^{(c)}$ has been expressed solely in terms of $\Xbf^{(c+1)}$, we henceforth simplify our notation, writing $\Xbf \equiv \Xbf^{(c+1)}$.  Let $Y_j \equiv Y_j^{(c+1)}$ for $j \in [n]$ be the sums of each of the rows of $\Xbf$.  Let $\hat{J} \in [n]$ be the random row index $\hat{J} = \hat{\chi}(\Xbf)$, where $\hat{\chi}$ is the column invariant row selection rule that selects the smallest index in the (random) set of indices in $[n]$ that maximize $Y_j$ over $j \in [n]$.  By this definition:
\begin{equation}
c(c+1) \Delta^{(c)} = \Ebb\left[ \left(c+1\right) \max_{j \in [n]} \sum_{k=1}^{c} X_{j,k+1} - c \sum_{k=1}^{c+1} X_{\hat{J},k} \right].  
\end{equation}
Further, since the maximum of any set is no smaller than any of its elements, we have the lower bound
\begin{equation}
c(c+1) \Delta^{(c)} \geq \Ebb\left[ \left(c+1\right) \sum_{k=1}^{c} X_{\hat{J},k+1} - c \sum_{k=1}^{c+1} X_{\hat{J},k} \right].
\end{equation}
The right hand side may be rearranged as
\begin{equation}
\Ebb\left[ \sum_{k=1}^{c} X_{\hat{J},k+1} - c X_{\hat{J},1} \right] 
= \sum_{k=1 }^{c} \Ebb\left[ X_{\hat{J},k+1} - X_{\hat{J},1} \right].
\end{equation}
Suppose the following stochastic ordering condition is true:
\begin{equation}
\label{eq:stochordercondition}
X_{\hat{J},k+1} >_{\rm st} X_{\hat{J},1}, ~~  k \in [c].
\end{equation}
This immediately implies $\Delta^{(c)} > 0$, completing the proof.  

It remains to prove \eqref{eq:stochordercondition}.  It suffices to show 
\begin{equation}
\label{eq:ryryry}
\Pbb(X_{\hat{J},k} > x) > \Pbb(X_{\hat{J},k'} > x)
\end{equation}
for $k,k' \in [c+1]$ with $k > k'$ and $x \in \Nbb$.  We start by using the total probability theorem:
\begin{IEEEeqnarray}{rCl}
\Pbb(X_{\hat{J},k} \!>\! x) \! & = & \sum_{a} \sum_{j} \Pbb(X_{\hat{J},k} > x, X_{\hat{J},k} + X_{\hat{J},k'} = a, \hat{J} = j) \nonumber \\
& = & \sum_{a} \sum_{j} \Pbb(X_{j,k} > x, X_{j,k} + X_{j,k'} = a, \hat{J} = j) \nonumber \\
& = & \sum_{a} \sum_{j} \Pbb(\hat{J} = j \vert X_{j,k} > x, X_{j,k} + X_{j,k'}= a)  \nonumber \\
& & \qquad \quad \negmedspace{} \cdot \Pbb(X_{j,k} > x, X_{j,k} + X_{j,k'} = a).
\end{IEEEeqnarray}
Similarly, 
\begin{IEEEeqnarray}{rCl}
\Pbb(X_{\hat{J},k'} \!>\! x) \! & = & \sum_{a} \sum_{j} \Pbb(\hat{J} = j \vert X_{j,k'}> x, X_{j,k} + X_{j,k'} = a) \nonumber \\
& & \qquad \quad \negmedspace{} \cdot \Pbb(X_{j,k'} > x, X_{j,k} + X_{j,k'} = a). 
\end{IEEEeqnarray}
We will establish \eqref{eq:ryryry} by showing 
\begin{IEEEeqnarray}{rCl}
& & \Pbb(\hat{J} = j \vert X_{j,k} > x, X_{j,k} + X_{j,k'} = a)  \nonumber \\
&=& \Pbb(\hat{J} = j \vert X_{j,k'} > x, X_{j,k} + X_{j,k'} = a) \label{eq:uououo}
\end{IEEEeqnarray}
and
\begin{IEEEeqnarray}{rCl}
& & \Pbb(X_{j,k} > x, X_{j,k} + X_{j,k'} = a)   \nonumber \\
&>& \Pbb(X_{j,k'} > x, X_{j,k} + X_{j,k'} = a) \label{eq:wrwrwr}
\end{IEEEeqnarray}
for all $x$, $a$, and $j$.   Lemma \ref{lem:markovchainselectionrule} easily yields \eqref{eq:uououo}, while Lemma \ref{lem:StochasticOrderingOfGeoRVs} establishes  \eqref{eq:wrwrwr}.  Both Lemmas are stated and proved below.
\end{IEEEproof}

\begin{lemma}
\label{lem:markovchainselectionrule}
For all $j \in [n]$ and distinct $k,k' \in [c+1]$, the RVs $(\hat{J},X_{j,k}+X_{j,k'},X_{j,k},X_{j,k'})$ form Markov chains
\begin{equation}
\hat{J} - (X_{j,k}+X_{j,k'}) - X_{j,k}, ~~ \hat{J} - (X_{j,k}+X_{j,k'}) - X_{j,k'}.
\end{equation}
\end{lemma}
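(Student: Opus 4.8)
The plan is to observe that $\hat J$ is a deterministic (measurable) function of the row sums $Y_1,\dots,Y_n$ only --- it is the smallest index attaining $\max_j Y_j$ --- and that this row-sum dependence factors through $X_{j,k}+X_{j,k'}$ in exactly the way needed. Fix $j$ and distinct $k,k'$, write $S \equiv X_{j,k}+X_{j,k'}$, and let $R$ denote the collection of all remaining entries of $\Xbf$, namely $(X_{j,l}:l\in[c+1]\setminus\{k,k'\})$ together with $(X_{j',l}:j'\neq j,\ l\in[c+1])$. Then $Y_j = S + \sum_{l\notin\{k,k'\}}X_{j,l}$ is a function of $(S,R)$, and every $Y_{j'}$ with $j'\neq j$ is a function of $R$ alone. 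Hence $(Y_1,\dots,Y_n)$, and therefore $\hat J$, can be written as $g(S,R)$ for a fixed measurable $g$. The key structural fact is that $\hat J$ sees $(X_{j,k},X_{j,k'})$ only through their sum $S$.

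Next I would bring in the independence hypotheses. Since the entries of $\Xbf$ are mutually independent, $(X_{j,k},X_{j,k'})\perp R$; as $(X_{j,k},S)$ is a measurable (indeed bijective) function of $(X_{j,k},X_{j,k'})$, this upgrades to $(X_{j,k},S)\perp R$, which yields the conditional independence $X_{j,k}\perp R\mid S$ (and, separately, $S\perp R$). Conditioning on $\{S=s\}$, the variable $\hat J = g(s,R)$ is then a function of $R$ and is therefore conditionally independent of $X_{j,k}$ given $S=s$; explicitly, for every $i\in[n]$ and every $x$,
\[
\Pbb\big(\hat J=i,\ X_{j,k}=x \mid S=s\big)=\Pbb\big(\hat J=i\mid S=s\big)\,\Pbb\big(X_{j,k}=x\mid S=s\big),
\]
which is precisely the Markov chain $\hat J-(X_{j,k}+X_{j,k'})-X_{j,k}$. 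Interchanging $k$ and $k'$ leaves $S$ unchanged and gives $\hat J-(X_{j,k}+X_{j,k'})-X_{j,k'}$ as well.

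This argument is essentially bookkeeping, so I do not expect a real obstacle; the one spot that needs care is the implication ``$(X_{j,k},X_{j,k'})\perp R \Rightarrow X_{j,k}\perp R\mid S$'', which I would justify from the elementary principle that $(U,V)\perp W$ forces $U\perp W\mid h(U,V)$ for any measurable $h$, applied with $U=X_{j,k}$, $V=X_{j,k'}$, $W=R$, $h(u,v)=u+v$, and then combining it with the representation $\hat J=g(S,R)$. A minor point worth stating explicitly is that $\hat J$ depends on $\Xbf$ only through the row sums (so the column-invariance of the selection rule, while true, is not needed here), and that the whole argument is distribution-free: it uses no property of the geometric law beyond independence of the entries.
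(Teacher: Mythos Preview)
Your argument is correct and rests on the same structural observation as the paper: $\hat J$ depends on the entries of $\Xbf$ only through the row sums $(Y_1,\dots,Y_n)$, and $Y_j$ depends on $(X_{j,k},X_{j,k'})$ only through their sum $S$, while all other entries are independent of $(X_{j,k},X_{j,k'})$. The paper carries this out by explicit computation: it conditions on $\{\Ybf=\ybf\}$, expands $\Pbb(\Ybf=\ybf\mid A^{(a)})$ using the row and column independence, and shows the resulting expression depends on $a$ but not on $x$, so conditioning on $\{S=a\}$ versus $\{S=a,X_{j,k}=x\}$ makes no difference. You instead package the same idea as a one-line conditional-independence fact, namely that $(U,V)\perp W$ implies $U\perp W\mid h(U,V)$, applied with $h(u,v)=u+v$ and $W=R$, and then read off $\hat J=g(S,R)\perp X_{j,k}\mid S$. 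Your route is slightly more abstract and arguably cleaner; the paper's route is more explicit and makes the role of the set $\Ymc_i$ and the selection rule visible. Both are distribution-free, as you note.
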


\begin{IEEEproof}
We establish the first chain; the proof of the second is the same.  Fix a row index $j \in [n]$, and two distinct column indices $k,k' \in [c+1]$.  By the definition of a Markov chain, we must show
\begin{IEEEeqnarray}{rCl}
\label{eq:markcondtest}
& & \Pbb\left( \hat{J} = i ~ \vert X_{j,k} + X_{j,k'} = a, X_{j,k} = x \right) \nonumber \\
& = & \Pbb\left( \hat{J} = i ~ \vert X_{j,k} + X_{j,k'} = a \right)
\end{IEEEeqnarray}
for all $i \in [n]$ (including $i = j$), integer $a \geq 2$, and integer $x \geq 1$. To make our expressions more compact, we introduce the notation for events $A^{(a)}_1 = \{X_{j,k} + X_{j,k'} = a\}$, $A_2 = \{X_{j,k} = x\}$, and $A^{(a)}_{1,2} = A^{(a)}_1 \cap A_2$.  Then, showing \eqref{eq:markcondtest} is equivalent to showing $\Pbb\left( \hat{J} = i ~ \vert A^{(a)} \right)$ is the same for $A^{(a)}$ equal to either $A^{(a)}_{1,2}$ or $A^{(a)}_1$.  Let $\Nbb_{>c} \equiv \{c+1,c+2,\ldots\}$, and define the sets
\begin{equation}
\label{eq:rowsumcond}
\Ymc_i = \left\{ \ybf \! \in \! \Nbb_{>c}^n : 
\begin{cases}
    y_i \geq y_l \!&\!\! \text{if } l \in \{i+1,\ldots,n\},
    \\
    y_i > y_l \!&\!\! \text{if } l \in \{1,\ldots,i-1\}.
  \end{cases}
\right\}, ~ i \in [n].
\end{equation}
Observe that $\{\hat{J} = i\} = \{ \Ybf \in \Ymc_i\}$ for each $i \in [n]$, where $\Ybf = (Y_1,\ldots,Y_n)$ are the row-sums of $\Xbf$.  In words, row $i$ is selected under the row selection rule $\hat{\chi}$ as the smallest index among the row-sum maximizing indices precisely when the row-sums $\Ybf$ lie in $\Ymc_i$ \eqref{eq:rowsumcond}.  Recall we use the phrase ``row-sum'' to indicate the sum of all the elements in a given row, rather than the sum of the rows.  Conditioning on all possible row-sums using the total probability theorem gives:
\begin{IEEEeqnarray}{rCl}
\IEEEeqnarraymulticol{3}{l}
{\Pbb(\hat{J}=i|A^{(a)}) 
}\nonumber \\* \quad \!
&=& \sum_{\ybf \in \Nbb_{>c}^n} \Pbb(\hat{J}=i|\Ybf=\ybf,A^{(a)})\Pbb(\Ybf=\ybf|A^{(a)}) \nonumber \\
&=& \sum_{\ybf \in \Ymc_i \cap \{\ybf : y_j \geq a + c - 1 \}} \Pbb(\hat{J}=i|\Ybf=\ybf,A^{(a)})\Pbb(\Ybf=\ybf|A^{(a)}) \nonumber \\
&=& \sum_{\ybf \in \Ymc_i \cap \{\ybf : y_j \geq a + c - 1 \}} \Pbb(\Ybf=\ybf|A^{(a)})
\end{IEEEeqnarray}
The penultimate equality holds because $\Pbb(\hat{J}=i|\Ybf=\ybf) = 0$ for $\ybf \not\in\Ymc_i$ and $\Pbb(\Ybf = \ybf|A^{(a)}) = 0$ for $\ybf$ with $y_j < a+c-1$, and the last equality holds because the function $\hat{J}$ of $\Ybf$ takes value $i$ over all $\ybf \in \Ymc_i$.  We now focus on $\Pbb(\Ybf=\ybf|A^{(a)})$:
\begin{IEEEeqnarray}{rCl}
\IEEEeqnarraymulticol{3}{l}
{\Pbb(\Ybf=\ybf|A^{(a)})
}\nonumber \\* \quad
&=& \Pbb(Y_j = y_j | A^{(a)}) \prod_{l \neq j} \Pbb(Y_l = y_l | A^{(a)}) \nonumber \\
&=& \Pbb(Y_j = y_j | A^{(a)}) \prod_{l \neq j} \Pbb(Y_l = y_l) \nonumber \\
&=& \Pbb\left( \left. (X_{j,k}+X_{j,k'}) + \sum_{s \in [c+1] \setminus \{k,k'\}} X_{j,s} = y_j \right| A^{(a)} \right) \nonumber \\
& & \negmedspace{} \cdot \prod_{l \neq j} \Pbb(Y_l = y_l).  
\end{IEEEeqnarray}
The first two equalities hold because of the assumed independence of the rows (note $A^{(a)}$ is specific to row $j$), and the last equality is by definition of $Y_j$.  Regardless of whether $A^{(a)} = A^{(a)}_1$ or $A^{(a)} = A^{(a)}_{1,2}$, we may write
\begin{IEEEeqnarray}{rCl}
\IEEEeqnarraymulticol{3}{l}
{\Pbb(\Ybf=\ybf|A^{(a)}) 
}\nonumber \\* \quad
& = & \Pbb\left(\sum_{s \in [c+1] \setminus \{k,k'\}} X_{j,s}  = y_j -a \right) \prod_{l \neq j} \Pbb(Y_l = y_l)\IEEEeqnarraynumspace
\end{IEEEeqnarray}
due to the assumed independence of the columns.  Specifically, $(X_{j,s}, s \in [c+1] \setminus \{k,k'\})$ is independent of $(X_{j,k},X_{j,k'})$.  As the above RHS clearly depends upon $a$ but not $x$, it follows that $\Pbb(\hat{J}=i|A^{(a)}_1) = \Pbb(\hat{J}=i|A^{(a)}_{1,2})$ for each $i \in [n]$, i.e., \eqref{eq:markcondtest} holds.  
\end{IEEEproof}

\begin{lemma} 
\label{lem:StochasticOrderingOfGeoRVs}
Consider independent geometric RVs $X_{1} \sim \mathrm{Geo}(q_{1})$, $X_{2} \sim \mathrm{Geo}(q_{2})$, where $q_{1} \in (0, 1], q_{2} \in [0, 1), q_{1} > q_{2}$ (so $\Ebb[X_{1}] < \Ebb[X_{2}]$).  Then $\delta(x,a) > 0$ for all integers $x>0$ and $a > x+1$, where  
\begin{equation}
\delta(x,a) \equiv \Pbb(X_{2} > x, X_{1} + X_{2} = a) - \Pbb(X_{1} > x, X_{1} + X_{2} = a).
\end{equation}
\end{lemma}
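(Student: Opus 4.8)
The plan is to evaluate $\delta(x,a)$ in closed form directly from the joint probability mass function and then read off its sign. First I would dispose of the degenerate situations: since $X_1,X_2$ are $\Nbb$-valued, $\{X_1+X_2=a\}$ is empty unless $a\in\{2,3,\dots\}$, and $\{X_2>x\}$ depends only on $\lfloor x\rfloor$, so it suffices to take $x,a$ integers with $x\ge 1$ and $a\ge x+2$; and if $q_2=0$ then $X_2=\infty$ almost surely, so we may assume $q_2\in(0,1)$ (as it is in the application of this lemma). Write $p_i\equiv 1-q_i$, so $0\le p_1<p_2\le 1$. Using independence and $\Pbb(X_i=k)=p_i^{k-1}q_i$, I would expand
\[
\Pbb(X_2>x,\,X_1+X_2=a)=q_1q_2\sum_{k=x+1}^{a-1}p_2^{k-1}p_1^{a-1-k},\qquad
\Pbb(X_1>x,\,X_1+X_2=a)=q_1q_2\sum_{k=x+1}^{a-1}p_1^{k-1}p_2^{a-1-k}.
\]

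The key step will be to make the two sums comparable. In the second sum I would substitute $k\mapsto a-k$, which turns its summand into $g(k)\equiv p_2^{k-1}p_1^{a-1-k}$ — identical to the first sum's summand — but now running over $k\in\{1,\dots,a-x-1\}$. The index sets $\{x+1,\dots,a-1\}$ and $\{1,\dots,a-x-1\}$ have equal size and differ by the translation $k\mapsto k+x$, so
\[
\delta(x,a)=q_1q_2\sum_{l=1}^{a-x-1}\bigl(g(l+x)-g(l)\bigr)=q_1q_2\,(p_2^{x}-p_1^{x})\sum_{m=0}^{a-x-2}p_2^{m}p_1^{\,a-x-2-m},
\]
where the last equality comes from pulling the common factor $p_2^{l-1}p_1^{a-1-l-x}$ out of each difference $g(l+x)-g(l)$. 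From here the conclusion is immediate: $q_1q_2>0$; $p_2^{x}-p_1^{x}>0$ because $0\le p_1<p_2$ and $x\ge 1$; and the remaining sum has $a-x-1\ge 1$ nonnegative terms, one of which (the $m=a-x-2$ term) equals $p_2^{a-x-2}>0$. Hence $\delta(x,a)>0$.

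The main obstacle — really the only nontrivial step — will be spotting the reflection $k\mapsto a-k$ together with the translation-by-$x$ structure that collapses the difference of the two binomial-type sums into a single factored product; everything else is bookkeeping. The points I would be careful about are: checking that the exponents stay nonnegative at the endpoints of the sum (in particular $a-1-l-x=0$ at $l=a-x-1$, so only $p_1^0=1$ occurs there, which keeps the argument valid even when $q_1=1$, i.e.\ $p_1=0$), confirming that the two index ranges really are translates (which uses $a\ge x+2$, consistent with $a>x+1$), and noting that strict positivity genuinely requires $q_2>0$ and $x\ge 1$ — precisely the cases retained by the reductions in the first paragraph.
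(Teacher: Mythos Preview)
Your proof is correct and follows essentially the same approach as the paper: expand each probability as a finite sum, use the reflection $k\mapsto a-k$ to turn the second sum into the same summand $g(k)=p_2^{k-1}p_1^{a-1-k}$ over a shifted index range, and factor out $p_2^x-p_1^x$. Your final expression $q_1q_2(p_2^x-p_1^x)\sum_{m=0}^{a-x-2}p_2^{m}p_1^{a-x-2-m}$ is algebraically identical to the paper's $\bigl(1-(\bar q_1/\bar q_2)^x\bigr)\frac{q_1q_2\bar q_2^{a}}{\bar q_1\bar q_2}\sum_{y'=1}^{a-x-1}(\bar q_1/\bar q_2)^{y'}$; you simply avoid the ratio $\bar q_1/\bar q_2$ in the factorization, which has the minor advantage of remaining well defined at $q_1=1$ (i.e.\ $p_1=0$) without a separate argument.
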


\begin{IEEEproof}
Define $\bar{q}_1 = 1 - q_1$ and $\bar{q}_2 = 1 - q_2$.  Then:
\begin{IEEEeqnarray} {rCl}
& & \delta(x,a) \nonumber \\
& = & \sum_{y=x+1}^{a-1}\Pbb(X_{2}=y) \Pbb(X_{1}= a-y) \nonumber \\
& & \negmedspace{} - \sum_{y=x+1}^{a-1}\Pbb(X_{1}=y) \Pbb(X_{2}= a-y) \nonumber \\ 
& = & \sum_{y=x+1}^{a-1} \bar{q}_2^{y-1} q_{2} \bar{q}_1^{a-y-1} q_{1} - \sum_{y=x+1}^{a-1} \bar{q}_1^{y-1} q_{1} \bar{q}_2^{a-y-1} q_{2} \nonumber \\ 
& = & \frac{q_{1}q_{2}\bar{q}_2^{a}}{\bar{q}_1 \bar{q}_2} \sum_{y=x+1}^{a-1} \left(\frac{\bar{q}_1}{\bar{q}_2} \right)^{a-y} - \frac{q_{1}q_{2}\bar{q_{2}}^{a}}{\bar{q}_1 \bar{q}_2} \sum_{y=x+1}^{a-1} \left(\frac{\bar{q}_1}{\bar{q}_2} \right)^{y} \nonumber \\ 
& = & \frac{q_{1}q_{2}\bar{q}_2^{a}}{\bar{q}_1 \bar{q}_2} \sum_{y'=1}^{a-x-1} \left(\frac{\bar{q}_1}{\bar{q}_2} \right)^{y'} - \frac{q_{1}q_{2}\bar{q}_2^{a}}{\bar{q}_1 \bar{q}_2} \sum_{y'=1}^{a-x-1} \left( \frac{\bar{q}_1}{\bar{q}_2} \right)^{y'} \left(\frac{\bar{q}_1}{\bar{q}_2} \right)^{x} \nonumber \\ 
& = & \left(1 - \left(\frac{\bar{q}_1}{\bar{q}_2} \right)^{x} \right) \frac{q_{1}q_{2}\bar{q}_2^{a}}{\bar{q}_1 \bar{q}_2} \sum_{y'=1}^{a-x-1} \left(\frac{\bar{q}_1}{\bar{q}_2} \right)^{y'} > 0.
\end{IEEEeqnarray}
\end{IEEEproof}

\begin{IEEEproof}[Proof of Prop.\ \ref{prop:monoSamplePathCombined} (sample path monotonicity)]
{\bf Case $i)$} when $m \leq c'$ meaning the workload is no larger than the larger blocklength $c'$. We consider two scenarios: $1)$ $m \leq c$, and $2)$ $c < m \leq c'$.  In the first scenario, the increase in the blocklength from $c$ to $c'$ will have no effect on the completion time of the workload, since a single block suffices for both blocklengths, thus $T_{n:n}^{(c',m)}(\omega) = T_{n:n}^{(c,m)}(\omega)$ for all $\omega \in \Omega$.  In the second scenario, $c < m \leq c'$ means under blocklength $c$ there exists a positive integer $k$ and a nonnegative integer $l$ such that $m = k c + l$, i.e., $k+1$ (or $k$, if $l = 0$) blocks are required, with the first $k$ blocks having length $c$ and the last i.e., $(k+1)^{\rm th}$ block having length $l$.  At most one block is required to handle the workload of $m$ packets under blocklength $c'$.  Let $y^{(c)}_{n:n,i}$ be the duration of the $i^{\rm th}$ block for $i \in [k]$ and $y^{(l)}_{n:n,k+1}$ be the durations of the $(k+1)^{\rm th}$ (partial) block under blocklength $c$, and 
\begin{equation}
t^{(c,m)}_{n:n} = T_{n:n}^{(c,m)}(\omega) = \sum_{i=1}^k y^{(c)}_{n:n,i} + y^{(l)}_{n:n,k+1}
\end{equation}
the time at which the workload is completed under blocklength $c$.  Analogously, $t^{(c',m)}_{n:n} = T_{n:n}^{(c',m)}(\omega) = y^{(c')}_{n:n,1}$, is the completion time under $c'$.   We must show $t^{(c',m)}_{n:n} \leq t^{(c,m)}_{n:n}$ for all $\omega$.  Let $r_j(t)$ be the number of receptions by receiver $j$ by time $t$ and observe $r_j(t^{(c,m)}_{n:n}) \geq m$ for all $j \in [n]$.  Let $t^{(c',m)}_j = \min\{ t : r_j(t) = m\}$, and thus $t^{(c',m)}_j \leq t^{(c,m)}_{n:n}$, and $t^{(c',m)}_{n:n} = \max_{j \in [n]} t^{(c',m)}_j \leq t^{(c,m)}_{n:n}$.

{\bf Case $ii)$}  when $m > c'$ meaning the workload exceeds the larger blocklength $c'$. Let $T = T_{n:n}^{(c,m)}$ be the (random) completion time to transmit all $m$ packets with blocklength $c$, and let $T' = T_{n:n}^{(c',m)}$ be the time with blocklength $c'$.  Prop.\ \ref{prop:monoSamplePathCombined} (case $ii)$) may be restated: 
\begin{equation}
T'(\omega) \leq T(\omega), ~ \forall \omega \in \Omega ~~ \Leftrightarrow ~~ c' = kc, ~ k \in \{2,3,4,\ldots\}.
\end{equation}
We first prove the forward direction (necessity): $T'(\omega) \leq T(\omega), ~ \forall \omega \in \Omega \Rightarrow c' = kc$ for integer $k \geq 2$, which is equivalent to: if $c' = k c +l$ with integer $k \geq 1$ and integer $l \in \{1,\ldots,c-1\}$ then there exists $\omega : T'(\omega) > T(\omega)$.  We begin with perhaps the simplest example showing that increasing the blocklength may increase the completion time for some sample paths.  In particular, consider $n=2$ receivers and a workload of $m=4$ packets which are transmitted using one of two schemes: $i)$ a blocklength of $c=2$ for a total of $2$ full blocks, and $ii)$ a blocklength of $c'=3$ for a total of $1$ full and $1$ partial block, as shown in Fig.\ \ref{fig:samplePathBasicCounterexample}. The realization $\omega$ for the erasure channels is illustrated at the top of the figure for the first $6$ time slots, where $s$ ($f$) indicates a successful (failed) transmission for that receiver in that time slot.  For this realization, the longer blocklength construction finishes after the shorter blocklength construction.  The times $t_a,t_b,t_c,t_d,t_e$ are defined in the subsequent generalization of this example.  
\begin{figure}[!htbp]
\centering
\includegraphics[width=0.3\textwidth]{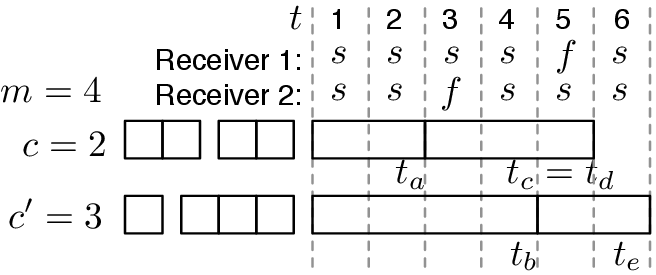}
\caption{Simple example illustrating that increasing the blocklength may increase the completion time for some sample paths.}
\label{fig:samplePathBasicCounterexample}
\end{figure}

We now generalize the above example; see Fig.\ \ref{fig:samplePathCounterExample}.  Let there be $n=2$ receivers.  By assumption, the larger blocklength is $c' = k c + l$ for integer $k \geq 1$ and integer $l \in \{1,\ldots,c-1\}$.  The blocks under both blocking constructions are shown as rectangles in the figure, and the numbers inside the blocks are the block indices.  Give each receiver a counter initialized at zero, and increment each receiver's counter upon a \textit{useful} reception at that receiver, where a reception is useful if that receiver has not yet received a blocklength of receptions within the current block.  These quantities are shown for the two receivers under the two blocking constructions for the various intervals of time shown in the figure.  The realization $\omega$ of the erasure channels for each receiver is shown at the top of the figure, with $s$ ($f$) indicating success (failure), respectively.  Under this realization, the completion time of block index $k$ under blocklength $c$ is time $t_a = k c$.  As of time $t_a$, the first block under blocklength $c'$ has received $k c$ out of the $kc+l$ receptions needed  to complete the block.  Let time $t_b = t_a + c$ be the time of completion of the first block under blocklength $c'$.  Next, let time $t_c = t_b + c-l$ be the time of completion of block $k+1$ under blocklength $c$.  Let $t_d = m + c-l$ be the completion time of the workload $m$ under blocklength $c$, and $t_e = t_d + c-l$ be the workload completion time under blocklength $c'$.  The key insight is this: the non-overlapping $c-l$ failed receptions for each of the two receivers occur in the same block (block $k+1$) under blocklength $c$ (and therefore the duration of the block is delayed by $c-l$); while under blocklength $c'$ the failed receptions for the two receivers occur in different blocks, which delays both of the first two blocks under blocklength $c'$ by $c-l$.  

\begin{figure*}[!htbp]
\centering
\includegraphics[width=0.775\textwidth]{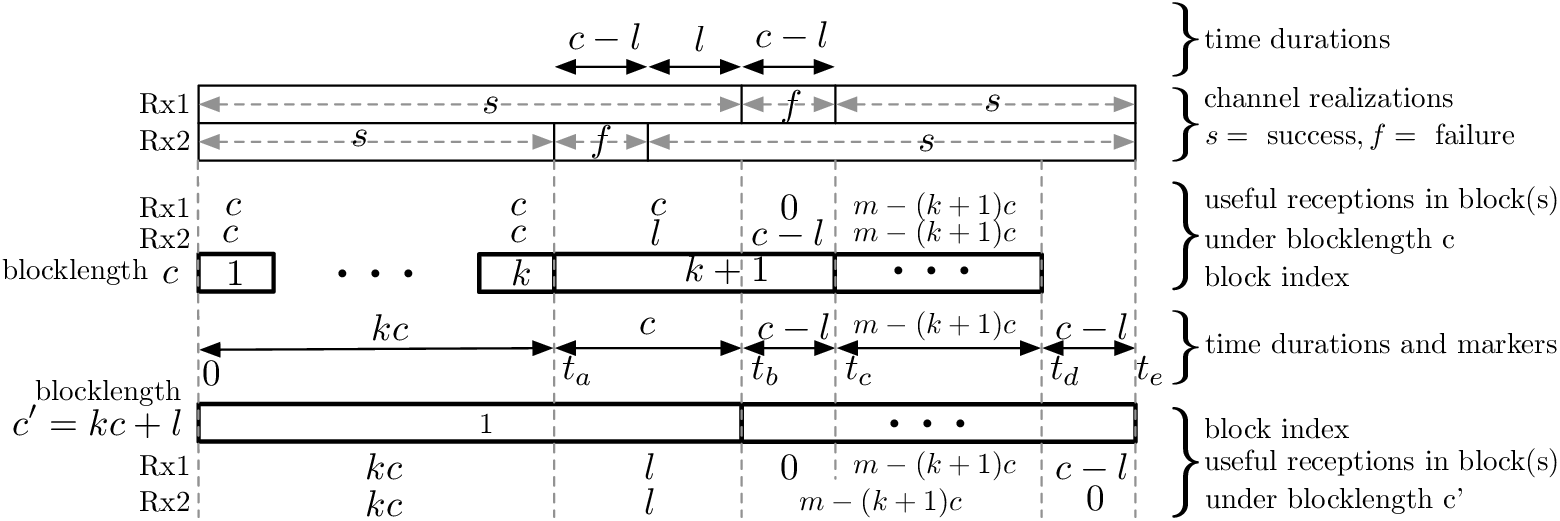}
\caption{For any blocklengths $c,c'$ with $c' > c$ and $c' = kc + l$ with $l \in \{1,\ldots,c-1\}$ and workload $m > c'$, there exists a realization under which the longer blocklength construction will have a longer completion time.}
\label{fig:samplePathCounterExample}
\end{figure*}

We next prove the reverse direction (sufficiency): if $c' = kc$ for some integer $k \geq 2$, then $T'(\omega) \leq T(\omega) ~\forall \omega \in \Omega$.  See Fig.\ \ref{fig:samplePathDoubling}.  This proof is essentially an extension of the proof of case $i)$.  Fix the realization $\omega$; in what follows we suppress the dependence upon $\omega$.  By assumption $c' = k c$ for some integer $k \geq 2$ and $m > c'$.  Write $m = k' c' + l'$ for integer $k' \geq 1$ and integer $l' \in \{0,\ldots,c'-1\}$, so that the workload $m$ requires $k'+1$ blocks under blocklength $c'$: $k$ full blocks of size $c'$ and, if $l' > 0$, a partial block of size $l'<c'$.  Let $(y^{(c')}_{n:n,1},\ldots,y^{(c')}_{n:n,k'},y^{(l')}_{n:n,k'+1})$ denote the durations of time required to complete the $k'+1$ blocks under blocklength $c'$, and $(t^{(c')}_{n:n,1},\ldots,t^{(c')}_{n:n,k'},t^{(c')}_{n:n,k'+1})$ the corresponding sequence of partial sums, so that $t^{(c')}_{n:n,\iota}$ is the completion time of block $\iota$ under blocklength $c'$, and $t^{(c')}_{n:n,k'+1}$ the completion time of the workload.  Write $l' = k^{''} c + l$ for integer $k^{''} \geq 0$ and integer $l \in \{0,\ldots,c-1\}$, so that workload $m$ requires $k'k+k^{''}+1$ blocks under blocklength $c$: $k'k+k^{''}$ blocks of size $c$ and, if $l > 0$, a partial block of size $l<c$.  Similarly, let $(y^{(c)}_{n:n,1},\ldots,y^{(c)}_{n:n,k'k+k^{''}},y^{(l)}_{n:n,k'k+k^{''}+1})$ denote the durations of time required to complete the $k'k+k^{''}+1$ blocks under blocklength $c$, and $(t^{(c)}_{n:n,1},\ldots,t^{(c)}_{n:n,k'k+k^{''}},t^{(c)}_{n:n,k'k+k^{''}+1})$ the corresponding sequence of partial sums, so that $t^{(c)}_{n:n,\iota}$ is the completion time of block $\iota$ under blocklength $c$, and $t^{(c)}_{n:n,k'k+k^{''}+1}$ the completion time of the workload.  We must show $t^{(c')}_{n:n,k'+1} \leq t^{(c)}_{n:n,k'k+k^{''}+1}$.

We first show $t^{(c')}_{n:n,\iota} \leq t^{(c)}_{n:n,k\iota}$ for each $\iota \in \{1,\ldots,k'\}$ by induction.  When $\iota = 1$,  $t^{(c')}_{n:n,1} \leq t^{(c)}_{n:n,k}$ can be verified by using the same ideas used in proving case $i)$.  Next, assuming $t^{(c')}_{n:n,\iota-1} \leq t^{(c)}_{n:n,k(\iota-1)}$ and denoting $t_{j,\iota}^{(c')} =  \min\{ t: r_{j}(t_{n:n, \iota - 1}^{(c')}, t] = c'=kc \}$ where $r_{j}(t_{1}, t_{2}]$ is the number of successful (not necessarily useful) receptions by receiver $j$ during the time interval $(t_{1}, t_{2}]$, we have:
\begin{equation}
t_{j,\iota}^{(c')} \stackrel{(a)}{\leq} \min\{ t: r_{j}(t_{n:n, k(\iota - 1)}^{(c)}, t] = c'=kc \}
\stackrel{(b)}{\leq} t_{n:n, k \iota}^{(c)},
\end{equation}
where $(a)$ is due to the induction hypothesis and $(b)$ is seen to be true by observing $r_{j} (t_{n:n, k(\iota - 1)}^{(c)}, t_{n:n, k \iota}^{(c)}] \geq kc $ for all $j \in [n]$. So $t_{n:n, \iota}^{(c')} = \max_{j} t_{j, \iota}^{(c')} \leq t_{n:n, k \iota}^{(c)}$, finishing the induction step. Finally, for the last partial block under $c'$, a similar argument (again to the one used in proving case $i)$, as effectively the remaining workload does not exceed the larger blocklength $c'$) together with the just proved result $t^{(c')}_{n:n,\iota} \leq t^{(c)}_{n:n,k\iota}$ (specialized with $\iota = k'$) gives $t_{n:n, k'+1}^{(c')} \leq t_{n:n, k'k + k'' + 1}^{(c)}$.

\begin{figure}[!htbp]
\centering
\includegraphics[width=0.5\textwidth]{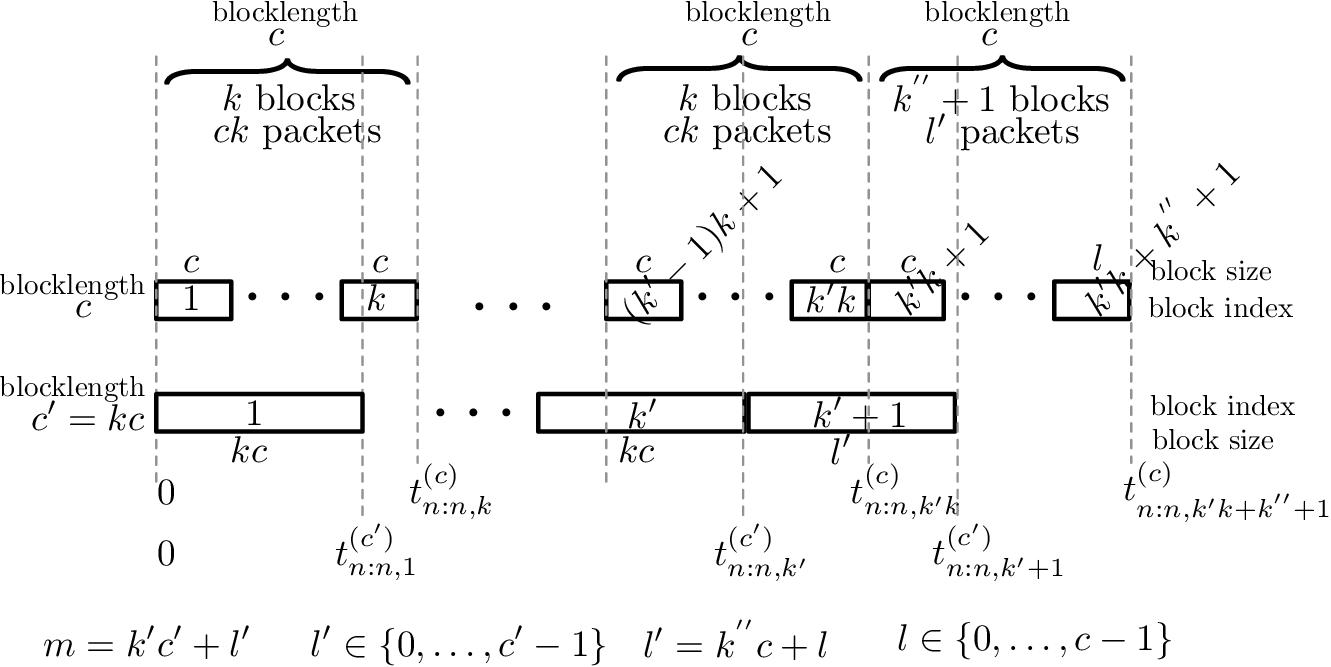}
\caption{For any blocklengths $c,c'$ with $c' > c$ and $c' = kc$ and workload $m > c'$, the longer blocklength construction will have a completion time no longer than that under the shorter blocklength construction.}
\label{fig:samplePathDoubling}
\end{figure}

\end{IEEEproof}

\subsection{Proofs from \S\ref{ssec:rossdlctightinc}}
\label{app:rossdlctightincPf}

\begin{IEEEproof}[Proof of Prop.\ \ref{prop:LBandUBinCHomoRxInftyFieldsize}]
We first show the upper bound $\tilde{u}(n,c)$ is tight. Specializing $r=1$ in the upper bound in Prop.\ \ref{prop:boundsOnEY3} (\S \ref{sec:delayRLNC}), we have
\begin{equation}
\Ebb[\tilde{Y}_{n:n}^{(c)}]  \leq s + n \left( (c-s) Q(c,s) + \frac{s^c \erm^{-s}}{\Gamma(c)} \right), ~~ s \geq 0.
\end{equation}
When $r=1$ the optimal Ross's bound is given by $c + n c \frac{(s^*)^c \erm^{-s^*}}{\Gamma(c+1)}$ where $s^{*} = Q^{-1}\left(c,\frac{1}{n}\right)$ is such that $n Q(c,s^{*}) = 1$. Yet to show Ross's bound is \textit{asymptotically} tight it suffices to work with another choice of $s$. For this we need a lower bound on $c!$ \cite{Bat2008}:
\begin{equation}
\sqrt{2 \pi c}  \left(\frac{c}{\erm} \right)^c \leq c! \leq \frac{1}{\sqrt{1-1/c}}\sqrt{2 \pi c} \left(\frac{c}{\erm} \right)^c.
\end{equation}
Then, dividing both sides of the Ross's upper bound by $c$ and applying the lower bound on the factorial gives
\begin{IEEEeqnarray}{rCl}
\IEEEeqnarraymulticol{3}{l}
{\frac{1}{c} \Ebb[\tilde{Y}_{n:n}^{(c)}] 
}\nonumber \\* \quad
& \leq & \frac{s}{c} + n \left( \left(1-\frac{s}{c}\right) Q(c,s) + \frac{s^c}{\erm^s} \frac{1}{c!} \right) \nonumber \\
& \leq & \frac{s}{c} + n \left( \left(1-\frac{s}{c}\right) Q(c,s) + \frac{s^c}{\erm^s} \frac{1}{\sqrt{2 \pi c}} \left(\frac{\erm}{c}\right)^c \right) \nonumber \\
& = & \frac{s}{c} + n \left( \left(1-\frac{s}{c}\right) Q(c,s) + \frac{1}{\sqrt{2 \pi c}} \left(\frac{s}{c}\right)^c \erm^{c\left(1-\frac{s}{c}\right)} \right).\IEEEeqnarraynumspace
\end{IEEEeqnarray}
Choosing $s = c$ achieves the desired tightness as $c \to \infty$
\begin{equation}
\frac{1}{c} \Ebb[\tilde{Y}_{n:n}^{(c)}] \leq \tilde{u}(n,c) \equiv 1 + n/\sqrt{2 \pi c} \to 1.
\end{equation}

Next, we show the lower bound $\tilde{l}(n,c)$ is tight. Specializing $r=1$ in the lower bound in Prop.\ \ref{prop:boundsOnEY3}, we have:
\begin{equation}
\label{eq:delaCalinSec4C}
\Ebb[\tilde{Y}_{n:n}^{(c)}] \geq t - (t-c) (1 - Q(c,t))^{n-1}, ~~ t \geq c 
\end{equation}
Division by $c$ and reparameterization of $t/c$ by $t$, and then by $t+1$ gives the de la Cal's bound as
\begin{IEEEeqnarray}{rCl}
\frac{1}{c} \Ebb[\tilde{Y}_{n:n}^{(c)}] 
& \geq & \frac{t}{c} - \left(\frac{t}{c}-1 \right) (1 - Q(c,t))^{n-1}, ~\frac{t}{c} \geq 1 \nonumber \\
\frac{1}{c} \Ebb[\tilde{Y}_{n:n}^{(c)}] 
& \geq & t - (t-1) (1 - Q(c,t c))^{n-1},  ~t \geq 1 \nonumber \\
\frac{1}{c} \Ebb[\tilde{Y}_{n:n}^{(c)}] 
& \geq & 1 +  t(1 - (1 - Q(c,c(1+ t)))^{n-1}),  ~t \geq 0. \IEEEeqnarraynumspace
\end{IEEEeqnarray}
Since the quantity $t(1 - (1 - Q(c,c(1+ t)))^{n-1})$ is nonnegative for $t \geq 0$, we argue that the asymptotic tightness of the Ross's bound implies the asymptotic tightness of the de la Cal's bound.  To see this, after choosing nonnegative $t$ as a function of $c$, we write de la Cal's bound as $1 + D(c)$ with $D(c) \geq 0$. Observe
\begin{equation}
1 \leq 1 + D(c) \leq \frac{1}{c} \Ebb[\tilde{Y}_{n:n}^{(c)}] \leq \tilde{u}(n,c),
\end{equation}
with $\tilde{u}(n,c) \to 1$ as $c \to \infty$, and hence it must hold $D(c) \to 0$ by the pinch lemma.

Numerical investigation suggests $t(n,c) = c + \sqrt{c \log n}$ can be used in \eqref{eq:delaCalinSec4C}, which after further normalizing by $c$ becomes:
\begin{IEEEeqnarray}{rCl}
\IEEEeqnarraymulticol{3}{l}
{\frac{1}{c} \Ebb[\tilde{Y}_{n:n}^{(c)}]
}\nonumber \\* \quad \!\!
 & & \geq 1+\sqrt{\frac{\log n}{c}}\left(1 - \left(1 - Q\left(c,c + \sqrt{c \log n} \right) \right)^{n-1} \right),\IEEEeqnarraynumspace
\end{IEEEeqnarray}
which is the lower bound $\tilde{l}(n,c)$ defined in \eqref{eq:LBandUBinCHomoRxInftyFieldsize}.
\end{IEEEproof}

\section{Proofs from \S \ref{sec:deponnumrx}}
\label{app:deponnumrx}

\begin{IEEEproof}[Proof of Prop.\ \ref{prop:dlcasymptoticinn}]
Fix integer $r \geq 1$.  The de la Cal's lower bound on $\Ebb\left[\left(\tilde{Y}_{n:n}^{(c)}\right)^r\right]$ may be written as (Prop.\ \ref{prop:boundsOnEY3})
\begin{IEEEeqnarray}{rCl}
\IEEEeqnarraymulticol{3}{l}
{\Ebb\left[\left(\tilde{Y}_{n:n}^{(c)}\right)^r\right] 
}\nonumber \\* \quad
& \geq & t - \left(t - \Ebb\left[\left(\tilde{Y}^{(c)}\right)^r \right] \right) \Pbb\left(\left(\tilde{Y}_{n-1:n-1}^{(c)} \right)^r \leq t \right) \nonumber \\
& = & t - \left(t - \Ebb\left[\left(\tilde{Y}^{(c)}\right)^r \right] \right) F_{\tilde{Y}_{n-1:n-1}^{(c)}}\left( t^{\frac{1}{r}} \right) 
\end{IEEEeqnarray}
for any $t \geq  \Ebb\left[\left(\tilde{Y}^{(c)}\right)^r \right]$.  We establish the asymptotic tightness of this lower bound as $n \to \infty$ by showing the existence of $t_n \to \infty$ such that
\begin{equation}
\label{eq:AsyminnTightdelaCal}
\lim_{n \to \infty} \frac{t_n - \left(t_n - \Ebb\left[\left(\tilde{Y}^{(c)}\right)^r \right] \right) 
F_{\tilde{Y}_{n-1:n-1}^{(c)}}\left( t_{n}^{\frac{1}{r}} \right)}
{\Ebb\left[\left(\tilde{Y}_{n:n}^{(c)}\right)^r\right]} = 1.
\end{equation}
From Prop.\ \ref{prop:normalizingConstantsCunchangingGamma}, we have (with the normalizing sequence $(a_{n}, b_{n})$ from \eqref{eq:bnsequencegamma})
\begin{eqnarray}
\label{eq:EVTindelaCal}
F_{\tilde{Y}_{n-1:n-1}^{(c)}}\left( t_{n}^{\frac{1}{r}} \right)
& \stackrel{}{\sim} & \Lambda\left( \frac{t_n^{\frac{1}{r}} - b_{n-1}}{a_{n-1}} \right),
\end{eqnarray}
as $n \to \infty$. Here the notation $\sim$ indicates asymptotic (in $n$) equivalence between functions, namely $f(n) \sim g(n)$ means $\lim_{n \to \infty} f(n)/g(n) = 1$. Since Lem.\ \ref{lem:momentConvergence} (as well as Props.\ \ref{prop:normalizingConstantsCunchangingGamma} and \ref{prop:momentConvergenceResnick1987}) implies that $\Ebb\left[\left(\tilde{Y}_{n:n}^{(c)}\right)^r\right] \sim b_{n}^{r}$,
to show \eqref{eq:AsyminnTightdelaCal}, we need to show its numerator can be asymptotically (in $n$) equal to $b_{n}^{r}$ by appropriately choosing $t_{n}$. Specifically, let $t_{n} = \left( a_{n-1} \left( - \frac{1}{2} \log\log(n-1)\right) + b_{n-1} \right)^{r}$. Note that with this choice of $t_{n}$, for all $n \in \Nbb$, $t_{n}$ grows faster than, say, $\left(\frac{1}{2} \log(n-1) \right)^{r}$, and thus there exists an $N_{r} \geq 1 + \erm^{2 \left( \Ebb\left[ \left(\tilde{Y}^{(c)}\right)^{r} \right] \right)^{\frac{1}{r}}}$ such that $t_n \geq \Ebb\left[\left(\tilde{Y}^{(c)}\right)^r \right]$ for all $n \geq N_{r}$ meaning the bound is valid. Now observe $t_{n} \sim b_{n-1}^{r} \sim b_{n}^{r}$ and $\Ebb\left[\left(\tilde{Y}_{n:n}^{(c)}\right)^r\right]$ is constant (in $n$), which gives
\begin{equation}
\label{eq:numPart1}
\lim_{n \to \infty} \frac{t_{n}}{b_{n}^{r}} = 1 = \lim_{n \to \infty} \frac{t_{n} - \Ebb\left[\left(\tilde{Y}_{n:n}^{(c)}\right)^r\right]}{b_{n}^{r}}.
\end{equation}
Substitution of $t_{n}$ into \eqref{eq:EVTindelaCal} yields
\begin{equation}
\label{eq:numPart2}
F_{\tilde{Y}_{n-1:n-1}^{(c)}}\left( t_{n}^{\frac{1}{r}} \right)
\stackrel{}{\sim} \Lambda\left( - \frac{1}{2} \log\log(n-1) \right)
\stackrel{}{\to} 0,
\end{equation}
as $n \to \infty$.  Observing $b_{n}^{r} \to \infty$ as $n \to \infty$, it follows that \eqref{eq:numPart1} and \eqref{eq:numPart2} suffices to show \eqref{eq:AsyminnTightdelaCal}.
\end{IEEEproof}

\begin{IEEEproof}[Proof of Prop.\ \ref{prop:rossasymptoticinn}]
Fix integer $r \geq 1$, and set $s_n = b_n^r$ for $b_n$ in \eqref{eq:bnsequencegamma}.  Our obligation is to show
\begin{equation}
\lim_{n \to \infty} \frac{ b_n^r + n \int_{b_n^r}^{\infty} \Pbb\left( \left(\tilde{Y}^{(c)}\right)^r  > t \right) \drm t}{\Ebb\left[\left(\tilde{Y}_{n:n}^{(c)}\right)^r\right]} = 1.
\end{equation}
First, by Lemma \ref{lem:momentConvergence}
\begin{equation}
\lim_{n \to \infty} \frac{\Ebb\left[\left(\tilde{Y}_{n:n}^{(c)}\right)^r\right]}{b_n^r} = 1.
\end{equation}
Thus, it suffices to show 
\begin{IEEEeqnarray}{rCl}
& & \lim_{n \to \infty} \frac{n}{b_n^r} \int_{b_n^r}^{\infty} \Pbb\left( \left(\tilde{Y}^{(c)}\right)^r  > t \right) \drm t \nonumber \\
& = & \lim_{n \to \infty} \frac{\int_{b_n^r}^{\infty} \Pbb\left( \left(\tilde{Y}^{(c)}\right)^r  > t \right) \drm t}{\frac{b_n^r}{n}} = 0.
\end{IEEEeqnarray}
Observe the limit of both numerator and denominator in the above expression are zero, and thus we may apply L'H\^opital's rule, and then Leibniz's rule:
\begin{IEEEeqnarray}{rCl}
\label{eq:tytywweeweerrttt}
& & \lim_{n \to \infty} \frac{n}{b_n^r} \int_{b_n^r}^{\infty} \Pbb\left( \left(\tilde{Y}^{(c)}\right)^r  > t \right) \drm t  \nonumber \\
& = & \lim_{n \to \infty} \frac{\frac{\drm}{\drm n} \int_{b_n^r}^{\infty} \Pbb\left( \left(\tilde{Y}^{(c)}\right)^r  > t \right) \drm t}{\frac{\drm}{\drm n}\left(\frac{b_n^r}{n}\right)}  \nonumber \\
& = & \lim_{n \to \infty} \frac{-\Pbb\left( \tilde{Y}^{(c)}  > b_n \right) \frac{\drm}{\drm n} b_n^r}{\frac{\drm}{\drm n}\left(\frac{b_n^r}{n}\right)}.
\end{IEEEeqnarray}
As will be shown below
\begin{equation}
\label{eq:tyhhhgbbbnnnn}
\lim_{n \to \infty} n\Pbb\left( \tilde{Y}^{(c)}  > b_n \right) = 1.
\end{equation}
Substituting \eqref{eq:tyhhhgbbbnnnn} into \eqref{eq:tytywweeweerrttt} gives
\begin{IEEEeqnarray}{rCl}
\IEEEeqnarraymulticol{3}{l}
{\lim_{n \to \infty} \frac{n}{b_n^r} \int_{b_n^r}^{\infty} \Pbb\left( \left(\tilde{Y}^{(c)}\right)^r  > t \right) \drm t 
}\nonumber \\* \quad
& = & \lim_{n \to \infty} \frac{-\frac{1}{n} \frac{\drm}{\drm n} b_n^r}{\frac{\drm}{\drm n}\left(\frac{b_n^r}{n}\right)} 
= \lim_{n \to \infty} \frac{-\frac{1}{n} \frac{\drm}{\drm n} b_n^r}{\frac{1}{n} \frac{\drm}{\drm n} b_n^r - \frac{b_n^r}{n^2}} \nonumber \\
& = &\lim_{n \to \infty} \frac{1}{\frac{b_n^r}{n\frac{\drm}{\drm n} b_n^r} - 1} 
= \lim_{n \to \infty} \frac{1}{\frac{b_n}{r n\frac{\drm}{\drm n} b_n} - 1}.
\end{IEEEeqnarray}
To establish the desired limit of $0$, it suffices to show the first term in the denominator grows to infinity in $n$:
\begin{IEEEeqnarray}{rCl}
\IEEEeqnarraymulticol{3}{l}
{\lim_{n \to \infty} \frac{b_n}{r n\frac{\drm}{\drm n} b_n} 
}\nonumber \\* \quad
& = & \lim_{n \to \infty} \frac{-\log \Gamma(c) + \log n + (c-1) \log \log n}{r \left(1 + \frac{c-1}{\log n}\right)} = \infty.\IEEEeqnarraynumspace
\end{IEEEeqnarray}
It remains to establish \eqref{eq:tyhhhgbbbnnnn}.  It is well-known that
\begin{equation}
\lim_{s \to \infty} \frac{Q(c,s)}{\frac{s^{c-1}\erm^{-s}}{\Gamma(c)}} = 1, 
\end{equation}
where $Q(c,s) = \Pbb(\tilde{Y}^{(c)} > s)$.  Substituting this into \eqref{eq:tyhhhgbbbnnnn} gives
\begin{IEEEeqnarray}{rCl}
\IEEEeqnarraymulticol{3}{l}
{\lim_{n \to \infty} n Q(c, b_n)
= \lim_{n \to \infty} n \frac{b_n^{c-1}\erm^{-b_n}}{\Gamma(c)}
= \lim_{n \to \infty}\left( \frac{b_n}{\log n} \right)^{c-1}
}\nonumber \\* \quad
= \lim_{n \to \infty} \left( - \frac{\log \Gamma(c)}{\log n} + 1 + (c-1) \frac{\log \log n}{\log n} \right)^{c-1} = 1.\IEEEeqnarraynumspace
\end{IEEEeqnarray}
We observe that for any given $n$ the optimal $s^*$ satisfies $n \Pbb(\tilde{Y}^{(c)} > s^*) = 1$; it is in this sense that $s_n = b_n^r$ is not only sufficient for the bound to be asymptotically tight, but also the asymptotically optimal choice for $s$.
\end{IEEEproof}

\section{Lower and upper bounds on the expected maximum order statistic}
\label{app:delaCalAndRossBounds}
This appendix presents two inequalities on the expected maximum order statistic, $\Ebb[Z_{n:n}]$: a lower bound due to de la Cal and C\'{a}rcamo \cite{CalCar2005} and an upper bound due to Ross and Pek\"{o}z \cite{RosPek2007}.  They have been used in several of our sections. It is important to note that $i)$ they hold for any finite parameters ($c$, $n$), and $ii)$ partially informed by the EVT results in \S\ref{sec:deponnumrx} regarding the choice of free parameters, they can both be shown to be (almost) asymptotically tight in $c$ (when $r = 1$, see Prop.\ \ref{prop:LBandUBinCHomoRxInftyFieldsize}), or in $n$ (Props.\ \ref{prop:dlcasymptoticinn} and \ref{prop:rossasymptoticinn}).

We present the inequalities, then give a pertinent example of their application, including an illustration (Prop.\ \ref{prop:rossdlcexpexample}) of how to use the bounds to establish asymptotic tightness (in $n$).

\begin{proposition}[\cite{CalCar2005} Thm.\ 13]
\label{prop:delacal}
Let $(Z_1,\ldots,Z_n)$ be independent and identically distributed (not necessarily nonnegative) RVs.  Then:
\begin{equation}
\label{eq:delaCallb}
\Ebb[Z_{n:n}] \geq t - (t-\mu_Z) F_Z^{n-1}(t), ~ \forall t \geq \mu_Z,
\end{equation}
where $\mu_Z, F_Z$ denote the mean and CDF of each $Z_j$.
\end{proposition}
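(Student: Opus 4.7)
The plan is to establish a pathwise inequality for $Z_{n:n}$ that splits on whether the maximum over $Z_2,\ldots,Z_n$ exceeds $t$, and then take expectations using independence of $Z_1$ from the remaining variables. Let $M \equiv \max(Z_2,\ldots,Z_n)$, so that $Z_{n:n}=\max(Z_1,M)$ and $M$ is independent of $Z_1$ (since $Z_1,\ldots,Z_n$ are iid).

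The first step is to show the sample-path bound
\begin{equation}
Z_{n:n} \geq t\,\mathbf{1}_{\{M>t\}} + Z_1\,\mathbf{1}_{\{M\leq t\}},
\end{equation}
which is verified by cases: on $\{M>t\}$, $Z_{n:n}\geq M>t$ so the inequality reduces to $Z_{n:n}\geq t$; on $\{M\leq t\}$, $Z_{n:n}\geq Z_1$ trivially, matching the right-hand side. This pathwise bound holds without any sign restriction on the $Z_j$, which is important since the proposition is stated for RVs that are not necessarily nonnegative.

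Next, take expectations on both sides. The term $t\,\mathbf{1}_{\{M>t\}}$ contributes $t\,\Pbb(M>t)=t\,(1-F_Z^{n-1}(t))$, using iid-ness of $(Z_2,\ldots,Z_n)$. For the second term I would exploit independence of $Z_1$ and $M$:
\begin{equation}
\Ebb\bigl[Z_1\,\mathbf{1}_{\{M\leq t\}}\bigr] = \Ebb[Z_1]\,\Pbb(M\leq t) = \mu_Z\,F_Z^{n-1}(t).
\end{equation}
Summing the two expectations gives $\Ebb[Z_{n:n}] \geq t(1-F_Z^{n-1}(t)) + \mu_Z F_Z^{n-1}(t) = t-(t-\mu_Z)F_Z^{n-1}(t)$, the claimed bound.

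There is no real obstacle here: the pathwise inequality is clean, and the two resulting expectations factor immediately by independence, with no need for the $Z_j$ to be nonnegative. The restriction $t\geq\mu_Z$ in the statement is not needed for the inequality to be valid (the argument above goes through for all $t$), but it is what makes the bound nontrivial: for $t\geq\mu_Z$, the right-hand side lies between $\mu_Z$ and $t$, improving upon the trivial bound $\Ebb[Z_{n:n}]\geq\mu_Z$, whereas for $t<\mu_Z$ the bound is dominated by $\mu_Z$ and carries no information. One brief sanity check I would include is the $n=1$ case, in which $F_Z^{n-1}\equiv 1$ and the bound collapses to $\Ebb[Z_1]\geq \mu_Z$ with equality, confirming the bound is tight at the degenerate endpoint.
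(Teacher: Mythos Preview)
Your proof is correct. The paper does not supply its own proof of this proposition---it simply cites the result from \cite{CalCar2005}---so there is nothing to compare against directly. Your argument via the pathwise inequality $Z_{n:n}\geq t\,\mathbf{1}_{\{M>t\}}+Z_1\,\mathbf{1}_{\{M\leq t\}}$ and independence of $Z_1$ from $M=\max(Z_2,\ldots,Z_n)$ is clean and complete, and your remarks on why the restriction $t\geq\mu_Z$ is only there to make the bound informative (rather than being needed for validity) and on the $n=1$ sanity check are both correct.
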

Prop.\ \ref{prop:delacal} can be extended to the case of RVs that are independent but not necessarily identically distributed.  
\begin{proposition}[\cite{RosPek2007} \S4.6]
\label{prop:ross}
Let $(Z_1,\ldots,Z_n)$ be nonnegative (not necessarily independent) RVs.  Then:
\begin{equation} 
\label{eq:rossub}
\Ebb[Z_{n:n}] \leq s + \sum_{j=1}^{n} \int_s^{\infty} (1-F_{Z_j}(z)) \drm z, ~ \forall s \geq 0,
\end{equation}
where $F_{Z_j}$ is the CDF of $Z_j$.  The bound is tightest for $s^*$ satisfying 
\begin{equation}
\label{eq:rossuboptcond}
\sum_{j=1}^{n} (1 - F_{Z_j}(s^*)) = 1.
\end{equation}
\end{proposition}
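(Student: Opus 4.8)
The plan is to prove \eqref{eq:rossub} via a deterministic (sample-path) inequality and then to settle the optimality claim by elementary calculus. The key observation is that for any constant $s \geq 0$ and any nonnegative reals $z_1,\ldots,z_n$ we have the pointwise inequality
\begin{equation}
\max_{j \in [n]} z_j \leq s + \sum_{j=1}^n (z_j - s)^+ ,
\end{equation}
where $x^+ \equiv \max(x,0)$. First I would verify this in two cases: if every $z_j \leq s$ then the left side is at most $s$ while the right side equals $s$; otherwise let $j^\star$ attain the maximum, so that $\max_j z_j = z_{j^\star} = s + (z_{j^\star}-s)^+ \leq s + \sum_j (z_j - s)^+$. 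Applying this inequality with $z_j = Z_j(\omega)$ and taking expectations --- using only linearity of expectation, so that no independence assumption is needed --- yields $\Ebb[Z_{n:n}] \leq s + \sum_{j=1}^n \Ebb[(Z_j - s)^+]$.

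Next I would rewrite each summand using the tail-integral formula for the expectation of a nonnegative random variable: $\Ebb[(Z_j - s)^+] = \int_0^\infty \Pbb\big((Z_j - s)^+ > t\big)\,\drm t = \int_0^\infty \Pbb(Z_j > s + t)\,\drm t = \int_s^\infty \big(1 - F_{Z_j}(z)\big)\,\drm z$, where the last step is the substitution $z = s+t$ and the distinction between $>$ and $\geq$ is immaterial under the integral. Substituting this into the displayed bound gives \eqref{eq:rossub} for every $s \geq 0$.

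For the statement about $s^\star$, I would set $g(s) \equiv s + \sum_{j=1}^n \int_s^\infty \big(1 - F_{Z_j}(z)\big)\,\drm z$, so that the goal is to minimize $g$ over $s \geq 0$. By the fundamental theorem of calculus $g'(s) = 1 - \sum_{j=1}^n \big(1 - F_{Z_j}(s)\big)$. Since each $1 - F_{Z_j}$ is nonincreasing, $g'$ is nondecreasing, hence $g$ is convex; therefore any $s^\star$ with $g'(s^\star) = 0$, i.e.\ $\sum_{j=1}^n \big(1 - F_{Z_j}(s^\star)\big) = 1$, is a global minimizer, which lies in $[0,\infty)$ precisely when $\sum_{j=1}^n \Pbb(Z_j > 0) \geq 1$ (and otherwise $s = 0$ is the best admissible choice).

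The main obstacle is essentially cosmetic rather than mathematical, since the proposition is a standard fact imported from \cite{RosPek2007}: the only points requiring care are (i) the boundary case of the sample-path inequality where the maximum does not exceed $s$, and (ii) precision about the differentiability of $g$ --- if the $F_{Z_j}$ have atoms, $g'$ should be read as a one-sided derivative and the convexity argument still pins down the minimizer, with \eqref{eq:rossuboptcond} interpreted as the first-order condition $\sum_j \Pbb(Z_j \geq s^\star) \geq 1 \geq \sum_j \Pbb(Z_j > s^\star)$.
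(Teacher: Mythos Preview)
Your proof is correct and is precisely the standard argument: the paper itself does not supply a proof of this proposition but simply imports it from \cite{RosPek2007} \S4.6, and the sample-path inequality $\max_j z_j \leq s + \sum_j (z_j - s)^+$ followed by the tail-integral identity is exactly the derivation given there. Your remarks on convexity of $g$ and the handling of atoms are appropriate refinements; the paper's only additional comment is that the left side of \eqref{eq:rossuboptcond} is decreasing in $s$, whence uniqueness and amenability to bisection search, which is consistent with your convexity observation.
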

Since the left side of \eqref{eq:rossuboptcond} is a decreasing function in $s$, it follows that the solution of \eqref{eq:rossuboptcond} is unique and may be easily found numerically via bisection search.
Both bounds can be (or have been) shown to have a counterpart for bounding the expected minimum order statistic $\Ebb[Z_{1:n}]$.  Observe that both inequalities have a degree of freedom in the form of a parameter that may be chosen either to make the bound as tight as possible, or put the bound in a certain form.   We illustrate the bounds for the exponential case.   Specifically, let $(\tilde{X}_1,\ldots,\tilde{X}_n)$ be iid exponential RVs with unit rate.  There is no loss in generality in setting the rate $\lambda = 1$ since $\frac{1}{\lambda} \tilde{X} \sim \mathrm{Exp}(\lambda)$, and in particular $\Ebb[ \max_{j \in [n]} \mathrm{Exp}_j(\lambda)] = \frac{1}{\lambda} \Ebb[\tilde{X}_{n:n}]$.  

\begin{proposition}
\label{prop:rossdlcexpexample}
Let $(\tilde{X}_1,\ldots,\tilde{X}_n)$ be iid unit rate exponential RVs.  Then the functions $l_{\tilde{X}}(t,n)$ and $u_{\tilde{X}}(s,n)$ below satisfy $l_{\tilde{X}}(t,n) \leq \Ebb[\tilde{X}_{n:n}] \leq u_{\tilde{X}}(s,n)$ for all $s \geq 0$, $t \geq 1$, and $n \in \Nbb$, where:
\begin{eqnarray}
l_{\tilde{X}}(t,n) & \equiv & t - (t - 1)(1 - \erm^{-t})^{n-1} \nonumber \\
u_{\tilde{X}}(s,n) & \equiv & s + n \erm^{- s}.
\label{eq:expexalb1}
\end{eqnarray}
For the choice $t_n =1 + \log n - \varepsilon_n$ where $\varepsilon_n = o(\log n)$ and $\varepsilon_n = \omega (1)$ we have
\begin{equation}
\label{eq:expexatn}
l_{\tilde{X}}(t_n,n)  \equiv  l_{\tilde{X}}(n) \! = \! 1 + (\log n -  \varepsilon_n)\left( 1 - \left(1 - \frac{1}{n \erm} \erm^{\varepsilon_n}\right)^{n-1} \right).
\end{equation}
Furthermore, the optimal value of $s$ is $s^*_n = \log n$, for which
\begin{equation}
\label{eq:expexau2}
u_{\tilde{X}}(s^*_n,n) \equiv u_{\tilde{X}}(n) = 1+\log n.
\end{equation}
Finally, the bounds $(l_{\tilde{X}}(n),u_{\tilde{X}}(n))$ are asymptotically tight in $n$ in that, as $n \to \infty$,
\begin{equation}
\frac{l_{\tilde{X}}(n)}{\log n} \to 1, ~~ 
\frac{u_{\tilde{X}}(n)}{\log n} \to 1.
\end{equation}
These together imply $\Ebb[{\tilde{X}}_{n:n}]/\log n \to 1$. 
\end{proposition}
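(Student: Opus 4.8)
The plan is to obtain each of the two bounds by a direct specialization of the general order-statistic inequalities in Appendix~\S\ref{app:rossdelacal}, and then to verify the closed forms and the asymptotics by elementary estimates. First I would record that for $\tilde X_j \sim \mathrm{Exp}(1)$ we have $F_{\tilde X}(x) = 1 - \erm^{-x}$ and $\mu_{\tilde X} = 1$. Feeding $\mu_Z = 1$ and $F_Z(t) = 1 - \erm^{-t}$ into the de la Cal--C\'{a}rcamo bound \eqref{eq:delaCallb} of Prop.~\ref{prop:delacal} gives $l_{\tilde X}(t,n) = t - (t-1)(1-\erm^{-t})^{n-1}$, valid for $t \ge 1$. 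Likewise, since $1 - F_{\tilde X_j}(z) = \erm^{-z}$ and $\int_s^\infty \erm^{-z}\,\drm z = \erm^{-s}$, the Ross--Pek\"{o}z bound \eqref{eq:rossub} of Prop.~\ref{prop:ross} with $n$ identical summands yields $u_{\tilde X}(s,n) = s + n\erm^{-s}$ for $s \ge 0$. The optimality condition \eqref{eq:rossuboptcond} reads $n\erm^{-s^*} = 1$, i.e.\ $s_n^* = \log n$; substituting gives $u_{\tilde X}(\log n, n) = \log n + 1$, which one may double-check from the fact that $s \mapsto s + n\erm^{-s}$ is convex with derivative vanishing at $s = \log n$.

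Next I would verify the displayed form \eqref{eq:expexatn} under the choice $t_n = 1 + \log n - \varepsilon_n$. Because $\varepsilon_n = o(\log n)$, we have $t_n - 1 = \log n - \varepsilon_n \to \infty$, so $t_n \ge 1$ for all large $n$ and the lower bound applies; moreover $\erm^{-t_n} = \erm^{-1} n^{-1} \erm^{\varepsilon_n} = \tfrac{1}{n\erm}\erm^{\varepsilon_n}$. Collecting the constant term $t_n - (t_n-1) = 1$ out front then produces exactly $l_{\tilde X}(n) = 1 + (\log n - \varepsilon_n)\bigl(1 - (1 - \tfrac{1}{n\erm}\erm^{\varepsilon_n})^{n-1}\bigr)$.

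The asymptotic tightness is a squeeze argument. For the upper bound it is immediate: $u_{\tilde X}(n)/\log n = (1 + \log n)/\log n \to 1$. For the lower bound I would divide \eqref{eq:expexatn} by $\log n$ and treat the three resulting factors separately: the leading $1/\log n \to 0$; the factor $(\log n - \varepsilon_n)/\log n \to 1$ since $\varepsilon_n = o(\log n)$; and the bracket $1 - (1 - \tfrac{1}{n\erm}\erm^{\varepsilon_n})^{n-1}$ tends to $1$ provided $(1 - \tfrac{1}{n\erm}\erm^{\varepsilon_n})^{n-1} \to 0$. To see the latter, set $p_n \equiv \tfrac{1}{n\erm}\erm^{\varepsilon_n}$; since $\varepsilon_n = o(\log n)$ we have $\erm^{\varepsilon_n} = n^{o(1)}$, so $p_n \to 0$ and $1 - p_n \in (0,1)$ for large $n$, while $(n-1)p_n = \tfrac{n-1}{n\erm}\erm^{\varepsilon_n} \to \infty$ because $\varepsilon_n = \omega(1)$ forces $\erm^{\varepsilon_n} \to \infty$. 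Hence $(1-p_n)^{n-1} = \exp\bigl((n-1)\log(1-p_n)\bigr) \le \exp\bigl(-(n-1)p_n\bigr) \to 0$. Combining the three limits gives $l_{\tilde X}(n)/\log n \to 1$, and since $l_{\tilde X}(n) \le \Ebb[\tilde X_{n:n}] \le u_{\tilde X}(n)$, the squeeze theorem yields $\Ebb[\tilde X_{n:n}]/\log n \to 1$.

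The only genuinely delicate point is the interplay of the two constraints on $\varepsilon_n$: the condition $\varepsilon_n = o(\log n)$ is what keeps $1 - p_n$ inside $(0,1)$ and keeps the prefactor $(\log n - \varepsilon_n)/\log n$ converging to $1$, while $\varepsilon_n = \omega(1)$ is precisely what makes $(n-1)p_n$ diverge so that the power vanishes; reconciling these two requirements is the heart of the tightness claim, and everything else is routine substitution into the bounds of Appendix~\S\ref{app:rossdelacal}.
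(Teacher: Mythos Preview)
Your proposal is correct and follows essentially the same approach as the paper: specialize Props.~\ref{prop:delacal} and \ref{prop:ross} to the exponential case, optimize $s$, substitute the specific $t_n$, and verify asymptotic tightness. The only notable difference is in the final technical step showing $(1-p_n)^{n-1}\to 0$: you use the clean inequality $\log(1-p_n)\le -p_n$ to bound the power by $\exp(-(n-1)p_n)$, whereas the paper rewrites the expression as $\bigl[(1-1/b)^b\bigr]^{(n-1)/b}$ with $b=n\erm^{1-\varepsilon_n}$ and appeals to $(1-1/b)^b\to\erm^{-1}$ together with divergence of the outer exponent. Your route is slightly more direct and avoids the need to separately justify that $b\to\infty$ before invoking the standard limit; otherwise the two arguments are interchangeable.
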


\begin{IEEEproof}
Specializing the lower (upper) bound in Prop.\ \ref{prop:delacal} (\ref{prop:ross}) to the exponential case gives \eqref{eq:expexalb1}.  Substituting the given form for $t_n$ into the lower bound gives \eqref{eq:expexatn}.  Differentiation of the convex function $u_{\tilde{X}}(s,n)$ with respect to $s$, equating with zero, and solving for $s$ gives $s^*_n = \log n$, and $u_{\tilde{X}}(n)$ in \eqref{eq:expexau2}.  The limit of $u_{\tilde{X}}(n)/\log n$ as $n \to \infty$ is immediate.  It remains to establish that the limit of 
\begin{IEEEeqnarray}{rCl}
\IEEEeqnarraymulticol{3}{l}
{\frac{l_{\tilde{X}}(n)}{\log n}  = \frac{1 + (\log n -  \varepsilon_n)\left( 1 - \left(1 - \frac{1}{n \erm} \erm^{\varepsilon_n}\right)^{n-1} \right)}{\log n} 
}\nonumber \\* \quad
& = & \frac{1}{\log n} + \left(1 - \frac{\varepsilon_n}{\log n} \right)\left( 1 - \left(1 - \frac{1}{n \erm} \erm^{\varepsilon_n}\right)^{n-1} \right) \IEEEeqnarraynumspace
\end{IEEEeqnarray}
as $n \to \infty$ is $1$.  Since $\varepsilon_n = o(\log n)$, it suffices to show that the limit of
\begin{IEEEeqnarray}{rCl}
\label{eq:expexalim2}
\left(1 - \frac{1}{n \erm} \erm^{\varepsilon_n}\right)^{n-1} 
& = & \left( 1 - \frac{\erm^{\varepsilon_n-1}}{n}\right)^{n-1} \nonumber \\
& = & \left[ \left( 1 - \frac{\erm^{\varepsilon_n-1}}{n} \right)^{n \erm^{1-\varepsilon_n}}  \right]^{\frac{n-1}{n} \erm^{\varepsilon_n-1} } 
\end{IEEEeqnarray}
as $n \to \infty$ is zero.  First consider $\erm^{\varepsilon_n-1}/n$.  For any positive sequence $\{a_n\}$:
\begin{equation}
\lim_{n \to \infty} \frac{\log a_n}{\log n} = -1  \Rightarrow  \lim_{n \to \infty} \log a_n = -\infty ~ \Leftrightarrow ~ \lim_{n \to \infty} a_n = 0.
\end{equation}
For $a_n = \erm^{\varepsilon_n-1}/n$ with $\varepsilon_n = o(\log n)$ this gives directly that $\erm^{\varepsilon_n-1}/n \to 0$.
Recall $\lim_{b \to \infty} (1-1/b)^b = \erm^{-1}$.  Choosing $b = n \erm^{1-\varepsilon_n}$ it follows that
\begin{equation}
\label{eq:expexalim1}
\lim_{n \to \infty} \left( 1 - \frac{\erm^{\varepsilon_n-1}}{n} \right)^{n \erm^{1-\varepsilon_n}} = \erm^{-1}.
\end{equation}
Finally, the limit as $n \to \infty$ of the logarithm of the rightmost expression in \eqref{eq:expexalim2} equals (applying \eqref{eq:expexalim1}):
\begin{IEEEeqnarray}{rCl}
\IEEEeqnarraymulticol{3}{l}
{\left( \lim_{n \to \infty} \frac{n-1}{n} \erm^{\varepsilon_n-1} \right) \left( \lim_{n \to \infty} \log \left( 1 - \frac{\erm^{\varepsilon_n-1}}{n} \right)^{n \erm^{1-\varepsilon_n}} \right)
}\nonumber \\* \qquad
& = & - \lim_{n \to \infty} \frac{n-1}{n} \erm^{\varepsilon_n-1} = -\infty.
\end{IEEEeqnarray}
It follows that the limit of \eqref{eq:expexalim2} as $n \to \infty$ is zero.

\end{IEEEproof}

Using the min-max identity (Prop.\ \ref{prop:minmax}), the fact that $\min_{j \in \Amc } \tilde{X}_j \sim \mathrm{Exp}(s)$ for any $\Amc \in [n]_s$, and the binomial coefficient absorbtion identity gives
\begin{IEEEeqnarray}{rCl}
\Ebb[\tilde{X}_{n:n}] & = & \sum_{s=1}^n (-1)^{s+1} \sum_{\Amc \in [n]_s} \Ebb \left[ \min_{j \in \Amc} \tilde{X}_j \right] \nonumber \\
& = & \sum_{s=1}^n (-1)^{s+1} \binom{n}{s} \frac{1}{s} = \sum_{s=1}^n \frac{1}{s} = H_n,
\end{IEEEeqnarray}
where $H_n$ denotes the $n^{\rm th}$ harmonic number.  We select $\varepsilon_n = \log \log n$ under which the lower bound becomes
\begin{equation}
l_{\tilde{X}}(n) = 1 + (\log n - \log \log n)\left(1-\left(1 - \frac{\log n}{n} \right)^{n-1}\right).  
\end{equation}
These bounds are shown in Fig.\ \ref{fig:maxexpexample}.  

\begin{figure}[htbp]
\centering
\includegraphics[width=0.34\textwidth]{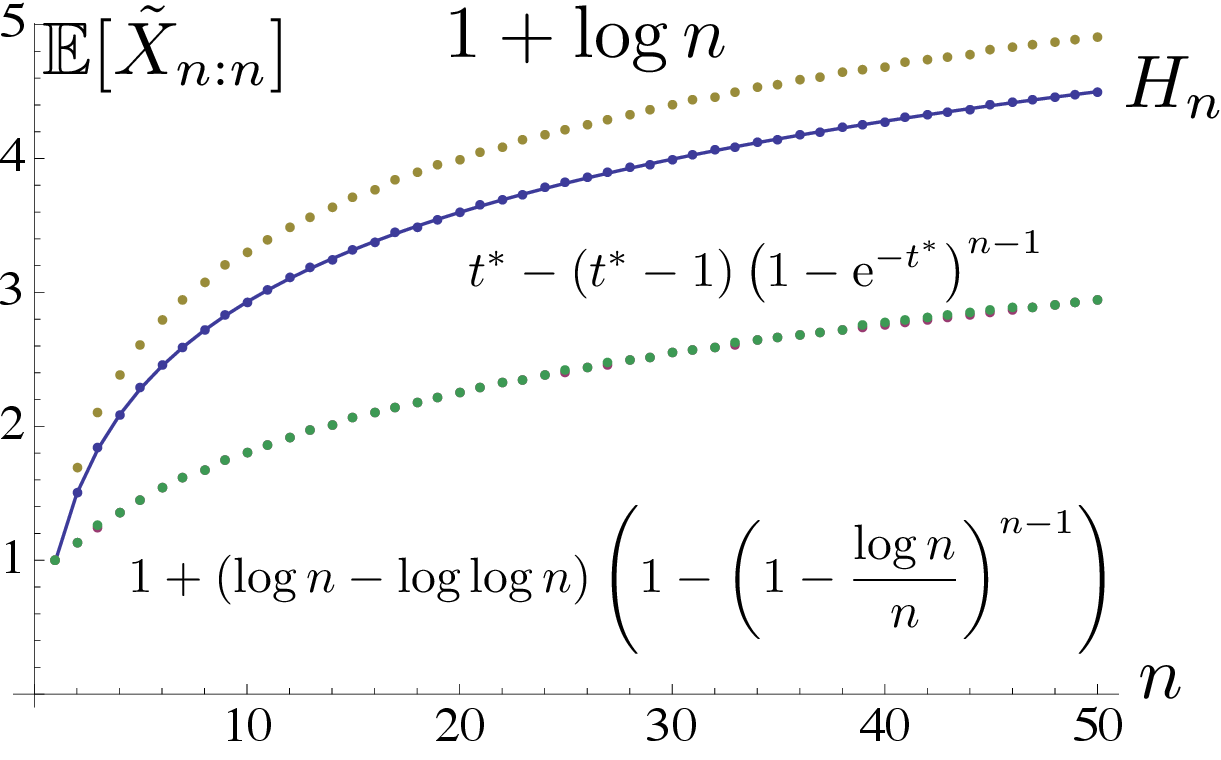}
\caption{The exact value (blue, joined) for $\Ebb[\tilde{X}_{n:n}]$, the optimal Ross's upper bound (yellow), de la Cal's lower bound with $t_{n} = 1 + \log n - \log\log n$ (red), and with (numerically) optimal free parameter $t^{*}$ (green), for $\tilde{X}_1,\ldots,\tilde{X}_n$ iid unit rate exponentials.}
\label{fig:maxexpexample}
\end{figure}

\begin{IEEEbiographynophoto}{Nan Xie}
(S'10-M'14) received his B.S. degree in communication engineering and M.S. degree in circuits and systems, both from Wuhan University, China, in 2003 and 2006, respectively. From 2006 to 2007 he was with the Department of Electrical and Computer Engineering at the University of Florida, Gainesville, FL, USA. Since 2008 he has been with the Department of Electrical and Computer Engineering at Drexel University, Philadelphia, PA, USA, where he earned his Ph.D.\ degree in electrical engineering in 2014. His research interests are focused on a better understanding of metrics such as delay, stability, throughput, and fairness, in the context of design and performance optimization of networks.
\end{IEEEbiographynophoto}

\begin{IEEEbiographynophoto}{Steven Weber}
  (S'97-M'03-SM'11) received the B.S. degree in 1996 from Marquette University, Milwaukee, WI, USA, in 1996 and the M.S. and Ph.D.\ degrees from The University of Texas at Austin, TX, USA, in 1999 and 2003, respectively.
  He joined the Department of Electrical and Computer Engineering at Drexel University, Philadelphia, PA, USA, in 2003, where he is currently a Professor.
  His research interests are centered around mathematical modeling of computer and communication networks.
\end{IEEEbiographynophoto}

\end{document}